\numberwithin{equation}{section}
\numberwithin{equation}{section}		
\numberwithin{figure}{section}			
\numberwithin{table}{section}				
\newcommand{\leqs}{\leqslant}
\newcommand{\geqs}{\geqslant}
\newcommand{\babs}[1]{\left|{#1}\right|}
\newcommand{\bnorm}[1]{\left|\left|{#1}\right|\right|}
\newcommand{\vect}[1]{\boldsymbol{\mathbf{#1}}}
\DeclareFontShape{T1}{lmr}{b}{sc}{<->ssub*cmr/bx/sc}{}
\DeclareFontShape{T1}{lmr}{bx}{sc}{<->ssub*cmr/bx/sc}{}
\newcommand{\abs}[1]{\lvert#1\rvert}
\newcommand{\nm}{\noalign{\smallskip}}
\newcommand{\ds}{\displaystyle}
\newcommandx{\unsure}[2][1=]{\todo[linecolor=red,backgroundcolor=red!25,bordercolor=red,#1]{#2}}
\newcommandx{\change}[2][1=]{\todo[linecolor=blue,backgroundcolor=blue!25,bordercolor=blue,#1]{#2}}
\newcommandx{\info}[2][1=]{\todo[linecolor=OliveGreen,backgroundcolor=OliveGreen!25,bordercolor=OliveGreen,#1]{#2}}
\newcommandx{\improvement}[2][1=]{\todo[linecolor=black,backgroundcolor=black!25,bordercolor=black,#1]{#2}}
\newcommandx{\thiswillnotshow}[2][1=]{\todo[disable,#1]{#2}}
\crefname{proposition}{Proposition}{Propositions}
\crefname{equation}{}{}
\newtheorem{theorem}{Theorem}[section]
\newtheorem{lemma}[theorem]{Lemma}
\newtheorem{proposition}[theorem]{Proposition}
\newtheorem{corollary}[theorem]{Corollary}
\theoremstyle{definition}
\newtheorem{definition}[theorem]{Definition}
\newtheorem{remark}[theorem]{Remark}
\crefname{assumption}{Assumption}{Assumptions}
\crefname{definition}{Definition}{Definitions}
\crefname{corollary}{Corollary}{Corollaries}
\crefname{enumi}{item}{items}
\DeclareMathOperator{\N}{\mathbb{N}}
\DeclareMathOperator{\Z}{\mathbb{Z}}
\DeclareMathOperator{\R}{\mathbb{R}}
\DeclareMathOperator{\C}{\mathbb{C}}
\renewcommand{\i}{\mathbf{i}}
\renewcommand{\tilde}{\widetilde}
\renewcommand{\hat}{\widehat}
\renewcommand{\bar}[1]{\overline{#1}}
\newcommand{\inv}{^{-1}}
\DeclareMathOperator{\dtn}{\mathcal{T}}
\DeclareMathOperator{\diag}{diag}
\DeclareMathOperator{\BO}{\mathcal{O}}
\DeclareMathOperator{\exactcapmat}{\mathbf{C}}
\DeclareMathOperator{\capmat}{\mathnormal{C}}
\DeclareMathOperator{\capmatg}{\mathnormal{C}^\gamma}
\newcommand{\pri}{^\prime}
\newcommand{\prii}{^{\prime\prime}}
\renewcommand{\epsilon}{\varepsilon}
\DeclareMathOperator{\dd}{d\!}
\renewcommand{\i}{\mathbf{i}}
\renewcommand{\tilde}{\widetilde}
\renewcommand{\hat}{\widehat}
\DeclareMathOperator{\iL}{{\mathsf{L}}}
\DeclareMathOperator{\iR}{{\mathsf{R}}}
\DeclareMathOperator{\iLR}{{\mathsf{L},\mathsf{R}}}
\DeclareMathOperator{\interior}{int}
\newcommand{\cm}{C^\gamma}
\newcommand{\cg}{\mathnormal{C}^{\theta,\gamma}}
\newcommand*{\pnr}[1]{\frac{P_N(#1)}{P_{N-1}(#1)}}
\DeclareMathOperator\Arg{Arg}
\newcommand{\qp}{\alpha+\i\beta}
\newcommand{\qpconj}{\alpha'+\i\beta'}
\newcommand{\associateqp}{e^{-\i\spatialperiod(\alpha+\i\beta)}}
\newcommand{\spatialperiod}{\iL}
\newcommandx{\silvio}[2][1=]{\todo[linecolor=blue,backgroundcolor=blue!25,bordercolor=blue,#1]{Silvio: #2}}
\begin{document}
\title[Mathematical foundations of topological metamaterials]{Applications of Chebyshev polynomials and Toeplitz theory to topological
metamaterials}

\author[H. Ammari]{Habib Ammari}
\address{\parbox{\linewidth}{Habib Ammari\\
        ETH Z\"urich, Department of Mathematics, Rämistrasse 101, 8092 Z\"urich, Switzerland.}}
\email{habib.ammari@math.ethz.ch}
\thanks{}

\author[S. Barandun]{Silvio Barandun}
\address{\parbox{\linewidth}{Silvio Barandun\\
        ETH Z\"urich, Department of Mathematics, Rämistrasse 101, 8092 Z\"urich, Switzerland.}}
\email{silvio.barandun@sam.math.ethz.ch}

\author[P. Liu]{Ping Liu}
\address{\parbox{\linewidth}{Ping Liu\\
        School of Mathematical Sciences, Zhejiang University, No. 866, Yuhangtang Road, Hangzhou, 310058, China.}}
\email{pingliu@zju.edu.cn}

\begin{abstract}
We survey  the use of Chebyshev polynomials and Toeplitz theory for studying topological metamaterials. We consider both Hermitian and non-Hermitian systems of subwavelength resonators and provide a mathematical framework to explain some spectacular properties of metamaterials. 
\end{abstract}
\maketitle

\date{}

\bigskip

\noindent \textbf{Keywords.}  Topological metamaterials, nonreciprocal metamaterials, topological interface modes, non-Hermitian skin effect, tunable localisation, generalised Brillouin zone, Chebyshev polynomials, Toeplitz theory.\par

\bigskip

\noindent \textbf{AMS Subject classifications.}
35B34, 
47B28, 
35P25, 
35C20, 
81Q12,  
15A18, 
15B05. 
\\
\tableofcontents

\section{Introduction}

Subwavelength resonators are materials with high-contrast material parameters and size much smaller than the operating wavelength. They are typical building blocks for large, complex structures, known as metamaterials \cite{kadic20193d}. These microstructures can exhibit a variety of exotic and useful properties, such as robust localisation and transport properties at subwavelength scales. The starting point for their mathematical analysis is that, in the subwavelength regime, their  eigenfrequencies and eigenmodes are described using a capacitance matrix formalism. This discrete formulation can be applied to a wide variety of settings, including both finite- and infinite-sized models and both Hermitian and non-Hermitian systems; see \cite{ammari.davies.ea2021Functional,cbms} and the references therein. 

In this survey, we specifically consider one dimensional  metamaterials consisting of subwavelength resonators, which allows for more explicit formulations and applications not only in the field of metamaterials, but also in condensed matter physics due to its analogies with quantum mechanical phenomena. This is due to the fact that, in contrast to higher dimensional cases, interactions between the subwavelength resonators in one dimension only imply the nearest neighbours. Thus, the capacitance matrix formalism in one dimension is analogous to the tight-binding approximation for quantum systems, while in three dimensions some correspondence holds only for dilute resonators \cite{francesco}.

We consider both Hermitian systems of subwavelength resonators where the material parameters are real positive and non-Hermitian ones. In subwavelength wave physics, non-Hermiticity can be obtained via either a reciprocal mechanism by adding gain and loss inside the resonators \cite{miri.alu2019Exceptional}, or via a non-reciprocal one by introducing a directional damping term, which is motivated by an imaginary gauge potential \cite{yokomizo.yoda.ea2022NonHermitian}. On the one hand, introducing gain and loss inside the resonators, represented by the imaginary parts of complex-valued material parameters, can create exceptional points. An exceptional point is a point in parameter space at which two or more eigenmodes coalesce \cite{heiss2012physics,miri.alu2019Exceptional,ammari.davies.ea2022Exceptional,ammari.zhang2015Superresolution}. On the other hand, for systems that are non-Hermitian due to non-reciprocity, the wave propagation is usually amplified in one direction and attenuated in the other. This inherent unidirectional dynamics is related to the non-Hermitian skin effect, which leads to the accumulation of modes at one edge of the structure \cite{hatano.nelson1996Localization,yokomizo.yoda.ea2022NonHermitian,rivero.feng.ea2022Imaginary}. 

Our aim here is to highlight the powerful mathematical tools used for studying systems of subwavelength resonators in one dimension which are based on Chebyshev polynomials and Toeplitz theory. We use these mathematical tools to 
(i) characterise quantitatively the localised (topologically protected) interface modes in systems of finitely many resonators, (ii) study the skin effect in a system of finitely many subwavelength resonators with a non-Hermitian imaginary gauge potential and prove the condensation of bulk eigenmodes at one of the edges
of the system, (iii) prove that all but two eigenmodes of a system of subwavelength resonators equipped with two kinds of non-Hermiticity --- an imaginary gauge potential and on-site gain and loss--pass through exceptional points and decouple, and (iv)  introduce the notion of generalised Brillouin zone to account for the unidirectional spatial decay of the eigenmodes in non-reciprocal systems and obtain correct spectral convergence properties of finite to corresponding infinite systems.

The survey is organised as follows. In \Cref{sec:theory}, we first recall monotonicity properties of Chebyshev polynomials. Then, we characterise the spectra of tridiagonal Toeplitz matrices with perturbations in the diagonal corners and show that their associated eigenvectors exhibit exponential decay. These matrices are crucial in the study of the one-dimensional monomer systems of subwavelength resonators \cite{ammari.barandun.ea2024Mathematical, ammari.davies.ea2021Functional, feppon.ammari2022Subwavelength, ammari.barandun.ea2023Perturbed}.  Finally, we consider tridiagonal $k$-Toeplitz matrices and illustrate the connection between the winding number of the eigenvalues of the symbol function and the exponential decay of the associated eigenvectors. 
In \Cref{sec:1d}, we consider one-dimensional systems of subwavelength resonators as our main physical reference. This choice follows the rigorous derivation from first principles of the governing mathematical model: the capacitance matrix.
We discuss both the Hermitian and the non-reciprocal cases. 
In \Cref{sec: interface modes}, we prove the existence of a spectral gap for defectless finite dimer structures and provide a direct relationship between eigenvalues being within the spectral gap and the localisation of their associated eigenmodes. Then,  we show the existence and uniqueness of an eigenvalue in the gap in  the defect structure, proving the existence of a unique localised interface mode. Our method, based on Chebyshev polynomials, is the first  to characterise quantitatively the localised interface modes in systems of finitely many resonators. \Cref{sec: skin effect} is devoted to the mathematical foundations of the non-Hermitian skin effect in systems of subwavelength resonators where a gauge potential is added inside the resonators to break Hermiticity.  We
prove the condensation of the eigenmodes at one edge of the structure and  show the robustness of such effect with respect to random imperfections in the system. It is worth emphasising that the effect of adding this gauge potential  only inside the subwavelength resonators would be negligible without exciting the structure's subwavelength eigenfrequencies.  In \cref{sec: PT symm EP}, we study the interplay of an imaginary gauge potential and on-site gain and loss. By tuning the gain-to-loss ratio, we prove that a system of subwavelength resonators changes from a phase with unbroken parity-time symmetry to a phase with broken parity-time symmetry. At the macroscopic level,  this is observed as a transition from symmetrical eigenmodes to condensated eigenmodes at one edge of the structure. In \Cref{sec: GBZ},  we show  how the spectral properties of a finite structure are associated with those of the corresponding semi-infinitely or infinitely periodic lattices and give explicit characterisations of how to extend the Floquet--Bloch theory in the Hermitian case to non-reciprocal settings. Finally, in the appendix we outline for completness the derivation of the capacitance matrix approximation to the eigenfrequencies and 
eigenmodes of a system of subwavelength resonators. 

\part{Theoretical foundations}
\section{Chebyshev polynomials and Toeplitz theory} \label{sec:theory}

\subsection{Chebyshev polynomials}
Chebyshev polynomials are two sequences of polynomials related to the cosine and sine functions, denoted respectively by $T_{n}(x)$ and $U_{n}(x)$. In particular, the Chebyshev polynomials of the first kind are obtained from the recurrence relation
\begin{equation}\label{equ:chebyshevpolynomial1}
    \begin{aligned}
        T_0(x)     & =1,                      \\
        T_1(x)     & =x,                      \\
        T_{n+1}(x) & =2 x T_n(x)-T_{n-1}(x) ,
    \end{aligned}
\end{equation}
and the Chebyshev polynomials of the second kind are obtained from the recurrence relation
\begin{equation}\label{equ:chebyshevpolynomial2}
    \begin{aligned}
        U_0(x)     & =1,                      \\
        U_1(x)     & =2x,                     \\
        U_{n+1}(x) & =2 x U_n(x)-U_{n-1}(x) .
    \end{aligned}
\end{equation}

A Chebyshev polynomial of either kind with degree $n$ has $n$ different simple roots, called Chebyshev roots, in the interval $[-1,1]$. The roots of $T_n(x)$ are
\[
    x_k=\cos \left(\frac{\pi(k+1 / 2)}{n}\right), \quad 0 \leqs k \leqs n-1,
\]
and the roots of $U_{n}(x)$ are
\begin{equation}\label{equ:rootsofchebyshevpolynomial1}
    x_k = \cos\left(\frac{k\pi}{n+1}\right),\quad 1 \leqs k \leqs n.
\end{equation}
It is also well-known that for $-1 \leqs x \leqs 1$ and any $n\geqs 0$, we have the upper bounds
\begin{equation}\label{equ:boundofchebyshev1}
    \left|T_n(x)\right| \leq\left|T_n(1)\right|=1, \quad \left|U_n(x)\right| \leq\left|U_n(1)\right|=n+1.
\end{equation}

Although both $T_n(x)$ and $U_n(x)$ are highly oscillating in the interval $[-1, 1]$, they have some monotonicity properties outside $[-1, 1]$.
\begin{proposition}\label{lemma: monotonicity_Chebyshev}
    Let $n\in\N$. Then,
    \begin{enumerate}
        \item[(i)] $T_{2n+1}(x)$ is strictly increasing for $x\in(-\infty, -1)\cup(1,+\infty)$. $T_{2n}(x)$ is strictly increasing for $x\in(1,+\infty)$ and strictly decreasing for $x\in(-\infty, -1)$.
        \item[(ii)] $U_{2n+1}(x)$ is strictly increasing for $x\in(-\infty, -1)\cup(1,+\infty)$. $U_{2n}(x)$ is strictly increasing for $x\in(1,+\infty)$ and strictly decreasing for $x\in(-\infty, -1)$.
        \item[(iii)] $\frac{U_{n-1}(x)}{U_n(x)}$ is strictly decreasing for $x\in(-\infty, -1)\cup(1,+\infty)$.
    \end{enumerate}
\end{proposition}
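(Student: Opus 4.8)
The plan is to treat the single-polynomial statements (i)--(ii) and the ratio statement (iii) by different means: (i)--(ii) follow from a zero-counting argument for the derivative, while (iii) is cleanest through a hyperbolic substitution. For (i)--(ii) the key observation I would use is that all $n$ roots of $T_n$ and of $U_n$ lie in the open interval $(-1,1)$, as recorded in \eqref{equ:rootsofchebyshevpolynomial1} and the line above it. By Rolle's theorem (equivalently, Gauss--Lucas, since the critical points lie in the convex hull of the roots) the $n-1$ roots of $T_n'$, respectively $U_n'$, then also lie in $(-1,1)$. Consequently neither derivative has a zero on $(-\infty,-1)\cup(1,+\infty)$, so each keeps a constant sign on $(1,+\infty)$ and a (possibly different) constant sign on $(-\infty,-1)$; strict monotonicity on each interval is then immediate once that sign is identified. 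This covers every nonconstant case (degree $\geq 1$), i.e.\ all $n\geq 1$, the constant polynomials $T_0=U_0=1$ being the only exclusions from the even-index statements.

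To pin down the signs I would argue as follows. Both $T_n$ and $U_n$ have positive leading coefficient ($2^{n-1}$ and $2^n$ respectively), hence tend to $+\infty$ as $x\to+\infty$; with no critical point in $(1,+\infty)$ this forces strict increase there, settling the $(1,+\infty)$ claims for both parities at once. On $(-\infty,-1)$ I would combine the parities $T_n(-x)=(-1)^nT_n(x)$ and $U_n(-x)=(-1)^nU_n(x)$ with the endpoint values $T_n(-1)=(-1)^n$, $U_n(-1)=(-1)^n(n+1)$ and the sign of the limit as $x\to-\infty$: for even index the polynomial descends from $+\infty$ to its value at $-1$ (strictly decreasing), while for odd index it ascends from $-\infty$ to its value at $-1$ (strictly increasing), which is exactly the stated alternation.

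For (iii) I would substitute $x=\cosh\theta$ with $\theta>0$ on $(1,+\infty)$, using $U_m(\cosh\theta)=\sinh((m+1)\theta)/\sinh\theta$, so that
\[
\frac{U_{n-1}(x)}{U_n(x)}=\frac{\sinh(n\theta)}{\sinh((n+1)\theta)}=:h(\theta).
\]
Since $\theta\mapsto x$ is increasing, it suffices to show $h$ is strictly decreasing in $\theta$. The numerator of $h'$ is $p(\theta)=n\cosh(n\theta)\sinh((n+1)\theta)-(n+1)\sinh(n\theta)\cosh((n+1)\theta)$, and the pleasant point is that $p(0)=0$ while, after the mixed $\cosh\cosh$ terms cancel, $p'(\theta)=-(2n+1)\sinh(n\theta)\sinh((n+1)\theta)<0$ for $\theta>0$; hence $p<0$ and $h'<0$ on $(0,\infty)$. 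To transfer to $(-\infty,-1)$ I would note that $r(x):=U_{n-1}(x)/U_n(x)$ is an odd function, since $r(-x)=(-1)^{(n-1)-n}r(x)=-r(x)$, so that $r'$ is even; thus $r'<0$ on $(1,+\infty)$ yields $r'<0$ on $(-\infty,-1)$. Here $n\geq 1$ is needed for the ratio to be nonconstant and pole-free (note $U_n$ does not vanish outside $[-1,1]$).

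The routine parts are the endpoint and leading-coefficient bookkeeping in (i)--(ii) and the single derivative computation in (iii). The only step demanding genuine care is the sign determination on $(-\infty,-1)$, where parity, the endpoint values and the sign of the relevant infinity must be combined consistently for even versus odd index; this is the main (minor) obstacle, since a sign slip there is the easiest way to reverse the claimed alternation.
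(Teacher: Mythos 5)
Your proof is correct. A point of comparison: the paper itself contains no proof of this proposition --- items (i) and (ii) are stated as standard facts, and item (iii) is deferred to Lemma 5.2 of the cited reference \cite{ammari.barandun.ea2023Exponentially} --- so your argument is a self-contained replacement for a citation rather than an alternative to an in-paper derivation. Your route for (i)--(ii) (all roots of $T_n$ and $U_n$ lie in $(-1,1)$, hence by Rolle/Gauss--Lucas so do all roots of $T_n'$ and $U_n'$; constant sign of the derivative on each unbounded component then combines with the positive leading coefficients, the limits at $\pm\infty$, and parity to fix the monotonicity pattern) is sound, and your bookkeeping on $(-\infty,-1)$ correctly produces the stated even/odd alternation. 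For (iii), the substitution $x=\cosh\theta$ with $U_m(\cosh\theta)=\sinh((m+1)\theta)/\sinh\theta$ is essentially the same mechanism as the closed form
\[
U_m(y)=\frac{\left(y+\sqrt{y^2-1}\right)^{m+1}-\left(y-\sqrt{y^2-1}\right)^{m+1}}{2\sqrt{y^2-1}}
\]
that the surveyed papers rely on (set $e^{\theta}=y+\sqrt{y^2-1}$), and your derivative computation checks out: $p(0)=0$, the $\cosh\cosh$ cross terms cancel, and $p'(\theta)=-(2n+1)\sinh(n\theta)\sinh((n+1)\theta)<0$ for $\theta>0$, giving $h'<0$; the transfer to $(-\infty,-1)$ via oddness of the ratio (so evenness of its derivative) is valid since $U_n$ has no zeros outside $[-1,1]$. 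Two minor observations, both of which you already handle correctly: the monotonicity in (iii) holds on each component of $(-\infty,-1)\cup(1,+\infty)$ separately (the ratio is negative on the left component and positive on the right, so it cannot decrease across the union), which is what your proof delivers and what the applications in the paper need; and the exclusion of the constant cases $T_0$, $U_0$ and of $n=0$ in (iii) is the right reading of the statement.
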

A proof of item (iii) of \cref{lemma: monotonicity_Chebyshev} can be found in \cite[Lemma 5.2]{ammari.barandun.ea2023Exponentially}.

We show in \cref{fig: cheby poly} the behaviour of the Chebyshev polynomials of first and second kinds and illustrate the opposite behaviors of $T_{n}(x)$ and $U_{n}(x)$ for $x$ inside and outside the interval $[-1,1]$. We will see in the subsequent sections that the two different outside and inside behaviors  correspond to the skin effect and interface modes, respectively. 

\begin{figure}[h]
    \centering
    \begin{subfigure}[t]{0.49\textwidth}
    \centering
\includegraphics[width=\textwidth]{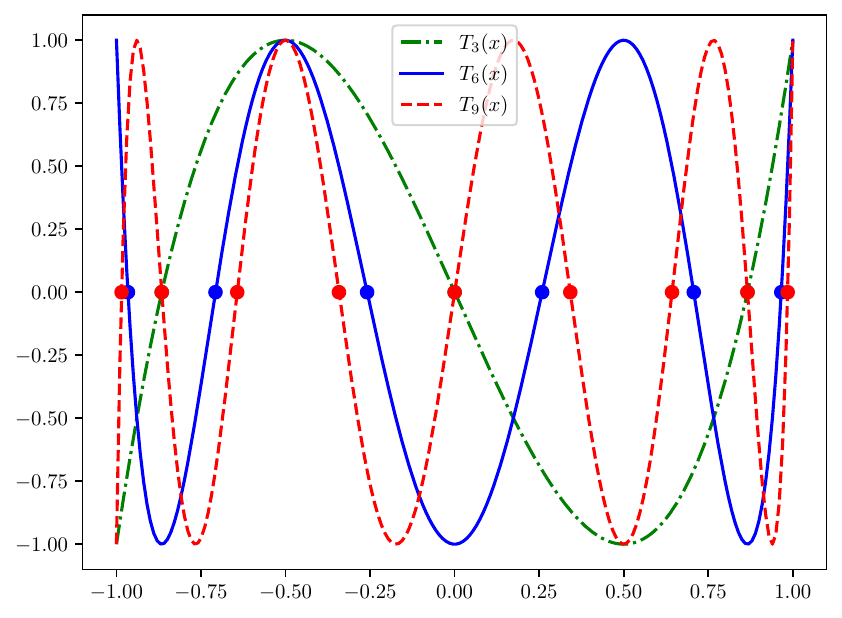}
    \caption{}
    \label{fig: chebyt small}
    \end{subfigure}\hfill
    \begin{subfigure}[t]{0.49\textwidth}
    \centering
    \includegraphics[width=\textwidth]{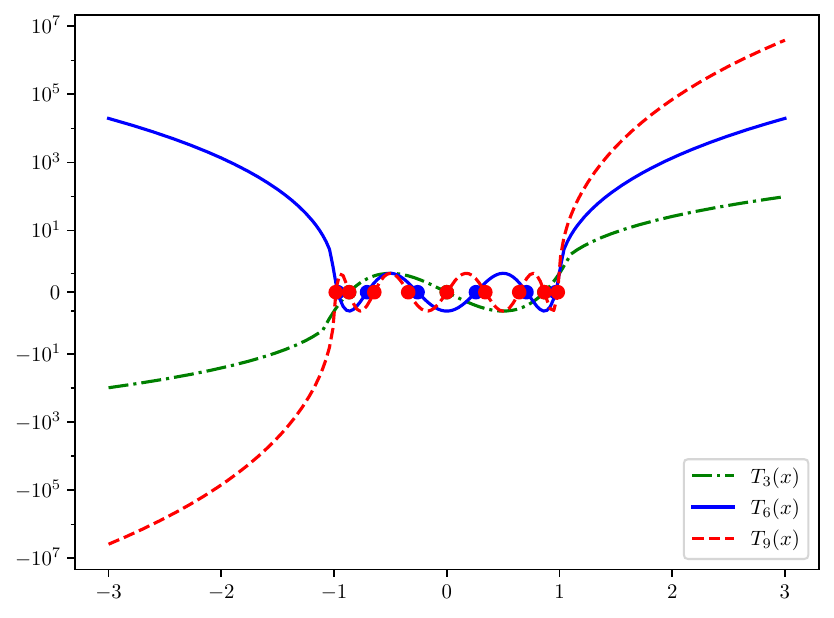}
    \caption{}
    \label{fig: chebyt wide}
    \end{subfigure}
    \\[5mm]
    \begin{subfigure}[t]{0.49\textwidth}
    \centering
    \includegraphics[width=\textwidth]{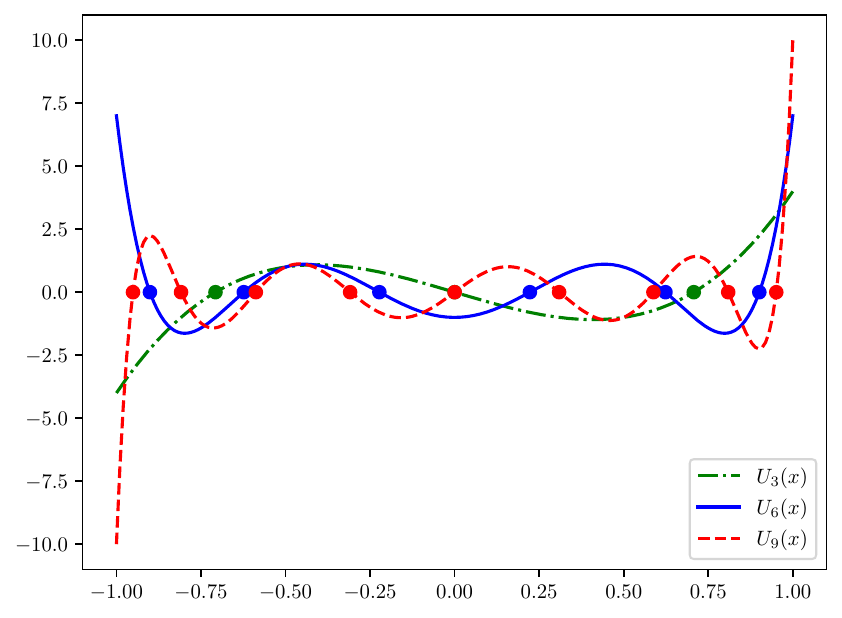}
    \caption{}
    \label{fig: chebyu small}
    \end{subfigure}\hfill
    \begin{subfigure}[t]{0.49\textwidth}
    \includegraphics[width=\textwidth]{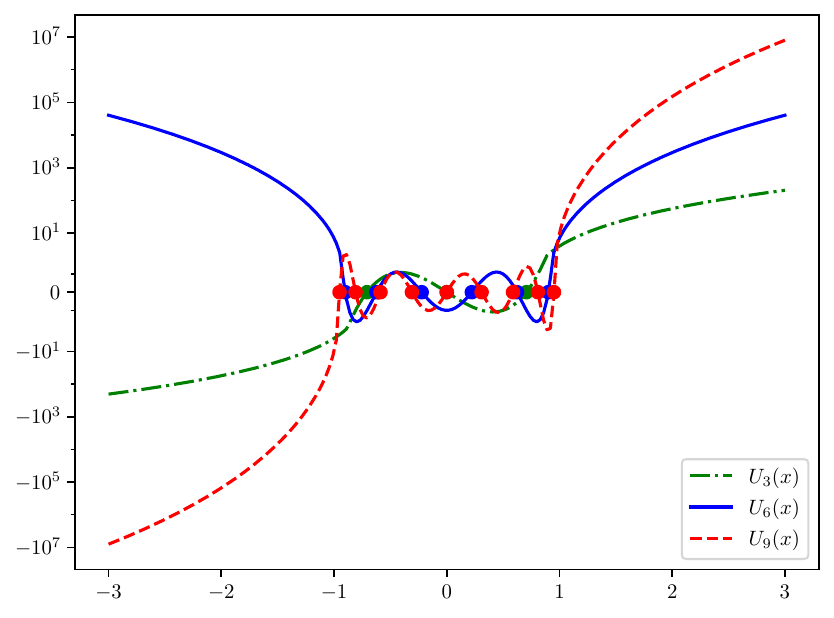}
    \caption{}
    \label{fig: chebyu wide}
    \end{subfigure}
    \caption{The Chebyshev polynomials of first (\cref{fig: chebyt small,fig: chebyt wide}) and second (\cref{fig: chebyu small,fig: chebyu wide}) kinds on a narrow interval (\cref{fig: chebyt small,fig: chebyu small}) where one observes the oscillating behaviour and on a wide interval (\cref{fig: chebyt wide,fig: chebyu wide}) where the monotone behaviour is visible (note the symmetrical log $y$-axis).
    }
    \label{fig: cheby poly}
\end{figure}

\subsection{Toeplitz matrices and operators}\label{sec: Toeplitz theory}

Consider a complex valued function  $f:\mathbb{T}^1\to \C$ on the unit circle $\mathbb{T}^1$ represented by its Fourier series
\begin{align}
    f(z)=\sum_{k\in\Z} a_k z^k.
\end{align}

The $m \times m$ \emph{Toeplitz matrix} associated to $f$ is the matrix whose entries are constant along diagonals and are given by the Fourier coefficients of $f$:
$$
    T_m(f)=\left(\begin{array}{ccccc}
            a_0     & a_{-1} &        & \cdots & a_{1-m} \\
            a_1     & a_0    &        &        & \vdots  \\
                    & \ddots & \ddots & \ddots &         \\
            \vdots  &        &        & a_0    & a_{-1}  \\
            a_{m-1} & \cdots &        & a_1    & a_0
        \end{array}\right).
$$
A semi-infinite matrix of the same form is known as the \emph{Toeplitz operator} associated to $f$ and is denoted by $T(f)$ while a (bi-)infinite matrix of the same form is known as the \emph{Laurent operator} associated to $f$ and is denoted by $L(f)$. In all three cases, we call $f$ the \emph{symbol} associated to the matrix or the operator. Moreover, the Toeplitz operators (matrices) with $a_{j}=0, j> l$ are called $l$-semibanded Toeplitz operators (matrices). The Toeplitz operators (matrices) with $a_{j}=0, |j|> l$ are called $l$-banded Toeplitz operators (matrices).

We assume that the symbol function $f:\mathbb{T}^1\to\C$ is $L^2$-integrable and continuous. This represents a certain loss of generality, but will be more than sufficient for the applications we are interested in.

The Toeplitz operator associated to $f$ as above is an operator on $\ell^2(\N)$, where $\ell^2(\N)$ is the set of square summable sequences. Similarly, the Laurent operator acts on $\ell^2(\Z)$. Note that the assumption that $f$ is continuous is sufficient to ensure that $T(f)$ is bounded. Often we shall fix $f$ and consider the family of Toeplitz matrices $\left\{T_m(f)\right\}$ of increasing dimensions obtained as $m \times m$ finite sections of the infinite matrix $T(f)$. 

Given a point $\lambda \in \mathbb{C} \backslash f(\mathbb{T}^1)$, we define the \emph{winding number of the symbol $f$}, $\operatorname{wind}(f, \lambda)$, to be the winding number of $f$ about $\lambda$ in the usual positive (counterclockwise) sense, that is,
$$
    \operatorname{wind}(f, \lambda) \coloneqq \frac{1}{2\pi \i}\int_{f(\mathbb{T}^1)} (\xi - \lambda)^{-1} \dd \xi.
$$
If $\lambda \in f(\mathbb{T}^1),$ then $\operatorname{wind}(f, \lambda)$ is undefined.

The spectrum of the Toeplitz operator $T(f)$, $$\sigma(T(f)):=\{ \lambda \in \mathbb{C}: (T(f) - \lambda) \text{ is not invertible} \},$$ is given by its symbol $f$ \cite{trefethen.embree2005Spectra, reichel.trefethen1992Eigenvalues}. In particular, the following proposition holds.

\begin{proposition}\label{thm:spectratoeplitz}
    Let $T(f)$ be a Toeplitz operator with continuous symbol $f$. Then $\sigma(T(f))$ is equal to $f(\mathbb{T}^1)$ together with all the points $\lambda$ enclosed by this curve with nonzero winding number $\operatorname{wind}(f, \lambda)$. In particular, for $l$-banded or $l$-semibanded Toeplitz operators $T(f)$, if $\operatorname{wind}(f, \lambda)<0$, then there exists an eigenvector $\bm x\in\ell^2$ of $T(f)$ associated to $\lambda$ and some $\rho<1$ such that
    \begin{equation}\label{eq: bound on eigenvector}
        \frac{\lvert \bm x_j\rvert}{\max_{i}\lvert \bm x_i\rvert} \leq Cj^{l-1} \rho^{j-1},\quad j\geq 1,
    \end{equation}
    where $C>0$ is a constant depending only on $\lambda$ and $f$.
\end{proposition}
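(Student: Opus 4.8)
The plan is to prove the two assertions separately: the identification of the spectrum, which is the classical Gohberg-type result, and the quantitative exponential decay of the eigenvector, which is where the banded structure does the real work.

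For the spectral description I would invoke the Fredholm theory of Toeplitz operators with continuous symbol. For $\lambda\notin f(\mathbb{T}^1)$ the symbol $f-\lambda$ is continuous and nonvanishing on $\mathbb{T}^1$, so $T(f-\lambda)=T(f)-\lambda$ is Fredholm with index $-\operatorname{wind}(f,\lambda)$; by Coburn's lemma at most one of its kernel and cokernel is nontrivial. Hence $T(f)-\lambda$ fails to be invertible precisely when $\operatorname{wind}(f,\lambda)\neq 0$, while $f(\mathbb{T}^1)$ itself belongs to the spectrum because there $T(f-\lambda)$ is not even Fredholm. This yields the stated description of $\sigma(T(f))$. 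In particular, when $\operatorname{wind}(f,\lambda)<0$ the index is strictly positive, so $\dim\ker(T(f)-\lambda)=-\operatorname{wind}(f,\lambda)>0$ and eigenvectors exist.

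The quantitative decay is where I would exploit that $T(f)$ is $l$-banded. Writing $(T(f)-\lambda)\bm x=0$ componentwise, in the bulk this is a constant-coefficient linear recurrence whose admissible modes $\bm x_j=\zeta^j$ satisfy $f(1/\zeta)=\lambda$; equivalently the $1/\zeta$ are the zeros of the Laurent polynomial $f-\lambda$. Clearing the pole at the origin, these are the $2l$ roots of $P(z):=z^l(f(z)-\lambda)$. Since $\lambda\notin f(\mathbb{T}^1)$, none of these roots lies on the unit circle, so the modes split cleanly into those with $|\zeta|<1$ and those with $|\zeta|>1$. An $\ell^2$ solution can only involve the decaying modes $|\zeta|<1$; counting them by the argument principle gives exactly $l-\operatorname{wind}(f,\lambda)$ such modes, which against the $l$ boundary conditions coming from the truncation (namely $\bm x_0=\dots=\bm x_{1-l}=0$) leaves a solution space of dimension $-\operatorname{wind}(f,\lambda)$, consistent with the index count above.

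I would then read off \eqref{eq: bound on eigenvector} by setting $\rho:=\max\{|\zeta|:|\zeta|<1\}<1$, the modulus of the slowest-decaying admissible mode. Every eigenvector is a finite combination of terms $j^{s}\zeta^{j}$ with $|\zeta|\le\rho$ and $s$ below the multiplicity of $\zeta$; bounding the exponentials by $\rho^{j-1}$ and the polynomial prefactors by $Cj^{l-1}$ gives the estimate, and in the tridiagonal case $l=1$ central to the applications there is no prefactor, so the decay is purely geometric. The main obstacle is the bookkeeping in this last step: one must control the multiplicities of the decaying roots to justify the power $j^{l-1}$, check that the boundary conditions do not conspire to concentrate the eigenvector on faster-decaying modes alone, and treat the semibanded case — where the bulk relation is no longer a finite recurrence — through a Wiener--Hopf factorisation $f-\lambda=f_-f_+$, with $\rho$ then governed by the zero of the analytic factor nearest the unit circle.
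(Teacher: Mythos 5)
Your proposal is correct, and it is essentially the proof that the paper implicitly relies on: the paper states this proposition without proof, citing Trefethen--Embree and Reichel--Trefethen, and your argument (Gohberg's Fredholm-index characterisation plus Coburn's lemma for the spectral description; the characteristic-root analysis of the kernel recurrence for banded symbols, with Wiener--Hopf factorisation covering the semibanded case) is precisely the standard argument underlying those references. Your flagged bookkeeping concern about the prefactor is harmless: the decaying roots number $l-\operatorname{wind}(f,\lambda)\le 2l$ counted with multiplicity, so a single root can have multiplicity exceeding $l$ and produce a prefactor as large as $j^{2l-1}$, but since the statement only requires \emph{some} $\rho<1$, any polynomial factor $j^{s}$ can be absorbed by replacing $\rho$ with a slightly larger value still below $1$, after which \eqref{eq: bound on eigenvector} follows with a constant depending only on $\lambda$ and $f$.
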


Remark that \cref{thm:spectratoeplitz} characterises eigenvectors whose eigenvalues satisfy $$\operatorname{wind}(f, \lambda)<0.$$ Similar characterisation can formally be done with eigenvalues with $\operatorname{wind}(f, \lambda)>0$, but the corresponding eigenvectors would not be square summable.

 For Toeplitz matrices, similar results can be derived. We will need the concept of pseudospectrum.

 \begin{definition}
 Let $\epsilon>0$ and $M$ be a matrix or an operator on a normed space. Then we denote by $\sigma_\epsilon(M)$ the $\epsilon$-pseudospectrum of $M$ defined by
 $$\sigma_\epsilon(M):=\{ \lambda \in \mathbb{C}: 
 \Vert (M - \lambda)^{-1}\Vert > \epsilon^{-1} \}.$$
 Any value $\lambda$ with $\Vert (M - \lambda)^{-1}\Vert > \epsilon^{-1}$ is called $\epsilon$-pseudoeigenvalue. 
 \end{definition}
 The topic of pseudoeigenvalues is extensively discussed in \cite{trefethen.embree2005Spectra}. We recall from there the following result.
 \begin{lemma}
     Let $M$ be an operator with enough regularity (\emph{e.g.}, a matrix) and $\epsilon>0$. Then, the $\epsilon$-pseudospectrum of $M$ is defined equivalently as the set of $z\in\C$:
     \begin{enumerate}
         \item[(i)] $\Vert (M - \lambda)^{-1}\Vert > \epsilon^{-1}$;
         \item[(ii)] $z\in\sigma(M+E)$ for some $E$ with $\Vert E \Vert < \epsilon$;
         \item[(iii)] $\sigma \in \sigma(M)$ or $\Vert (z-M)u\Vert<\epsilon$ for some $u$ with $\Vert u\Vert = 1$.
     \end{enumerate}
 \end{lemma}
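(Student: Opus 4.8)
The plan is to prove the equivalence of (i)--(iii) by establishing the cycle $(i)\Rightarrow(iii)\Rightarrow(ii)\Rightarrow(i)$, throughout adopting the standard convention that $\Vert(M-z)^{-1}\Vert=+\infty$ whenever $z\in\sigma(M)$, so that condition (i) is automatically satisfied on the spectrum (and reading the first disjunct of (iii) as ``$z\in\sigma(M)$''). First I would dispose of the case $z\in\sigma(M)$: then (i) holds by the convention, (ii) holds with the choice $E=0$, and (iii) holds by its first disjunct, so all three are simultaneously true. Hence I may assume throughout that $z\notin\sigma(M)$, so that $(M-z)^{-1}$ is a genuine bounded operator.

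For $(i)\Rightarrow(iii)$, suppose $\Vert(M-z)^{-1}\Vert>\epsilon^{-1}$. Since the operator norm is a supremum over unit vectors and the inequality is strict, I can choose a unit vector $v$ with $\Vert(M-z)^{-1}v\Vert>\epsilon^{-1}$ (in finite dimensions the supremum is attained; in general one takes an approximating vector, which the strict inequality permits). Setting $u\coloneqq(M-z)^{-1}v/\Vert(M-z)^{-1}v\Vert$, which is a unit vector, I obtain $\Vert(M-z)u\Vert=\Vert v\Vert/\Vert(M-z)^{-1}v\Vert=1/\Vert(M-z)^{-1}v\Vert<\epsilon$, which is exactly the second disjunct of (iii).

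The step $(iii)\Rightarrow(ii)$ is the heart of the argument and the place that requires the most care, since it is here that the perturbation $E$ must be constructed explicitly. Given a unit vector $u$ with $w\coloneqq(M-z)u$ satisfying $\Vert w\Vert=:\delta<\epsilon$, I seek a perturbation of norm $<\epsilon$ that turns $u$ into a genuine eigenvector of $M+E$ at $z$, i.e.\ that forces $(M+E-z)u=0$, equivalently $Eu=-w$. In a Hilbert space this is achieved by the rank-one operator $Ex\coloneqq-\langle x,u\rangle w$; then $Eu=-w$ because $u$ is a unit vector, and the operator norm of such a rank-one map is $\Vert E\Vert=\Vert u\Vert\,\Vert w\Vert=\delta<\epsilon$. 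Consequently $(M+E-z)u=w-w=0$, so $z$ is an eigenvalue of $M+E$ and (ii) follows. In a general normed space the inner product is replaced by a norming functional $\phi$ furnished by Hahn--Banach ($\Vert\phi\Vert=1$, $\phi(u)=1$), setting $Ex\coloneqq-\phi(x)w$, with the identical norm bound.

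Finally, $(ii)\Rightarrow(i)$ is most transparent in contrapositive form. Assume (i) fails, so $z\notin\sigma(M)$ and $\Vert(M-z)^{-1}\Vert\leq\epsilon^{-1}$. For any $E$ with $\Vert E\Vert<\epsilon$ I factor $M+E-z=(M-z)\bigl(I+(M-z)^{-1}E\bigr)$ and estimate $\Vert(M-z)^{-1}E\Vert\leq\Vert(M-z)^{-1}\Vert\,\Vert E\Vert<\epsilon^{-1}\cdot\epsilon=1$. By the Neumann series the factor $I+(M-z)^{-1}E$ is invertible, hence so is $M+E-z$, giving $z\notin\sigma(M+E)$; as this holds for every admissible $E$, condition (ii) fails, closing the cycle. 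The only genuine subtleties are bookkeeping ones --- the spectral convention for (i) on $\sigma(M)$, the passage from supremum to an approximating unit vector in infinite dimensions, and the norm computation for the rank-one perturbation --- and I do not expect any of them to present a real obstacle.
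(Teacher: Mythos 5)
Your proof is correct: the cycle $(i)\Rightarrow(iii)\Rightarrow(ii)\Rightarrow(i)$, with the rank-one (Hahn--Banach in the Banach-space case) perturbation for $(iii)\Rightarrow(ii)$ and the Neumann-series factorisation for $(ii)\Rightarrow(i)$, is exactly the standard argument. The paper itself gives no proof --- it simply cites Trefethen and Embree's book, where this same chain of implications is the proof given --- so your proposal matches the intended approach, including the correct handling of the convention $\Vert(M-z)^{-1}\Vert=+\infty$ on $\sigma(M)$ and of the paper's typographical slips in items (i) and (iii).
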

A vector $u$ as in item (iii) is called an $\epsilon$-pseudoeigenvector to $z$.
 
  Let  the tridiagonal Toeplitz matrix $T_N(f)$ be the finite, truncated, counterpart of $T(f)$ of size $N\times N$. Then the following proposition holds. 
\begin{proposition}\label{prop: specra of fam of toeplitz}
Consider a symbol $f$ of a banded or semi-banded Toeplitz matrix. Let $\lambda\in\C$ be such that $\operatorname{wind}(f,\lambda)\neq 0$. Then there exists $M>1$ so that, for $N$ sufficiently large, $\lambda$ is an $M^{-N}$-pseudoeigenvalue,
    $$
        \Vert (T_N(f) - \lambda)^{-1}\Vert \geq M^N,
    $$
    with an exponentially localised pseudoeigenvector $v$ such that
    \begin{align*}
        \frac{\vert v^{(j)}\vert}{\max_j \vert v^{(j)}\vert} \leq \begin{cases}
                                                                      M^{-j}  & \text{ if } \operatorname{wind}(f,\lambda) <0, \\
                                                                      \nm
                                                                      M^{j-N} & \text{ if } \operatorname{wind}(f,\lambda) >0.
                                                                  \end{cases}
    \end{align*}
\end{proposition}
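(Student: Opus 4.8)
The plan is to produce, for each large $N$, an explicit approximate eigenpair $(\lambda,v)$ for the finite section $T_N(f)$ with an exponentially small residual, and then to convert this residual bound into the claimed lower bound on the resolvent norm. Indeed, if $\Vert (T_N(f)-\lambda)v\Vert\le\epsilon\Vert v\Vert$ for some $v\neq 0$, then either $\lambda\in\sigma(T_N(f))$ (and the resolvent norm is infinite) or, writing $u=(T_N(f)-\lambda)v$, one has $\Vert v\Vert\le\Vert(T_N(f)-\lambda)^{-1}\Vert\,\Vert u\Vert\le\epsilon\,\Vert(T_N(f)-\lambda)^{-1}\Vert\,\Vert v\Vert$, so that $\Vert(T_N(f)-\lambda)^{-1}\Vert\ge\epsilon^{-1}$ and, by item (iii) of the pseudospectrum lemma, $v/\Vert v\Vert$ is an $\epsilon$-pseudoeigenvector. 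It therefore suffices to build $v$ with $\epsilon=M^{-N}$ and the stated localisation, and I would treat the two signs of the winding number separately.

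For $\operatorname{wind}(f,\lambda)<0$, \cref{thm:spectratoeplitz} supplies a genuine eigenvector $\bm x\in\ell^2(\N)$ of the semi-infinite operator $T(f)$ with $(T(f)-\lambda)\bm x=0$ and exponential decay $\lvert\bm x_j\rvert/\max_i\lvert\bm x_i\rvert\le Cj^{l-1}\rho^{j-1}$ for some $\rho<1$. I would take $v=(\bm x_1,\dots,\bm x_N)^{\top}$ to be the truncation of this eigenvector. The key point is that, the symbol being $l$-banded, $T_N(f)$ acts exactly as $T(f)$ on all rows except the last $l$; hence the residual $(T_N(f)-\lambda)v$ is supported there and equals minus the contribution $\sum_{j>N}a_{i-j}\bm x_j$ of the discarded tail. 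Since $\bm x$ decays exponentially this tail is of size $O(N^{l-1}\rho^{N})$, while $\Vert v\Vert$ converges to $\Vert\bm x\Vert>0$ and is thus bounded below by a positive constant for $N$ large. Choosing any $M$ with $1<M<\rho^{-1}$ absorbs the polynomial prefactor and gives $\Vert(T_N(f)-\lambda)v\Vert\le M^{-N}\Vert v\Vert$ for $N$ large; the same choice converts the decay of $\bm x$ into $\lvert v^{(j)}\rvert/\max_j\lvert v^{(j)}\rvert\le M^{-j}$ (polynomial and constant factors being absorbed into the exponential rate), which is the localisation claimed in the first case.

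For $\operatorname{wind}(f,\lambda)>0$ no $\ell^2$ eigenvector of $T(f)$ exists, as noted after \cref{thm:spectratoeplitz}, so I would reduce to the previous case by a reflection symmetry. Let $\tilde f(z)=f(1/z)$, whose Fourier coefficients are $\tilde a_k=a_{-k}$; then $T_N(\tilde f)=T_N(f)^{\top}$, and reversing the orientation of the image curve gives $\operatorname{wind}(\tilde f,\lambda)=-\operatorname{wind}(f,\lambda)<0$. As $\tilde f$ is again (semi-)banded, the first case produces a pseudoeigenvector $w$ of $T_N(\tilde f)$ localised at the top, with $\Vert(T_N(\tilde f)-\lambda)w\Vert\le M^{-N}\Vert w\Vert$. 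Writing $J$ for the reversal matrix, one has $T_N(f)=J\,T_N(\tilde f)\,J$, so $v\coloneqq Jw$ satisfies $(T_N(f)-\lambda)v=J(T_N(\tilde f)-\lambda)w$; since $J$ is an isometry this transfers the residual bound to $T_N(f)$, while $v^{(j)}=w^{(N+1-j)}$ turns top-localisation of $w$ into $\lvert v^{(j)}\rvert/\max_j\lvert v^{(j)}\rvert\le M^{j-N}$, i.e. localisation at the bottom edge, as required in the second case.

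The routine parts are the orientation/transpose bookkeeping and the passage from the residual bound to the resolvent-norm statement. The main obstacle is the residual estimate in the genuinely semibanded case: there the upper band is not finite, so the discarded tail $\sum_{j>N}a_{i-j}\bm x_j$ is a full convolution rather than an $l$-term sum, and I would control it by combining the exponential decay of $\bm x$ with summability of the Fourier coefficients of $f$ via a Young-type convolution estimate, verifying that the product still decays like $\rho^{N}$ up to polynomial factors. A secondary point to check is that the transpose of a lower-semibanded symbol is upper-semibanded, so that \cref{thm:spectratoeplitz}, or its reflected analogue, indeed applies to $\tilde f$ in the winding-positive case.
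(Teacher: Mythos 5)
Your strategy is the classical Reichel--Trefethen/Trefethen--Embree argument to which the survey implicitly appeals (the paper states \cref{prop: specra of fam of toeplitz} without proof): truncate the exponentially decaying $\ell^2$ eigenvector of $T(f)$ supplied by \cref{thm:spectratoeplitz} when $\operatorname{wind}(f,\lambda)<0$, bound the residual coming from the discarded tail (your Cauchy--Schwarz/Young handling of the non-banded side is fine), convert the residual bound into the resolvent lower bound, and treat $\operatorname{wind}(f,\lambda)>0$ via $T_N(f)=J\,T_N(\tilde f)\,J$ with $\tilde f(z)=f(1/z)$. For \emph{banded} symbols every step of this, including the transpose and orientation bookkeeping, is correct.

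The genuine gap is precisely the point you defer as ``a secondary point to check'': applying \cref{thm:spectratoeplitz}, or a ``reflected analogue'', to $\tilde f$ in the semibanded case. In the paper's convention $l$-semibanded means $a_j=0$ for $j>l$ (finitely many \emph{sub}diagonals), so $\tilde f$ is banded \emph{above}, a class not covered by \cref{thm:spectratoeplitz} --- and for which its conclusion is in fact false. The asymmetry is structural: for a lower-semibanded symbol with negative winding, the kernel generating function $X$ satisfies $gX=R$ with $g$ and $R$ analytic \emph{outside} the disc, which forces $X$ to continue analytically across $\mathbb{T}^1$ and hence to have exponentially decaying coefficients; for an upper-semibanded symbol one instead gets $X=q/G$ with $G(z)=z^l\bigl(\tilde f(z)-\lambda\bigr)$ analytic only \emph{inside} the disc, and when the coefficients of $\tilde f$ decay slowly, $1/G$ does not continue past the circle and the kernel vector decays only polynomially. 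This cannot be repaired, because the proposition itself fails there. Concretely, let $g(z)=1+\epsilon\sum_{k\geq 1}k^{-3}z^k$ with $\epsilon$ small, so that $g$ is continuous and zero-free on the closed disc, and consider the $1$-semibanded symbol $f(z)=z\,g(1/z)$ with $\lambda=0$, for which $\operatorname{wind}(f,0)=1$. Then $T_N(f)^{\top}=T_N(z^{-1}g)$ consists of rows $2,\dots,N+1$ and columns $1,\dots,N$ of the lower-triangular matrix $T_{N+1}(g)$, whose inverse is exactly $T_{N+1}(1/g)$; solving $T_N(z^{-1}g)\,y=e$ through this triangular structure gives $y_1=-\bigl((1/g)_N\bigr)^{-1}\bigl(\sum_{i\leq N}(1/g)_{N-i}e_i\bigr)$ and hence
\begin{equation*}
    \frac{1}{\lvert (1/g)_N\rvert}\;\leq\;\bigl\Vert T_N(f)^{-1}\bigr\Vert\;\leq\;C\left(1+\frac{1}{\lvert (1/g)_N\rvert}\right),
    \qquad \lvert (1/g)_N\rvert \asymp \epsilon N^{-3},
\end{equation*}
so the resolvent grows only like $N^3$ and no $M>1$ as in the statement exists. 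Your proof (like the proposition) is therefore valid for banded symbols, and for semibanded ones only when $\operatorname{wind}(f,\lambda)<0$; the positive-winding semibanded case needs an extra hypothesis, such as exponentially decaying coefficients (a symbol analytic in an annulus). A minor separate point: at the index where $\lvert v^{(j)}\rvert$ is maximal the normalised ratio equals $1$, so the localisation can only hold in the form $M^{-(j-1)}$ or with a constant prefactor; your ``absorption of constants'' inherits this defect from the statement itself.
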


\subsection{Spectra of tridiagonal Toeplitz matrices with perturbations} \label{sec: spectra tridiagonal toeplitz}
In this section, we characterise the spectra of tridiagonal Toeplitz matrices with perturbations in the diagonal corners and show that their associated eigenvectors exhibit exponential decay. 

We denote by $T_{m}^{(a,b)}$ the tridiagonal Toeplitz matrix of order $m$ with perturbations $(a, b)$ in the diagonal corners, that is,
\begin{equation}\label{equ:toeplitzmatrix1}
    T_{m}^{(a, b)}=\left(\begin{array}{cccccc}
            \alpha+a & \beta  &            &           &          &               \\
            \eta    & \alpha & \beta  &           &          &               \\
                         & \eta   & \alpha & \beta &          &               \\
                         &            & \ddots     & \ddots    & \ddots   &               \\
                         &            &            & \eta  & \alpha & \beta      \\
                         &            &            &           & \eta & \alpha+ b \\
        \end{array}\right).
\end{equation}
The tridiagonal Toeplitz matrix is thus $T_m^{(0,0)}$. Its spectrum is given as follows.

\begin{lemma}[Eigenvalues and eigenvectors of tridiagonal Toeplitz matrices]\label{lemma: eigs of toeplitz} The eigenvalues of $T_m^{(0, 0)}$ are given by
    $$
        \lambda_k=\alpha+2 \sqrt{\eta \beta} \cos \left(\frac{k}{m+1} \pi\right), \quad 1 \leq k \leq m,
    $$
    and the corresponding eigenvectors are
    $$
        \bm{v}_k^{(i)}=\left(\frac{\eta}{\beta}\right)^{\frac{1}{2}} \sin \left(\frac{i k}{m+1} \pi\right), \quad 1 \leq k \leq m \text { and } 1 \leq i \leq m,
    $$
    where $\bm{v}_k^{(i)}$ is the $i$\textsuperscript{th} coefficient of the $k$\textsuperscript{th} eigenvector.
\end{lemma}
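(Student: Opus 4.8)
The plan is to diagonalise $T_m^{(0,0)}$ by reducing it, through a diagonal similarity, to the classical symmetric second-difference matrix whose spectrum is known explicitly. First I would split $T_m^{(0,0)}=\alpha I+M$, where $M$ is the tridiagonal Toeplitz matrix with zero diagonal, $\beta$ above and $\eta$ below; since $\alpha I$ only shifts eigenvalues and preserves eigenvectors, it suffices to diagonalise $M$. Assuming $\beta\eta\neq 0$ --- the degenerate case $\beta\eta=0$ renders $M$ nilpotent with the single eigenvalue $0$ and is implicitly excluded by the factor $(\eta/\beta)^{1/2}$ in the statement --- I would conjugate $M$ by the diagonal matrix $D=\diag(d_1,\dots,d_m)$ with $d_i=(\eta/\beta)^{i/2}$, fixing one branch of the square root. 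An entrywise check gives $D^{-1}MD=\sqrt{\eta\beta}\,S$, where $S$ has zero diagonal and $1$ on both off-diagonals: the scaling equalises the two off-diagonals $\beta$ and $\eta$ to the common value $\sqrt{\eta\beta}$.

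The next step is the standard spectral decomposition of $S$, which I would derive rather than quote. Seeking eigenvectors $w^{(i)}=\sin(i\theta)$, the interior recurrence $w^{(i-1)}+w^{(i+1)}=\mu\,w^{(i)}$ collapses, via the identity $\sin((i-1)\theta)+\sin((i+1)\theta)=2\cos\theta\,\sin(i\theta)$, to $\mu=2\cos\theta$; the boundary constraints $w^{(0)}=0$ and $w^{(m+1)}=0$ are met precisely when $\sin((m+1)\theta)=0$, i.e. $\theta=k\pi/(m+1)$ for $k=1,\dots,m$. This produces $m$ distinct eigenvalues $\mu_k=2\cos(k\pi/(m+1))$ with eigenvectors $w_k^{(i)}=\sin(ik\pi/(m+1))$, so $S$ is fully diagonalised.

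Transforming back through $M=D(\sqrt{\eta\beta}\,S)D^{-1}$, the eigenvalues of $M$ are $2\sqrt{\eta\beta}\cos(k\pi/(m+1))$ and the eigenvectors are $Dw_k$, whose $i$-th entry is $(\eta/\beta)^{i/2}\sin(ik\pi/(m+1))$; restoring $\alpha$ gives exactly the claimed $\lambda_k$ and $\bm v_k^{(i)}$. As an independent check one verifies the ansatz directly in the two corner rows of $T_m^{(0,0)}$: the top row closes because $\bm v_k^{(0)}=\sin 0=0$ plays the role of the missing entry, and the bottom row closes because $\bm v_k^{(m+1)}\propto\sin(k\pi)=0$. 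The one point I would treat carefully is the branch of the square root when $\eta,\beta$ are complex (the non-Hermitian regime): the identity $D^{-1}MD=\sqrt{\eta\beta}\,S$ and the eigenvector prefactor must use one consistent branch of $(\eta/\beta)^{1/2}$ matched with $\sqrt{\eta\beta}$. The eigenvalue set is nonetheless branch-independent, since flipping the sign of $\sqrt{\eta\beta}$ sends $\cos(k\pi/(m+1))\mapsto-\cos(k\pi/(m+1))=\cos((m{+}1{-}k)\pi/(m+1))$ and thus merely permutes the $\lambda_k$.
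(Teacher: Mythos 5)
Your proof is correct, but there is nothing in the paper to compare it against: the paper states this lemma as a known classical fact and gives no proof (only the perturbed case, \cref{lemma: spectrum perturbed toeplitz}, is attributed to a reference). Your route --- splitting off $\alpha I$, conjugating by $D=\diag\bigl((\eta/\beta)^{1/2},\dots,(\eta/\beta)^{m/2}\bigr)$ to reach $\sqrt{\eta\beta}\,S$ with $S$ the symmetric $0$--$1$ tridiagonal matrix, and then running the sine ansatz with the boundary conditions $w^{(0)}=w^{(m+1)}=0$ --- is the standard derivation, and you execute it correctly, including the corner-row check, the degenerate case $\eta\beta=0$, and the branch consistency for complex parameters. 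One genuinely useful byproduct: your derivation shows the eigenvector prefactor must be $(\eta/\beta)^{i/2}$, i.e.\ it depends on the entry index $i$, whereas the statement as printed has the constant factor $(\eta/\beta)^{1/2}$. The printed formula is an eigenvector only when $\eta=\beta$: for an interior row, $\eta\sin((i-1)\theta)+\beta\sin((i+1)\theta)=(\eta+\beta)\cos\theta\,\sin(i\theta)+(\beta-\eta)\sin\theta\,\cos(i\theta)$, and the second term does not cancel unless $\eta=\beta$. The index-dependent exponent is also consistent with the paper's own \cref{lemma: spectrum perturbed toeplitz}, whose eigenvectors carry the factor $(\eta/\beta)^{(j-1)/2}$; so the statement you were asked to prove contains a typo, and your argument supplies the corrected form.
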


The spectral theory of the Toeplitz matrices with perturbations in the diagonal corners, \emph{i.e.}, $T_{m}^{(a, b)}$, is also well understood, as the following result \cite[Theorem 3.1]{yueh.cheng2008Explicit} shows. 
\begin{lemma}\label{lemma: spectrum perturbed toeplitz}
    Suppose that $\eta \beta \neq 0, a = \eta, b = \beta,  \alpha+\beta+\eta=0$ and let $\lambda$ be an eigenvalue of $T_m^{(a, b)}$. Then, either $\lambda=\lambda_1:=0$ and the corresponding eigenvector is $\bm{v}_1=\mathbf{1} =(1,\ldots, 1)^\top$, where $\top$ denotes the transpose,  or
    $$
        \lambda=\lambda_k:=\alpha+2 \sqrt{\eta \beta} \cos \left(\frac{\pi}{m}(k-1)\right), \quad 2 \leq k \leq m,
    $$
    and the corresponding eigenvector $\bm{v_k}= (\bm{v}_k^{(1)}, \ldots, \bm{v}_k^{(m)})^\top$ is given by
    $$
        \bm{v}_k^{(j)}=\left(\frac{\eta}{\beta}\right)^{\frac{j-1}{2}}\left(a \sin \left(\frac{j(k-1) \pi}{m}\right)-\eta\sqrt{\frac{\eta}{\beta}} \sin \left(\frac{(j-1)(k-1) \pi}{m}\right)\right)
    $$
    for $2 \leq k \leq m$ and $1 \leq j \leq m$.
\end{lemma}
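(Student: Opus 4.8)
The plan is to dispatch the exceptional eigenpair $(\lambda_1,\mathbf 1)$ by a direct substitution, and then to reduce the remaining spectral problem to a constant–coefficient three–term recurrence solved by Chebyshev-type functions, with the two corner perturbations encoded as boundary (ghost–point) conditions. First I would verify that $\lambda_1=0$ has eigenvector $\mathbf 1=(1,\dots,1)^\top$. Using $a=\eta$, $b=\beta$ and $\alpha+\beta+\eta=0$, the first row of $T_m^{(a,b)}\mathbf 1$ reads $(\alpha+\eta)+\beta=0$, each interior row reads $\eta+\alpha+\beta=0$, and the last row reads $\eta+(\alpha+\beta)=0$; hence $T_m^{(a,b)}\mathbf 1=0$. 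This settles the first alternative and uses all three hypotheses simultaneously.

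For the remaining eigenvalues I would symmetrise the tridiagonal part. Writing an eigenvector as $\bm v^{(j)}=(\eta/\beta)^{(j-1)/2}w^{(j)}$ and setting $r=\sqrt{\eta/\beta}$, the interior equations $\eta\bm v^{(j-1)}+\alpha\bm v^{(j)}+\beta\bm v^{(j+1)}=\lambda\bm v^{(j)}$ collapse, after dividing by $(\eta/\beta)^{(j-1)/2}$ and using $\eta r^{-1}=\beta r=\sqrt{\eta\beta}$, to the symmetric recurrence
\[
  w^{(j+1)}-2\cos\theta\,w^{(j)}+w^{(j-1)}=0,\qquad \lambda=\alpha+2\sqrt{\eta\beta}\cos\theta,
\]
whose general solution is a combination of $\sin(j\theta)$ and $\sin((j-1)\theta)$. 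The perturbed first row $(\alpha+\eta)\bm v^{(1)}+\beta\bm v^{(2)}=\lambda\bm v^{(1)}$ becomes, after the same rescaling, precisely the recurrence at $j=1$ with ghost value $w^{(0)}=r\,w^{(1)}$. The choice $w^{(j)}=\sin(j\theta)-r\sin((j-1)\theta)$ satisfies this left condition for every $\theta$, since $w^{(0)}=r\sin\theta=r\,w^{(1)}$; multiplying by the prefactor $\eta(\eta/\beta)^{(j-1)/2}$ then reproduces, up to the irrelevant overall scalar $\eta$, the eigenvector displayed in the lemma.

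It remains to impose the right corner. The last row $\eta\bm v^{(m-1)}+(\alpha+\beta)\bm v^{(m)}=\lambda\bm v^{(m)}$ becomes the ghost condition $w^{(m+1)}=r^{-1}w^{(m)}$; substituting the ansatz and using $\sin((m+1)\theta)+\sin((m-1)\theta)=2\sin(m\theta)\cos\theta$, this reduces to
\[
  \sin(m\theta)\,\bigl(2r\cos\theta-(1+r^2)\bigr)=0 .
\]
The second factor gives $\cos\theta=(1+r^2)/(2r)$, i.e. $e^{\i\theta}\in\{r,r^{-1}\}$, for which $\lambda=\alpha+\sqrt{\eta\beta}(r+r^{-1})=\alpha+\beta+\eta=0$ and the ansatz collapses to a constant vector, so it merely recovers $\lambda_1$. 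The genuinely new roots come from $\sin(m\theta)=0$, giving $\theta=(k-1)\pi/m$ for $k=2,\dots,m$ (the value $\theta=0$ yields the zero vector), hence $\lambda_k=\alpha+2\sqrt{\eta\beta}\cos(\pi(k-1)/m)$ with the stated eigenvectors. These $\theta_k$ produce $m-1$ distinct cosines, which together with $\lambda_1$ exhaust all $m$ eigenvalues.

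The main obstacle I anticipate lies in the right-boundary step: converting the corner-perturbed last row into the clean ghost condition and then massaging the trigonometric expression into the factored form $\sin(m\theta)\,(2r\cos\theta-(1+r^2))=0$, so that the spurious branch is transparently identified with the already known root $\lambda_1=0$. Some care is also needed with the branch of $\sqrt{\eta/\beta}$ when $\eta,\beta$ are complex and with the degenerate case $r=1$, but neither affects the final list of eigenpairs.
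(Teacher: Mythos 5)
Your proposal is correct, but there is nothing in the paper to compare it against: the paper does not prove this lemma, it simply quotes it as Theorem 3.1 of \cite{yueh.cheng2008Explicit}. Your argument is therefore a self-contained replacement for that citation, and it is the standard difference-equation (ghost-point) method. I checked the key computations and they all hold: the rescaling $\bm v^{(j)} = (\eta/\beta)^{(j-1)/2}w^{(j)}$ does reduce the interior rows to $w^{(j+1)} - 2\cos\theta\, w^{(j)} + w^{(j-1)} = 0$ with $\lambda = \alpha + 2\sqrt{\eta\beta}\cos\theta$; the corner perturbations $a=\eta$, $b=\beta$ become exactly the boundary conditions $w^{(0)} = r\,w^{(1)}$ and $w^{(m+1)} = r^{-1}w^{(m)}$ with $r=\sqrt{\eta/\beta}$; the ansatz $w^{(j)} = \sin(j\theta) - r\sin((j-1)\theta)$ satisfies the left condition identically and reproduces the stated eigenvector up to the harmless factor $\eta$; and the right condition factors as $\sin(m\theta)\bigl(2r\cos\theta - (1+r^2)\bigr) = 0$, whose second factor gives $\lambda = \alpha+\eta+\beta = 0$ with a constant eigenvector, recovering $(\lambda_1,\mathbf{1})$.

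The one step you should tighten is the closing completeness argument. Counting ``$m-1$ distinct $\lambda_k$ plus $\lambda_1=0$'' exhausts the spectrum only if $0\notin\{\lambda_k\}$, i.e.\ $(\eta+\beta)/(2\sqrt{\eta\beta})\neq\cos\bigl((k-1)\pi/m\bigr)$ for all $k$. This holds whenever $\eta,\beta$ are real of the same sign (then $\lvert(\eta+\beta)/(2\sqrt{\eta\beta})\rvert\geq 1$ with equality only at $\theta=0,\pi$, which are excluded), and that covers every application in the paper, since the gauge capacitance matrix has $\eta,\beta<0$; but it is not automatic under the bare hypothesis $\eta\beta\neq 0$ with complex entries. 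In that generality, run your argument in the necessity direction instead: any eigenvalue corresponds to some complex $\theta$; if $\sin\theta\neq 0$, the left boundary condition pins the eigenvector to your ansatz up to scale and the factorisation forces $\lambda\in\{0\}\cup\{\lambda_k\}$; in the degenerate case $\sin\theta=0$ the recurrence solutions are $(A+Bj)(\pm1)^j$, and imposing both boundary conditions forces $r e^{\i\theta}=1$, hence $B=0$, a constant eigenvector, and again $\lambda = \alpha+\eta+\beta=0$. That two-line computation closes exactly the gap you flagged at the end of your proposal.
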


\subsection{Spectra of tridiagonal $2$-Toeplitz matrices with perturbations}\label{sec: spectrum of tridiag 2 toeplitz matrices with perturbations}
In this section, we generalise the results obtained in the previous section and characterise the spectra of tridiagonal $2$-Toeplitz matrices with perturbations in the diagonal corners. 

Let $A_{2m+1}^{(a,b)}$ be the tridiagonal $2$-Toeplitz matrix of order $2m+1$ with perturbations $(a, b)$ in the diagonal corners, that is,
\begin{equation}\label{equ:oddmatrixtwoperturb1}
    A_{2m+1}^{(a, b)}=\left(\begin{array}{ccccccc}
            \alpha_{1}+a & \beta_{1}  &            &            &          &          &               \\
            \eta_{1}     & \alpha_{2} & \beta_{2}  &            &          &          &               \\
                         & \eta_{2}   & \alpha_{1} & \beta_{1}  &          &          &               \\
                         &            & \eta_{1}   & \alpha_{2} & \ddots   &          &               \\
                         &            &            & \ddots     & \ddots   & \ddots   &               \\
                         &            &            &            & \eta_{1} & \alpha_2 & \beta_2       \\
                         &            &            &            &          & \eta_{2} & \alpha_{1}+ b \\
        \end{array}\right).
\end{equation}
Here, $\beta_i, \eta_i, \alpha_i$, $i=1,2$ and $a,b$ are in $\mathbb{R}$. Denote furthermore by $A_{2m}^{(a,b)}$ the tridiagonal $2$-Toeplitz matrix of order $2m$ with perturbations $(a, b)$ in the diagonal corners, that is,
\begin{equation}\label{equ:evenmatrixtwoperturb1}
    A_{2m}^{(a,b)}=\left(\begin{array}{ccccccc}
            \alpha_{1}+a & \beta_{1}  &            &            &          &          &               \\
            \eta_{1}     & \alpha_{2} & \beta_{2}  &            &          &          &               \\
                         & \eta_{2}   & \alpha_{1} & \beta_{1}  &          &          &               \\
                         &            & \eta_{1}   & \alpha_{2} & \ddots   &          &               \\
                         &            &            & \ddots     & \ddots   & \ddots   &               \\
                         &            &            &            & \eta_{2} & \alpha_1 & \beta_1       \\
                         &            &            &            &          & \eta_{1} & \alpha_{2}+ b \\
        \end{array}\right).
\end{equation}
We assume throughout that the off-diagonal elements in (\ref{equ:oddmatrixtwoperturb1}) and (\ref{equ:evenmatrixtwoperturb1}) are nonzero and satisfy the following condition: $$\eta_i\beta_i >0, \quad i=1,2.$$

We first consider the eigenvalues of $A_{2m+1}^{(0,0)}$ and $A_{2m}^{(0,0)}$ and define the polynomials
\begin{equation}\label{equ:defiofpkstar1}
P_m^*(x)=\left(\sqrt{\eta_{1}\beta_1 \eta_2\beta_{2}}\right)^m U_m\left(\frac{\left(x-\alpha_1\right)\left(x-\alpha_{2}\right)-\eta_{1}\beta_1-\eta_2\beta_2}{2 \sqrt{\eta_{1}\beta_{1} \eta_{2}\beta_2}}\right),
\end{equation}
where $U_m$ is the Chebyshev polynomial of the second kind defined in (\ref{equ:chebyshevpolynomial2}). It is well-known (see \cite{gover1994eigenproblem, dafonseca.petronilho2001Explicit, 2006Orthogonal}) that the characteristic polynomials of the $2$-Toeplitz matrices $A_{2m+1}^{(0,0)}, A_{2m}^{(0,0)}$ are respectively
\begin{equation*}
    \chi_{A_{2m+1}^{(0, 0)}}(x)=\left(x-\alpha_{1}\right) P_m^*\left(x\right)
\end{equation*}
and
\begin{equation*}
    \chi_{A_{2m}^{(0, 0)}}(x)=P_m^*\left(x\right)+\eta_{2}\beta_{2} P_{m-1}^*\left(x\right).
\end{equation*}
More generally, it can be shown that the characteristic polynomials of $A_{2m+1}^{(a, b)}, A_{2m}^{(a, b)}$ are respectively
\begin{equation*}
    \begin{aligned}
        \chi_{A_{2m+1}^{(a, b)}}(x) = & \left(x-\alpha_{1}-a-b\right) P_m^*\left(x\right) +\left( ab\left(x-\alpha_{2}\right)  -a \eta_{1}\beta_{1}-b \eta_{2}\beta_{2}\right) P_{m-1}^*\left(x\right)
    \end{aligned}
\end{equation*}
and
\begin{equation*}
    \begin{aligned}
        \chi_{A_{2m}^{(a, b)}}(x)
        = & P_m^*\left(x\right)+\left(a\left(\alpha_2-x\right)+b\left(\alpha_1-x\right)+ab+\eta_{2}\beta_{2}\right) P_{m-1}^*\left(x\right) +a b \eta_1\beta_{1} P_{m-2}^*\left(x\right).
    \end{aligned}
\end{equation*}
In particular, for the case when $\alpha_1=\alpha_2=\alpha, \eta_j= \beta_j, j=1,2$, the characteristic polynomials of $A_{2m+1}^{(a, b)}$ and $A_{2m}^{(a, b)}$ are respectively
\begin{align}\label{equ:eigenpolynomial5}
    \chi_{A_{2m+1}^{(a, b)}}(x) = \left(x-\alpha-a-b\right) P_m^*\left(x\right)+\left( ab\left(x-\alpha\right)  -a \beta_{1}^2-b \beta_{2}^2\right) P_{m-1}^*\left(x\right)
\end{align}
and
\begin{align}\label{equ:eigenpolynomial6}
    \chi_{A_{2m}^{(a, b)}}(x)=P_m^*\left(x\right)+\left((a+b)\left(\alpha-x\right)+ab+\beta_{2}^2\right) P_{m-1}^*\left(x\right)+a b \beta_{1}^2 P_{m-2}^*\left(x\right)
\end{align}
for the corresponding $P_m^*(x)$'s.

We now present a detailed characterisation of the eigenvalues of $A_{2m+1}^{(a, b)}$ and $A_{2m}^{(a, b)}$. All the following results in this section and their justification can be found in the authors' paper \cite{ammari.barandun.ea2023Perturbed}.

\begin{theorem}\label{thm:eigenvaluethm1}
    Let $m$ be large enough. The eigenvalues $\lambda_r$'s of $A_{2m+1}^{(a, b)}$ are all real numbers. Except for at most $11$ eigenvalues, we can reindex the $\lambda_r$'s to have
    \[
        \lambda_{3}^l < \lambda_{4}^l<\cdots< \lambda_{m-3}^l\leq \min\{\alpha_1, \alpha_2\}\leq  \max\{\alpha_1, \alpha_2\}\leq \lambda_{m-3}^r< \cdots< \lambda_{4}^{r}< \lambda_{3}^{r}.
    \]
    In particular, for $k =3,\cdots, m-3$,
    \begin{equation}\label{equ:eigenvalue1}
        \cos\left(\frac{k\pi}{m}\right)\leq \min\left\{y\left(\lambda_k^l\right), y\left(\lambda_k^r\right)\right\}\leq \max\left\{y\left(\lambda_k^l\right), y\left(\lambda_k^r\right)\right\} \leq \cos\left(\frac{(k-2)\pi}{m}\right)
    \end{equation}
    with
    \begin{equation}\label{equ:normalizedfunction1}
        y(x)=\frac{(x-\alpha_1)(x-\alpha_2)-\eta_1\beta_1-\eta_2\beta_2}{2\sqrt{\eta_1\beta_1\eta_2\beta_2}}.
    \end{equation}
\end{theorem}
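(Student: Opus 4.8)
The plan is to combine a symmetrisation argument for the reality of the spectrum with a sign-change analysis of the characteristic polynomial, after turning the latter into a statement about Chebyshev polynomials of the second kind via the substitution $y=y(x)$ from \eqref{equ:normalizedfunction1}.

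First I would prove reality. Since $\eta_i\beta_i>0$, the off-diagonal entries on each side have matching signs, so I can conjugate $A_{2m+1}^{(a,b)}$ by a real diagonal matrix $D=\diag(d_1,\dots,d_{2m+1})$ whose consecutive ratios $d_{i+1}/d_i$ alternate between $\sqrt{\eta_1/\beta_1}$ and $\sqrt{\eta_2/\beta_2}$. This produces a real symmetric tridiagonal matrix with the same spectrum, because the corner perturbations $a,b$ only modify real diagonal entries; hence every $\lambda_r$ is real. Next, writing $P_m^*(x)=c^m U_m(y(x))$ with $c=\sqrt{\eta_1\beta_1\eta_2\beta_2}$ (see \eqref{equ:defiofpkstar1}), I would recast the formula for $\chi_{A_{2m+1}^{(a,b)}}$ stated above as
\[
\chi_{A_{2m+1}^{(a,b)}}(x)=c^{m-1}\Big[(x-\alpha_1-a-b)\,c\,U_m(y(x))+\big(ab(x-\alpha_2)-a\eta_1\beta_1-b\eta_2\beta_2\big)U_{m-1}(y(x))\Big],
\]
so the eigenvalues are the roots of the bracketed expression.

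Second, I would record the geometry of the quadratic $x\mapsto y(x)$: it is an upward parabola with vertex at $(\alpha_1+\alpha_2)/2$, and by the AM–GM inequality $\eta_1\beta_1+\eta_2\beta_2\ge 2c$ one gets $y(\alpha_1)=y(\alpha_2)=\tfrac{-\eta_1\beta_1-\eta_2\beta_2}{2c}\le-1$. Since $y$ is strictly monotone on each side of the vertex, the set $\{x:y(x)\in(-1,1)\}$ splits into a left branch contained in $(-\infty,\min\{\alpha_1,\alpha_2\})$ and a right branch contained in $(\max\{\alpha_1,\alpha_2\},+\infty)$. This already yields the separation of the characterised eigenvalues by $\min\{\alpha_1,\alpha_2\}$ and $\max\{\alpha_1,\alpha_2\}$, and the fact that on the left branch $y$ decreases in $x$ while on the right branch it increases gives the claimed interlacing once the indices $k$ are attached (with the order reversed on the right branch).

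Third, for the bulk bound \eqref{equ:eigenvalue1} I would evaluate the bracketed expression at the preimages of the roots $y=\cos(j\pi/m)$ of $U_{m-1}$, $1\le j\le m-1$. There the perturbation term vanishes and $U_m(\cos(j\pi/m))=(-1)^j$, so the expression reduces to $c(-1)^j(x-\alpha_1-a-b)$. On each branch the preimages form a monotone sequence, and for $3\le j\le m-3$ they stay on one side of the point $x=\alpha_1+a+b$, so the linear factor keeps a constant sign and the values alternate with $j$; the intermediate value theorem then produces a root of $\chi$ between the preimages of $\cos(k\pi/m)$ and $\cos((k-2)\pi/m)$, which is exactly \eqref{equ:eigenvalue1}. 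The window of width two in the index provides the slack needed to absorb the single possible sign flip of $x-\alpha_1-a-b$ on each branch and the index shift induced by the edge modes.

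The main obstacle, and the source of the ``at most $11$'' exceptions, is the behaviour near the band edges $y=\pm1$, where $U_m$ stops oscillating and the exterior monotonicity of the Chebyshev polynomials (\cref{lemma: monotonicity_Chebyshev}) takes over. There a bounded number of eigenvalues, independent of $m$, may leave the oscillatory band and acquire $y>1$ (the localised edge modes), or fall into the first indices $j\in\{1,2\}$ and $j\in\{m-2,m-1,m\}$ where the alternation argument is unavailable, together with the single eigenvalue attached to the factor $x-\alpha_1-a-b$. The delicate bookkeeping step is to show that, summed over both branches, these contributions total at most $11$ uniformly in $m$; once this is established, the remaining $2m-10$ eigenvalues are precisely those indexed by $k=3,\dots,m-3$ and obey \eqref{equ:eigenvalue1}.
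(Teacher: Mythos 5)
You should first note that this survey contains no proof of \cref{thm:eigenvaluethm1}: the statement is quoted from \cite{ammari.barandun.ea2023Perturbed}, and the survey's own remarks (the attribution of the finitely-many-exceptions caveats in \cref{thm:skineffect1} to ``multiple applications of Cauchy's interlacing Theorem'', and the analogous comment before \cref{prop: exaclty one in gap}) indicate that the proof given there is interlacing-based: after the same symmetrisation you perform, one compares the spectrum of $A_{2m+1}^{(a,b)}$ with that of unperturbed principal $2$-Toeplitz submatrices whose eigenvalues are the explicit Chebyshev nodes $y^{-1}\left(\cos(j\pi/m)\right)$, the width-two index window in \eqref{equ:eigenvalue1} reflecting the interlacing steps. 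Your route is genuinely different in its root-extraction mechanism: you work directly with the characteristic polynomial, evaluate it at the preimages of the zeros $\cos(j\pi/m)$ of $U_{m-1}$, where it collapses to $c\,(-1)^j(x-\alpha_1-a-b)$, and produce eigenvalues by the intermediate value theorem. Both arguments are correct and hinge on the same grid of nodes; the interlacing argument buys automatic ordering and counting with no case analysis on the perturbation, while your sign-change argument is more elementary (no interlacing theorem needed) and delivers the localisation \eqref{equ:eigenvalue1} directly from the alternation of $(-1)^j$.

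Two details in your write-up need tightening. First, the assertion that the preimages with $3\leq j\leq m-3$ ``stay on one side of the point $x=\alpha_1+a+b$'' is unjustified and false in general; what actually saves the argument --- and what your final paragraph implicitly uses --- is that $\alpha_1+a+b$ is a single point, so it lies in at most one of the disjoint intervals between consecutive preimages across both branches, hence at most one sign change is lost and the width-two window absorbs the resulting shift of indices. Second, the ``at most $11$'' requires no separate uniform estimate: it is simply the complement count $2m+1-2(m-5)$ once $2(m-5)$ distinct eigenvalues with the claimed ordering and bounds have been produced, so the ``delicate bookkeeping'' you defer reduces to the index-shift assignment you already sketch (use the root in $(t_{k-1},t_k)$ below the failed interval and the root one step over beyond it), which indeed stays inside $[\cos(k\pi/m),\cos((k-2)\pi/m)]$.
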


\begin{theorem}\label{thm:eigenvaluethm2}
    Let $m$ be large enough. The eigenvalues $\{\lambda_r\}$ of $A_{2m}^{(a, b)}$ are all real numbers. Except for at most $12$ eigenvalues, we can reindex the $\lambda_r$'s to have
    \[
        \lambda_{3}^l < \lambda_{4}^l<\cdots< \lambda_{m-4}^l\leq \min\{\alpha_1, \alpha_2\}\leq  \max\{\alpha_1, \alpha_2\}\leq \lambda_{m-4}^r< \cdots< \lambda_{4}^{r}< \lambda_{3}^{r}.
    \]
    In particular, for $k =3,\cdots, m-4$,
    \begin{equation}\label{equ:eigenvalue2}
        \cos\left(\frac{(k+1)\pi}{m}\right)\leq \min\left\{y\left(\lambda_k^l\right), y\left(\lambda_k^r\right)\right\}\leq \max\left\{y\left(\lambda_k^l\right), y\left(\lambda_k^r\right)\right\} \leq \cos\left(\frac{(k-2)\pi}{m}\right)
    \end{equation}
    with $y(x)$ being defined by (\ref{equ:normalizedfunction1}).
\end{theorem}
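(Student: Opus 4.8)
The plan is to read the eigenvalues off the explicit characteristic polynomial of $A_{2m}^{(a,b)}$ and to compare its zeros with the grid of points at which the relevant Chebyshev polynomial of the second kind vanishes. First I would dispose of the reality claim. Since $\eta_i\beta_i>0$, the off-diagonal pairs of the tridiagonal matrix $A_{2m}^{(a,b)}$ have positive product, so a suitable diagonal similarity $D A_{2m}^{(a,b)} D^{-1}$ turns it into a real symmetric tridiagonal matrix; the corner perturbations $a,b$ only modify the first and last diagonal entries and hence preserve symmetrisability. Thus all $2m$ eigenvalues are real, and it remains to locate them. For this I would use the characteristic polynomial written through $P_m^*,P_{m-1}^*,P_{m-2}^*$, i.e.\ through $U_m,U_{m-1},U_{m-2}$ evaluated at $y(x)$ as in \eqref{equ:defiofpkstar1} and \eqref{equ:normalizedfunction1}.

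The heart of the argument is a sign analysis on the grid $y(x)=\cos(j\pi/m)$, $j=1,\dots,m-1$. Writing $g=\sqrt{\eta_1\beta_1\eta_2\beta_2}$ and using $U_n(\cos\theta)=\sin((n+1)\theta)/\sin\theta$, one finds $U_{m-1}(\cos(j\pi/m))=0$, $U_m(\cos(j\pi/m))=(-1)^j$ and $U_{m-2}(\cos(j\pi/m))=(-1)^{j+1}$. The vanishing of $U_{m-1}$ is decisive: it annihilates the entire (complicated) middle coefficient of the characteristic polynomial, so that at each grid point $\chi_{A_{2m}^{(a,b)}}(x)=g^{m-2}\eta_1\beta_1(-1)^j(\eta_2\beta_2-ab)$. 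Under the nondegeneracy $\eta_2\beta_2\neq ab$ the sign of $\chi$ is therefore $(-1)^j$ up to a fixed sign, so $\chi$ alternates along the grid, and the intermediate value theorem places a zero with $y$-value in each cell $(\cos((j+1)\pi/m),\cos(j\pi/m))$. Because $y$ in \eqref{equ:normalizedfunction1} is a quadratic, each admissible value in $(-1,1)$ is attained at two points, one on either side of the vertex of the parabola, producing the two interlacing families $\lambda_k^l$ (left branch, increasing in $k$) and $\lambda_k^r$ (right branch, decreasing in $k$) and the ordering around $\min\{\alpha_1,\alpha_2\}\le\max\{\alpha_1,\alpha_2\}$. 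Tracking the cell into which each reindexed eigenvalue falls then yields the two-sided bound \eqref{equ:eigenvalue2}; the three-cell width $\cos((k+1)\pi/m)\le y(\lambda_k)\le\cos((k-2)\pi/m)$ absorbs the bounded shift that the perturbation and the always-present $\eta_2\beta_2 P_{m-1}^*$ term induce relative to the pure grid, which is why both $y(\lambda_k^l)$ and $y(\lambda_k^r)$ land in the same window rather than at a single value.

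The remaining work, and the main obstacle, is the careful bookkeeping at the two ends of the spectrum. Near $y=1$ (the outer edges in $x$) and near the minimum of $y$, where the two branches merge and $y$ may or may not dip below $-1$, the clean alternation and the branch-pairing break down; there the finitely many eigenvalues, including those that the perturbation pushes outside the band $y\in[-1,1]$, must be analysed separately using the monotonicity of $U_{n-1}/U_n$ outside $[-1,1]$ from \cref{lemma: monotonicity_Chebyshev}(iii), exactly as in the odd case \cref{thm:eigenvaluethm1}. A count then closes the argument: the bulk families $\lambda_k^{l},\lambda_k^{r}$ for $k=3,\dots,m-4$ supply $2(m-6)=2m-12$ of the $2m$ real eigenvalues in distinct cells, so at most $12$ eigenvalues escape the stated pattern, and "$m$ large enough" guarantees that the bulk and boundary regimes are genuinely separated while the buffers $k\ge 3$ and $k\le m-4$ keep the sign argument and the reindexing valid. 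I expect the delicate points to be (a) pinning down precisely the three-cell window and the count $12$ (as opposed to the $11$ of the odd case in \cref{thm:eigenvaluethm1}) and (b) treating the vertex region together with the degeneracy $\eta_2\beta_2=ab$, both of which rely on the finer estimates collected in \cite{ammari.barandun.ea2023Perturbed}.
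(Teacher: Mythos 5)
Your proposal is sound and does prove the theorem, but by a genuinely different route from the one the paper relies on. The survey defers the proof to \cite{ammari.barandun.ea2023Perturbed}, and its own commentary (see the remark following \cref{thm: condensation of eigenmodes for dimer systems}, and the analogous remark before \cref{prop: exaclty one in gap}) indicates that the argument there runs through symmetrisation followed by repeated applications of Cauchy's interlacing theorem: the two perturbed corners are stripped off by deleting rows and columns, the resulting unperturbed submatrices have spectra located exactly at Chebyshev nodes, and the index shifts of at most two produced by interlacing are precisely what generate the asymmetric three-cell window in \eqref{equ:eigenvalue2}, the restriction $3\leq k\leq m-4$, and the budget of $12$ exceptional eigenvalues. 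You instead work directly with the explicit characteristic polynomial, and your key observation is correct: at the grid $y(x)=\cos(j\pi/m)$ the factor $P_{m-1}^*$ vanishes, annihilating the entire middle coefficient and leaving $\chi_{A_{2m}^{(a,b)}}(x)=(-1)^j g^{m-2}\eta_1\beta_1(\eta_2\beta_2-ab)$ with $g=\sqrt{\eta_1\beta_1\eta_2\beta_2}$, so the sign alternates along each branch of the parabola $y$ and the intermediate value theorem places a root in every cell. This is more elementary and, in the bulk, sharper than the paper's route: you obtain single-cell localisation, from which \eqref{equ:eigenvalue2} and the ordering follow a fortiori (both branches lie on the correct sides of $\min\{\alpha_1,\alpha_2\}$ and $\max\{\alpha_1,\alpha_2\}$ because $(x-\alpha_1)(x-\alpha_2)\leq 0$ forces $y(x)\leq -1$ on the interval between them), and the count $2m-2(m-6)=12$ closes the argument. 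Three corrections to your own accounting, none of which is a gap: the ``bounded shift absorbed by the three-cell window'' is not a feature of your proof — in your framework the window is a single cell, and the wider window of the statement is merely slack inherited from the interlacing formulation; the degeneracy $ab=\eta_2\beta_2$ is trivial rather than delicate, since then every grid point is itself an eigenvalue and may serve as $\lambda_k^{l}$ or $\lambda_k^{r}$; and the edge/vertex analysis you call ``the main obstacle'' is not needed at all for the theorem as stated, because the at most $12$ excluded eigenvalues are left entirely unconstrained, so your bulk construction together with the count already finishes the proof. The reality claim via diagonal similarity is standard and correct.
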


We next present explicit formulas for the eigenvectors of tridiagonal $2$-Toeplitz matrices with perturbations on the diagonal corners, that is the matrices $A_{2m+1}^{(a,b)}$ and $A_{2m}^{(a,b)}$, through a direct construction. We start by introducing the following two families of polynomials.
\begin{definition}\label{defi:generalqmpm1}
    We define the two families of polynomials $q_{k}^{(\xi_p, \xi_{q})}, p_{k}^{(\xi_p, \xi_{q})}$ by
    \[
        q_{0}^{(\xi_p, \xi_{q})}(\nu)=\xi_{q}, \qquad p_{0}^{\xi_p, \xi_{p}}(\nu)=\xi_{p},
    \]
    and the recurrence formulas
    \begin{align}\label{equ:recurrencerelation5}
        q_{k}^{(\xi_p, \xi_{q})}(\nu)&=\nu p_{k-1}^{(\xi_p, \xi_{q})}(\nu)-q_{k-1}^{(\xi_p, \xi_{q})}(\nu),\\
        \nm
\label{equ:recurrencerelation6}
        p_{k}^{(\xi_p, \xi_{q})}(\nu) &= q_{k}^{(\xi_p, \xi_{q})}(\nu) - \zeta p_{k-1}^{(\xi_p, \xi_{q})}(\nu),
    \end{align}
    where
    \begin{equation}\label{equ:zeta2}
        \zeta = \frac{\eta_{2} \beta_{2}}{\eta_{1} \beta_{1}}.
    \end{equation}
\end{definition}

\medskip
The recurrence formulas (\ref{equ:recurrencerelation5}) and (\ref{equ:recurrencerelation6}) can be simplified as follows.
\begin{proposition}\label{thm:recurrencerelation1}
    If $p_k^{(\xi_p, \xi_{q})}(\nu)$ and $q_k^{(\xi_p, \xi_{q})}(\nu)$ satisfy (\ref{equ:recurrencerelation5}) and (\ref{equ:recurrencerelation6}) respectively, then
    \begin{equation}\label{equ:recurrencerelation3}
        p_{k+1}^{(\xi_p, \xi_{q})}(\nu)=[\nu-(1+\zeta)] p_k^{(\xi_p, \xi_{q})}(\nu)-\zeta p_{k-1}^{(\xi_p, \xi_{q})}(\nu)
    \end{equation}
    and
    \begin{equation}\label{equ:recurrencerelation4}
        q_{k+1}^{(\xi_p, \xi_{q})}(\nu)=[\nu-(1+\zeta)] q_k^{(\xi_p, \xi_{q})}(\nu)-\zeta q_{k-1}^{(\xi_p, \xi_{q})}(\nu)
    \end{equation}
    with
    \begin{equation}\label{equ:recurrenceinitial1}
        \begin{aligned}
             & p_0^{(\xi_p, \xi_{q})}(\nu)=\xi_p, \quad p_1^{(\xi_p, \xi_{q})}(\nu)=\left(\nu-\zeta\right)\xi_{p}-\xi_{q}, \\
             & q_0^{(\xi_p, \xi_{q})}(\nu)=\xi_{q}, \quad q_1^{(\xi_p, \xi_{q})}(\nu)=\nu\xi_{p}-\xi_{q},
        \end{aligned}
    \end{equation}
    where $\zeta$ is defined in (\ref{equ:zeta2}).
\end{proposition}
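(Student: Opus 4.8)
The plan is to decouple the coupled cross-recurrences \eqref{equ:recurrencerelation5}--\eqref{equ:recurrencerelation6} into the two independent three-term recurrences \eqref{equ:recurrencerelation3}--\eqref{equ:recurrencerelation4} by eliminating one family at a time. First I would record the base cases: from \cref{defi:generalqmpm1} we have $q_0^{(\xi_p,\xi_q)}=\xi_q$ and $p_0^{(\xi_p,\xi_q)}=\xi_p$, and a single application of \eqref{equ:recurrencerelation5}--\eqref{equ:recurrencerelation6} at $k=1$ gives $q_1 = \nu p_0 - q_0 = \nu\xi_p - \xi_q$ and then $p_1 = q_1 - \zeta p_0 = (\nu-\zeta)\xi_p - \xi_q$, which is exactly \eqref{equ:recurrenceinitial1}. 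This settles the initial data, so that it only remains to verify the two three-term recurrences for $k\geq 1$.

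For the $p$-family I would use \eqref{equ:recurrencerelation6} to write $q_k = p_k + \zeta p_{k-1}$ and, shifting the index by one, $q_{k+1} = p_{k+1} + \zeta p_k$. Substituting both of these into the index-shifted version of \eqref{equ:recurrencerelation5}, namely $q_{k+1} = \nu p_k - q_k$, yields $p_{k+1} + \zeta p_k = \nu p_k - (p_k + \zeta p_{k-1})$, and rearranging gives $p_{k+1} = [\nu - (1+\zeta)]p_k - \zeta p_{k-1}$, which is \eqref{equ:recurrencerelation3}.

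For the $q$-family I would instead eliminate $p$. From \eqref{equ:recurrencerelation5} one has $\nu p_{k-1} = q_k + q_{k-1}$, and multiplying \eqref{equ:recurrencerelation6} through by $\nu$ gives $\nu p_k = \nu q_k - \zeta\,\nu p_{k-1} = (\nu-\zeta)q_k - \zeta q_{k-1}$. Inserting this into $q_{k+1} = \nu p_k - q_k$ produces $q_{k+1} = [\nu-(1+\zeta)]q_k - \zeta q_{k-1}$, which is \eqref{equ:recurrencerelation4}. The whole argument is purely algebraic, requiring neither convergence nor positivity hypotheses. The only point needing care --- and the nearest thing to an obstacle --- is the index bookkeeping: one must apply each cross-recurrence at precisely the right shift so that no spurious division by $\nu$ is introduced (such a division would break the identity as a polynomial in $\nu$ at $\nu=0$). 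Because the elimination above is carried out multiplicatively rather than by solving for $p_{k-1}$, this pitfall is avoided automatically, and the identities hold as genuine polynomial identities in $\nu$.
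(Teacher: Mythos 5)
Your proof is correct: the base cases follow from one application of the defining recurrences, and the two eliminations (substituting $q_k=p_k+\zeta p_{k-1}$ for the $p$-family, and multiplying \eqref{equ:recurrencerelation6} by $\nu$ to use $\nu p_{k-1}=q_k+q_{k-1}$ for the $q$-family) are exactly the standard algebraic decoupling; the survey itself defers the justification to the authors' cited paper, and your argument matches what that justification amounts to. Nothing is missing, and your remark about avoiding division by $\nu$ correctly ensures the identities hold as polynomial identities.
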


For later use of the Chebyshev polynomials, we further normalise the polynomials $p_{k}^{(\xi_p, \xi_{q})}(\nu), q_{k}^{(\xi_p, \xi_{q})}(\nu)$. This can be achieved by setting
\begin{equation}\label{equ:defineofmu1}
    \mu=\frac{\nu-\left(1+\beta^2\right)}{2\beta}
\end{equation}
with
$$
    \beta^2=\frac{\beta_2 \eta_2}{\beta_1 \eta_1},
$$
and
\begin{equation}\label{equ:normalizepolynomial1}
    \widehat p_k^{(\xi_{p}, \xi_{q})}(\mu)=\frac{1}{\beta^k} p_k^{(\xi_{p}, \xi_{q})}\left(1+2\beta \mu+\beta^2\right), \ \widehat q_k^{(\xi_{p}, \xi_{q})}(\mu)=\frac{1}{\beta^k} q_k^{(\xi_{p}, \xi_{q})}\left(1+2\beta \mu+\beta^2\right).
\end{equation}
Thus, from (\ref{equ:recurrencerelation3}) and (\ref{equ:recurrencerelation4}), we respectively get
\begin{equation}\label{equ:Chebyshevrecurrence1}
    \widehat p_{k+1}^{(\xi_{p}, \xi_{q})}(\mu)=2\mu \widehat p_k^{(\xi_{p}, \xi_{q})}(\mu)-\widehat p_{k-1}^{(\xi_{p}, \xi_{q})}(\mu),
\end{equation}
and
\begin{equation}\label{equ:Chebyshevrecurrence2}
    \widehat q_{k+1}^{(\xi_{p}, \xi_{q})}(\mu)=2\mu \widehat q_k^{(\xi_{p}, \xi_{q})}(\mu)-\widehat q_{k-1}^{(\xi_{p}, \xi_{q})}(\mu).
\end{equation}
Equations (\ref{equ:Chebyshevrecurrence1}) and (\ref{equ:Chebyshevrecurrence2}) are the Chebyshev three point recurrence formula. Also, from (\ref{equ:recurrenceinitial1}),  the initial polynomials are given by
\begin{equation}\label{equ:recurrenceinitial2}
    \begin{aligned}
         & \widehat p_{0}^{(\xi_{p}, \xi_{q})}(\mu) =\xi_{p},\quad  \widehat p_{1}^{(\xi_{p}, \xi_{q})}(\mu) =2\mu\xi_{p}+ \frac{\xi_{p}-\xi_{q}}{\beta},        \\
         & \widehat q_{0}^{(\xi_{p}, \xi_{q})}(\mu) =\xi_{q},\quad  \widehat q_{1}^{(\xi_{p}, \xi_{q})}(\mu) =(2\mu+\beta)\xi_{p}+\frac{\xi_{p}-\xi_{q}}{\beta}.
    \end{aligned}
\end{equation}

\bigskip
Now, we characterise the eigenvectors of $A_{2m+1}^{(a,b)}$ and $A_{2m}^{(a,b)}$ by the ``Chebyshev like'' polynomials introduced above.
\begin{theorem}\label{thm:eigenvectoroddmatrixtwoperturb2}
    The eigenvector of $A_{2m+1}^{(a, b)}$ in (\ref{equ:oddmatrixtwoperturb1}) associated with the eigenvalue $\lambda_r$ is given by
    \begin{equation}
        \begin{aligned}\label{equ:eigenvectoroddmatrixtwoperturb3}
            \bm x= & \left(\widehat q_{0}^{(\xi_p, \xi_{q})}\left(\mu_r\right),-\frac{1}{\beta_1}\left(\alpha_1-\lambda_r\right) \widehat p_0^{(\xi_p, \xi_{q})}\left(\mu_r\right), s \widehat q_1^{(\xi_p, \xi_{q})}\left(\mu_r\right),  -\frac{1}{\beta_1} s\left(\alpha_1-\lambda_r\right) \widehat p_1^{(\xi_p, \xi_{q})}\left(\mu_r\right),\right. \\
                        & \qquad  \left. \ldots,  -\frac{1}{\beta_1} s^{m-1}\left(\alpha_1-\lambda_r\right) \widehat p_{m-1}^{(\xi_p, \xi_{q})}\left(\mu_r\right), s^m \widehat q_m^{(\xi_p, \xi_{q})}\left(\mu_r\right)\right)^{\top},
        \end{aligned}
    \end{equation}
    where $\widehat q_{k}^{(\xi_p, \xi_{q})}, \widehat p_{k}^{(\xi_p, \xi_{q})}$ are defined as in (\ref{equ:normalizepolynomial1}) with $$\xi_{q}=(\alpha_1-\lambda_r), \xi_{p}= (\alpha_1+a-\lambda_{r}),$$ and $s, \mu_r$ are respectively given by
    \begin{equation}\label{equ:defiofstwoperturb2}
        s= \sqrt{\frac{\eta_{1}\eta_{2}}{\beta_{1}\beta_2}}, \quad \mu_r = \frac{\left(\alpha_1-\lambda_r\right)\left(\alpha_2-\lambda_r\right)-(\eta_1\beta_1+\eta_2\beta_2)}{2\sqrt{\eta_1\beta_1\eta_2\beta_2}}.
    \end{equation}
    The eigenvector of $A_{2m}^{(a,b)}$ in (\ref{equ:evenmatrixtwoperturb1}) associated with the eigenvalue $\lambda_r$ is given by the first $2m$ elements of $\bm x$ in (\ref{equ:eigenvectoroddmatrixtwoperturb3}).
\end{theorem}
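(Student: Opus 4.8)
The plan is to prove \eqref{equ:eigenvectoroddmatrixtwoperturb3} by solving the eigenvector equation $A_{2m+1}^{(a,b)}\bm x=\lambda_r\bm x$ directly, exploiting the fact that \eqref{equ:oddmatrixtwoperturb1} has only three kinds of rows: the top row carrying the perturbation $a$, the interior rows centred on $\alpha_1$ (odd positions) and on $\alpha_2$ (even positions), and the bottom row carrying $b$. Setting $\tau=s/\beta$ with $s$ as in \eqref{equ:defiofstwoperturb2} and $\beta$ as in \eqref{equ:defineofmu1}, the ansatz I would insert is $x_{2j+1}=\tau^{j}q_j(\nu)$ and $x_{2j+2}=-\tfrac1{\beta_1}\tau^{j}(\alpha_1-\lambda_r)p_j(\nu)$, where $p_j,q_j$ are the polynomials of \cref{defi:generalqmpm1} evaluated at $\nu=1+2\beta\mu_r+\beta^2$. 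Since $\tau^{j}q_j=s^{j}\widehat q_j$ and $\tau^{j}p_j=s^{j}\widehat p_j$ by the normalisation \eqref{equ:normalizepolynomial1}, this ansatz is exactly the asserted eigenvector, so everything reduces to checking that it solves each row.

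The heart of the argument is that the two interior row types are precisely the two defining recurrences of \cref{defi:generalqmpm1}. I would first record the algebraic identities $\beta_2\tau=\eta_1$ and $\tau\zeta=\eta_2/\beta_1$ (both immediate from the definitions of $s,\beta$ and of $\zeta$ in \eqref{equ:zeta2} under the standing assumption $\eta_i\beta_i>0$), together with the short computation collapsing the normalisation to $\nu=(\alpha_1-\lambda_r)(\alpha_2-\lambda_r)/(\eta_1\beta_1)$. Inserting the ansatz into the interior $\alpha_1$-row $\eta_2x_{2j}+(\alpha_1-\lambda_r)x_{2j+1}+\beta_1x_{2j+2}=0$ and cancelling the common factor $(\alpha_1-\lambda_r)\tau^{j-1}$ turns it, via $\tau\zeta=\eta_2/\beta_1$, into $q_j-p_j=\zeta p_{j-1}$, which is \eqref{equ:recurrencerelation6}; inserting it into the interior $\alpha_2$-row and cancelling $\tau^{j}$ turns it, via $\beta_2\tau=\eta_1$ and the formula for $\nu$, into $q_{j+1}=\nu p_j-q_j$, which is \eqref{equ:recurrencerelation5}. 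Thus all interior equations hold by construction.

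It then remains to treat the two boundary rows. The top row $(\alpha_1+a-\lambda_r)x_1+\beta_1x_2=0$ holds identically exactly when $x_1=\xi_q=\alpha_1-\lambda_r$ and $x_2=-\tfrac1{\beta_1}(\alpha_1-\lambda_r)\xi_p$ with $\xi_p=\alpha_1+a-\lambda_r$, which is the prescribed initial data \eqref{equ:recurrenceinitial2} and shows that $a$ enters only through $\xi_p$. For the bottom row I would avoid any computation: each of the first $2m$ rows of $A_{2m+1}^{(a,b)}-\lambda_r I$ has a nonzero superdiagonal entry, so it determines $x_{i+1}$ from $x_i,x_{i-1}$, whence the solution space of those $2m$ equations is one-dimensional. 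The ansatz lies in this space and is nonzero (its first entry is $\alpha_1-\lambda_r\neq0$), and since $\lambda_r\in\sigma(A_{2m+1}^{(a,b)})$ there is a genuine eigenvector in the same space; the two are therefore proportional, so the ansatz is an eigenvector. (Alternatively, one expands the bottom row using the Chebyshev forms \eqref{equ:Chebyshevrecurrence1}--\eqref{equ:Chebyshevrecurrence2} for $\widehat q_m,\widehat p_{m-1}$ and recognises the vanishing of \eqref{equ:eigenpolynomial5}.) For $A_{2m}^{(a,b)}$ the first $2m$ rows are identical, so its eigenvector is the truncation of $\bm x$ to its first $2m$ entries, the new bottom $\alpha_2$-row now encoding \eqref{equ:eigenpolynomial6}.

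The one genuine obstacle is the exceptional value $\lambda_r=\alpha_1$. There the factor $(\alpha_1-\lambda_r)$ cancelled in the $\alpha_1$-row vanishes, one has $\nu=0$ and hence $q_j\equiv0$, and \eqref{equ:eigenvectoroddmatrixtwoperturb3} collapses to the zero vector; thus whenever $\alpha_1$ is an eigenvalue --- as it always is in the unperturbed case, where $\chi_{A_{2m+1}^{(0,0)}}=(x-\alpha_1)P_m^*$ --- it must be excluded from the formula and its eigenvector constructed separately from the three-term relations (in the unperturbed case the even entries vanish and the odd entries form a geometric sequence). Away from this case the remaining work is purely the bookkeeping of the scalings $\tau^{j}$ and the factor $-\tfrac1{\beta_1}(\alpha_1-\lambda_r)$, which is where the precise definitions of $s,\beta,\zeta$ and $\mu_r$ must all be used at once to collapse the interior rows onto the recurrences of \cref{defi:generalqmpm1}.
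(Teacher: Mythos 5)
Your proposal is correct and takes essentially the same approach as the paper: the proof (which the survey defers to the cited reference, describing it as a ``direct construction'') likewise rests on the observation that the recurrences in \cref{defi:generalqmpm1} are precisely the interior rows of $(A_{2m+1}^{(a,b)}-\lambda_r I)\bm x=0$, with the corner perturbation $a$ entering only through the initial data $\xi_p,\xi_q$ and the bottom row encoding the eigenvalue condition. Your one-dimensionality argument for dispensing with the bottom row and your exclusion of the degenerate value $\lambda_r=\alpha_1$ (where the stated formula collapses to the zero vector, as indeed happens for $A_{2m+1}^{(0,0)}$) are sound refinements within that same approach rather than departures from it.
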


\begin{remark}
    By (\ref{equ:recurrenceinitial2}), in Theorem \ref{thm:eigenvectoroddmatrixtwoperturb2}, the initial values of $\widehat p_{k}^{(\xi_{p}, \xi_{q})}(\mu_r)$ and $\widehat q_{k}^{(\xi_{p}, \xi_{q})}(\mu_r)$ are
    \begin{equation}\label{equ:recurrenceinitial2b}
        \begin{array}{ll}
     \ds       \widehat p_{0}^{(\xi_{p}, \xi_{q})}(\mu_r) =(\alpha_1-\lambda_r),     &  \ds \widehat p_{1}^{(\xi_{p}, \xi_{q})}(\mu_r) =2\mu_r(\alpha_1-\lambda_r) -\frac{a}{\beta},         \\
     \nm
          \ds   \widehat q_{0}^{(\xi_{p}, \xi_{q})}(\mu_r) =(\alpha_1+a-\lambda_{r}), & \ds \widehat q_{1}^{(\xi_{p}, \xi_{q})}(\mu_r) =(2\mu_r+\beta)(\alpha_1-\lambda_r) -\frac{a}{\beta}.
        \end{array}
    \end{equation}
\end{remark}

\subsection{Tridiagonal $k$-Toeplitz operators and 
 matrices and their spectra}\label{section:ktoeplitztheory}
In \cref{sec: spectra tridiagonal toeplitz}, we have analysed the spectrum of tridiagonal Toeplitz matrices. Then we have generalised our results in \cref{sec: spectrum of tridiag 2 toeplitz matrices with perturbations} to tridiagonal $2$-Toeplitz matrices with an extensive complexity overhead. It is natural to ask how the spectrum of a general tridiagonal $k$-Toeplitz operator looks like and this will be done in this section where we will also show that their associated eigenvectors exhibit exponential decay. The tools are, once more, different from those in the previous two cases.
We refer the reader to \cite{gover1994eigenproblem} for the spectral theory of tridiagonal $2$-Toeplitz matrices (without any perturbation) and to \cite{dafonseca2007characteristic} and  the references therein
for further results on the eigenvalues of  tridiagonal $k$-Toeplitz.

Note that tridiagonal $k$-Toeplitz operators and their associated truncated operators (\emph{i.e.}, matrices) are crucial in the study of the one-dimensional polymer systems of subwavelength resonators \cite{ammari.barandun.ea2024Mathematical, ammari.davies.ea2021Functional, feppon.ammari2022Subwavelength, ammari.barandun.ea2023Perturbed}. 
A $k$-Toeplitz operator is a semi-infinite matrix of the following kind:
\begin{equation}\label{equ:tridiagonalktoeplitzop1}
    T(f) = A = \begin{pmatrix}
        \alpha_1 & \beta_1  & 0          &              &             &          &                  \\
        \eta_1   & \alpha_2 & \beta_2    & \ddots       &                                           \\
        0        & \ddots   & \ddots     & \ddots       & \ddots                                    \\
                 & \ddots   & \eta_{k-2} & \alpha_{k-1} & \beta_{k-1} & \ddots                      \\
                 &          & \ddots     & \eta_{k-1}   & \alpha_k    & \beta_k  & \ddots           \\
                 &          &            & \ddots       & \eta_{k}    & \alpha_1 & \beta_1 & \ddots \\
                 &          &            &              & \ddots      & \ddots   & \ddots  & \ddots \\
    \end{pmatrix},
\end{equation}
where $\alpha_i,\beta_i,\eta_i\in\C$ for $i\in\N$ and $A_{ij}=0$ if $\vert i-j \vert > 1$ for $i,j\in\N$. A bi-infinite matrix of the same form is called a tridiagonal $k$-Laurent operator. We consider that these two operators act on sequences of square integrable numbers.

Recall, furthermore, that the tridiagonal $k$-Toeplitz operator $A$ presented in (\ref{equ:tridiagonalktoeplitzop1}) can be reformulated as a tridiagonal block Toeplitz operator, where the blocks repeat in a $1$-periodic way:
\begin{equation}\label{equ:tridiagonalktoeplitzop2}
    A = \begin{pmatrix}
        A_0   & A_{-1} &        \\
        A_{1} & A_0    & \ddots \\
              & \ddots & \ddots \\
    \end{pmatrix},
\end{equation}
with
\begin{equation*}
    A_0 = \begin{pmatrix}
        \alpha_1 & \beta_1  & 0          & \cdots       & 0           \\
        \eta_1   & \alpha_2 & \beta_2    & \ddots       & \vdots      \\
        0        & \ddots   & \ddots     & \ddots       & 0           \\
        \vdots   & \ddots   & \eta_{k-2} & \alpha_{k-1} & \beta_{k-1} \\
        0        & \cdots   & 0          & \eta_{k-1}   & \alpha_k    \\
    \end{pmatrix},  A_{-1} = \begin{pmatrix}
        0       & \cdots & \cdots & 0      \\
        \vdots  & \ddots &        & \vdots \\
        0       &        & \ddots & \vdots \\
        \beta_k & 0      & \cdots & 0
    \end{pmatrix}, A_{1}=\begin{pmatrix}
        0      & \cdots & 0      & \eta_k \\
        \vdots & \ddots &        & 0      \\
        \vdots &        & \ddots & \vdots \\
        0      & \cdots & \cdots & 0
    \end{pmatrix}.
\end{equation*}
In this case, the symbol of $A$ in \eqref{equ:tridiagonalktoeplitzop1} is
\begin{align}
    f:\mathbb{T}^1 & \to \C^{k\times k}\nonumber        \\
    z     & \mapsto A_{-1}z^{-1} + A_0 + A_1z,
\end{align}
or explicitly
\begin{equation}\label{eq: symbol tridiagonal operator}
    f(z) =
    \begin{pmatrix}
        \alpha_1      & \beta_1  & 0        & \dots      & 0            & \eta_k z    \\
        \eta_1        & \alpha_2 & \beta_2  & 0          & \dots        & 0           \\
        0             & \eta_2   & \alpha_3 & \beta_3    & \ddots       & \vdots      \\
        \vdots        & \ddots   & \ddots   & \ddots     & \ddots       & 0           \\
        0             & \dots    & 0        & \eta_{k-2} & \alpha_{k-1} & \beta_{k-1} \\
        \beta_kz^{-1} & 0        & \dots    & 0          & \eta_{k-1}   & \alpha_k    \\
    \end{pmatrix}.
\end{equation}

Further, let now $n, k\geq 1$ and define the projections
\begin{align}
    P_n:\ell^2(\N,\C) & \to\ell^2(\N,\C)\nonumber         \\
    (x_1,x_2,\dots)   & \mapsto(x_1,\dots x_n,0,0,\dots).
\end{align}
Then, the \emph{$k$-Toeplitz matrix} of order $mk$ for $m\in\N$ associated to the symbol $f\in L^\infty(\mathbb{T}^1,\C^{k\times k})$ is given by
$$
    T_{m\times k}(f)\coloneqq P_{mk}T(f)P_{mk},
$$
and can be identified as an $mk\times mk$ matrix. The spectra of tridiagonal $k$-Toeplitz matrices is not well understood yet. For the results of the pseudospectra of tridiagonal $k$-Toeplitz matrices, we refer the reader to  \cite[Section 3]{ammari.barandun.ea2024Spectra}.

Now, we start to introduce the spectra of the operator $T(f)$. We denote by $\sigma, \sigma_{\mathrm{ess}}$  the spectrum and the essential spectrum of the operator, respectively. We define
\begin{equation}\label{equ:detspectraformula1}
    \sigma_{\mathrm{det}}(f) = \{\lambda\in \mathbb C: \det(f(z)-\lambda)=0,\ \exists  z\in \mathbb{T}^1 \}.
\end{equation}
We first have the following result characterising the essential spectrum of $T(f)$.

\begin{proposition}\label{thm: essential spectrum}
    The operator $T(f)$ in \eqref{equ:tridiagonalktoeplitzop1} is Fredholm if and only if
    \(\mathrm{det}(f)\) has no zeros on \(\mathbb{T}^1\). Furthermore, the essential spectrum of $T(f)$ is given by
    \begin{equation*}
        \sigma_{\mathrm{ess}}(T(f)) = \sigma_{\mathrm{det}}(f).
    \end{equation*}
\end{proposition}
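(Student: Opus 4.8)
The plan is to recognise $T(f)$ as a \emph{block} Toeplitz operator and then to invoke the classical Fredholm theory for block Toeplitz operators with continuous matrix-valued symbol. Via the block reformulation \eqref{equ:tridiagonalktoeplitzop2}, the operator $T(f)$ acts on $\ell^2(\N,\C^k)\cong\ell^2(\N,\C)$ and carries the $k\times k$ matrix-valued symbol $f$ written out explicitly in \eqref{eq: symbol tridiagonal operator}. Since $f(z)=A_{-1}z^{-1}+A_0+A_1z$ is a Laurent polynomial in $z$, it is continuous on $\mathbb{T}^1$, so the standing hypotheses of that theory are met and nothing further about $f$ needs to be checked.

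First I would state the Fredholm criterion. For a block Toeplitz operator with continuous matrix symbol $a$, the classical theorem (Gohberg; see the standard monograph literature on block Toeplitz operators) asserts that $T(a)$ is Fredholm if and only if $\det a(z)\neq 0$ for every $z\in\mathbb{T}^1$, and that in this case its index is determined by the winding number of $\det a$ about the origin. Applying this with $a=f$ yields at once the first assertion: $T(f)$ is Fredholm precisely when $\det f$ has no zeros on $\mathbb{T}^1$.

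For the essential spectrum I would work from the definition $\sigma_{\mathrm{ess}}(T(f))=\{\lambda\in\C:\ T(f)-\lambda\ \text{is not Fredholm}\}$ and observe that subtracting $\lambda$ only shifts the diagonal blocks, so that $T(f)-\lambda\,\mathrm{Id}=T(f-\lambda I_k)$ is again a block Toeplitz operator, now with symbol $f-\lambda I_k$ (the scalar $\lambda$ corresponds to the constant block symbol $\lambda I_k$). Feeding $a=f-\lambda I_k$ into the criterion, $T(f)-\lambda$ fails to be Fredholm exactly when $\det(f(z)-\lambda I_k)=0$ for some $z\in\mathbb{T}^1$. Comparing with the definition \eqref{equ:detspectraformula1} of $\sigma_{\mathrm{det}}(f)$, this is precisely the condition $\lambda\in\sigma_{\mathrm{det}}(f)$, which gives the claimed identity $\sigma_{\mathrm{ess}}(T(f))=\sigma_{\mathrm{det}}(f)$.

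The substantive content, and hence the main obstacle were one to argue from scratch rather than cite, is the Fredholm criterion itself. Its proof rests on producing a two-sided regulariser for $T(f)$ modulo the compact operators: when $\det f(z)\neq 0$ the inverse $f(z)^{-1}$ is again a continuous matrix function, and one exploits the fact that $a\mapsto T(a)$ is a homomorphism modulo compacts (equivalently, a Wiener--Hopf / Gohberg--Krein factorisation of $f$) to build the parametrix. The delicate points are verifying that the relevant remainders are genuinely compact and that the identification $\ell^2(\N,\C)\cong\ell^2(\N,\C^k)$ respects the operator structure of \eqref{equ:tridiagonalktoeplitzop1}; once the criterion is in hand, the passage to the essential spectrum is the short bookkeeping above.
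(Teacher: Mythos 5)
Your proposal is correct and follows essentially the same route as the paper's source argument (in \cite{ammari.barandun.ea2024Spectra}): identify $T(f)$ with a block Toeplitz operator having the continuous matrix symbol $f$, invoke the classical Gohberg/B\"ottcher--Silbermann criterion that such an operator is Fredholm if and only if $\det f$ is nonvanishing on $\mathbb{T}^1$, and then obtain the essential spectrum by applying this to $T(f)-\lambda = T(f-\lambda I_k)$. Your closing remarks on the needed ingredients (the homomorphism modulo compacts, and checking that the identification $\ell^2(\N,\C)\cong\ell^2(\N,\C^k)$ intertwines the operators) match exactly what the cited theory supplies, so there is no gap.
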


Next, we provide a full characterisation of $\sigma(T(f))$. Before stating the result, we first define
\begin{equation}\label{equ:windspectraformula1}
    \sigma_{\mathrm{wind}}(f)\coloneqq \{\lambda \in \mathbb{C}\setminus \sigma_{\mathrm{det}}(f) :  \operatorname{wind}(\det(f - \lambda),0) \neq 0 \}.
\end{equation}
By \cite[ Lemma A.1]{ammari.barandun.ea2024Spectra},
\[
    \det(f(z) - \lambda) = (-1)^{k+1}\left(\prod_{i = 1}^k \eta_i\right) z + (-1)^{k+1}\left(\prod_{i = 1}^k \beta_i\right) z^{-1} + g(\lambda)
\]
with $g(\lambda)$ being given by \cite[(A.2)]{ammari.barandun.ea2024Spectra}. This yields
\begin{equation}\label{equ:windspectraformula2}
    \sigma_{\mathrm{wind}}(f)= \left\{\lambda \in \mathbb{C}\setminus \sigma_{\mathrm{det}}(f):  \operatorname{wind}\left((-1)^{k+1} \left((\prod_{i = 1}^k \eta_i) z + (\prod_{i = 1}^k \beta_i) z^{-1}\right), -g(\lambda)\right) \neq 0 \right\}.
\end{equation}
Furthermore, since
\[
    \mathrm{det}(f(z)-\lambda) = \prod_{j = 1}^k (\lambda_j(z)-\lambda),
\]
where $\lambda_j(z), 1\leq j\leq k,$ are the eigenvalues of the matrix $f(z)$, 
we obtain that
\begin{align*}
    \sigma_{\mathrm{wind}}(f)= & \left\{\lambda \in \mathbb{C}\setminus \sigma_{\mathrm{det}}(f) :  \sum_{j=1}^k\operatorname{wind}(\lambda_j(\mathbb T) - \lambda,0) \neq 0 \right\}, \end{align*}
or equivalently, 
    \begin{align}
       \sigma_{\mathrm{wind}}(f)=                      & \left\{\lambda \in \mathbb{C}\setminus \sigma_{\mathrm{det}}(f) :  \sum_{j=1}^k\operatorname{wind}(\lambda_j(\mathbb T), \lambda) \neq 0 \right\}. \label{equ:windspectraformula3}
\end{align}
In \eqref{equ:windspectraformula3} and in the rest of the text, eigenvalues traces are considered to be concatenated if their endings match.

We now state the following result on the spectrum of the operator $T(f)$. We refer to \cite{ammari.barandun.ea2024Spectra} for its proof. 
\begin{theorem}\label{thm: windingspectrum}
    Let $f \in \C^{k\times k}(\mathbb{T}^1)$ be the symbol of a tridiagonal $k$-Toeplitz operator $T(f)$. Denote by $B_0$ the leading $(k-1)\times(k-1)$ principal minor of $A_0$. It holds that
    \[
        \sigma_{\mathrm{det}}(f)\cup \sigma_{\mathrm{wind}}(f)  \subset \sigma(T(f)) \subset \sigma_{\mathrm{det}}(f)\cup \sigma_{\mathrm{wind}}(f) \cup \sigma(B_0),
    \]
    where $\sigma_{\mathrm{det}}(f), \sigma_{\mathrm{wind}}(f)$ are given by (\ref{equ:detspectraformula1}) and (\ref{equ:windspectraformula3}), respectively.
\end{theorem}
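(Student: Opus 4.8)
The plan is to reduce the statement to the Fredholm theory of block Toeplitz operators and then to a scalar three-term recurrence with a boundary condition at the left edge. First I would regard $T(f)-\lambda$ as the block Toeplitz operator with continuous $k\times k$ symbol $f-\lambda$ as in \eqref{equ:tridiagonalktoeplitzop2}. By \cref{thm: essential spectrum} we have $\sigma_{\mathrm{ess}}(T(f))=\sigma_{\mathrm{det}}(f)$, so for $\lambda\notin\sigma_{\mathrm{det}}(f)$ the symbol $f(z)-\lambda$ is invertible for every $z\in\mathbb{T}^1$ and $T(f)-\lambda$ is Fredholm. The classical block Toeplitz index theorem then gives
\[
    \operatorname{ind}(T(f)-\lambda)=-\operatorname{wind}(\det(f-\lambda),0).
\]
Hence if $\lambda\in\sigma_{\mathrm{wind}}(f)$ the index is nonzero and $T(f)-\lambda$ cannot be invertible, while $\sigma_{\mathrm{det}}(f)=\sigma_{\mathrm{ess}}(T(f))\subset\sigma(T(f))$ trivially. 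This already yields the left inclusion $\sigma_{\mathrm{det}}(f)\cup\sigma_{\mathrm{wind}}(f)\subset\sigma(T(f))$.

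For the right inclusion I would fix $\lambda\notin\sigma_{\mathrm{det}}(f)\cup\sigma_{\mathrm{wind}}(f)\cup\sigma(B_0)$ and show $T(f)-\lambda$ is invertible. By the above it is Fredholm of index $0$, so it suffices to prove injectivity. Using $\det(f(z)-\lambda)=(-1)^{k+1}(\prod\eta_i)z+(-1)^{k+1}(\prod\beta_i)z^{-1}+g(\lambda)$ from the excerpt and clearing the pole at $z=0$, the zeros of $z\mapsto\det(f(z)-\lambda)$ are the two roots $z_1,z_2$ of a quadratic with $z_1z_2=\prod\beta_i/\prod\eta_i$; none lies on $\mathbb{T}^1$, and the argument principle identifies $\operatorname{wind}(\det(f-\lambda),0)$ with (number of roots in the open unit disk)$-1$. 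Since this winding vanishes, exactly one root, say $z_1$, lies inside the disk and $z_2$ outside; in particular $z_1\neq z_2$.

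The heart of the argument is the kernel analysis. Any $x\in\ker(T(f)-\lambda)$ solves the $k$-periodic three-term recurrence $\eta_{j-1}x_{j-1}+(\alpha_j-\lambda)x_j+\beta_j x_{j+1}=0$ for $j\geq1$ with the left boundary condition $x_0=0$ encoded by the first row. Setting $\phi_0=0,\ \phi_1=1$ and solving forward generates the unique solution compatible with the boundary, so the kernel is at most one-dimensional and spanned by $\{\phi_j\}$, being nontrivial precisely when $\{\phi_j\}\in\ell^2$. Introducing the one-period monodromy matrix $T_\lambda$ acting on states $(x_{j+1},x_j)^\top$, whose eigenvalues are $z_1,z_2$, the solution decays iff $(\phi_1,\phi_0)^\top=(1,0)^\top$ has no component along the $z_2$-eigenvector, i.e. iff $(1,0)^\top$ is the $z_1$-eigenvector of $T_\lambda$. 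A direct computation shows that the bottom-left entry of $T_\lambda$ equals $\phi_k$, and by the standard determinant recursion for leading principal minors of a tridiagonal matrix one gets $\phi_k=(-1)^{k-1}\det(B_0-\lambda I)/\prod_{i=1}^{k-1}\beta_i$. Thus $(1,0)^\top$ can be an eigenvector of $T_\lambda$ only if $\det(B_0-\lambda I)=0$, that is $\lambda\in\sigma(B_0)$. Since we assumed $\lambda\notin\sigma(B_0)$, the solution $\{\phi_j\}$ must grow and cannot be square summable, so the kernel is trivial and $T(f)-\lambda$ is invertible, proving the right inclusion.

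The main obstacle is this last step: correctly relating the compatibility of the decaying eigensolution with the left boundary condition to the vanishing of $\det(B_0-\lambda I)$, and justifying the monodromy eigenbasis. This requires the roots $z_1,z_2$ to be distinct, which I would secure by noting that a double root inside or outside the disk forces $\operatorname{wind}(\det(f-\lambda),0)=\pm1$ and hence $\lambda\in\sigma_{\mathrm{wind}}(f)$, already excluded; and it requires $\prod\beta_i\neq0$ so that the forward recurrence and the monodromy are well defined, a nondegeneracy that holds throughout the Fredholm region. The remaining ingredients---the block Toeplitz index theorem for merely continuous symbols and the precise form of $g(\lambda)$---I would invoke from \cite{ammari.barandun.ea2024Spectra} rather than reprove.
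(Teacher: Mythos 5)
Your overall architecture is sound and is essentially the route taken in the paper's cited source \cite{ammari.barandun.ea2024Spectra}: the lower inclusion via \cref{thm: essential spectrum} plus the block Toeplitz index formula, and the upper inclusion by observing that an index-zero Fredholm operator is invertible iff injective, then showing that a nontrivial $\ell^2$ kernel element forces $\det(B_0-\lambda I)=0$. Your monodromy formulation is the same mechanism as the paper's decomposition of eigenvectors of $T(f)$ into quasi-periodic Bloch solutions (compare \cref{prop:eigenvectors}), and your identity $\phi_k=(-1)^{k-1}\det(B_0-\lambda I)/\prod_{i=1}^{k-1}\beta_i$ is correct; it also explains the paper's remark that the inclusion is optimal, since $\lambda\in\sigma(B_0)$ only makes $(1,0)^\top$ an eigenvector of the monodromy, possibly the one attached to the \emph{growing} multiplier.

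There is, however, one step that fails as written: the claim that $\prod_i\beta_i\neq 0$ is ``a nondegeneracy that holds throughout the Fredholm region.'' Whether the off-diagonal products vanish is a property of the symbol, not of $\lambda$, and Fredholmness does not imply it. For instance, with $k=1$, $\beta_1=0$, $\eta_1=1$, $\alpha_1=0$, the operator $T(f)$ is the shift with symbol $f(z)=z$, which is Fredholm for every $|\lambda|\neq 1$ even though the forward recurrence and the monodromy you rely on are undefined. So your proof covers only symbols with $\prod_j\beta_j\neq 0$ (this also guarantees, via your own winding argument, that the two Floquet multipliers are distinct in modulus so the eigenbasis exists), whereas the statement as given in the survey carries no such hypothesis; the surrounding results (\emph{e.g.}, \cref{thm: exponential_decay_k_operators}) do assume it, so the fix is to state it explicitly or to treat degenerate symbols by a separate, direct argument. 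A second, harmless point: under the paper's convention \eqref{eq: symbol tridiagonal operator}, a quasi-periodic solution $x_{mk+r}=\zeta^m u_r$ solves the recurrence iff $f(\zeta^{-1})u=\lambda u$, so the monodromy eigenvalues are $1/z_1,1/z_2$ rather than $z_1,z_2$; only the ``one multiplier inside, one outside'' dichotomy enters your argument, but the labelling of which root produces decay is inverted.
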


Note that this characterisation is optimal. It cannot be made more precise, as it cannot be guaranteed that an eigenvalue $\lambda$ of $B_0$ with $\operatorname{wind}(\det(f-\lambda),0)=0$ is also an eigenvalue of $T(f)$; see the example given in \cite{ammari.barandun.ea2024Spectra}.

Now that we have characterised the spectra of tridiagonal $k$-Toeplitz operators, we venture onwards to explore their eigenvectors. In particular, we have the following theorem from \cite{ammari.barandun.ea2024Spectra} showing that the eigenvectors exhibit exponential decay.

\begin{theorem}\label{thm: exponential_decay_k_operators}
    Suppose $\Pi_{j=1}^k \eta_j\neq 0$ and $\Pi_{j=1}^k \beta_j\neq 0$. Let $f(z) \in \mathbb{C}^{k\times k}$ be the symbol (\ref{eq: symbol tridiagonal operator}) and let $\lambda \in \C\setminus\sigma_{\mathrm{ess}}(T(f))$. If
    $\sum_{j=1}^{k}\operatorname{wind}(\lambda_j, \lambda)<0$ for $\lambda_j$ being the eigenvalues of $f(z)$, then there exists an eigenvector $\bm x\in\ell^2$ of $T(f)$ associated to $\lambda$ and some $\rho<1$ such that
    \begin{equation}\label{eq: bound on eigenvector2}
        \frac{\lvert \bm x_j\rvert}{\max_{i}\lvert \bm x_i\rvert} \leq C_1 \lceil j/k\rceil \rho^{\lceil j/k\rceil-1},\quad j\geq 1,
    \end{equation}
    where $C_1>0$ is a constant depending only on $\lambda, \alpha_p, \beta_p, \eta_p$ for $1\leq p\leq k$.
\end{theorem}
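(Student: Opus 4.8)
The plan is to reduce the block eigenproblem to a single scalar second-order recurrence with $k$-periodic coefficients and to read off the decay from the location of the two characteristic roots. First I would exploit the factorisation recalled in the text via \cite[Lemma~A.1]{ammari.barandun.ea2024Spectra}: writing $h(w):=\det(f(w)-\lambda)$, one has
\[
  h(w)=(-1)^{k+1}\Big(\prod_{i=1}^k\eta_i\Big)w+(-1)^{k+1}\Big(\prod_{i=1}^k\beta_i\Big)w^{-1}+g(\lambda),
\]
so that, under the hypotheses $\prod_i\eta_i\neq0$ and $\prod_i\beta_i\neq0$, the function $w\,h(w)$ is a genuine quadratic with non-vanishing leading and constant coefficients. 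Hence $h$ has exactly two nonzero roots $w_1,w_2$ and a single simple pole at $w=0$. Since $\lambda\notin\sigma_{\mathrm{ess}}(T(f))=\sigma_{\mathrm{det}}(f)$ by \Cref{thm: essential spectrum}, neither root lies on $\mathbb{T}^1$, and the argument principle gives $\operatorname{wind}(h,0)=Z-1$, where $Z\in\{0,1,2\}$ is the number of roots inside $\mathbb{T}^1$. Combined with \eqref{equ:windspectraformula3}, the hypothesis $\sum_{j=1}^k\operatorname{wind}(\lambda_j,\lambda)=\operatorname{wind}(h,0)<0$ forces $\operatorname{wind}(h,0)=-1$ and therefore $Z=0$: both characteristic roots satisfy $|w_1|,|w_2|>1$.

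Next I would write the kernel equation $(T(f)-\lambda)\bm x=0$ explicitly. Because $T(f)$ is tridiagonal with $k$-periodic diagonals and every $\beta_i\neq0$, this is a scalar second-order linear recurrence whose solution is uniquely determined by the two initial data $x_0,x_1$ (where $x_j$ denotes the $j$th entry of $\bm x$); imposing the boundary value $x_0=0$ leaves a one-parameter family, which I normalise by $x_1=1$. This produces a nonzero sequence annihilated by every row of $T(f)-\lambda$, and existence of a genuine $\ell^2$ eigenvector is in any case guaranteed abstractly, since $T(f)-\lambda$ is Fredholm of index $-\operatorname{wind}(h,0)=1>0$ and so has nontrivial kernel. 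To analyse decay I pass to the block form \eqref{equ:tridiagonalktoeplitzop2} and test the ansatz $X_n=z^n v$ with $X_n\in\C^{k}$: it solves the block recurrence exactly when $(f(z^{-1})-\lambda)v=0$, that is, when $w=z^{-1}\in\{w_1,w_2\}$ and $v$ is a corresponding null vector. Thus the two Floquet multipliers of the recurrence are $z_i=1/w_i$, and the first step gives $|z_1|,|z_2|<1$.

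Since both Floquet multipliers lie strictly inside the unit disk, \emph{every} solution of the recurrence decays geometrically from block to block; in particular the boundary-matched solution constructed above lies in $\ell^2$ and is the desired eigenvector. Writing $\rho:=\max\{|z_1|,|z_2|\}=1/\min\{|w_1|,|w_2|\}<1$ and indexing by the block number $\lceil j/k\rceil$, a solution is a combination of the two Floquet modes $z_i^{\lceil j/k\rceil}v_i$, each of which is bounded by $\rho^{\lceil j/k\rceil}$ up to a constant fixed by $\lambda$ and the entries $\alpha_p,\beta_p,\eta_p$. In the confluent case $w_1=w_2$ (a double root of $w\,h(w)$), the second independent solution acquires a factor linear in the block index, which is precisely the origin of the prefactor $\lceil j/k\rceil$ in \eqref{eq: bound on eigenvector2}; normalising by $\max_i|\bm x_i|$ then yields the stated estimate.

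The routine steps are the recurrence bookkeeping and the evaluation of the constant $C_1$. The main obstacle I anticipate is making the decay estimate uniform across the two regimes --- simple versus double characteristic root --- within a single bound of the form \eqref{eq: bound on eigenvector2}: the confluent case must be handled so that the polynomial prefactor is exactly linear, and one must verify that the coefficients of the boundary-matched solution in the Floquet basis do not blow up as $w_1\to w_2$. A clean way to avoid explicit case splitting is to represent the solution through the resolvent $\big(f(w)-\lambda\big)^{-1}$ and a contour integral encircling the two exterior roots and then to bound the integrand directly; either route must ultimately reduce to the two-root structure of $h$ established in the first step.
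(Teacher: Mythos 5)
Your proposal is correct and follows essentially the same route as the proof the paper relies on (the argument from \cite{ammari.barandun.ea2024Spectra}, whose mechanism the survey itself exposes in \cref{lem:detform}, \eqref{equ:windspectraformula2} and the quasi-periodic eigenvector construction of \cref{prop:eigenvectors}): the tridiagonal structure makes $w\det(f(w)-\lambda)$ a quadratic with nonvanishing leading and constant coefficients, the negative winding together with the simple pole at $w=0$ forces both characteristic roots outside $\mathbb{T}^1$, and the eigenvector is the boundary-matched combination of the two (possibly confluent) Floquet modes, with the confluent case producing exactly the linear prefactor $\lceil j/k\rceil$. Your auxiliary points — existence via the Gohberg--Krein index $\operatorname{ind}(T(f)-\lambda)=-\operatorname{wind}(\det(f-\lambda),0)=1$, the one-dimensionality of the kernel from the $x_0=0$, $x_1=1$ normalisation, and the transfer-matrix bookkeeping giving a constant depending only on $\lambda,\alpha_p,\beta_p,\eta_p$ — are all sound.
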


Last, we introduce a condition for the symbol $f$ to be collapsed. We call a symbol $f$ \emph{collapsed} if for all $\lambda\in \C$ it holds that $\operatorname{wind}(\det(f-\lambda),0)=0$. This corresponds to the fact that the curves traced out by the eigenvalues of the symbol $\mathbb{T}^1\ni z \mapsto\sigma(f(z))$ do not generate winding regions. Relating this to \cref{thm:spectratoeplitz} and its $k$-Toeplitz analogue, there is no exponential behaviour of the eigenvectors of a collapsed symbol.

In the tridiagonal setting this is easy to identify; see \cite{ammari.barandun.ea2024Generalised}.
\begin{lemma}\label{lemma: collapsed symbol}
    Let $T(f)$ be a symmetric or Hermitian tridiagonal $k$-Toeplitz operator. Then the symbol $f(z)$ is collapsed.
\end{lemma}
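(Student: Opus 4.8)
The plan is to read the result straight off the determinant expansion already recorded above,
\[
\det(f(z)-\lambda) = (-1)^{k+1}\Big(\prod_{i=1}^k \eta_i\Big) z + (-1)^{k+1}\Big(\prod_{i=1}^k \beta_i\Big) z^{-1} + g(\lambda),
\]
whose entire $z$-dependence is carried by the two monomials $z$ and $z^{-1}$. Writing $c \coloneqq (-1)^{k+1}\prod_{i=1}^k \eta_i$ and $d \coloneqq (-1)^{k+1}\prod_{i=1}^k \beta_i$, the idea is to show that symmetry or Hermiticity forces the $z$-dependent part $c z + d z^{-1}$ to sweep out a \emph{line segment} on $\mathbb{T}^1$, so that the whole image curve degenerates and cannot wind around any point.

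First I would translate the two hypotheses into relations between $c$ and $d$. For a symmetric tridiagonal $k$-Toeplitz operator the sub- and super-diagonals coincide, i.e.\ $\eta_i = \beta_i$ for every $i$ — including the wrap-around corner entries $\eta_k,\beta_k$ carried by the off-diagonal blocks $A_{\pm1}$ — whence $c = d$. For a Hermitian one they are complex conjugates, $\eta_i = \overline{\beta_i}$, and since $(-1)^{k+1}$ is real this gives $d = \overline{c}$.

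Next I would evaluate $cz + dz^{-1}$ on the circle $z = e^{\i\theta}$. In the symmetric case it equals $c\,(e^{\i\theta}+e^{-\i\theta}) = 2c\cos\theta$, which traces the segment $\{2tc : t\in[-1,1]\}$ through the origin in the direction of $c$; in the Hermitian case it equals $c e^{\i\theta} + \overline{c}\,e^{-\i\theta} = 2\operatorname{Re}(c e^{\i\theta})$, a real number ranging over $[-2|c|,2|c|]$. In either case the $z$-dependent part sweeps out a line segment, so $\theta \mapsto \det(f(e^{\i\theta})-\lambda)$ is a closed curve contained in the line obtained by translating that segment by the $z$-independent constant $g(\lambda)$. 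A curve contained in a line has winding number zero about every point lying off that line; since $0$ lies off the segment precisely when $\lambda \notin \sigma_{\mathrm{det}}(f)$ — which are exactly the $\lambda$ for which $\operatorname{wind}(\det(f-\lambda),0)$ is defined — we conclude that it vanishes for all such $\lambda$, i.e.\ $f$ is collapsed.

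There is essentially no analytic obstacle here; the care is entirely in the bookkeeping. The one point I would verify explicitly is that symmetry (resp.\ Hermiticity) of the full block operator really does impose $\eta_i = \beta_i$ (resp.\ $\eta_i = \overline{\beta_i}$) on \emph{all} $k$ indices, since it is precisely the product over all of them that enters $c$ and $d$; a single mismatched corner entry would break the identity $c=d$ or $d=\overline c$ and with it the collapse. I would also note that $g(\lambda)$ enters only as a translation of the segment and so never affects the winding count, which is why allowing $\lambda$ to be complex causes no difficulty.
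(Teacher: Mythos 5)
Your proof is correct in substance, but it is not the route the paper takes: the survey gives no inline argument and defers to \cite{ammari.barandun.ea2024Generalised}, whose framing (echoed in the paragraph preceding the lemma) is in terms of the eigenvalue curves $z\mapsto\sigma(f(z))$. In that approach the Hermitian case follows from the observation that $T(f)$ Hermitian forces $f(z)$ itself to be a Hermitian matrix for every $z\in\mathbb{T}^1$ (since $\bar z=z^{-1}$ on the circle, $f(z)^*=A_1z+A_0+A_{-1}z^{-1}=f(z)$), so all eigenvalue curves lie on the real line and cannot generate winding regions; the symmetric case follows from the identity $f(1/z)=f(z)^{\top}$, which makes the curve $\theta\mapsto\det(f(e^{\i\theta})-\lambda)$ retrace itself under $\theta\mapsto-\theta$ and therefore have zero winding. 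Your argument instead exploits tridiagonality directly through the scalar identity $\det(f(z)-\lambda)=cz+dz^{-1}+g(\lambda)$, reducing everything to the geometry of the segment swept by $cz+dz^{-1}$. This is more elementary and treats both hypotheses uniformly in one picture, at the price of being tied to the tridiagonal determinant expansion; the Hermitian-symbol argument, by contrast, works verbatim for non-tridiagonal block symbols.

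One step needs tightening. You assert that a curve contained in a line has winding number zero about every point lying \emph{off that line}, but you then apply this to every $\lambda\notin\sigma_{\mathrm{det}}(f)$, i.e., to the case where $0$ merely lies off the \emph{traced segment}. These are not the same: $0$ may be collinear with the segment $g(\lambda)+\left\{cz+dz^{-1}:z\in\mathbb{T}^1\right\}$ without lying on it, and your stated principle says nothing about such points. The fix is one line: the image of the determinant curve is a compact subset of a line; the complement in $\C$ of a compact subset of a line is connected; the winding number is locally constant on this complement and vanishes near infinity, hence it vanishes at every point of the complement, collinear or not. With that sentence inserted your proof is complete, and it also covers the degenerate case $c=0$, where the segment collapses to the single point $g(\lambda)$.
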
 

\section{One-dimensional subwavelength systems} \label{sec:1d}
We consider a one-dimensional chain of $N$ disjoint subwavelength resonators $D_i\coloneqq (x_i^{\iL},x_i^{\iR})$, where $(x_i^{\iLR})_{1\<i\<N} \subset \R$ are the $2N$ extremities satisfying $x_i^{\iL} < x_i^{\iR} <  x_{i+1}^{\iL}$ for any $1\leq i \leq N$. We fix the coordinates such that $x_1^{\iL}=0$. We also denote by  $\ell_i = x_i^{\iR} - x_i^{\iL}$ the length of the $i$\textsuperscript{th} resonators,  and by $s_i= x_{i+1}^{\iL} -x_i^{\iR}$ the spacing between the $i$\textsuperscript{th} and $(i+1)$\textsuperscript{th} resonator. The system is illustrated in \cref{fig:setting}. We will use
\begin{align*}
    D\coloneqq \bigcup_{i=1}^N(x_i^{\iL},x_i^{\iR})
\end{align*}
to symbolise the set of subwavelength resonators.

\begin{figure}[htb]
    \centering
    \begin{adjustbox}{width=\textwidth}
        \begin{tikzpicture}
            \coordinate (x1l) at (1,0);
            \path (x1l) +(1,0) coordinate (x1r);
            \path (x1r) +(0.75,0.7) coordinate (s1);
            \path (x1r) +(1.5,0) coordinate (x2l);
            \path (x2l) +(1,0) coordinate (x2r);
            \path (x2r) +(0.5,0.7) coordinate (s2);
            \path (x2r) +(1,0) coordinate (x3l);
            \path (x3l) +(1,0) coordinate (x3r);
            \path (x3r) +(1,0.7) coordinate (s3);
            \path (x3r) +(2,0) coordinate (x4l);
            \path (x4l) +(1,0) coordinate (x4r);
            \path (x4r) +(0.4,0.7) coordinate (s4);
            \path (x4r) +(1,0) coordinate (dots);
            \path (dots) +(1,0) coordinate (x5l);
            \path (x5l) +(1,0) coordinate (x5r);
            \path (x5r) +(1.75,0) coordinate (x6l);
            \path (x5r) +(0.875,0.7) coordinate (s5);
            \path (x6l) +(1,0) coordinate (x6r);
            \path (x6r) +(1.25,0) coordinate (x7l);
            \path (x6r) +(0.525,0.7) coordinate (s6);
            \path (x7l) +(1,0) coordinate (x7r);
            \path (x7r) +(1.5,0) coordinate (x8l);
            \path (x7r) +(0.75,0.7) coordinate (s7);
            \path (x8l) +(1,0) coordinate (x8r);
            \draw[ultra thick] (x1l) -- (x1r);
            \node[anchor=north] (label1) at (x1l) {$x_1^{\iL}$};
            \node[anchor=north] (label1) at (x1r) {$x_1^{\iR}$};
            \node[anchor=south] (label1) at ($(x1l)!0.5!(x1r)$) {$\ell_1$};
            \draw[dotted,|-|] ($(x1r)+(0,0.25)$) -- ($(x2l)+(0,0.25)$);
            \draw[ultra thick] (x2l) -- (x2r);
            \node[anchor=north] (label1) at (x2l) {$x_2^{\iL}$};
            \node[anchor=north] (label1) at (x2r) {$x_2^{\iR}$};
            \node[anchor=south] (label1) at ($(x2l)!0.5!(x2r)$) {$\ell_2$};
            \draw[dotted,|-|] ($(x2r)+(0,0.25)$) -- ($(x3l)+(0,0.25)$);
            \draw[ultra thick] (x3l) -- (x3r);
            \node[anchor=north] (label1) at (x3l) {$x_3^{\iL}$};
            \node[anchor=north] (label1) at (x3r) {$x_3^{\iR}$};
            \node[anchor=south] (label1) at ($(x3l)!0.5!(x3r)$) {$\ell_3$};
            \draw[dotted,|-|] ($(x3r)+(0,0.25)$) -- ($(x4l)+(0,0.25)$);
            \node (dots) at (dots) {\dots};
            \draw[ultra thick] (x4l) -- (x4r);
            \node[anchor=north] (label1) at (x4l) {$x_4^{\iL}$};
            \node[anchor=north] (label1) at (x4r) {$x_4^{\iR}$};
            \node[anchor=south] (label1) at ($(x4l)!0.5!(x4r)$) {$\ell_4$};
            \draw[dotted,|-|] ($(x4r)+(0,0.25)$) -- ($(dots)+(-.25,0.25)$);
            \draw[ultra thick] (x5l) -- (x5r);
            \node[anchor=north] (label1) at (x5l) {$x_{N-3}^{\iL}$};
            \node[anchor=north] (label1) at (x5r) {$x_{N-3}^{\iR}$};
            \node[anchor=south] (label1) at ($(x5l)!0.5!(x5r)$) {$\ell_{N-3}$};
            \draw[dotted,|-|] ($(x5r)+(0,0.25)$) -- ($(x6l)+(0,0.25)$);
            \draw[ultra thick] (x6l) -- (x6r);
            \node[anchor=north] (label1) at (x6l) {$x_{N-2}^{\iL}$};
            \node[anchor=north] (label1) at (x6r) {$x_{N-2}^{\iR}$};
            \node[anchor=south] (label1) at ($(x6l)!0.5!(x6r)$) {$\ell_{N-2}$};
            \draw[dotted,|-|] ($(x6r)+(0,0.25)$) -- ($(x7l)+(0,0.25)$);
            \draw[ultra thick] (x7l) -- (x7r);
            \node[anchor=north] (label1) at (x7l) {$x_{N-1}^{\iL}$};
            \node[anchor=north] (label1) at (x7r) {$x_{N-1}^{\iR}$};
            \node[anchor=south] (label1) at ($(x7l)!0.5!(x7r)$) {$\ell_{N-1}$};
            \draw[dotted,|-|] ($(x7r)+(0,0.25)$) -- ($(x8l)+(0,0.25)$);
            \draw[ultra thick] (x8l) -- (x8r);
            \node[anchor=north] (label1) at (x8l) {$x_{N}^{\iL}$};
            \node[anchor=north] (label1) at (x8r) {$x_{N}^{\iR}$};
            \node[anchor=south] (label1) at ($(x8l)!0.5!(x8r)$) {$\ell_N$};
            \node[anchor=north] (label1) at (s1) {$s_1$};
            \node[anchor=north] (label1) at (s2) {$s_2$};
            \node[anchor=north] (label1) at (s3) {$s_3$};
            \node[anchor=north] (label1) at (s4) {$s_4$};
            \node[anchor=north] (label1) at (s5) {$s_{N-2}$};
            \node[anchor=north] (label1) at (s6) {$s_{N-1}$};
            \node[anchor=north] (label1) at (s7) {$s_N$};
        \end{tikzpicture}
    \end{adjustbox}
    \caption{A chain of $N$ subwavelength resonators, with lengths
        $(\ell_i)_{1\leq i\leq N}$ and spacings $(s_{i})_{1\leq i\leq N-1}$.}
    \label{fig:setting}
\end{figure}
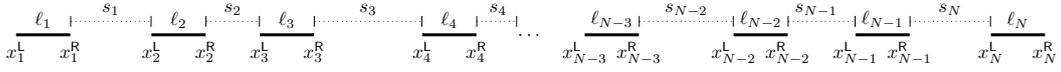

As a scalar wave field $u(t,x)$ propagates in a heterogeneous medium, it satisfies the following one-dimensional wave equation:
\begin{align*}
    \frac{1}{\kappa(x)}\frac{\partial{^2}}{\partial t^{2}}u(t,x) -\frac{\partial}{\partial x}\left(
    \frac{1}{\rho(x)}\frac{\partial}{\partial x}  u(t,x)\right) = 0, \qquad (t,x)\in\R\times\R.
\end{align*}
The parameters $\kappa(x)$ and $\rho(x)$ are the material parameters of the medium. We consider piecewise constant material parameters
\begin{align*}
    \kappa(x)=
    \begin{dcases}
        \kappa_b & x\in D,            \\
        \kappa   & x\in\R\setminus D,
    \end{dcases}\quad\text{and}\quad
    \rho(x)=
    \begin{dcases}
        \rho_b & x\in D,            \\
        \rho   & x\in\R\setminus D,
    \end{dcases}
\end{align*}
where the constants $\rho_b, \rho, \kappa, \kappa_b \in \R_{>0}$. The wave speeds inside the resonators $D$ and inside the background medium $\R\setminus D$, are denoted respectively by $v_b$ and $v$, the wave numbers respectively by $k_b$ and $k$, and the contrast between the densities of the resonators and the background medium by $\delta$:
\begin{align}
    v_b:=\sqrt{\frac{\kappa_b}{\rho_b}}, \qquad v:=\sqrt{\frac{\kappa}{\rho}},\qquad
    k_b:=\frac{\omega}{v_b},\qquad k:=\frac{\omega}{v},\qquad
    \delta:=\frac{\rho_b}{\rho}.
\end{align}

Up to using a Fourier decomposition in time, we can assume that
the total wave field $u(t,x)$ is time-harmonic:
\begin{align*}
    u(t,x)=\Re ( e^{-\i\omega t}u(x) ),
\end{align*}
for a function $u(x)$ which solves the one-dimensional Helmholtz equation:
\begin{align}
    -\frac{\omega^{2}}{\kappa(x)}u(x)-\frac{\dd}{\dd x}\left( \frac{1}{\rho(x)}\frac{\dd}{\dd
        x}  u(x)\right) =0,\qquad x \in\R.
    \label{eq:time indip Helmholtz}
\end{align}

In these circumstances of piecewise constant material parameters, the wave problem determined by \eqref{eq:time indip Helmholtz} can be rewritten as the following system of coupled one-dimensional Helmholtz equations:

\begin{align}
    \begin{dcases}
        u\prii(x) +\frac{\omega^2}{v_b^2}u(x)=0,                                                                                   & x\in D,                                         \\
        u\prii(x) + \frac{\omega^2}{v^2}u(x)=0,                                                                                    & x\in\R\setminus D,                              \\
        u\vert_{\iR}(x^{\iLR}_i) - u\vert_{\iL}(x^{\iLR}_i) = 0,                                                                   & \text{for all } 1\leq i\leq N,                  \\
        \left.\frac{\dd u}{\dd x}\right\vert_{\iR}(x^{\iL}_{{i}})=\delta\left.\frac{\dd u}{\dd x}\right\vert_{\iL}(x^{\iL}_{{i}}), & \text{for all } 1\leq i\leq N,                  \\
        \delta\left.\frac{\dd u}{\dd x}\right\vert_{\iR}(x^{\iR}_{{i}})=\left.\frac{\dd u}{\dd x}\right\vert_{\iL}(x^{\iR}_{{i}}), & \text{for all } 1\leq i\leq N,                  \\
        \frac{\dd u}{\dd\ \abs{x}}(x) -\i k u(x) = 0,                                                                              & x\in(-\infty,x_1^{\iL})\cup (x_N^{\iR},\infty),
    \end{dcases}
    \label{eq:coupled ods}
\end{align}
where, for a one-dimensional function $w$, we denote by
\begin{align*}
    w\vert_{\iL}(x) \coloneqq \lim_{\substack{s\to 0 \\ s>0}}w(x-s) \quad \mbox{and} \quad  w\vert_{\iR}(x) \coloneqq \lim_{\substack{s\to 0\\ s>0}}w(x+s)
\end{align*}
if the limits exist.

We are interested in the subwavelength eigenfrequencies $\omega\in\C$ such that \eqref{eq:coupled ods} has a nontrivial solution. We will perform an asymptotic analysis in a high-contrast limit, given by $\delta\to 0$. We will look for the subwavelength modes within this regime, which we characterise by requiring that $\omega \to 0$ as $\delta\to 0$. This limit will recover subwavelength resonances, while keeping the size of the resonators fixed.

In all what follows we will denote by $H^1(D)$ the usual Sobolev spaces of complex-valued functions on $D$.

\begin{definition}[Subwavelength eigenfrequency]
    A (complex) frequency $\omega(\delta) \in \C$, with nonnegative real part, satisfying
    \begin{align}
        \omega(\delta) \to 0\quad \text{as}\quad \delta\to 0
        \label{eq: asymptotic behaviour subwavelength resonances}
    \end{align}
    and such that \eqref{eq:coupled ods} admits a nonzero solution $u(\omega,\delta)\in H^1(D)$ for $\omega=\omega(\delta)$ is called a \emph{subwavelength eigenfrequency}. The solution $u(\omega,\delta)$ is called a \emph{subwavelength eigenmode}.
\end{definition}

We can immediately see that $\omega = 0$ is a trivial solution to \eqref{eq:coupled ods} corresponding to an eigenmode which is constant across all the resonators. In what follows, we will restrict attention to other solutions of \eqref{eq:coupled ods}.
\subsection{Hermitian case}
In the high-contrast, low-frequency scattering problems, recent work has shown that capacitance matrices can be used to provide asymptotic characterisations of the resonant modes \cite{ammari.davies.ea2021Functional}. Capacitance matrices were first introduced to describe many-body electrostatic systems by Maxwell \cite{maxwell1873treatise}, and have recently enjoyed a resurgence in subwavelength physics. In particular, for one-dimensional models like the one considered here, a capacitance matrix formulation has proven to be very effective in both efficiently solving the problem as well as providing valuable insights in the solution \cite{feppon.cheng.ea2023Subwavelength, ammari.barandun.ea2023Edge}.

For a system as in \cref{fig:setting}, we define the \emph{capacitance matrix} $C$ as
\begin{align}
    \label{eq: def Her cap mat}
    C_{i,j}\coloneqq -\int_{\partial D_i}\frac{\partial V_j}{\partial \nu}\dd x \qquad \text{with}\qquad
    \begin{dcases}
        -\frac{\text{d}^2}{\dd x^2} V_i = 0\quad \text{in } \R\setminus D, \\
        V_i(x_j^{\iLR})=\delta_{i,j} ,                          \\
        V_i(x)=\BO(1)\quad \text{for } \vert x\vert \to + \infty ,
    \end{dcases}
\end{align}
where $\delta_{i,j}$ is the Kronecker delta and $\nu$ the outward pointing normal. Then, it is easy to see that
\begin{align}
    C_{i,j} = - \frac{1}{s_{j-1}}\delta_{i,j-1}+\left(\frac{1}{s_{j-1}}+\frac{1}{s_j}\right)\delta_{i,j}-\frac{1}{s_j}\delta_{i,j+1}, \quad 1\leq i,j\leq N
\end{align}
with the convention that $\frac{1}{s_j}=0$ for $j<1$ and $j>N$. The capacitance matrix has thus the following tridiagonal form:
\begin{equation}\label{equ:capacitance matrix hermitian}
    \begin{aligned}
        \begingroup
        \renewcommand*{\arraystretch}{1.5}
        \begin{pmatrix}
            \cellcolor{red!20}s_1^{-1}     & \cellcolor{green!20}-s_1^{-1}                                                                                                                                                                                           \\
            \cellcolor{yellow!20}-s_1^{-1} & \cellcolor{red!20}s_1^{-1}+s_2^{-1} & \cellcolor{green!20}-s_2^{-1}                                                                                                                                                     \\
                                           & \cellcolor{yellow!20}-s_2^{-1}      & \cellcolor{red!20}\ddots      & \cellcolor{green!20}\ddots                                                                                                                        \\
                                           &                                     & \cellcolor{yellow!20}\ddots   & \cellcolor{red!20}\ddots    & \cellcolor{green!20}\ddots                                                                                          \\
                                           &                                     &                               & \cellcolor{yellow!20}\ddots & \cellcolor{red!20}\ddots           & \cellcolor{green!20}-s_{N-2}^{-1}                                              \\
                                           &                                     &                               &                             & \cellcolor{yellow!20}-s_{N-2}^{-1} & \cellcolor{red!20}s_{N-2}^{-1}+s_{N-1}^{-1} & \cellcolor{green!20}s_{N-1}^{-1} \\
                                           &                                     &                               &                             &                                    & \cellcolor{yellow!20}-s_{N-1}^{-1}          & \cellcolor{red!20}s_{N-1}^{-1}
        \end{pmatrix}.
        \endgroup
    \end{aligned}
\end{equation}
The main findings of \cite{feppon.cheng.ea2023Subwavelength} are the following.
\begin{proposition} \label{prop:capa approx hermitian}
    The scattering problem \eqref{eq:coupled ods} admits exactly $N$ subwavelength resonant frequencies as $\delta\to0$, namely
    \begin{align*}
        \omega_i =  v_b \sqrt{\delta\lambda_i} + \BO(\delta),
    \end{align*}
    where $(\lambda_i)_{1\leq i\leq N}$ are the eigenvalues of the eigenvalue problem
    \begin{equation}
        \label{eq:eigevalue problem capacitance matrix}
        V^2C\bm a_i = \lambda_i L \bm a_i,\qquad 1\leq i\leq N,
    \end{equation}
    where $L\coloneqq\operatorname{diag}(\ell_1,\dots,\ell_N)$ and $V \coloneqq \operatorname{diag}(v_1,\dots,v_N)$.
    Furthermore, let $u_i$ be a subwavelength eigenmode corresponding to $\omega_i$ and let $\bm a_i$ be the corresponding eigenvector of $C$. Then
    \begin{align*}
        u_i(x) = \sum_j \bm a_i^{(j)}V_j(x) + \BO(\delta),
    \end{align*}
    where $V_j$ is defined by \eqref{eq: def Her cap mat} and $\bm a_i^{(j)}$ denotes the $j$\textsuperscript{th} entry of the eigenvector $\bm a_i$.
\end{proposition}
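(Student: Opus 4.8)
The plan is to solve the transmission problem \eqref{eq:coupled ods} region by region, recast the existence of a nontrivial solution as the vanishing of a scalar characteristic determinant, and then extract the capacitance eigenvalue problem by a matched asymptotic analysis in the joint limit $\delta,\omega\to0$. The formal reduction to \eqref{eq:eigevalue problem capacitance matrix} is routine once one shows the interior field is constant at leading order; the genuine work is in the counting and the error control.

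First I would write the piecewise-explicit solution: inside each resonator $D_i$ the field is a combination of $\cos(k_b(x-x_i^{\iL}))$ and $\sin(k_b(x-x_i^{\iL}))$ with $k_b=\omega/v_b$; on each bounded gap it is a combination of $\cos(kx)$ and $\sin(kx)$ with $k=\omega/v$; and on the two unbounded components the radiation condition in \eqref{eq:coupled ods} selects the outgoing exponentials $e^{\pm\i kx}$. The $2N$ continuity conditions together with the $2N$ flux conditions (each carrying the contrast factor $\delta$) form a homogeneous linear system in the interior coefficients, and a subwavelength eigenfrequency is exactly a root $\omega(\delta)\to0$ of the associated determinant $\mathcal F(\omega,\delta)$.

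The heart of the argument is the asymptotic reduction. Because the flux conditions force the interior normal derivatives to be $\delta$ times the exterior ones, the interior field is constant at leading order; write $u|_{D_i}=a_i+o(1)$ and assemble $\bm a=(a_1,\dots,a_N)^\top$. Integrating the interior equation $u\prii=-k_b^2u$ over $D_i$ converts interior curvature into an exterior boundary flux, and feeding in the flux conditions gives, to leading order,
\[
 \delta\left(\left.\frac{\dd u}{\dd x}\right\vert_{\iR}(x_i^{\iR})-\left.\frac{\dd u}{\dd x}\right\vert_{\iL}(x_i^{\iL})\right)=-\frac{\omega^2}{v_b^2}\,\ell_i\,a_i .
\]
On the other hand, at leading order the exterior field solves $u\prii=0$ on $\R\setminus D$ with boundary values $a_j$ and is bounded at infinity, so it coincides with $\sum_j a_j V_j$ for the capacitance potentials of \eqref{eq: def Her cap mat}; by the definition of $C$ the left-hand flux jump equals $-\delta\,(C\bm a)_i$. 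Equating the two expressions is precisely the pencil \eqref{eq:eigevalue problem capacitance matrix}, and the scaling $\omega^2=\BO(\delta)$ then returns the square-root law $\omega_i=v_b\sqrt{\delta\lambda_i}+\BO(\delta)$ together with the modal expansion $u_i=\sum_j\bm a_i^{(j)}V_j+\BO(\delta)$. Reality and the correct count of eigenvalues are clean: rewriting $V^2C\bm a=\lambda L\bm a$ as $C\bm a=\lambda\,V^{-2}L\,\bm a$ exhibits a symmetric--definite pencil ($C$ symmetric, $V^{-2}L$ a positive diagonal), hence $N$ real eigenvalues, all nonnegative since $C$ is positive semi-definite, with $\lambda=0$ attained at $\bm a=\mathbf{1}$.

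Finally, to upgrade ``there are such solutions'' to ``exactly $N$'' with the stated $\BO(\delta)$ accuracy, I would rescale $\omega=\sqrt\delta\,\zeta$ and show that a suitable normalisation of $\mathcal F(\sqrt\delta\,\zeta,\delta)$ converges uniformly on compact sets to a limiting function whose zeros reproduce the $N$ values $v_b\sqrt{\lambda_i}$; an application of Rouch\'e's theorem (or the Gohberg--Sigal multiplicity-counting argument) then pins exactly $N$ roots of $\mathcal F$ near the predicted ones and yields the error bounds. The main obstacle is exactly this last step: controlling $\mathcal F$ uniformly in the degenerate limit and excluding spurious subwavelength roots. Particular care is needed for the kernel vector $\mathbf{1}$ of $C$ (the would-be static $\omega=0$ mode), whose associated resonance is shifted to order $\BO(\delta)$ by the radiation flux and must be separated from the genuinely trivial solution.
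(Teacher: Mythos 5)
Your proposal is correct in substance but takes a genuinely different route from the paper's. The paper (following \cite{feppon.cheng.ea2023Subwavelength}; see the appendix) never assembles the full $4N\times 4N$ transfer system: it encodes the exterior solution once and for all in the Dirichlet-to-Neumann map $\mathcal{T}^k$, rewrites \eqref{eq:coupled ods} as an interior problem with DtN boundary conditions, establishes well-posedness of the bilinear form $a_{\omega,\delta}$ via a G\aa rding inequality, and uses the Lax--Milgram solutions $h_j$ as basis functions to reduce the whole problem to the $N\times N$ \emph{holomorphic} nonlinear eigenvalue problem $(I-\exactcapmat(\omega,\delta))\bm x=0$ of \cref{lemma: I-capmat=0}; expanding $\exactcapmat(\omega,\delta)$ in $(\omega,\delta)$ then yields \eqref{eq:eigevalue problem capacitance matrix}, and Gohberg--Sigal theory applied to this $N\times N$ matrix function gives exactly $N$ subwavelength characteristic values with the stated $\BO(\delta)$ accuracy. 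Your route --- explicit piecewise solutions, the integrated flux identity on each resonator, identification of the leading-order exterior field with $\sum_j a_j V_j$, then Rouch\'e on a rescaled scalar determinant --- is sound in one dimension and is arguably more physically transparent, since the integration-by-parts step shows directly why the capacitance matrix \eqref{eq: def Her cap mat} must appear. What the paper's formulation buys is precisely the step you flag as the main obstacle: because the reduction lands on an $N\times N$ matrix function holomorphic near the origin, the root count, the exclusion of spurious roots, and the uniform error control come essentially for free, whereas your $4N\times 4N$ determinant has rows and columns scaling differently in $\omega$ and $\delta$ and must be normalised by hand before any limit or Rouch\'e argument can be run; the variational/DtN reduction is also the formulation that carries over verbatim to the non-reciprocal case $\gamma\neq 0$ (which is how the appendix is actually written) and serves as the template in higher dimensions, where explicit piecewise solutions are unavailable. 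One bookkeeping point to fix: your flux identity gives $\delta\,V^2C\bm a=\omega^2 L\bm a$, i.e.\ $\omega_i=\sqrt{\delta\lambda_i}$ for the pencil \eqref{eq:eigevalue problem capacitance matrix}, so quoting the conclusion with the additional prefactor $v_b$ double-counts the wave speed (a normalisation wrinkle already present in the statement itself, since $V=v_b I$ here); choose one convention --- either $V^2$ inside the pencil or $v_b$ in front of the square root --- and keep it consistently through the Rouch\'e step.
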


Note that the choice of the $N$ ``physical'' resonant frequencies between $\pm \omega_i$ for $1\leqs i\leqs N$ occurs through the radiation condition. Moreover, the study of Hermitian subwavelength systems is thus entirely governed at the leading order in $\delta$ by the linear eigenvalue problem \eqref{eq:eigevalue problem capacitance matrix}.
\subsection{Non-reciprocal case}\label{sec: non reciprocal 1D systems}
Hermitian systems are a first step, but a much richer physics is observed for non-Hermitian systems. In this subsection, we study one subset of them: non-reciprocal systems. To this end, we consider the following variation of \eqref{eq:time indip Helmholtz}:
\begin{align}
    -\frac{\omega^{2}}{\kappa(x)}u(x)- \gamma(x) \frac{\dd}{\dd x}u(x)-\frac{\dd}{\dd x}\left( \frac{1}{\rho(x)}\frac{\dd}{\dd
        x}  u(x)\right) =0,\qquad x \in\R,
    \label{eq: gen Strum-Liouville}
\end{align}
for a piecewise constant coefficient
\begin{align*}
    \gamma(x) = \begin{dcases}
                    \gamma,\quad x\in D, \\
                    \nm
                    0,\quad x \in \R\setminus D.
                \end{dcases}
\end{align*}
This new parameter $\gamma$ extends the usual scalar wave equation to a generalised Strum--Liouville equation via the introduction of an imaginary gauge potential \cite{yokomizo.yoda.ea2022NonHermitian}. Alternatively, one may think about \eqref{eq: gen Strum-Liouville} as a damped wave equation where the damping acts in the space variable instead of the time variable.

In these circumstances of piecewise constant material parameters, the wave problem determined by \eqref{eq: gen Strum-Liouville} can be rewritten as the following system of coupled one-dimensional equations:

\begin{align}
    \begin{dcases}
        u\prii(x) + \gamma u\pri(x)+\frac{\omega^2}{v_b^2}u(x)=0,                                                                  & x\in D,                                         \\
        u\prii(x) + \frac{\omega^2}{v^2}u(x)=0,                                                                                    & x\in\R\setminus D,                              \\
        u\vert_{\iR}(x^{\iLR}_i) - u\vert_{\iL}(x^{\iLR}_i) = 0,                                                                   & \text{for all } 1\leq i\leq N,                  \\
        \left.\frac{\dd u}{\dd x}\right\vert_{\iR}(x^{\iL}_{{i}})=\delta\left.\frac{\dd u}{\dd x}\right\vert_{\iL}(x^{\iL}_{{i}}), & \text{for all } 1\leq i\leq N,                  \\
        \delta\left.\frac{\dd u}{\dd x}\right\vert_{\iR}(x^{\iR}_{{i}})=\left.\frac{\dd u}{\dd x}\right\vert_{\iL}(x^{\iR}_{{i}}), & \text{for all } 1\leq i\leq N,                  \\
        \frac{\dd u}{\dd\ \abs{x}}(x) -\i k u(x) = 0,                                                                              & x\in(-\infty,x_1^{\iL})\cup (x_N^{\iR},\infty).
    \end{dcases}
    \label{eq:coupled ods2}
\end{align}

We seek a similar capacitance matrix formulation for the case of subwavelength resonators with an imaginary gauge potential.

In the case of systems of Hermitian subwavelength resonators (\emph{i.e.}, when $\gamma=0$), the entries of the capacitance matrix are defined by \eqref{eq: def Her cap mat}. For non-reciprocal systems the formulation is different. 

\begin{definition}[Gauge capacitance matrix]
    For $\gamma \in \R\setminus\{0\}$,  we define the \emph{gauge capacitance matrix} $\capmatg\in\R^{N\times N}$ by
    \begin{align}
        \capmat_{i,j}^\gamma \coloneqq -\frac{|D_i|}{\int_{D_i}e^{\gamma x}\dd x}\int_{\partial D_i} e^{\gamma x} \frac{\partial V_j(x)}{\partial \nu}\dd \sigma,
        \label{eq: def cap mat}
    \end{align}
    where $V_j$ is given by \eqref{eq: def Her cap mat}.
\end{definition}

It is easy to see that the gauge capacitance matrix is tridiagonal, non-symmetric,  and is given by
\begin{align}
    \capmat_{i,j}^\gamma \coloneqq \begin{dcases}
                                       \frac{\gamma}{s_1} \frac{\ell_1}{1-e^{-\gamma \ell_1}},                                                            & i=j=1,               \\
                                       \frac{\gamma}{s_i} \frac{\ell_i}{1-e^{-\gamma \ell_i}} -\frac{\gamma}{s_{i-1}} \frac{\ell_i}{1-e^{\gamma \ell_i}}, & 1< i=j< N,           \\
                                       - \frac{\gamma}{s_i} \frac{\ell_i}{1-e^{-\gamma \ell_j}},                                                          & 1\leq i=j-1\leq N-1, \\
                                       \frac{\gamma}{s_j} \frac{\ell_i}{1-e^{\gamma \ell_j}},                                                             & 2\leq i=j+1\leq N,   \\
                                       - \frac{\gamma}{s_{N-1}} \frac{\ell_N}{1-e^{\gamma \ell_N}},                                                       & i=j=N,               \\
                                   \end{dcases}\label{eq: explicit coef cap mat}
\end{align}
while all the other entries are zero. More visually, we can display the capacitance matrix for $\ell_i=\ell$ and $s=s_i$ for all $i$ as
\begin{equation}\label{equ:capacitance matrix hermitian2}
    \begin{aligned}
        \capmat^\gamma\coloneqq \frac{\gamma}{s}
        \begingroup
        \renewcommand*{\arraystretch}{1.5}
        \begin{pmatrix}
            \cellcolor{red!20}  \frac{\ell}{1-e^{-\gamma \ell}}   & \cellcolor{green!20}-   \frac{\ell}{1-e^{-\gamma \ell}}                                                                                                                                                                                                                                                                                                                                      \\
            \cellcolor{yellow!20}  \frac{\ell}{1-e^{\gamma \ell}} & \cellcolor{red!20}  \frac{\ell}{1-e^{-\gamma \ell}} -  \frac{\ell}{1-e^{\gamma \ell}} & \cellcolor{green!20}-   \frac{\ell}{1-e^{-\gamma \ell}}                                                                                                                                                                                                                                              \\
                                                                  & \cellcolor{yellow!20}  \frac{\ell}{1-e^{\gamma \ell}}                                 & \cellcolor{red!20}\ddots                                & \cellcolor{green!20}\ddots                                                                                                                                                                                                                 \\
                                                                  &                                                                                       & \cellcolor{yellow!20}\ddots                             & \cellcolor{red!20}\ddots    & \cellcolor{green!20}\ddots                                                                                                                                                                                   \\
                                                                  &                                                                                       &                                                         & \cellcolor{yellow!20}\ddots & \cellcolor{red!20}\ddots                              & \cellcolor{green!20}-   \frac{\ell}{1-e^{-\gamma \ell}}                                                                                              \\
                                                                  &                                                                                       &                                                         &                             & \cellcolor{yellow!20}  \frac{\ell}{1-e^{\gamma \ell}} & \cellcolor{red!20}  \frac{\ell}{1-e^{-\gamma \ell}} -  \frac{\ell}{1-e^{\gamma \ell}} & \cellcolor{green!20}-   \frac{\ell}{1-e^{-\gamma \ell}}^{-1} \\
                                                                  &                                                                                       &                                                         &                             &                                                       & \cellcolor{yellow!20}  \frac{\ell}{1-e^{\gamma \ell}}                                 & \cellcolor{red!20}-   \frac{\ell}{1-e^{\gamma \ell}}
        \end{pmatrix}.
        \endgroup
    \end{aligned}
\end{equation}

The following result from \cite{ammari.barandun.ea2024Mathematical}, that is the analogue of \cref{prop:capa approx hermitian}, describes how the gauge capacitance matrix characterises the nontrivial solutions to \eqref{eq:coupled ods2}.
\begin{proposition} \label{cor: approx via eva eve}
    The $N$ subwavelength eigenfrequencies $\omega_i$ satisfy, as $\delta\to0$,
    \begin{align*}
        \omega_i =  v_b \sqrt{\delta\lambda_i} + \BO(\delta),
    \end{align*}
    where $(\lambda_i)_{1\leq i\leq N}$ are the eigenvalues of the eigenvalue problem
    \begin{equation}
        \label{eq:eigevalue problem capacitance matrix2}
        V^2\capmatg\bm a_i = \lambda_i L \bm a_i,\qquad 1\leq i\leq N
    \end{equation}
    with $V$ and $L$ being as in \cref{prop:capa approx hermitian}. Furthermore, let $u_i$ be a subwavelength eigenmode corresponding to $\omega_i$ and let $\bm a_i$ be the corresponding eigenvector of $\capmatg$. Then,
    \begin{align*}
        u_i(x) = \sum_j \bm a_i^{(j)}V_j(x) + \BO(\delta),
    \end{align*}
    where $V_j$ is defined by \eqref{eq: def Her cap mat} and $\bm a_i^{(j)}$ denotes the $j$\textsuperscript{th} entry of the eigenvector $\bm a_i$.
\end{proposition}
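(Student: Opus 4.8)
The plan is to run the same high-contrast, low-frequency asymptotic scheme that yields \Cref{prop:capa approx hermitian} in the reciprocal case \cite{feppon.cheng.ea2023Subwavelength}, the only structural novelty being the first-order term $\gamma u\pri$ in the interior equation of \eqref{eq:coupled ods2}. The decisive observation is that inside each resonator $D_i$ the interior equation $u\prii+\gamma u\pri+(\omega^2/v_b^2)u=0$ becomes, after multiplication by the integrating factor $e^{\gamma x}$, the divergence-form (Sturm--Liouville) identity $\frac{\dd}{\dd x}\bigl(e^{\gamma x}u\pri\bigr)=-\tfrac{\omega^2}{v_b^2}e^{\gamma x}u$. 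The weight $e^{\gamma x}$ produced here is exactly the one appearing in the definition \eqref{eq: def cap mat} of the gauge capacitance matrix, and tracking it is what makes $\capmatg$, rather than the Hermitian $C$, emerge in the discrete problem.

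First I would fix the scaling $\omega^2=\BO(\delta)$ (to be verified a posteriori) and integrate the divergence-form interior equation over $D_i$, obtaining the exact flux identity $e^{\gamma x_i^{\iR}}u\pri\vert_{\iL}(x_i^{\iR})-e^{\gamma x_i^{\iL}}u\pri\vert_{\iR}(x_i^{\iL})=-\tfrac{\omega^2}{v_b^2}\int_{D_i}e^{\gamma x}u\,\dd x$. Since the right-hand side is $\BO(\omega^2)=\BO(\delta)$ and the transmission conditions in \eqref{eq:coupled ods2} force the interior derivatives to be $\delta$ times the exterior ones, the weighted interior flux is $\BO(\delta)$; hence $u\pri=\BO(\delta)$ throughout $D_i$ and, at leading order, $u$ is constant on each resonator, $u\approx u_i$ on $D_i$, exactly as in the Hermitian case. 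Substituting $u\pri\vert_{\iL}(x_i^{\iR})=\delta\,u\pri\vert_{\iR}(x_i^{\iR})$ and $u\pri\vert_{\iR}(x_i^{\iL})=\delta\,u\pri\vert_{\iL}(x_i^{\iL})$ rewrites the left-hand side as $\delta\int_{\partial D_i}e^{\gamma x}\,\partial_\nu u\,\dd\sigma$, i.e.\ the very boundary integral of \eqref{eq: def cap mat}.

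Next I would supply the exterior representation. In every gap and in the two unbounded components $\gamma$ vanishes and $u$ solves $u\prii+k^2u=0$ with $k=\omega/v\to0$, so at leading order $u$ is the bounded harmonic interpolant of its boundary values; by continuity and the leading-order constancy just established this is $u(x)=\sum_j u_j V_j(x)+\BO(\delta)$ with $V_j$ as in \eqref{eq: def Her cap mat}, which already furnishes the claimed eigenmode formula once $u_j$ is identified with $\bm a_i^{(j)}$. Inserting this into $\int_{\partial D_i}e^{\gamma x}\partial_\nu u\,\dd\sigma$ and using \eqref{eq: def cap mat} gives $\int_{\partial D_i}e^{\gamma x}\partial_\nu u\approx-\tfrac{1}{\ell_i}\bigl(\int_{D_i}e^{\gamma x}\dd x\bigr)(\capmatg\bm u)_i$, while $\int_{D_i}e^{\gamma x}u\approx u_i\int_{D_i}e^{\gamma x}\dd x$. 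The weighted volume factor $\int_{D_i}e^{\gamma x}\dd x$ cancels on the two sides of the flux identity -- this cancellation is precisely the role of the normalisation $|D_i|/\!\int_{D_i}e^{\gamma x}$ in \eqref{eq: def cap mat} -- leaving the finite-dimensional balance $\tfrac{\omega^2}{v_b^2}\ell_i u_i=\delta\,(\capmatg\bm u)_i+\BO(\delta^{3/2})$, that is the generalised eigenvalue problem \eqref{eq:eigevalue problem capacitance matrix2} and the leading-order expansion $\omega_i=v_b\sqrt{\delta\lambda_i}+\BO(\delta)$, confirming the a priori scaling. A direct computation of the entries $\int_{\partial D_i}e^{\gamma x}\partial_\nu V_j$ then reproduces the explicit formula \eqref{eq: explicit coef cap mat}.

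The main obstacle is not this formal matching but the rigorous control of the remainders for a \emph{non-self-adjoint} problem: the gauge term breaks Hermiticity, so the variational and spectral-gap arguments available when $\gamma=0$ no longer apply to propagate the approximation from the discrete system back to \eqref{eq:coupled ods2}. I would resolve this by phrasing the resonance condition as a finite-dimensional analytic (in $\sqrt\delta$) root-finding problem -- built from the exact interior solutions of $u\prii+\gamma u\pri+(\omega^2/v_b^2)u=0$ together with the exact outgoing exterior solutions -- and applying a quantitative implicit-function or Rouch\'e argument around each eigenvalue $\lambda_i$ of \eqref{eq:eigevalue problem capacitance matrix2}. The stability of this perturbation rests on $\lambda_i$ being simple, which follows from the fact that $\capmatg$ is related by the diagonal similarity $x\mapsto e^{\gamma x/2}$ to a symmetric tridiagonal matrix (the mechanism behind the skin effect analysed later), so that its generalised eigenvalues are real and, generically, distinct.
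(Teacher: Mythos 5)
Your proposal is correct, but it takes a genuinely different route from the paper's. The paper (see its appendix, following \cite{ammari.barandun.ea2024Mathematical,feppon.cheng.ea2023Subwavelength}) never integrates the strong equation directly: it reformulates \eqref{eq:coupled ods2} through the Dirichlet-to-Neumann map, proves well-posedness of the non-Hermitian form $a_{\omega,\delta}$ in \eqref{eq: def a weak form} via a G\aa rding inequality that explicitly absorbs the $\gamma u\pri \bar{w}$ term, defines basis functions $h_j$ by Lax--Milgram, and reduces the resonance problem to the singularity of $I-\exactcapmat(\omega,\delta)$ (\cref{lemma: I-capmat=0}); the generalised eigenvalue problem \eqref{eq:eigevalue problem capacitance matrix2} then emerges from an asymptotic expansion of that exact, $\omega$-dependent matrix. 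You instead work with the strong form: the integrating-factor identity $\left(e^{\gamma x}u\pri\right)\pri=-\tfrac{\omega^2}{v_b^2}e^{\gamma x}u$, a flux balance over each $D_i$, the transmission conditions to trade interior for $\delta$-scaled exterior fluxes, and the explicit low-frequency exterior solution, closing with a Rouch\'e/implicit-function step on an analytic matching condition built from exact ODE solutions. What each buys: your derivation is more elementary and makes completely transparent why the weight $e^{\gamma x}$ of \eqref{eq: def cap mat} appears (this mechanism is hidden inside the asymptotics in the variational route), but it is intrinsically one-dimensional; the paper's DtN/variational scaffolding delivers well-posedness and remainder control for the non-self-adjoint problem systematically and parallels the machinery available in higher dimensions. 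Two points you should tighten when writing this up: first, concluding $u\pri=\BO(\delta)$ inside $D_i$ requires knowing the exterior derivatives are $\BO(1)$ under a fixed normalisation of $u$, which is part of what the a priori well-posedness argument supplies; second, the simplicity of the $\lambda_i$ that your Rouch\'e step needs is not merely generic --- since $L\inv V^2\capmatg$ is an irreducible tridiagonal matrix whose sub- and super-diagonal entries have positive product, it is diagonally similar to a Jacobi matrix, so all its eigenvalues are automatically real and simple, with $\lambda_1=0$ corresponding to the trivial constant mode that must be set aside exactly as the paper does.
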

\part{Applications}
\section{Topological interface modes}\label{sec: interface modes}
In this section, we consider wave localisation at the interface between two systems of \emph{finitely many} subwavelength resonators. These resonators are arranged in pairs or dimers, such that the model we consider shares many of the features of the Su-Schrieffer-Heeger (SSH) model in quantum mechanics \cite{su.schrieffer.ea1979Solitons,margetis.watson.ea2023Su}. We show the existence of exponentially localised interface eigenmodes in this finite structure. These interface modes have been subject to numerous studies in the setting of infinite structures. A particular focus has been put on studying the topological properties of infinite periodic structures and then introducing carefully designed interfaces so as to create the so-called \emph{topologically protected} eigenmodes \cite{ammari.davies.ea2020Topologically}. These modes have significant implications for applications since they are expected to be robust with respect to imperfections in the design. These concepts have been widely studied in a variety of settings, most notably in quantum mechanics for the Schrödinger operator \cite{fefferman.lee-thorp.ea2014Topologically,fefferman.lee-thorp.ea2017Topologically} and more recently, for related continuous models of classical wave systems in \cite{lin.zhang2022Mathematical,thiang.zhang2023Bulkinterface,qiu.lin.ea2023Mathematical,craster.davies2023Asymptotic,thiang2023Topological,davies.lou2023Landscape}. In finite-sized systems, these eigenmodes have been observed both experimentally and numerically; see, for instance, \cite{ammari.barandun.ea2023Edge,ammari.davies.ea2020Topologically,coutant.achilleos.ea2022Subwavelength,zheng.achilleos.ea2019Observation} and the references therein.

As mentioned before, here we consider the far less-explored but more realistic physical setting of interface modes in finite dimer structures. We present a one-to-one correspondence between the position of the eigenfrequency in the spectrum of the corresponding infinite periodic structure (\emph{i.e.}, in the asymptotic spectral bulk, its boundary or the asymptotic spectral gap; see \Cref{def:spectralgap}) and the behaviour (localised versus delocalised) of the corresponding eigenmode. Our results hold for any finite and large enough chain of dimers with a geometric defect satisfying a mild condition. Furthermore, we show that the eigenfrequencies lying in the gap converge exponentially as the size of the structure goes to infinity and provide an explicit and simple formula for the limit. The results of this section are from \cite{ammari.barandun.ea2023Exponentially}.

\subsection{Dimer chains with a geometric defect}

Systems of repeated dimers (that is, $s_i=s_{i-2}$ for $3\leq i\leq N$) are of particular interest as the corresponding infinite structure can be studied with Floquet--Bloch band theory \cite{ammari.barandun.ea2023Edge}; when the two repeating separation distances are distinct, this  provides a nontrivial example of a band gap between the subwavelength spectral bands. Here, we consider systems of dimers with a defect in the geometric structure, so that at some point the repeating pattern of alternating separation distances is broken. This is inspired by the famous Su-Schrieffer-Heeger (SSH) model from quantum settings and is the canonical example of a topologically protected interface mode \cite{su.schrieffer.ea1979Solitons}. It is a system of $N=4m+1$ for $m\in\N$ resonators such that
\begin{align*}
    s_i & = s_{i-2} \quad \text{for } 3\leq i \leq 2m,      \\
    s_i & = s_{i+2} \quad \text{for } 2m+1\leq i \leq 4m-2,
\end{align*}
where we typically assume $s_1=s_{2m+2}$ and $s_2=s_{2m+1}$ so that the system is symmetric with respect to the center of the $(2m+1)$\textsuperscript{th} resonator with spacings $s_1$ and $s_2$. A sketch of this system with a geometric defect is shown in \cref{fig: geometrical defect}.

\begin{figure}[h]
    \centering
    \begin{adjustbox}{width=\textwidth}
        \begin{tikzpicture}
            \draw[-,thick,dotted] (-.5,-1) -- (-.5,2);
            \draw[|-|,dashed] (0,1) -- (1,1);
            \node[above] at (0.5,1) {$s_2$};
            \draw[ultra thick] (1,0) -- (2,0);
            \node[below] at (1.5,0) {$D_{2m+2}$};
            \draw[-,dotted] (1,0) -- (1,1);

            \draw[|-|,dashed] (2,1) -- (3,1);
            \node[above] at (2.5,1) {$s_1$};
            \draw[ultra thick] (3,0) -- (4,0);
            \node[below] at (3.5,0) {$D_{2m+3}$};
            \draw[-,dotted] (2,0) -- (2,1);
            \draw[-,dotted] (3,0) -- (3,1);
            \draw[-,dotted] (4,0) -- (4,1);

            \begin{scope}[shift={(+4,0)}]
                \draw[|-|,dashed] (0,1) -- (1,1);
                \node[above] at (0.5,1) {$s_2$};
                \draw[ultra thick] (1,0) -- (2,0);
                \node[below] at (1.5,0) {$D_{2m+4}$};
                \draw[-,dotted] (1,0) -- (1,1);
                \node at (2.5,.5) {\dots};
            \end{scope}

            \begin{scope}[shift={(+6,0)}]
                \draw[ultra thick] (1,0) -- (2,0);
                \node[below] at (1.5,0) {$D_{4m-2}$};

                \draw[|-|,dashed] (2,1) -- (3,1);
                \node[above] at (2.5,1) {$s_2$};
                \draw[ultra thick] (3,0) -- (4,0);
                \node[below] at (3.5,0) {$D_{4m}$};
                \draw[-,dotted] (2,0) -- (2,1);
                \draw[-,dotted] (3,0) -- (3,1);
                \draw[-,dotted] (4,0) -- (4,1);
                \begin{scope}[shift={(+2,0)}]
                    \draw[|-|,dashed] (2,1) -- (3,1);
                    \node[above] at (2.5,1) {$s_1$};
                    \draw[ultra thick] (3,0) -- (4,0);
                    \node[below] at (3.5,0) {$D_{4m+1}$};
                    \draw[-,dotted] (2,0) -- (2,1);
                    \draw[-,dotted] (3,0) -- (3,1);
                \end{scope}
            \end{scope}

            \begin{scope}[shift={(-4,0)}]

                \draw[|-|,dashed] (0,1) -- (1,1);
                \node[above] at (0.5,1) {$s_1$};
                \draw[ultra thick] (1,0) -- (2,0);
                \node[below] at (1.5,0) {$D_{2m}$};
                \draw[-,dotted] (1,0) -- (1,1);

                \draw[|-|,dashed] (2,1) -- (3,1);
                \node[above] at (2.5,1) {$s_2$};
                \draw[ultra thick] (3,0) -- (4,0);
                \node[below,fill=white] at (3.5,0) {$D_{2m+1}$};
                \draw[-,dotted] (2,0) -- (2,1);
                \draw[-,dotted] (3,0) -- (3,1);
                \draw[-,dotted] (4,0) -- (4,1);
            \end{scope}

            \begin{scope}[shift={(-8,0)}]
                \node at (.5,.5) {\dots};
                \draw[ultra thick] (1,0) -- (2,0);
                \node[below] at (1.5,0) {$D_{2m-2}$};

                \draw[|-|,dashed] (2,1) -- (3,1);
                \node[above] at (2.5,1) {$s_2$};
                \draw[ultra thick] (3,0) -- (4,0);
                \node[below] at (3.5,0) {$D_{2m-3}$};
                \draw[-,dotted] (2,0) -- (2,1);
                \draw[-,dotted] (3,0) -- (3,1);
                \draw[-,dotted] (4,0) -- (4,1);
            \end{scope}

            \begin{scope}[shift={(-14,0)}]
                \draw[ultra thick] (1,0) -- (2,0);
                \node[below] at (1.5,0) {$D_{1}$};

                \draw[|-|,dashed] (2,1) -- (3,1);
                \node[above] at (2.5,1) {$s_1$};
                \draw[ultra thick] (3,0) -- (4,0);
                \node[below] at (3.5,0) {$D_{2}$};
                \draw[-,dotted] (2,0) -- (2,1);
                \draw[-,dotted] (3,0) -- (3,1);
                \draw[-,dotted] (4,0) -- (4,1);
                \begin{scope}[shift={(+2,0)}]
                    \draw[|-|,dashed] (2,1) -- (3,1);
                    \node[above] at (2.5,1) {$s_2$};
                    \draw[ultra thick] (3,0) -- (4,0);
                    \node[below] at (3.5,0) {$D_{3}$};
                    \draw[-,dotted] (2,0) -- (2,1);
                    \draw[-,dotted] (3,0) -- (3,1);
                \end{scope}
            \end{scope}

        \end{tikzpicture}
    \end{adjustbox}
    \caption{Dimer structure with a geometric defect.}
    \label{fig: geometrical defect}
\end{figure}

For such systems the geometric symmetries mean that the capacitance matrix from \eqref{equ:capacitance matrix hermitian} has the following tridiagonal block structure:

\setcounter{MaxMatrixCols}{20}
\begin{align}
    \label{eq: strucutre capacitance matrix}
    C = \begin{pNiceMatrix}
                      \Block[draw,fill=blue!40,rounded-corners]{7-7}{} \tilde{\alpha} & \beta_{1} &           &           &           &           &                                                     &           &           &           &                                        \\
                      \beta_{1}                                                       & \alpha    & \beta_{2} &           &           &           &                                                     &           &           &           &                                        \\
                                                                                      & \beta_{2} & \alpha    & \beta_{1} &           &           &                                                     &           &           &           &                                        \\
                                                                                      &           & \ddots    & \ddots    & \ddots    &           &                                                     &           &           &           &                                        \\
                                                                                      &           &           & \beta_{2} & \alpha    & \beta_{1} &                                                     &           &           &           &                                        \\
                                                                                      &           &           &           & \beta_{1} & \alpha    & \beta_{2}                                           &           &           &           &                                        \\
                                                                                      &           &           &           &           & \beta_{2} & \Block[draw,fill=red!40,rounded-corners]{7-7}{}\eta & \beta_{2} &           &           &                                        \\
                                                                                      &           &           &           &           &           & \beta_{2}                                           & \alpha    & \beta_{1} &           &                                        \\
                                                                                      &           &           &           &           &           &                                                     & \beta_{1} & \alpha    & \beta_{2} &                                        \\
                                                                                      &           &           &           &           &           &                                                     &           & \ddots    & \ddots    & \ddots    &                            \\
                                                                                      &           &           &           &           &           &                                                     &           &           & \beta_{1} & \alpha    & \beta_{2}                  \\
                                                                                      &           &           &           &           &           &                                                     &           &           &           & \beta_{2} & \alpha    & \beta_{1}      \\
                                                                                      &           &           &           &           &           &                                                     &           &           &           &           & \beta_{1} & \tilde{\alpha}
                  \end{pNiceMatrix},
\end{align}
where
\begin{align}
    \label{eq: translation alpha beta to s1 s2}
    \beta_1 = -s_1\inv,\quad \beta_2 = -s_2\inv,\quad \alpha = s_1\inv + s_2\inv,\quad \eta=2s_2\inv,\quad \tilde{\alpha} = s_1^{-1}.
\end{align}
It will be useful to have notation for the top left $(2m+1)\times(2m+1)$-submatrix $C_1
$ and the bottom right $(2m+1)\times(2m+1)$-submatrix $C_2
$ (as highlighted by the shading in \eqref{eq: strucutre capacitance matrix}).
\cref{fig: eva eve} shows the spectrum and the eigenvectors of the capacitance matrix \eqref{eq: strucutre capacitance matrix}. It is immediately clear that there exists a spectral gap and there exists a localised eigenmode associated to an eigenvalue in the gap.

\begin{figure}[h]
    \centering
    \begin{subfigure}[t]{0.48\textwidth}
        \centering
        \includegraphics[height=0.76\textwidth]{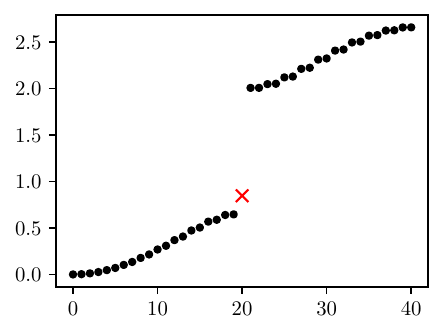}
        \caption{Eigenvalues of \eqref{eq: strucutre capacitance matrix}. As red cross a specific eigenvalue lying isolated from the others.}
        \label{fig: eva}
    \end{subfigure}\hfill
    \begin{subfigure}[t]{0.48\textwidth}
        \centering
        \includegraphics[height=0.76\textwidth]{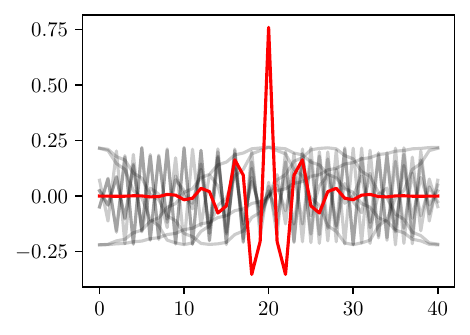}
        \caption{A selection of eigenvectors of $C$ from \eqref{eq: strucutre capacitance matrix}. All eigenvectors are superimposed and with unit norm. The red solid eigenvector corresponds to the red cross eigenvalue in \cref{fig: eva}.}
        \label{fig: eva eve}
    \end{subfigure}

    \caption{Eigenvalues and eigenvectors of the capacitance matrix \eqref{eq: strucutre capacitance matrix} for $N=41, s_1=1$ and $s_2=3$.}
    \label{fig: eigenvector behaviour}
\end{figure}

\subsection{Perturbed tridiagonal $2$-Toeplitz matrices}  \label{sect3}
In this section, we will briefly recall results established in Section \ref{sec: spectrum of tridiag 2 toeplitz matrices with perturbations} about eigenvalues and eigenvectors of tridiagonal $2$-Toeplitz matrices with perturbations on the corners. By slightly abusing the notation, we denote by $A_{2m+1}^{(a, b)}$ in (\ref{equ:oddmatrixtwoperturb1}) and $A_{2m}^{(a, b)}$ in (\ref{equ:evenmatrixtwoperturb1}) the corresponding Hermitian tridiagonal $2$-Toeplitz matrices with $\alpha_1=\alpha_2=\alpha\in \mathbb R, \eta_j=\beta_j\in \mathbb R, j=1,2$.

The eigenvalues of $A_{2m+1}^{(a, b)}$ and $A_{2m}^{(a, b)}$ are given by the roots of the characteristic polynomials \eqref{equ:eigenpolynomial5} and \eqref{equ:eigenpolynomial6}. On the other hand, by Theorem \ref{thm:eigenvectoroddmatrixtwoperturb2}, the eigenvectors $\bm x$ of $A_{2m+1}^{(a, b)}$ and $A_{2m}^{(a, b)}$ are given as follows.
\begin{proposition}\label{thm: eigenvectors of A2k+1 and A2k}
    Let $\lambda$ be an eigenvalue of $A_{2m+1}^{(a, b)}$ in (\ref{equ:oddmatrixtwoperturb1}) with $\alpha_1=\alpha_2=\alpha\in \mathbb R,  \eta_j=\beta_j\in \mathbb R, j=1,2$. Then, using $\mu\coloneqq y(\lambda)$ where
    \begin{equation}\label{eq: def y}
        y(\lambda) := \frac{\left(\alpha-\lambda\right)^2-(\beta_1^2+\beta_2^2)}{2\beta_1\beta_2},
    \end{equation}
    the eigenvector corresponding to $\lambda$ is given by
    \begin{align}\label{eq: eigenvector A2k+1}
        \bm x = & \left( \hat q_{0}^{(\xi_p, \xi_{q})}\left(\mu\right),-\frac{1}{\beta_1}\left(\alpha-\lambda\right) \hat p_0^{(\xi_p, \xi_{q})}\left(\mu\right), \hat  q_1^{(\xi_p, \xi_{q})}\left(\mu\right), -\frac{1}{\beta_1} \left(\alpha-\lambda\right) \hat p_1^{(\xi_p, \xi_{q})}\left(\mu\right), \right. \nonumber \\
                & \qquad  \left.  \ldots, -\frac{1}{\beta_1} \left(\alpha-\lambda\right) \hat p_{m-1}^{(\xi_p, \xi_{q})}\left(\mu\right),
        \hat  q_{m}^{(\xi_p, \xi_{q})}\left(\mu\right)\right ).
    \end{align}
    If $\lambda$ is an eigenvalue of $A_{2m}^{(a, b)}$, then the corresponding eigenvector is given by
    \begin{align}\label{eq: eigenvector A2k}
        \bm x = & \left( \hat q_{0}^{(\xi_p, \xi_{q})}\left(\mu\right),-\frac{1}{\beta_1}\left(\alpha-\lambda\right) \hat p_0^{(\xi_p, \xi_{q})}\left(\mu\right), \hat  q_1^{(\xi_p, \xi_{q})}\left(\mu\right), -\frac{1}{\beta_1} \left(\alpha-\lambda\right) \hat p_1^{(\xi_p, \xi_{q})}\left(\mu\right), \right. \nonumber \\
                & \qquad  \left.  \ldots, -\frac{1}{\beta_1} \left(\alpha-\lambda\right) \hat p_{m-1}^{(\xi_p, \xi_{q})}\left(\mu\right)\right ).
    \end{align}
    In both cases, $\xi_{q}=(\alpha-\lambda), \xi_{p}= (\alpha+a-\lambda)$ and $\widehat p_j^{(\xi_p, \xi_{q})}, \widehat q_j^{(\xi_p, \xi_{q})}$'s are given by \eqref{equ:normalizepolynomial1} for $\beta = \beta_2/\beta_1$.
\end{proposition}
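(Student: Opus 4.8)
The plan is to obtain both eigenvector formulas as direct specialisations of \cref{thm:eigenvectoroddmatrixtwoperturb2}, which already provides the eigenvectors of $A_{2m+1}^{(a,b)}$ and $A_{2m}^{(a,b)}$ in the fully general setting via \eqref{equ:eigenvectoroddmatrixtwoperturb3}. The only task is to substitute the Hermitian constraints $\alpha_1=\alpha_2=\alpha$ and $\eta_j=\beta_j$ for $j=1,2$ into the quantities $s$, $\mu_r$, $\xi_p$, $\xi_q$, $\zeta$ and $\beta$ appearing there, and to check that each collapses to the form asserted in the statement.

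First I would compute the geometric weight $s$ from \eqref{equ:defiofstwoperturb2}. Under $\eta_j=\beta_j$ one has $s=\sqrt{\eta_1\eta_2/(\beta_1\beta_2)}=\sqrt{\beta_1\beta_2/(\beta_1\beta_2)}=1$, so that every factor $s^k$ in \eqref{equ:eigenvectoroddmatrixtwoperturb3} disappears. This is precisely what removes the exponential prefactors and leaves the alternating pattern of $\widehat q$'s and $\widehat p$'s recorded in \eqref{eq: eigenvector A2k+1}. Next I would simplify the argument $\mu_r$. Substituting $\alpha_1=\alpha_2=\alpha$ and $\eta_j=\beta_j$ into the expression for $\mu_r$ in \eqref{equ:defiofstwoperturb2} gives $\mu_r=\frac{(\alpha-\lambda)^2-(\beta_1^2+\beta_2^2)}{2\beta_1\beta_2}$, which is exactly $y(\lambda)$ as defined in \eqref{eq: def y}; hence the abbreviation $\mu\coloneqq y(\lambda)$ is legitimate. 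I would also record that $\xi_q=\alpha_1-\lambda$ and $\xi_p=\alpha_1+a-\lambda$ specialise to $\alpha-\lambda$ and $\alpha+a-\lambda$, matching the stated choice.

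It then remains to confirm that the polynomials $\widehat p_k^{(\xi_p,\xi_q)}$ and $\widehat q_k^{(\xi_p,\xi_q)}$ are the same objects. Under the Hermitian constraint, $\zeta=\eta_2\beta_2/(\eta_1\beta_1)=\beta_2^2/\beta_1^2$ from \eqref{equ:zeta2}, while the normalisation parameter in \eqref{equ:defineofmu1} satisfies $\beta^2=\beta_2\eta_2/(\beta_1\eta_1)=\beta_2^2/\beta_1^2$, so that $\zeta=\beta^2$ with $\beta=\beta_2/\beta_1$; this is the value declared at the end of the statement, and it guarantees that the recurrences \eqref{equ:Chebyshevrecurrence1}--\eqref{equ:Chebyshevrecurrence2} defining $\widehat p_k$ and $\widehat q_k$ are unchanged. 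Finally, for the even case I would invoke the last sentence of \cref{thm:eigenvectoroddmatrixtwoperturb2}: the eigenvector of $A_{2m}^{(a,b)}$ is the tuple of the first $2m$ entries of the odd eigenvector, which after the substitutions above is exactly \eqref{eq: eigenvector A2k}. No genuine obstacle arises; the argument is entirely a matter of careful bookkeeping, the only points needing attention being the verification that $s=1$ so that all geometric factors truly cancel, and the choice of branch in $\sqrt{\eta_1\beta_1\eta_2\beta_2}=\sqrt{\beta_1^2\beta_2^2}$, taken consistently with $\beta=\beta_2/\beta_1$ so that the denominator reads $2\beta_1\beta_2$.
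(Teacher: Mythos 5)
Your proposal is correct and follows exactly the paper's route: the paper also obtains this proposition as a direct specialisation of \cref{thm:eigenvectoroddmatrixtwoperturb2} under the Hermitian constraints $\alpha_1=\alpha_2=\alpha$, $\eta_j=\beta_j$. Your explicit bookkeeping (that $s=1$, $\mu_r=y(\lambda)$, $\zeta=\beta^2$ with $\beta=\beta_2/\beta_1$, and the sign-branch consistency in $\sqrt{\eta_1\beta_1\eta_2\beta_2}$) is precisely the verification the paper leaves implicit.
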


Observing that the two blocks in the capacitance matrix $C$ defined by \eqref{eq: strucutre capacitance matrix} are two Hermitian tridiagonal $2$-Toeplitz matrices with perturbations on diagonal corners and based on the theorem above, we can characterise the eigenvectors of $C$ as follows.

\begin{proposition}\label{thm: eigenvectors of defect C}
    Let $(\lambda,\bm v)$ be an eigenpair of $C$ defined by \eqref{eq: strucutre capacitance matrix}. Then $\bm v$ is given by
    \begin{equation}\label{eq: structure eigenvector defect matrix}
        \bm v = (\bm x^{(1)},\bm x^{(2)},\dots,\bm x^{(2m)},\bm x^{(2m+1)},(-1)^\sigma\bm x^{(2m)},\dots,(-1)^\sigma\bm x^{(2)},(-1)^\sigma\bm x^{(1)})^{\top},
    \end{equation}
    where $\bm x = (\bm x^{(1)},\bm x^{(2)},\dots,\bm x^{(2m)},\bm x^{(2m+1)})^\top \in \R^{2m+1}$ is as in \eqref{eq: eigenvector A2k+1} with $\xi_{q}=(\alpha-\lambda), \xi_{p}= (\alpha+a-\lambda)$ and $\sigma =0$ or $1$ for all $\bm x$ except for $\bm x \in \text{span} \{\bm 1\}$ where $\sigma=0$.
\end{proposition}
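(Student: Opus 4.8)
The plan is to exploit the mirror symmetry of $C$ about its centre and then read off the first half of each eigenvector from the already-established $2$-Toeplitz eigenvector formula. First I would show that $C$ is centrosymmetric, i.e.\ that $JCJ=C$, where $J$ is the $N\times N$ exchange (flip) matrix with ones on the anti-diagonal and $N=4m+1$. This is a direct inspection of \eqref{eq: strucutre capacitance matrix}: the two diagonal corners both equal $\tilde\alpha$, the central diagonal entry $\eta$ sits at position $2m+1$, which is fixed by the reflection $i\mapsto N+1-i$, and the super-diagonal reads $\beta_1,\beta_2,\dots,\beta_2$ down to the centre and then $\beta_2,\dots,\beta_1$, so that $C_{i,i+1}=C_{N-i,N-i+1}$ for every $i$. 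Hence $J$ is an involution commuting with $C$.

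Next I would use centrosymmetry to obtain the parity structure \eqref{eq: structure eigenvector defect matrix}. Since $J^2=I$ and $JC=CJ$, the space $\R^N$ splits into the $\pm1$ eigenspaces of $J$ — the symmetric and the skew-symmetric vectors — and $C$ preserves each. Consequently one may choose a basis of eigenvectors of $C$ each satisfying $J\bm v=(-1)^\sigma\bm v$ with $\sigma\in\{0,1\}$; for a simple eigenvalue the parity is forced, and for a repeated eigenvalue such a basis still exists. Written out componentwise this is exactly \eqref{eq: structure eigenvector defect matrix}, the central component $\bm x^{(2m+1)}$ necessarily vanishing when $\sigma=1$. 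The constant vector $\bm 1$ is the (symmetric) eigenvector for the eigenvalue $0$ — indeed every row of $C$ sums to zero by \eqref{eq: translation alpha beta to s1 s2} — so it falls into the $\sigma=0$ class, which is the content of the stated exception.

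Finally I would identify the first half $\bm x=(v_1,\dots,v_{2m+1})$ with \eqref{eq: eigenvector A2k+1}. Comparing \eqref{eq: strucutre capacitance matrix} with the Hermitian matrix $A_{2m+1}^{(a,b)}$ (taking $\alpha_1=\alpha_2=\alpha$, $\eta_j=\beta_j$, and corner perturbation $a=\tilde\alpha-\alpha=\beta_2$), the first $2m$ rows coincide. Thus rows $1,\dots,2m$ of $C\bm v=\lambda\bm v$ are precisely the first $2m$ rows of the eigenrelation for $A_{2m+1}^{(a,b)}$, which is exactly the data that the construction behind \cref{thm:eigenvectoroddmatrixtwoperturb2} uses to generate the eigenvector through the three-term recurrence for $\widehat p_k,\widehat q_k$; the omitted last row merely enforces the characteristic equation and is replaced here by the folded centre condition coming from the symmetry. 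Hence $\bm x$ solves the same recurrence with the same initial data — the first-row relation $v_2=-\beta_1^{-1}(\tilde\alpha-\lambda)v_1$ fixing the normalisation $v_1=\alpha-\lambda$ — and \cref{thm: eigenvectors of A2k+1 and A2k} delivers $\bm x$ in the form \eqref{eq: eigenvector A2k+1} with $\xi_{q}=\alpha-\lambda$ and $\xi_{p}=\alpha+a-\lambda$.

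The main obstacle I anticipate is not the symmetry itself but making this last step rigorous without presuming that $\lambda$ is an eigenvalue of the stand-alone matrix $A_{2m+1}^{(a,b)}$: one must argue that the formula \eqref{eq: eigenvector A2k+1}, read as the solution of the common three-term recurrence driven only by rows $1,\dots,2m$, reproduces the first half of \emph{every} $C$-eigenvector, and separately handle the degenerate cases (the eigenvalue $0$ and any multiple eigenvalues) where the parity label must be assigned by the basis-choice argument rather than being automatic.
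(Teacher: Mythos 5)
Your plan follows essentially the same route as the paper: the mirror symmetry of $C$ about its centre yields the folded form \eqref{eq: structure eigenvector defect matrix}, and identifying the top-left block with a perturbed Hermitian tridiagonal $2$-Toeplitz matrix (corner perturbation $a=\tilde\alpha-\alpha=\beta_2$) lets \cref{thm: eigenvectors of A2k+1 and A2k} generate the first half of the eigenvector. Your closing observation --- that the formula \eqref{eq: eigenvector A2k+1} must be read as the unique solution of the three-term recurrence driven only by rows $1,\dots,2m$, so that one never needs $\lambda$ to be an eigenvalue of the stand-alone matrix $A_{2m+1}^{(a,b)}$ --- is exactly the right way to make that step rigorous: the polynomials $\widehat p_k,\widehat q_k$ are defined for arbitrary $\lambda$, and since $\beta_1,\beta_2\neq 0$ the equations of rows $1,\dots,2m$ determine $v_2,\dots,v_{2m+1}$ from $v_1$, so fixing the normalisation $v_1=\alpha-\lambda$ pins down the first half.

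The one genuine gap is your treatment of possibly repeated eigenvalues. The proposition asserts the symmetric form for \emph{every} eigenpair, not merely for a suitably chosen eigenbasis; if some eigenvalue of $C$ had both a symmetric and an antisymmetric eigenvector, a generic combination would have neither parity and the statement would be false. Your fallback --- ``for a repeated eigenvalue such a basis still exists'' --- therefore cannot close the argument; it would only prove a weaker statement about a basis. What is missing is the standard simplicity fact for unreduced tridiagonal matrices: since $\beta_1,\beta_2\neq 0$, deleting the first row and the last column of $C-\lambda I$ leaves a $4m\times 4m$ upper triangular matrix whose diagonal consists of off-diagonal entries of $C$, hence $\operatorname{rank}(C-\lambda I)\geq 4m$ and every eigenspace of $C$ is one-dimensional (equivalently, since $C$ is real symmetric and diagonalisable, all eigenvalues are simple). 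With this, $J\bm v$ is necessarily proportional to $\bm v$, the constant is $\pm1$ because $J^2=I$, and the parity label $\sigma$ is forced for every eigenvector, so your ``degenerate case'' never occurs. The remaining items --- the centrosymmetry check $JCJ=C$, the vanishing of the central component when $\sigma=1$, and the assignment $\sigma=0$ for $\bm 1\in\ker C$ --- are correct as you state them.
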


\subsection{Asymptotic spectral gap and localised interface modes}  \label{sect4}
In this section, we show the existence of a spectral gap for a defectless structure of dimers. Furthermore, we establish a direct relationship between an eigenvalue being within the spectral gap and the localisation of its corresponding eigenvector.

\bigskip
We first define a spectral gap for the capacitance matrix of an unperturbed structure of dimers.
\begin{definition}[Spectral bulk and gaps] \label{def:spectralgap}
    Consider a finite structure of resonators. We define the \emph{asymptotic spectral bulk} $\Sigma$ and \emph{asymptotic spectral gap} $\Gamma$ of the structure as the spectral bulk and spectral gap (also known as band gap) of the associated infinite periodic system, respectively.
\end{definition}
The spectral gap and spectral bulk of infinite periodic dimer systems have been computed in \cite[Lemma 5.3]{ammari.barandun.ea2023Edge}.
\begin{proposition}
    Consider a system of repeated dimers (without defect) with $N=2m$ resonators. Denote by $C_{N}$ the associated capacitance matrix and let $\Sigma$ be the asymptotic spectral bulk. Then,
    \begin{align*}
        \Sigma = \overline{\lim_{N\to\infty} \sigma(C_{N})} = \left[0,\frac{2}{s_2}\right] \cup \left[\frac{2}{s_1}, \frac{2}{s_1} + \frac{2}{s_2}\right],
    \end{align*}
    where $\lim$ denotes the Hausdorff limit. Consequently, the asymptotic spectral gap is
    \begin{align*}
        \Gamma = \left( \frac{2}{s_2}, \frac{2}{s_1} \right)\subset \R.
    \end{align*}
\end{proposition}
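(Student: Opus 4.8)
The plan is to identify the spectral bulk of the infinite periodic dimer, as demanded by \cref{def:spectralgap}, with the union of the Floquet--Bloch bands and to compute these bands through the $2\times2$ symbol of the underlying tridiagonal $2$-Toeplitz structure. Reading off \eqref{equ:capacitance matrix hermitian} with alternating spacings $s_1,s_2$, the defect-free chain is a symmetric tridiagonal $2$-Toeplitz matrix with $\alpha_1=\alpha_2=\alpha=s_1^{-1}+s_2^{-1}$ and $\eta_j=\beta_j$, where $\beta_1=-s_1^{-1}$, $\beta_2=-s_2^{-1}$. Specialising the symbol \eqref{eq: symbol tridiagonal operator} to $k=2$ gives
\[
    f(z)=\begin{pmatrix} \alpha & \beta_1+\beta_2 z \\ \beta_1+\beta_2 z^{-1} & \alpha\end{pmatrix},\qquad z\in\mathbb{T}^1.
\]
By \cref{thm: essential spectrum} the bulk of the associated infinite-chain operator is the essential spectrum $\sigma_{\mathrm{ess}}(T(f))=\sigma_{\mathrm{det}}(f)=\bigcup_{z\in\mathbb{T}^1}\sigma(f(z))$, and since the matrix is Hermitian, \cref{lemma: collapsed symbol} shows the symbol is collapsed, so no winding spectrum appears and the whole bulk is exactly the union of the two real eigenvalue curves of $f$.

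Next I would compute $\det(f(z)-\lambda)$. Writing $z=e^{\i\theta}$ and using $z+z^{-1}=2\cos\theta$, a direct expansion yields
\[
    \det(f(z)-\lambda)=(\alpha-\lambda)^2-(s_1^{-2}+s_2^{-2})-2s_1^{-1}s_2^{-1}\cos\theta,
\]
so that $\lambda\in\sigma_{\mathrm{det}}(f)$ precisely when $y(\lambda)=\cos\theta$ for some $\theta$, with $y$ the normalised function of \eqref{eq: def y}. Hence $\lambda$ is in the bulk if and only if $y(\lambda)\in[-1,1]$, which, using $s_1^{-2}+s_2^{-2}\pm2s_1^{-1}s_2^{-1}=(s_1^{-1}\pm s_2^{-1})^2$, is equivalent to $\abs{\alpha-\lambda}\in[\,\abs{s_1^{-1}-s_2^{-1}},\,s_1^{-1}+s_2^{-1}\,]$. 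Substituting $\alpha=s_1^{-1}+s_2^{-1}$ and taking $s_1<s_2$ (so that the gap is nonempty), the upper bound $\abs{\alpha-\lambda}\le s_1^{-1}+s_2^{-1}$ confines $\lambda$ to $[0,2/s_1+2/s_2]$, while the lower bound $\abs{\alpha-\lambda}\ge s_1^{-1}-s_2^{-1}$ removes the open middle, leaving the two bands $[0,2/s_2]$ and $[2/s_1,2/s_1+2/s_2]$. This is precisely $\Sigma$, and the complementary interval between them is $\Gamma=(2/s_2,2/s_1)$.

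Finally I would match this band description to the Hausdorff limit $\overline{\lim_{N\to\infty}\sigma(C_N)}$ of the finite sections. For $C_N=A_{2m}^{(a,b)}$, \cref{thm:eigenvaluethm2} places the $y$-values of all but at most $12$ eigenvalues into the windows $[\cos((k+1)\pi/m),\cos((k-2)\pi/m)]$ for $3\le k\le m-4$; as $m\to\infty$ these windows become dense in $[-1,1]$, so the corresponding $\lambda_k$ fill the two bands, while every such eigenvalue also satisfies $y(\lambda_k)\in[-1,1]$, giving both inclusions and hence $\overline{\lim_N\sigma(C_N)}=\Sigma$. The main obstacle is the treatment of the finitely many exceptional (corner-perturbation) eigenvalues: one must verify that the defect-free termination in \eqref{equ:capacitance matrix hermitian} produces no persistent in-gap eigenvalue surviving the limit. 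This is exactly where the trivial phase differs from the perturbed structure of the next subsection, in which such an in-gap interface mode is deliberately created; the self-adjointness of $C_N$, which precludes the pseudospectral instabilities of general non-Hermitian Toeplitz sections, is what guarantees that no spurious accumulation occurs in $\Gamma$ and that the limit collapses onto the two closed bands.
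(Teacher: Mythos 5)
Your band computation is correct and is essentially the route the paper takes (the paper itself does not reprove this proposition but cites the Floquet--Bloch computation of \cite{ammari.barandun.ea2023Edge}): specialising the symbol \eqref{eq: symbol tridiagonal operator} to $k=2$ with $\alpha=s_1^{-1}+s_2^{-1}$, $\beta_1=-s_1^{-1}$, $\beta_2=-s_2^{-1}$, the identity $\det(f(z)-\lambda)=(\alpha-\lambda)^2-(s_1^{-2}+s_2^{-2})-2s_1^{-1}s_2^{-1}\cos\theta$ is right, and the equivalence $y(\lambda)\in[-1,1]\Leftrightarrow\lvert\alpha-\lambda\rvert\in\left[\lvert s_1^{-1}-s_2^{-1}\rvert,\,s_1^{-1}+s_2^{-1}\right]$ does produce exactly the two bands and the gap. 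The density half of the Hausdorff-limit claim, $\Sigma\subset\overline{\lim_N\sigma(C_N)}$, is also sound: the cosine windows of \cref{thm:eigenvaluethm2} become dense in $[-1,1]$, and $y$ restricted to each band is a homeomorphism onto $[-1,1]$, so the non-exceptional eigenvalues fill both bands.

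The genuine gap is in the reverse inclusion, and your final paragraph identifies it but then dismisses it with a false principle. \cref{thm:eigenvaluethm2} leaves up to $12$ eigenvalues of $C_N=A_{2m}^{(a,b)}$ (here $a=b=\beta_2=-s_2^{-1}$) completely unconstrained, and the proposition fails if any of them accumulates at an interior point of $\Gamma$. You claim that ``the self-adjointness of $C_N$ \dots{} is what guarantees that no spurious accumulation occurs in $\Gamma$'', but self-adjointness cannot be the reason: the defected capacitance matrix \eqref{eq: strucutre capacitance matrix} of the next subsection is equally real symmetric, and by \cref{prop:existencedefectfrequency} and \cref{thm:existenceofeigenfrquency} it has an eigenvalue converging to an interior point of the gap. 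What actually rules out in-gap eigenvalues for the defect-free chain is the specific corner data $a=b=\beta_2$ together with $\lvert\beta_1\rvert>\lvert\beta_2\rvert$ (the trivial SSH termination when $s_1<s_2$), and this needs a genuine argument: for instance, evaluate the characteristic polynomial \eqref{equ:eigenpolynomial6} on $\Gamma$, where $P_k^*(\lambda)=(\beta_1\beta_2)^kU_k(y(\lambda))$ with $y(\lambda)<-1$, and use the monotonicity of $U_{m-1}/U_m$ outside $[-1,1]$ from \cref{lemma: monotonicity_Chebyshev} to show it has no zero there; alternatively, a Cauchy-interlacing comparison with $A_{2m}^{(0,0)}$ in the spirit of \cref{prop: exaclty one in gap}. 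Note that the region outside $[0,2/s_1+2/s_2]$ is harmless (the matrix is diagonally dominant symmetric, hence positive semi-definite, and Gershgorin bounds the spectrum above by $2/s_1+2/s_2$), so the gap interval is the only place where the exceptional eigenvalues matter --- but without the missing step you have only shown $\Sigma\subset\overline{\lim_N\sigma(C_N)}\subset\Sigma\cup\overline{\Gamma}$, which is strictly weaker than the statement.
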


 In view of \cref{prop:capa approx hermitian}, we can refer to spectral bulk and spectral gap of the physical system and of the capacitance matrix interchangeably.

Last, we show that an eigenvalue in the gap is associated with an eigenvector that is exponentially localised in the sense that it decays exponentially away from the interface in both directions, within the limits of the finite structure.

\begin{definition}[Localised interface mode]
    Let $v(x)$ be an eigenmode. Then we say that $v$ is a \emph{localised interface mode} at the point $x_0$, if both $\vert v(x-x_0)\vert $ for $x_0<x\in D$ and $\vert v(x_0-x)\vert $  for $x_0>x\in D$ decay exponentially as a function of $x\in D$. The same terminology applies to the corresponding eigenvector of the capacitance matrix.
\end{definition}

\begin{proposition}[Eigenvectors of ${C}$]\label{prop: exponential decay and sines}
    Let $C\in\mathbb{R}^{4m+1 \times 4m+1}$ be the capacitance matrix of a defected structure as illustrated in \cref{fig: geometrical defect} and let $(\lambda,v)$ be an eigenpair of $C$. Then, there exists $\vert r\vert \geq1$ independent of $m$ and $A,B,\tilde A,\tilde B\in\R$ dependent on $m$ such that, for $y(\lambda)$ defined by (\ref{eq: def y}),  
    \begin{description}
        \item[if $y(\lambda)^2>1$]
            \begin{align*}
                v^{ (\vert 2m-2j\vert) } = Ar^j+Br^{-j}, \\
                v^{ (\vert 2m-2j-1\vert) } = \tilde A r^j+\tilde B r^{-j},
            \end{align*}
            with $A=\frac{r^{1-m} (c_1 r-c_2)}{r^2-1}=\mathcal{O}(\frac{1}{r^{m}})$ and $B = \frac{r^m (c_2 r-c_1)}{r^2-1}= \mathcal{O}(r^{m-1})$ as $m\to\infty$ for $c_1,c_2\in\R$ independent of $m$. The same asymptotics (with a slight different formula) hold for $\tilde A$ and $\tilde B$;
        \item[if $y(\lambda)^2<1$]
            \begin{align*}
                v^{ (\vert 2m-2j\vert) } = A\cos(j\theta)+B\sin(j\theta), \\
                v^{ (\vert 2m-2j-1\vert) } = \tilde A\cos(j\theta)+\tilde B\sin(j\theta),
            \end{align*}
            with $r=e^{\mathbf{i} \theta}$ and $A,B,\tilde A,\tilde B$ bounded as $m\to\infty$;
        \item[if $y(\lambda)^2=1$] $r=\pm 1$ and
            \begin{align*}
                v^{ (\vert 2m-2j\vert) } = Ar^j+Br^j\cdot j, \\
                v^{ (\vert 2m-2j-1\vert) } = \tilde A r^j+\tilde B r^j\cdot j,
            \end{align*}
            with
            $A=\frac{r^{1-m} (c_1 m r-c_1 r-c_2 m)}{m r^2-m-r^2}$ and $B = \frac{r^m (c_2 r-c_1)}{m r^2-m-r^2}$ as $m\to\infty$ for $c_1,c_2\in\R$ independent of $m$. The same asymptotics (with a slight different formula) hold for $\tilde A$ and $\tilde B$.
    \end{description}
\end{proposition}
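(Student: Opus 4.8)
The plan is to reduce the eigenvector of $C$ to the two interleaved polynomial families that govern the half-block, and then to solve the resulting recurrence explicitly. By \cref{thm: eigenvectors of defect C}, every eigenvector $\bm v$ of $C$ is the symmetric reflection (up to a global sign $(-1)^\sigma$) of the eigenvector $\bm x\in\R^{2m+1}$ of the top-left block $C_1 = A_{2m+1}^{(a,b)}$; hence the entries of $\bm v$ indexed by their distance from the central interface resonator coincide up to sign with those of $\bm x$, and it suffices to analyse $\bm x$. By \cref{thm: eigenvectors of A2k+1 and A2k}, the entries of $\bm x$ are, up to the fixed factors $1$ and $-\frac{1}{\beta_1}(\alpha-\lambda)$, the values $\widehat q_k(\mu)$ at the even positions and $\widehat p_k(\mu)$ at the odd positions, with $\mu \coloneqq y(\lambda)$; both families obey the same Chebyshev three-term recurrence \eqref{equ:Chebyshevrecurrence1}--\eqref{equ:Chebyshevrecurrence2}, so the two sequences $v^{(\vert 2m-2j\vert)}$ and $v^{(\vert 2m-2j-1\vert)}$ solve one and the same constant-coefficient linear recurrence.

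I would then solve this recurrence through its characteristic equation $t^2 - 2\mu t + 1 = 0$, whose two roots multiply to $1$; writing $r = \mu + \sqrt{\mu^2-1}$, the other root is $r^{-1}$. Its discriminant equals $4(\mu^2-1) = 4(y(\lambda)^2-1)$, so the sign of $y(\lambda)^2-1$ produces exactly the stated trichotomy: two distinct real roots and the solution $Ar^j+Br^{-j}$ when $y(\lambda)^2>1$; conjugate roots $r=e^{\i\theta}$ with $\cos\theta=\mu$ and the solution $A\cos(j\theta)+B\sin(j\theta)$ when $y(\lambda)^2<1$; and the repeated root $r=\pm1$ with the solution $Ar^j+B\,j\,r^j$ when $y(\lambda)^2=1$. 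Since $r$ depends only on $\mu=y(\lambda)$, hence only on $\lambda$ and the fixed off-diagonal entries $\beta_1,\beta_2$, it is independent of $m$, and one may take $\vert r\vert\geq1$.

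To pin down the coefficients I would re-index by the distance $j$ from the interface, setting $k=m-j$ so that the central value $\widehat q_m$ corresponds to $j=0$. Substituting $k=m-j$ turns $r^{k}$ into $r^{m}r^{-j}$ and $r^{-k}$ into $r^{-m}r^{j}$, which is the origin of the $r^{\pm m}$ prefactors in $A$ and $B$. The two free constants are then fixed by the initial data $\widehat q_0=\alpha+a-\lambda$ and $\widehat q_1=(2\mu+\beta)(\alpha-\lambda)-a/\beta$ from \eqref{equ:recurrenceinitial2b}; a $2\times2$ linear solve yields the closed forms with $c_1=\widehat q_0$ and $c_2=\widehat q_1$, both manifestly independent of $m$, giving $A=\mathcal O(r^{-m})$ and $B=\mathcal O(r^{m-1})$. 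The $\widehat p$ family produces $\tilde A,\tilde B$ by the identical computation with its own initial data, and the degenerate branch $r=\pm1$ follows either by solving the repeated-root recurrence directly or by letting $r\to\pm1$ in the formulas.

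The main obstacle is not conceptual — once the Chebyshev recurrence is recognised, the solution structure is forced — but organisational: one must carefully track the index reversal $k=m-j$ together with the two interleaved families $\widehat p,\widehat q$, which share the recurrence but differ in their initial data, and then propagate the $r^{\pm m}$ factors through the $2\times2$ solve to confirm the precise asymptotics $A=\mathcal O(1/r^{m})$ and $B=\mathcal O(r^{m-1})$, the analogous statements for $\tilde A,\tilde B$, and the degenerate branch.
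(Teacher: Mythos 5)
Your proposal is correct and follows essentially the same route as the paper's own argument: reduce to the half-block $A_{2m+1}^{(a,b)}$ via the reflection symmetry of \cref{thm: eigenvectors of defect C}, invoke the Chebyshev-like representation of \cref{thm: eigenvectors of A2k+1 and A2k}, and solve the three-term recurrence \eqref{equ:Chebyshevrecurrence1}--\eqref{equ:Chebyshevrecurrence2} through its characteristic equation $t^2-2\mu t+1=0$, with the trichotomy given by the sign of the discriminant $4(y(\lambda)^2-1)$ and the coefficients fixed by the initial data \eqref{equ:recurrenceinitial2b} after the index reversal $k=m-j$. Your $2\times2$ solve indeed reproduces the exact formulas $A=\frac{r^{1-m}(c_1r-c_2)}{r^2-1}$, $B=\frac{r^m(c_2r-c_1)}{r^2-1}$ with $c_1=\widehat q_0$, $c_2=\widehat q_1$, so apart from harmless 0- versus 1-indexing conventions the argument is complete.
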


\subsection{Existence, uniqueness, and convergence of the eigenvalue in the gap}  \label{sect5}
In this section, we will show the existence of a unique eigenvalue in the gap for the defected structure and consequently the existence of a unique localised interface eigenvector. Furthermore, we analyse the behaviour as the size of the system grows, specifically the limiting behaviour as $N\to\infty$. We will show that the eigenvalue of $C$ lying in the asymptotic spectral gap converges exponentially fast to a value in the gap.

\subsubsection{Existence}
We first show the existence of an eigenvalue in the band gap. To illustrate how the existence of the interface eigenvalue is related to the Chebyshev polynomial, we outline the main steps of the proof. For further details on the derivation, we refer the reader to \cite[Section 5]{ammari.barandun.ea2023Exponentially}.

By performing Laplace expansion on the top block of rows $1\leq i\leq 2m+1$ of $C-xI $, we obtain the characteristic polynomial of $C\in\R^{(4m+1)\times (4m+1)}$, that is, 
\begin{align}
    p(x) & = (\left[\left(x-\alpha-\beta_2-(\beta_1 - \beta_2)\right) P_m^*\left(x\right)+\left( \beta_2(\beta_1 - \beta_2)(x-\alpha)  -\beta_2 \beta_{1}^2-(\beta_1 - \beta_2) \beta_{2}^2\right) P_{m-1}^*\left(x\right)\right] \nonumber \\
         & - \beta_2^2\left[\left(x-\alpha-\beta_2\right) P_{m-1}^*\left(x\right)+\left( -\beta_2 \beta_{1}^2\right) P_{m-2}^*\left(x\right)\right])\chi_{A_{2m}^{(a,0)}(\alpha,\beta_1,\beta_2)}(x),
\end{align}
where we have used (\ref{equ:eigenpolynomial5}) and (\ref{equ:eigenpolynomial6}).

For the sake of brevity, we rewrite
\begin{align*}
    p(x) & = \chi_{A_{2m}^{(a,0)}(\alpha,\beta_1,\beta_2)}(x)\left(\left[A(x) P_m^*\left(x\right)+B(x) P_{m-1}^*\left(x\right)\right]
    -\beta_2^2
    \left[E(x) P_{m-1}^*\left(x\right)+F(x) P_{m-2}^*\left(x\right)\right]\right),
\end{align*}
where $A(x),B(x),E(x)$, and $F(x)$ are defined as
\begin{align*}
     & A(x) = x-\alpha-\beta_1,   \quad B(x) = \beta_2(\beta_1 - \beta_2)(x-\alpha)  -\beta_2 \beta_{1}^2-(\beta_1 - \beta_2) \beta_{2}^2, \\
     & E(x) = x-\alpha-\beta_2,  \quad F(x) = -\beta_2 \beta_{1}^2.
\end{align*}
For simplicity, we abbreviate $A(x), B(x), E(x)$, and $F(x)$ as $A, B, D$, and $F$ respectively in the subsequent discussions. Thus, $\lambda\in\Gamma$ is an eigenvalue if and only if
\begin{align}
    \left[A P_m^*\left(\lambda\right)+B P_{m-1}^*\left(\lambda\right)\right]  
     & = \beta_2^2                                                               
    \left[E P_{m-1}^*\left(\lambda\right)+F P_{m-2}^*\left(\lambda\right)\right] \nonumber\\
    \Leftrightarrow P_m^*\left(\lambda\right) \left[A+B\frac{P_{m-1}^*\left(\lambda\right)}{P_{m}^*\left(\lambda\right)}\right]
     & = \beta_2^2                                                              
    P_{m-1}^*\left(\lambda\right) \left[E+F\frac{P_{m-2}^*\left(\lambda\right)}{P_{m-1}^*\left(\lambda\right)}\right].
    \label{eq: lambda in gamma eigenvalues only if}
\end{align}
In the above step, we were able to divide by $\chi_{A_{2m}^{(a,0)}}(\lambda)$ because it is nonzero in $\Gamma$ as will be shown by \cref{prop: exaclty one in gap}. In addition, $P_k^*(\lambda)=\left(\beta_1 \beta_{2}\right)^k U_k\left(y(\lambda)\right)$ with $y(\lambda) := \frac{(\lambda-\alpha)^2-\beta_1^2-\beta_2^2}{2 \beta_{1} \beta_2}$ and $\lambda \in \Gamma$ corresponds to the case when $y(\lambda)<-1$. Since the Chebyshev polynomials $U_k(\cdot)$'s only have roots in $(-1,1)$, $P_{m-1}^*\left(\lambda\right)$ and $P_{m}^*\left(\lambda\right)$ are nonzero in $\Gamma$ and we are able to divide by them in \eqref{eq: lambda in gamma eigenvalues only if}.

Moreover, since
\[
    U_m(y)=\frac{\left(y+\sqrt{y^2-1}\right)^{m+1}-\left(y-\sqrt{y^2-1}\right)^{m+1}}{2 \sqrt{y^2-1}},
\]
the limit  $L(\lambda)\coloneqq \lim_{m\to\infty}\frac{P_{m-1}^*\left(\lambda\right)}{P_{m}^*\left(\lambda\right)}$ exists for all $\lambda\in \mathbb R$. Then,
\begin{align*}
    \left[A+BL\right]= \beta_2^2L
    \left[E+FL\right] 
    \Leftrightarrow L^2(\beta_2^2F) + L(\beta_2^2E-B) - A = 0,
\end{align*}
from which we get the condition
\begin{align}\label{eq: condition L is root}
    L(\lambda) & = \frac{B - E \beta_{2}^{2} \pm \sqrt{4 A F \beta_{2}^{2} + B^{2} - 2 B E \beta_{2}^{2} + E^{2} \beta_{2}^{4}}}{2 F \beta_{2}^{2}}.
\end{align}
On the other hand, by the recurrence formula of $U_m$, for $\lambda \in \Gamma$ with $y(\lambda)<-1$ and using again the shorthand $y$ for $y(\lambda)$, it follows that
\begin{align}
    L(\lambda) =& \lim_{m\to\infty}\frac{P_{m-1}^*\left(\lambda \right)}{P_{m}^*\left(\lambda\right)} 
    = \frac{y-\sqrt{y^2-1}}{\beta_1\beta_2}\nonumber \\
    =&\frac{(\lambda-\alpha)^2-\beta_1^2-\beta_2^2-2 \beta_1 \beta_2 \sqrt{\frac{\left(\beta_1^2+\beta_2^2-(\lambda-\alpha)^2\right)^2}{4 \beta_1^2
                \beta_2^2}-1}}{2 \beta_1^2 \beta_2^2}. \label{eq: condition L is limit}
\end{align}
 Now, an algebraic manipulation shows that conditions \eqref{eq: condition L is root} and \eqref{eq: condition L is limit} have exactly one common solution $\lambda_0 \in \Gamma$  given by
\begin{align}\label{eq: root of L}
    \lambda_0 = \alpha+ \frac{1}{2} \left(-\sqrt{9 \beta_1^2-14 \beta_1 \beta_2+9 \beta_2^2}-\beta_1-\beta_2\right).
\end{align}

We can then state the results of the existence of the interface eigenvector (modes) of ${C}$. 
\begin{proposition}\label{prop:existencedefectfrequency}
    Consider a perturbed structure of dimers as illustrated in \cref{fig: geometrical defect}. For $N$ large enough, there exists at least one localised interface eigenvector of $C$ with eigenvalue $\lambda_{\mathsf{i}}^{(N)}$ in the band gap $\Gamma$.
\end{proposition}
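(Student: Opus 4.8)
The plan is to reduce the existence question to finding a zero, inside $\Gamma$, of a single scalar function built from the characteristic polynomial, and then to pass to the $m\to\infty$ limit, where that zero is identified explicitly as $\lambda_0$ in \eqref{eq: root of L}. On $\Gamma$ the factor $\chi_{A_{2m}^{(a,0)}}$ of $p(x)$ is nonzero (as recorded in the forthcoming \cref{prop: exaclty one in gap}), so $\lambda\in\Gamma$ is an eigenvalue of $C$ precisely when the remaining bracketed factor vanishes, i.e.\ when \eqref{eq: lambda in gamma eigenvalues only if} holds. Since $P_k^*(\lambda)=(\beta_1\beta_2)^kU_k(y(\lambda))$ and $\lambda\in\Gamma$ forces $y(\lambda)<-1$, where every $U_k$ is root-free, I may divide by $P_m^*(\lambda)$ and $P_{m-1}^*(\lambda)$ and define
\[
    h_m(\lambda) := A + B\,g_m(\lambda) - \beta_2^2\,g_m(\lambda)\bigl(E + F\,g_{m-1}(\lambda)\bigr),\qquad g_m(\lambda):=\frac{P_{m-1}^*(\lambda)}{P_m^*(\lambda)},
\]
so that the eigenvalues of $C$ in $\Gamma$ are exactly the zeros of $h_m$ in $\Gamma$.

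The key analytic input is the convergence $g_m\to L$ obtained from the closed form of $U_m$. For $y<-1$ the two roots $w_\pm=y\pm\sqrt{y^2-1}$ of the Chebyshev recurrence satisfy $w_+w_-=1$ and $|w_+|<1<|w_-|$; hence $g_m(\lambda)\to L(\lambda)$ with $L$ as in \eqref{eq: condition L is limit}, and the convergence is geometric and uniform on every compact subset of $\Gamma$, with rate governed by $|w_+/w_-|<1$. Passing to the limit in $h_m$ then gives $h_m\to h_\infty$ uniformly on compacts, where $h_\infty(\lambda)=A+BL-\beta_2^2L(E+FL)$. By the algebra already carried out — matching the quadratic condition \eqref{eq: condition L is root} against the limit formula \eqref{eq: condition L is limit} — the function $h_\infty$ has a single zero in $\Gamma$, namely $\lambda_0$ of \eqref{eq: root of L}, which lies strictly inside $\Gamma$.

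From here existence follows by an intermediate value argument. I would first verify that $\lambda_0$ is a transversal (simple) zero of $h_\infty$, so that $h_\infty$ changes sign across $\lambda_0$; then choose $\epsilon>0$ with $[\lambda_0-\epsilon,\lambda_0+\epsilon]\subset\Gamma$ and $h_\infty(\lambda_0-\epsilon)\,h_\infty(\lambda_0+\epsilon)<0$, with $|h_\infty|$ bounded below by a positive constant at the two endpoints. Uniform convergence $h_m\to h_\infty$ then yields $h_m(\lambda_0-\epsilon)\,h_m(\lambda_0+\epsilon)<0$ for all $m$ large, and continuity of $h_m$ produces a zero $\lambda_{\mathsf{i}}^{(N)}\in(\lambda_0-\epsilon,\lambda_0+\epsilon)\subset\Gamma$, the claimed interface eigenvalue. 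Localisation of the associated eigenvector is then not a separate argument: since $\lambda_{\mathsf{i}}^{(N)}\in\Gamma$ satisfies $y(\lambda_{\mathsf{i}}^{(N)})^2>1$, \cref{prop: exponential decay and sines} gives the eigenvector entries in the form $Ar^j+Br^{-j}$ with $|r|>1$ and the stated asymptotics $A=\mathcal{O}(r^{-m})$, $B=\mathcal{O}(r^{m-1})$, which force exponential decay away from the central (interface) resonator.

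The step I expect to be the main obstacle is establishing transversality of $h_\infty$ at $\lambda_0$ together with the uniformity of $g_m\to L$ up to the endpoints $\lambda_0\pm\epsilon$: one must ensure that $\lambda_0$ is strictly interior to $\Gamma$, that $h_\infty$ genuinely changes sign there rather than merely touching zero, and that the geometric convergence rate does not degenerate as $\lambda$ approaches the band edges, where $y(\lambda)\to-1$ and $w_+\to w_-$. Restricting to a compact neighbourhood of $\lambda_0$ bounded away from the edges handles the last point, leaving the sign change at $\lambda_0$ as the essential computation.
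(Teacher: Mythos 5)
Your proposal follows essentially the same route as the paper's proof: your $h_m$ is exactly the paper's $f_m$, your $h_\infty$ is its $f_\infty$ with the explicit zero $\lambda_0$ from \eqref{eq: root of L}, and existence is obtained by the same sign-change/intermediate-value argument, with localisation delegated to \cref{prop: exponential decay and sines} just as in the paper. Your added care about uniform convergence of $g_m\to L$ on compacts and about transversality of the zero at $\lambda_0$ only makes explicit what the paper compresses into its ``sign-preserving property'' assertion.
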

\begin{proof}
    Denote by
    \begin{align*}
        f_m(\lambda) \coloneqq & \left[A+B\frac{P_{m-1}^*\left(\lambda\right)}{P_{m}^*\left(\lambda\right)}\right] - \beta_2^2\frac{P_{m-1}^*\left(\lambda\right)}{P_m^*\left(\lambda\right)}
        \left[E+F\frac{P_{m-2}^*\left(\lambda\right)}{P_{m-1}^*\left(\lambda\right)}\right],                                                                                                  \\
        \nm
        f_{\infty} \coloneqq   & \left[A+BL(\lambda)\right] - \beta_2^2L(\lambda)
        \left[E+FL(\lambda)\right],
    \end{align*}
    and note that $f_\infty(\lambda)=\lim_{m\to\infty}f_m(\lambda)$. By (\ref{eq: root of L}), there exists a $\lambda_0$ in the band gap such that $f_\infty(\lambda_0)=0$. Furthermore, by the above formula of $f_\infty(\lambda)$, we have $f_{\infty}(\lambda_0-\zeta)f_{\infty}(\lambda_0+\zeta)<0$ for some $\zeta>0$ satisfying $[\lambda_0-\zeta, \lambda_0+\zeta]\subset \Gamma$. By the sign-preserving property, we have $f_{m}(\lambda_0-\zeta)f_{m}(\lambda_0+\zeta)<0$ for large enough $m$, which proves the existence of roots of $f_{m}(\lambda)$ in the band gap $\Gamma$.  By \cref{prop: exponential decay and sines}, the corresponding eigenvector is then a localised interface eigenvector.
\end{proof}

\subsubsection{Uniqueness}
We now show the uniqueness of the eigenvalue in the band gap. 
The proof utilises the Cauchy interlacing theorem \cite{cauchy} and 
the  monotonicity of Chebyshev polynomials of the second kind in 
Proposition \ref{lemma: monotonicity_Chebyshev}. We refer the reader to \cite[Section 5]{ammari.barandun.ea2023Exponentially} for further details.

\begin{proposition}\label{prop: exaclty one in gap}
    There exists at most one eigenvalue of $C$ as defined in \eqref{eq: strucutre capacitance matrix} lying in the asymptotic spectral gap $\Gamma = (2/s_2,2/s_1)$. In particular, for $m$ large enough, there exists exactly one eigenvalue in $\Gamma$.
\end{proposition}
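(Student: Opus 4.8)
The plan is to establish the ``at most one'' part by a Cauchy interlacing argument and then combine it with the existence result \cref{prop:existencedefectfrequency}. First I would exploit the tridiagonal form of $C$ in \eqref{eq: strucutre capacitance matrix}: deleting the central row and column (index $2m+1$, the defect site carrying $\eta$) disconnects the chain and produces the $4m\times 4m$ principal submatrix $B=\tilde C_1\oplus\tilde C_2$, where $\tilde C_1,\tilde C_2$ are the two $2m\times 2m$ tridiagonal $2$-Toeplitz blocks on either side of the defect. By the reflection symmetry $s_1=s_{2m+2}$, $s_2=s_{2m+1}$ of the structure, $\tilde C_2$ is the mirror image of $\tilde C_1$, so $\sigma(B)=\sigma(\tilde C_1)$. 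Reading off \eqref{eq: translation alpha beta to s1 s2}, the block $\tilde C_1$ is exactly the matrix $A_{2m}^{(a,0)}$ of \eqref{equ:evenmatrixtwoperturb1} with $\alpha_1=\alpha_2=\alpha$, $\eta_j=\beta_j$, and top-corner perturbation $a=\tilde\alpha-\alpha=\beta_2$.

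The key step, and the one I expect to be the main obstacle, is to show that $A_{2m}^{(a,0)}$ has no eigenvalue in the open gap $\Gamma=(2/s_2,2/s_1)$; this is precisely the nonvanishing of $\chi_{A_{2m}^{(a,0)}}$ on $\Gamma$ that was already invoked in the proof of \cref{prop:existencedefectfrequency}. Using \eqref{equ:eigenpolynomial6} with $b=0$ together with $P_k^*(x)=(\beta_1\beta_2)^k U_k(y(x))$, the eigenvalue equation reduces, for $\lambda\in\Gamma$ (equivalently $y(\lambda)<-1$, which one checks directly from \eqref{eq: def y} since $y=-1$ at both endpoints of $\Gamma$ and $y<-1$ strictly inside), to
\begin{equation*}
    \beta_1\beta_2\,\frac{U_m(y(\lambda))}{U_{m-1}(y(\lambda))}=-\bigl(a(\alpha-\lambda)+\beta_2^2\bigr),
\end{equation*}
where $U_{m-1}(y)\neq 0$ because the Chebyshev roots all lie in $(-1,1)$. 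The monotonicity in \cref{lemma: monotonicity_Chebyshev}(iii) shows that $U_{m-1}/U_m$ is strictly decreasing on $(-\infty,-1)$; evaluating at $y=-1$, where $U_k(-1)=(-1)^k(k+1)$, gives $U_m/U_{m-1}<-(m+1)/m<-1$ throughout $(-\infty,-1)$. Substituting $a=\beta_2$, $\alpha+\beta_2=1/s_1$ and $\beta_1\beta_2=1/(s_1 s_2)$, the displayed equation collapses to $U_m/U_{m-1}=1-s_1\lambda$, whose right-hand side exceeds $-1$ for $\lambda<2/s_1$. Since the left-hand side is always $<-1$ on $\Gamma$, the two sides never coincide, so $\sigma(B)\cap\Gamma=\emptyset$.

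With this in hand, Cauchy's interlacing theorem for the principal submatrix $B$ of the symmetric matrix $C$ forces at most one eigenvalue of $C$ in $\Gamma$: if $\lambda_i\le\lambda_{i+1}$ were two eigenvalues of $C$ both lying in the open interval $\Gamma$, the interlacing inequality $\lambda_i\le\mu_i\le\lambda_{i+1}$ would place an eigenvalue $\mu_i$ of $B$ inside $\Gamma$ (as $\Gamma$ is an interval), contradicting $\sigma(B)\cap\Gamma=\emptyset$; the same conclusion holds if the gap eigenvalue of $C$ is double. This proves the uniqueness part. Finally, combining this bound with \cref{prop:existencedefectfrequency}, which guarantees at least one eigenvalue of $C$ in $\Gamma$ once $N$ (hence $m$) is large enough, yields exactly one eigenvalue in $\Gamma$ for large $m$, as claimed. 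The only delicate points are the bookkeeping identifying $\tilde C_1$ with $A_{2m}^{(a,0)}$ and the final sign comparison, both of which become elementary once the bound $U_m/U_{m-1}<-1$ is extracted from \cref{lemma: monotonicity_Chebyshev}(iii).
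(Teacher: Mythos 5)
Your proposal is correct and takes essentially the same route as the paper's proof, which (per the reference to \cite[Section 5]{ammari.barandun.ea2023Exponentially}) rests on exactly your two ingredients: Cauchy interlacing after deleting the central defect row and column, and the monotonicity of $U_{m-1}/U_m$ from \cref{lemma: monotonicity_Chebyshev}(iii) to exclude eigenvalues of the decoupled blocks $A_{2m}^{(\beta_2,0)}$ from $\Gamma$, with \cref{prop:existencedefectfrequency} supplying existence for large $m$. Your bookkeeping ($a=\tilde\alpha-\alpha=\beta_2$, the reduction of the block eigenvalue equation to $U_m/U_{m-1}=1-s_1\lambda$, and the bound $U_m/U_{m-1}<-(m+1)/m<-1$ for $y<-1$ against $1-s_1\lambda>-1$ on $\Gamma$) is accurate, and it also independently establishes the nonvanishing of $\chi_{A_{2m}^{(a,0)}}$ on $\Gamma$ that the existence argument relies on, so no circularity arises.
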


\begin{remark}
    We remark that if one can show the existence of an eigenfrequency in the band gap $\Gamma$ for the capacitance matrix $C$ with general size $m$, then by \cref{prop: exaclty one in gap}, it is unique.
\end{remark}

\subsubsection{Convergence}
Last, we show the exponential convergence of the resonant frequency in the gap and conclude the section by the following theorem.

\begin{theorem}\label{thm:existenceofeigenfrquency}
    Consider a perturbed structure of dimers as illustrated in \cref{fig: geometrical defect}. For $N$ large enough, there exists a unique interface mode with eigenfrequency $\omega_{\mathsf{i}}^{(N)}$ in the band gap. The associated eigenfrequency $\omega_{\mathsf{i}}^{(N)}$ converges to
    \begin{align}
        \omega_{\mathsf{i}} = v_b\sqrt{\delta \frac{1}{2} \left(-\sqrt{\frac{9}{s_1^2}- \frac{14}{s_1s_2} + \frac{9}{s_2^2}}+\frac{3}{s_1}+\frac{3}{s_2}\right)}
    \end{align}
    exponentially as $N\to\infty$. In particular, for $N$ large enough,
    \begin{align}\label{eq: error estimate convergence frequency in gap}
        \vert \omega_{\mathsf{i}} - \omega_{\mathsf{i}}^{(N)}\vert
        < Ae^{-BN},
    \end{align}
    for some $A,B>0$ independent of $N$.
\end{theorem}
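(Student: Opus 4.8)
The plan is to reduce the statement to an exponential convergence estimate for the capacitance eigenvalue and then transport it to frequencies. By \cref{prop:existencedefectfrequency,prop: exaclty one in gap}, the matrix $C$ has, for $N=4m+1$ large, exactly one eigenvalue $\lambda_{\mathsf i}^{(N)}$ in the gap $\Gamma=(2/s_2,2/s_1)$, and it is the unique zero in $\Gamma$ of the function $f_m$ built in the proof of \cref{prop:existencedefectfrequency}; its pointwise limit $f_\infty$ vanishes at the explicit point $\lambda_0$ of \eqref{eq: root of L}. Substituting $\beta_i=-1/s_i$ and $\alpha=1/s_1+1/s_2$ from the dictionary \eqref{eq: translation alpha beta to s1 s2} into \eqref{eq: root of L} gives $\lambda_0=\tfrac12\bigl(3/s_1+3/s_2-\sqrt{9/s_1^2-14/(s_1s_2)+9/s_2^2}\bigr)$, so that $\omega_{\mathsf i}=v_b\sqrt{\delta\lambda_0}$ is exactly the claimed limit. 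It therefore suffices to show $\bigl|\lambda_{\mathsf i}^{(N)}-\lambda_0\bigr|\le Ce^{-cN}$.

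First I would establish that $f_m\to f_\infty$ exponentially fast, uniformly on a compact subinterval $I=[\lambda_0-\zeta,\lambda_0+\zeta]\subset\Gamma$. Writing $P_k^*(\lambda)=(\beta_1\beta_2)^kU_k(y(\lambda))$ as in \eqref{equ:defiofpkstar1} and using the closed form $U_k(y)=(t_+^{k+1}-t_-^{k+1})/(t_+-t_-)$ with $t_\pm=y\pm\sqrt{y^2-1}$, I note that on $I$ one has $y(\lambda)<-1$ bounded away from $-1$, whence $t_+t_-=1$, $|t_-|>1>|t_+|$, and $\rho:=\sup_{I}|t_+/t_-|=\sup_I|t_+|^2<1$. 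Factoring the dominant power $t_-$ out of the quotients gives
\[
\left|\frac{P_{m-1}^*(\lambda)}{P_m^*(\lambda)}-L(\lambda)\right|\le C\rho^{m},\qquad \left|\frac{P_{m-2}^*(\lambda)}{P_{m-1}^*(\lambda)}-L(\lambda)\right|\le C\rho^{m},
\]
uniformly on $I$, with $L(\lambda)$ the limit in \eqref{eq: condition L is limit}. Plugging these into the definition of $f_m$ yields $|f_m(\lambda)-f_\infty(\lambda)|\le C\rho^m$, and differentiating the same closed form gives the companion bound $|f_m'(\lambda)-f_\infty'(\lambda)|\le C\rho^m$ on $I$.

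Next I would run a quantitative root-stability step. The sign change $f_\infty(\lambda_0-\zeta)f_\infty(\lambda_0+\zeta)<0$ already exploited for existence reflects that $\lambda_0$ is a transversal zero, $f_\infty'(\lambda_0)=:c_0\neq0$; combined with $f_m'\to f_\infty'$ uniformly, this forces $|f_m'|\ge c_0/2$ on $I$ for $m$ large, so $f_m$ has a single zero $\lambda_{\mathsf i}^{(N)}$ in $I$. Since $f_\infty(\lambda_0)=0$, the mean value theorem gives, for some $\xi\in I$,
\[
\bigl|\lambda_{\mathsf i}^{(N)}-\lambda_0\bigr|=\frac{\bigl|f_m(\lambda_0)-f_\infty(\lambda_0)\bigr|}{|f_m'(\xi)|}\le\frac{2C}{c_0}\,\rho^{m}.
\]

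Finally, since $\overline\Gamma$ is bounded away from $0$, the map $\lambda\mapsto v_b\sqrt{\delta\lambda}$ is Lipschitz there, so with $\omega_{\mathsf i}^{(N)}=v_b\sqrt{\delta\lambda_{\mathsf i}^{(N)}}$ (the leading order of \cref{prop:capa approx hermitian}) I conclude
\[
\bigl|\omega_{\mathsf i}-\omega_{\mathsf i}^{(N)}\bigr|\le \mathrm{Lip}\cdot\bigl|\lambda_0-\lambda_{\mathsf i}^{(N)}\bigr|\le A\rho^{m}=Ae^{-BN},
\]
using $m=(N-1)/4$ and $B=-\tfrac14\log\rho>0$. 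The hard part is the nondegeneracy input of the previous step: one must verify that $\lambda_0$ is a simple zero of $f_\infty$ with a slope bound uniform enough that an exponentially small perturbation can neither create nor annihilate a root, and that the constants $C,\rho$ in the ratio estimates are genuinely uniform over $I$. Both reduce to controlling $L(\lambda)$ and $L'(\lambda)$ away from the band edge, where $\sqrt{y(\lambda)^2-1}$ stays bounded below; the monotonicity of the Chebyshev quotient in \cref{lemma: monotonicity_Chebyshev}(iii) is the natural tool to make these bounds uniform.
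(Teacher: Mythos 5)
Your proposal is correct in outline, but it takes a genuinely different route from the paper. The paper's proof is essentially a reduction: it computes the limit $\lambda_0$ from \eqref{eq: root of L}, converts to frequencies via the capacitance approximation, and then obtains the exponential rate from \emph{pseudospectrum theory} (construct an exponentially accurate quasimode for $C$ concentrated at the interface, so that $\lambda_0$ is an $e^{-BN}$-pseudoeigenvalue; since $C$ is Hermitian, the $\epsilon$-pseudospectrum lies within $\epsilon$ of the spectrum, so a true eigenvalue lies within $e^{-BN}$ of $\lambda_0$, and by \cref{prop: exaclty one in gap} it must be $\lambda_{\mathsf i}^{(N)}$). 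You instead work directly with the secular function $f_m$: the closed form $U_k(y)=(t_+^{k+1}-t_-^{k+1})/(t_+-t_-)$ with $t_+t_-=1$ and $|t_+|<1<|t_-|$ on compact subsets of $\Gamma$ gives the uniform bound $|f_m-f_\infty|\leq C\rho^m$, and a quantitative root-stability step transfers this to $|\lambda_{\mathsf i}^{(N)}-\lambda_0|\leq C'\rho^m$. Your route is more elementary and self-contained (no quasimode construction, no appeal to Hermitian pseudospectral containment) and yields an explicit rate $\rho=\sup_I|t_+|^2$; the paper's route is more robust, in particular because it needs no nondegeneracy hypothesis on $f_\infty$ at $\lambda_0$, only the quality of the quasimode. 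Both correctly identify $\omega_{\mathsf i}^{(N)}$ with the leading-order value $v_b\sqrt{\delta\lambda_{\mathsf i}^{(N)}}$, which is the sense in which \eqref{eq: error estimate convergence frequency in gap} is to be read (the capacitance approximation itself carries an $\BO(\delta)$ error that does not decay in $N$).

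The one genuine gap is the transversality input you flag yourself: a sign change of $f_\infty$ across $\lambda_0$ does not imply $f_\infty'(\lambda_0)\neq 0$ (consider a cubic-order zero), so the mean value estimate $|\lambda_{\mathsf i}^{(N)}-\lambda_0|\leq |f_m(\lambda_0)-f_\infty(\lambda_0)|/|f_m'(\xi)|$ is not yet justified. This gap is real but easily repaired without computing $f_\infty'(\lambda_0)$: on $\Gamma$ one has $y(\lambda)<-1$, so $\sqrt{y^2-1}$ is real analytic there and $f_\infty$ is analytic on the connected interval $\Gamma$; since it is not identically zero (it has exactly one zero in $\Gamma$ by the argument leading to \eqref{eq: root of L}), its zero at $\lambda_0$ has finite order $k$, which is odd because of the sign change. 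Then from $|f_\infty(\lambda)|\geq c|\lambda-\lambda_0|^k$ near $\lambda_0$ and $|f_m-f_\infty|\leq C\rho^m$ one still gets both persistence of a root and the bound $|\lambda_{\mathsf i}^{(N)}-\lambda_0|\leq (C\rho^m/c)^{1/k}$, i.e., exponential convergence with $B$ replaced by $B/k$. Alternatively, transversality can be checked directly from the explicit formulas: at $\lambda_0$ the value $L(\lambda_0)$ given by \eqref{eq: condition L is limit} is a \emph{simple} root of the quadratic \eqref{eq: condition L is root} and $L'(\lambda)<0$ on $\Gamma$ (this is where \cref{lemma: monotonicity_Chebyshev}(iii) enters), which gives $f_\infty'(\lambda_0)\neq 0$ after a short computation. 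With either patch your argument is complete and delivers the statement of \cref{thm:existenceofeigenfrquency}.
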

\begin{proof}
    The limit $\lambda_i$ can be computed from (\ref{eq: root of L}) and $\omega_i$ is $v_b\sqrt{\delta \lambda_i}$ by Corollary \ref{cor: approx via eva eve}. The only part left to prove is the convergence rate, which is derived from the pseudospectrum theory. We refer the reader to \cite[Section 5]{ammari.barandun.ea2023Exponentially} for further details.
\end{proof}

We remark that, combined with Proposition \ref{prop: exponential decay and sines}, Theorem \ref{thm:existenceofeigenfrquency} also gives the decaying rate of the interface mode for a structure with sufficiently many resonators.

\begin{figure}[h]
    \centering
    \includegraphics[width=0.45\textwidth]{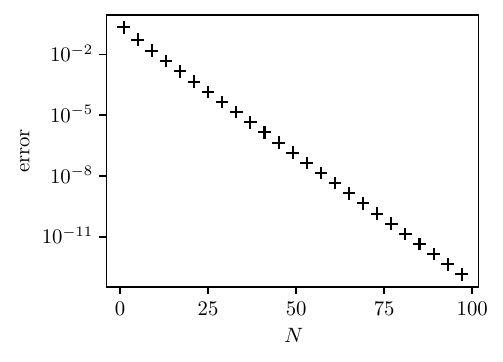}
    \caption{Convergence of the eigenvalue in the gap ($y$-axis in \emph{log} scale). We display the left-hand side of \eqref{eq: error estimate convergence frequency in gap} for a structure with $s_1=1$ and $s_2=2$.
    }
    \label{fig: convergence of error}
\end{figure}

\subsection{Stability analysis} \label{sect6}
Interface modes of SSH-like structures are well-known to be stable, \emph{i.e.}, perturbations of the system affect them only slightly. In this section, we show that perturbations in the geometry have limited effect on both the resonant frequencies and the associated modes. Then we quantify this effect.

To this end, we consider a system of $N=4m+1$ resonators as shown in \cref{fig: geometrical defect} but where the spacings $s_i$ are now perturbed:
\begin{align}
    \label{eq: perturbation si-s}
    s_i = \begin{dcases}
              s_1 + \tilde\epsilon_i, & 1\leq i \leq 2m, \text{ $i$ odd},     \\
              \nm
              s_2 + \tilde\epsilon_i, & 1\leq i \leq 2m, \text{ $i$ even},    \\
              \nm
              s_1 + \tilde\epsilon_i, & 2m+1\leq i \leq 4m, \text{ $i$ even}, \\
              \nm
              s_2 + \tilde\epsilon_i, & 2m+1\leq i \leq 4m, \text{ $i$ odd}.
          \end{dcases}
\end{align}
Furthermore, we denote
\begin{align}
    \label{eq: eps_i}
    \epsilon_i = \begin{dcases}
                     -\frac{\tilde\epsilon_i}{s_1(s_1+\tilde\epsilon_i)}, & 1\leq i \leq 2m, \text{ $i$ odd},     \\
                     -\frac{\tilde\epsilon_i}{s_2(s_2+\tilde\epsilon_i)}, & 1\leq i \leq 2m, \text{ $i$ even},    \\
                     -\frac{\tilde\epsilon_i}{s_1(s_1+\tilde\epsilon_i)}, & 2m+1\leq i \leq 4m, \text{ $i$ even}, \\
                     -\frac{\tilde\epsilon_i}{s_2(s_2+\tilde\epsilon_i)}, & 2m+1\leq i \leq 4m, \text{ $i$ odd}.
                 \end{dcases}
\end{align}

The following proposition handles stability of the eigenvalues and is a direct application of the well-known Weyl theorem.
\begin{proposition}\label{thm:eigenvaluestability1}
    Let $\hat{C}$ be the capacitance matrix associated to the structure described in \eqref{eq: perturbation si-s} and let
    \begin{align}\label{equ:errorbound1}
        \epsilon \coloneqq \max_{1\leq i\leq N-2}\vert \epsilon_i\vert+\vert\epsilon_{i+1}\vert.
    \end{align}
    Then, the eigenvalues $\hat{\lambda}_k$ (sorted increasingly) satisfy
    \begin{align}
        \label{eq: bound eigenvalues error}
        \vert \hat{\lambda}_k - \lambda_k\vert \leq 2\epsilon, \quad 1\leq k\leq N,
    \end{align}
    where $\lambda_k$ are the eigenvalues of $C$.
\end{proposition}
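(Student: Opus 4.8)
The plan is to write $\hat C = C + E$ with $E \coloneqq \hat C - C$, to bound the spectral norm of $E$, and then to conclude by Weyl's perturbation inequality for Hermitian matrices. Since all spacings are real and positive, both $C$ and $\hat C$ are real symmetric tridiagonal matrices, so their eigenvalues are real and can be sorted increasingly as in the statement. Weyl's inequality then yields $\vert \hat\lambda_k - \lambda_k\vert \le \Vert E\Vert_2$ for every $1\le k\le N$, where $\Vert\cdot\Vert_2$ is the spectral norm. It therefore suffices to establish that $\Vert E\Vert_2 \le 2\epsilon$.

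First I would compute the entries of $E$. Recalling from \eqref{eq: strucutre capacitance matrix}--\eqref{eq: translation alpha beta to s1 s2} that every entry of the capacitance matrix is assembled from the reciprocal spacings $s_i^{-1}$, and that $\hat C$ is obtained by replacing each unperturbed spacing ($s_1$ or $s_2$) by its perturbed value $s_1+\tilde\epsilon_i$ (respectively $s_2+\tilde\epsilon_i$) as in \eqref{eq: perturbation si-s}, the change in each reciprocal is exactly the quantity $\epsilon_i$ of \eqref{eq: eps_i}, since $\frac{1}{s_1+\tilde\epsilon_i}-\frac{1}{s_1}=-\frac{\tilde\epsilon_i}{s_1(s_1+\tilde\epsilon_i)}=\epsilon_i$ and likewise with $s_2$. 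Hence $E$ is again real symmetric tridiagonal, with off-diagonal entries $E_{i,i+1}=E_{i+1,i}=-\epsilon_i$ and diagonal entries $E_{i,i}=\epsilon_{i-1}+\epsilon_i$, under the boundary convention $\epsilon_0\coloneqq 0$ and $s_j^{-1}=0$ for $j\ge N$; the interface (defect) row is of the same form because its two neighbouring spacings both equal $s_2$.

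Next I would bound $\Vert E\Vert_2$ using the tridiagonal structure. Since $E$ is Hermitian, $\Vert E\Vert_2 = \max_k\vert \mu_k\vert$ over its eigenvalues $\mu_k$, and by the Gershgorin circle theorem each $\mu_k$ lies in a disc centred at some diagonal entry $\epsilon_{i-1}+\epsilon_i$ with radius equal to the off-diagonal absolute row sum $\vert\epsilon_{i-1}\vert+\vert\epsilon_i\vert$. Therefore
$$\vert\mu_k\vert \le \vert\epsilon_{i-1}+\epsilon_i\vert + \vert\epsilon_{i-1}\vert + \vert\epsilon_i\vert \le 2\bigl(\vert\epsilon_{i-1}\vert+\vert\epsilon_i\vert\bigr) \le 2\epsilon,$$
where the last step is the definition \eqref{equ:errorbound1} of $\epsilon$ after reindexing; the boundary rows, which carry only a single $\epsilon$, satisfy the same bound. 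This gives $\Vert E\Vert_2\le 2\epsilon$, and combined with Weyl's inequality it yields \eqref{eq: bound eigenvalues error}.

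There is no genuine analytic difficulty here, as the result is indeed a direct application of Weyl's theorem; the only care required is bookkeeping. Specifically, one must verify that the entrywise difference of the two capacitance matrices collapses precisely to the $\epsilon_i$ of \eqref{eq: eps_i}, and that the Gershgorin estimate at the two boundary rows and at the defect row does not produce a constant worse than $2\epsilon$. Checking these few special rows is the part most likely to conceal a stray factor, so it is where I would focus the verification.
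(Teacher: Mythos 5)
Your proof is correct and follows exactly the route the paper indicates: it presents \cref{thm:eigenvaluestability1} as ``a direct application of the well-known Weyl theorem,'' and your argument is precisely that, with the bookkeeping (the entrywise identification of $E=\hat C-C$ with the $\epsilon_i$ of \eqref{eq: eps_i}, and the Gershgorin/row-sum bound $\Vert E\Vert_2\le 2\epsilon$ including the boundary and defect rows) filled in correctly.
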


Applying \cref{thm:eigenvaluestability1} to our system of dimers where the perturbations $\vert\tilde\epsilon_i\vert\leq \eta$ are in the interval $(-\eta,\eta)$ for some $\eta>0$, we obtain that the eigenvalue perturbation is bounded by
\begin{align*}
    \frac{2\eta}{s_1(s_1-\eta)}+\frac{2\eta}{s_2(s_2-\eta)}=2\eta\left(\frac{1}{s_1^2}+\frac{1}{s_2^2}\right) + \mathcal{O}(\eta^2)\quad \text{as }\eta\to 0.
\end{align*}

Now, we analyse the stability of the eigenvectors. As a direct consequence of Davis--Kahan theorem for Hermitian matrices \cite{davis}, we have the following result on the stability of the interface eigenmodes.
\begin{theorem}\label{thm: stability interface modes}
    Let $\epsilon< \frac{1}{2}\left(\frac{1}{s_1}-\frac{1}{s_2}\right)$ in (\ref{equ:errorbound1}). Let $\bm v$ and $\hat{\bm v}$ be the eigenvectors corresponding to the eigenvalues $\lambda_i$ and $\hat{\lambda}_i$ in the gap of $C$ and $\hat{C}$, respectively. Then,
    \begin{align}
        \label{eq: stabiltiy v gap}
        \Vert \bm v - \hat{\bm v}\Vert_2 & \leq \frac{2\sqrt{2}\epsilon}{\updelta} \\&\leq \frac{2\sqrt{2}\epsilon}{\updelta_0-2\epsilon},\label{eq: stabiltiy v gap apriori}
    \end{align}
    where $\updelta\coloneqq\min\{\vert\lambda_i-\hat{\lambda}_{i+1}\vert,\vert\lambda_i-\hat{\lambda}_{i-1}\vert\}$ and $\updelta_0 =\min\{\vert\lambda_i-\lambda_{i+1}\vert,\vert\lambda_i-\lambda_{i-1}\vert\}$. The \emph{a priori} estimate \eqref{eq: stabiltiy v gap apriori} holds for $\updelta_0>2\epsilon$.%
\end{theorem}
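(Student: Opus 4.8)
The plan is to deduce both inequalities directly from the Davis--Kahan $\sin\theta$ theorem \cite{davis}, after first controlling the operator norm of the perturbation $E\coloneqq \hat{C}-C$ and then converting the resulting bound on the angle between $\bm v$ and $\hat{\bm v}$ into a bound on $\Vert \bm v-\hat{\bm v}\Vert_2$.

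First I would bound $\Vert E\Vert_2$. Since $1/(s_i+\tilde\epsilon_i)=1/s_i+\epsilon_i$ with $\epsilon_i$ as in \eqref{eq: eps_i}, the matrix $E$ is real symmetric and tridiagonal, with $E_{i,i}=\epsilon_{i-1}+\epsilon_i$ and off-diagonal entries $-\epsilon_i$, $-\epsilon_{i-1}$ (up to the corner conventions coming from \eqref{eq: strucutre capacitance matrix}). Hence each absolute row sum is at most $2(\vert\epsilon_{i-1}\vert+\vert\epsilon_i\vert)$, and since for a Hermitian matrix $\Vert E\Vert_2\le\Vert E\Vert_\infty$ equals the maximal absolute row sum, I obtain $\Vert E\Vert_2\le 2\epsilon$ with $\epsilon$ as in \eqref{equ:errorbound1}. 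This is precisely the input behind the Weyl estimate \eqref{eq: bound eigenvalues error}.

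Next I would apply the Davis--Kahan argument. Expanding $\bm v$ in the orthonormal eigenbasis $\{\hat{\bm v}_k\}$ of $\hat{C}$ and using $\hat{C}\bm v-\lambda_i\bm v=E\bm v$, projection onto the complement of $\hat{\bm v}_i=\hat{\bm v}$ yields $\big(\min_{k\ne i}\vert\hat\lambda_k-\lambda_i\vert\big)\sin\theta\le\Vert E\bm v\Vert\le\Vert E\Vert_2$, where $\theta$ is the angle between $\bm v$ and $\hat{\bm v}$. The gap $\Gamma=(2/s_2,2/s_1)$ has width $2(1/s_1-1/s_2)$, so the hypothesis $\epsilon<\tfrac12(1/s_1-1/s_2)$ together with \eqref{eq: bound eigenvalues error} keeps both $\lambda_i$ and $\hat\lambda_i$ isolated inside $\Gamma$ and correctly matched, forcing the nearest eigenvalue of $\hat{C}$ to $\lambda_i$ other than $\hat\lambda_i$ to be one of $\hat\lambda_{i\pm1}$; thus $\min_{k\ne i}\vert\hat\lambda_k-\lambda_i\vert=\updelta$ and $\sin\theta\le 2\epsilon/\updelta$. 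Finally, fixing the sign of $\hat{\bm v}$ so that $\langle\bm v,\hat{\bm v}\rangle\ge0$, i.e. $\theta\in[0,\pi/2]$, I would use $\Vert\bm v-\hat{\bm v}\Vert_2=2\sin(\theta/2)\le\sqrt2\,\sin\theta$, the last step holding on $[0,\pi/2]$ because $\cos(\theta/2)\ge1/\sqrt2$ there. This gives $\Vert\bm v-\hat{\bm v}\Vert_2\le 2\sqrt2\,\epsilon/\updelta$, which is \eqref{eq: stabiltiy v gap}. For \eqref{eq: stabiltiy v gap apriori}, I would bound $\updelta$ below via Weyl: $\vert\lambda_i-\hat\lambda_{i\pm1}\vert\ge\vert\lambda_i-\lambda_{i\pm1}\vert-\vert\lambda_{i\pm1}-\hat\lambda_{i\pm1}\vert\ge\vert\lambda_i-\lambda_{i\pm1}\vert-2\epsilon$, whence $\updelta\ge\updelta_0-2\epsilon>0$, and monotonicity of $t\mapsto 2\sqrt2\,\epsilon/t$ delivers the second inequality.

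The main obstacle I expect is the bookkeeping that identifies $\min_{k\ne i}\vert\hat\lambda_k-\lambda_i\vert$ with $\updelta$: one must verify that the gap condition genuinely forces the unique in-gap eigenvalue $\hat\lambda_i$ of $\hat{C}$ to stay matched with $\lambda_i$ (so the common index $i$ is meaningful) and that no non-adjacent eigenvalue of $\hat{C}$ approaches $\lambda_i$ more closely than $\hat\lambda_{i\pm1}$. Everything else---the row-sum bound on $\Vert E\Vert_2$, the standard Davis--Kahan projection, and the elementary trigonometric conversion---is routine.
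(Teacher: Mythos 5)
Your proposal is correct and follows exactly the route the paper takes: the paper proves this theorem by simply invoking the Davis--Kahan $\sin\theta$ theorem for Hermitian matrices (together with the Weyl bound $\Vert \hat{C}-C\Vert_2\leq 2\epsilon$ already used in \cref{thm:eigenvaluestability1}), which is precisely your argument with the details -- row-sum bound on the perturbation, projection onto the complement of $\hat{\bm v}$, the conversion $\Vert\bm v-\hat{\bm v}\Vert_2=2\sin(\theta/2)\leq\sqrt{2}\sin\theta$, and Weyl for the \emph{a priori} bound -- written out. The bookkeeping point you flag (that the gap hypothesis must force $\min_{k\neq i}\vert\hat\lambda_k-\lambda_i\vert$ to be attained at $k=i\pm1$, which strictly speaking needs $\lambda_i$ to sit at distance more than $2\epsilon$ from the gap edges) is real, but it is glossed over in the paper's one-line justification as well, so your treatment is if anything more careful than the original.
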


Remark that, for $s_1=1$ and $s_2=2$ and $m$ large, we have $\updelta_0\approx0.219$ so that the \emph{a priori} estimate \eqref{eq: stabiltiy v gap apriori} holds for $\epsilon<0.1$. This \emph{a priori} estimate is, however, suboptimal as \cref{fig: stability eve} shows.

In \cref{fig: stability}, we show numerically the high stability of the interface modes. We let $\mathcal{U}_{[-\eta,\eta]}$ be a uniform distribution with support in $[-\eta,\eta]$. We consider perturbations of the type $\tilde\epsilon_i\sim \mathcal{U}_{[-\eta,\eta]}$, where we call $\eta$ the perturbation size and display it as percentage relative to the resonator's size. \cref{fig: stability eva} shows that the interface eigenfrequency (lying in the gap) is only minimally perturbed even by perturbation in the size of $20\%$. We remark that numerically the bound of \eqref{eq: bound eigenvalues error} can even be sharpened to $\vert \hat{\lambda_k} - \lambda_k\vert \leq \frac{3}{2}\epsilon$. \cref{fig: stability eve} shows $\left\Vert\bm v - \hat{\bm v}\right\Vert_2$ for various perturbation sizes and normalised $\bm v$ and $\hat{\bm v}$. The black lines shows the average over $10^4$ runs while the gray area encloses the range from the minimum to the maximum value of $\left\Vert\bm v - \hat{\bm v}\right\Vert_2$.
\begin{figure}[h]
    \centering
    \begin{subfigure}[t]{0.48\textwidth}
        \centering
        \includegraphics[height=0.76\textwidth]{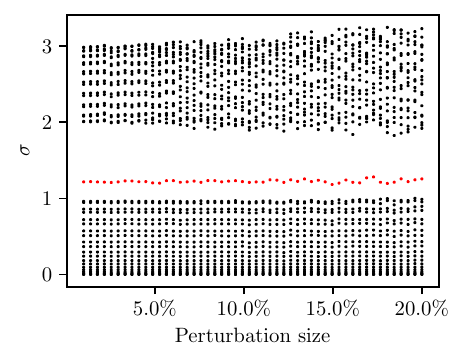}
        \caption{Spectrum of the capacitance matrix with perturbations given by $\tilde\epsilon_i\sim \mathcal{U}_{[-\eta,\eta]}$. For every perturbation size the spectrum of one realisation is shown. The eigenvalue in red corresponds to the localised interface mode.}
        \label{fig: stability eva}
    \end{subfigure}\hfill
    \begin{subfigure}[t]{0.48\textwidth}
        \centering
        \includegraphics[height=0.76\textwidth]{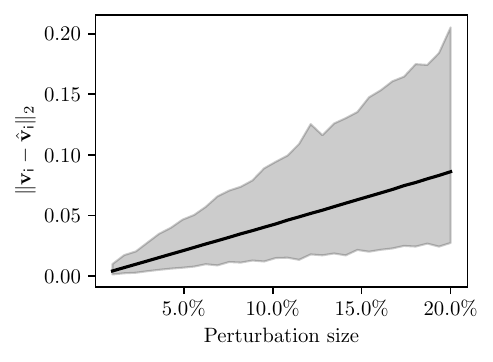}
        \caption{Stability of the interface mode. The solid black line shows the average dislocation over $10^4$ runs, while the gray area encloses the range from the minimum to the maximum dislocation observed over these realisations.}
        \label{fig: stability eve}
    \end{subfigure}

    \caption{The interface eigenvalue and the corresponding interface mode are very stable also in presence of large perturbations. Simulations in a system of $N=41$ resonators with $s_1=1$ and $s_2=2$. Perturbations are uniformly distributed in $(-\eta,\eta)$ where we call $\eta$ the perturbation size, expressed relatively to the resonators' sizes.}
    \label{fig: stability}
\end{figure}

\section{Non-Hermitian skin effect}\label{sec: skin effect}
Understanding the skin effect in non-Hermitian physical systems has been one of the hottest research topics in recent years \cite{zhang.zhang.ea2022review, okuma.kawabata.ea2020Topological,lin.tai.ea2023Topological,yokomizo.yoda.ea2022NonHermitian,wang.chong2023NonHermitian,leykam.bliokh.ea2017Edge,borgnia.kruchkov.ea2020Nonhermitian}. The skin effect is the phenomenon whereby the bulk eigenmodes of a non-Hermitian system are all localised at one edge of an open chain of resonators. This phenomenon is unique to non-Hermitian systems with non-reciprocal coupling. It has been realised experimentally in topological photonics, phononics, and other condensed matter systems \cite{ghatak.brandenbourger.ea2020Observation,longhi.gatti.ea2015Robust,franca.konye.ea2022Nonhermitian, wang.wang.ea2022NonHermitian}.
By introducing an imaginary gauge potential, all the bulk eigenmodes condensate in the same direction \cite{hatano.nelson1996Localization,yokomizo.yoda.ea2022NonHermitian,rivero.feng.ea2022Imaginary}. This has deep implications for the fundamental physics of the system. For example, the non-Hermitian skin effect means that the conventional bulk-edge correspondence principle is violated. This section is dedicated to demonstrating that a non-Hermitian system with an imaginary gauge potential as introduced in \cref{sec: non reciprocal 1D systems} exhibits an eigenmode condensation that is typical of the skin effect. To this end, we will make extensive use of the theory of Toeplitz matrices and operators.

\subsection{The skin effect in monomer systems of subwavlength resonators}\label{sec: nonhermitian skin effect in monomer systems}
We will first consider a system of equally spaced identical resonators (monomers),
that is, a chain of $N$ resonators with $s_i=s$ and $\ell_i=\ell$ for all $1\leq i\leq N$. For this case, the particular structure of the capacitance matrix allows us to apply existing results concerning the spectra of tridiagonal Toeplitz matrices, which we have briefly recalled in \cref{sec: Toeplitz theory}. We will be able to derive a variety of explicit results that are not possible for the more complex situations analysed in the subsequent sections. Nevertheless, \cref{cor: approx via eva eve} applies for more general cases and the numerical results which we present below.

Applying the explicit formula for the eigenpairs of perturbed tridiagonal Toeplitz matrices from \cref{lemma: spectrum perturbed toeplitz} to the gauge capacitance matrix $\capmatg$ given by \eqref{eq: def cap mat}, we obtain the
following theorem from \cite{ammari.barandun.ea2024Mathematical}.
\begin{theorem} \label{thm:decay}
    The eigenvalues of $\capmatg$  are given by
    \begin{align}
        \lambda_1 & = 0,\nonumber                                                                                                                                                                                                       \\
        \lambda_k & = \frac{\gamma}{s} \coth(\gamma\ell/2)+\frac{2\abs{\gamma}}{s}\frac{e^{\frac{\gamma\ell}{2}}}{\vert e^{\gamma\ell}-1\vert}\cos\left(\frac{\pi}{N}(k-1)\right), \quad 2\leq k\leq N . \label{eq: eigenvalues capmat}
    \end{align}
    Furthermore, the associated eigenvector $\bm a_k$ satisfies the following inequality, for $2\leq k\leq N$,
    \begin{align}
        \vert \bm a_k^{(i)}\vert \leq \kappa_k e^{-\gamma\ell\frac{i-1}{2}}\quad \text{for all } 1\leq i\leq N \label{eq: decay for eigemodes},
    \end{align}
    for some $\kappa_k\leq (1+e^{\frac{\gamma\ell}{2}})^2$.
\end{theorem}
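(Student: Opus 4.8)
The plan is to recognise the monomer gauge capacitance matrix $\capmatg$ as a tridiagonal Toeplitz matrix with corner perturbations, namely $T_N^{(a,b)}$ in the notation of \eqref{equ:toeplitzmatrix1}, and then read off its spectrum and eigenvectors directly from \cref{lemma: spectrum perturbed toeplitz}. First I would extract the entries from the explicit form \eqref{equ:capacitance matrix hermitian2}: the interior diagonal entry is $\alpha = \frac{\gamma}{s}\left(\frac{\ell}{1-e^{-\gamma\ell}}-\frac{\ell}{1-e^{\gamma\ell}}\right)$, the super- and sub-diagonal entries are $\beta = -\frac{\gamma}{s}\frac{\ell}{1-e^{-\gamma\ell}}$ and $\eta = \frac{\gamma}{s}\frac{\ell}{1-e^{\gamma\ell}}$, while the two corner diagonal entries differ from $\alpha$ by perturbations $a$ (top-left) and $b$ (bottom-right).

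Next I would verify the three hypotheses of \cref{lemma: spectrum perturbed toeplitz}. Using the identity $\frac{1}{1-e^{-\gamma\ell}}-\frac{1}{1-e^{\gamma\ell}}=\coth(\gamma\ell/2)$ and comparing the corner entries with $\alpha$, a short computation gives $a=\eta$, $b=\beta$, together with the telescoping identity $\alpha+\beta+\eta=0$; moreover $\eta\beta = \frac{\gamma^2\ell^2}{s^2}\frac{e^{\gamma\ell}}{(e^{\gamma\ell}-1)^2}>0$, so $\eta\beta\neq0$. The lemma then applies verbatim and yields $\lambda_1=0$ with eigenvector $\mathbf{1}$, together with $\lambda_k = \alpha + 2\sqrt{\eta\beta}\cos\left(\frac{\pi}{N}(k-1)\right)$ for $2\le k\le N$ (here the matrix size is $m=N$). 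Substituting the values of $\alpha$ and $\sqrt{\eta\beta}$ computed above reproduces \eqref{eq: eigenvalues capmat}.

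For the eigenvector bound I would insert the same parameters into the explicit eigenvector formula of \cref{lemma: spectrum perturbed toeplitz}. The decisive simplification is $\frac{\eta}{\beta}=e^{-\gamma\ell}$, so the prefactor $\left(\frac{\eta}{\beta}\right)^{(j-1)/2}=e^{-\gamma\ell(j-1)/2}$ is exactly the exponential envelope claimed in \eqref{eq: decay for eigemodes}. The remaining factor is $a\sin\!\left(\frac{j(k-1)\pi}{N}\right)-\eta\sqrt{\eta/\beta}\,\sin\!\left(\frac{(j-1)(k-1)\pi}{N}\right)$; since $\lvert\sin\rvert\le1$, $a=\eta$ and $\sqrt{\eta/\beta}=e^{-\gamma\ell/2}$, this is bounded in modulus by a constant independent of the component index, and after normalising the eigenvector the constant $\kappa_k$ can be taken no larger than $(1+e^{\gamma\ell/2})^2$. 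Combining the envelope with this bound yields \eqref{eq: decay for eigemodes}.

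The genuinely substantive step is the algebraic verification that the monomer gauge capacitance matrix satisfies \emph{precisely} the constraints $a=\eta$, $b=\beta$ and $\alpha+\beta+\eta=0$ demanded by \cref{lemma: spectrum perturbed toeplitz}; once this structural fact is secured the eigenpairs are immediate. I expect no serious obstacle beyond careful bookkeeping of the hyperbolic identities and of the sign of $\gamma$, which fixes the edge at which the envelope $e^{-\gamma\ell(i-1)/2}$ concentrates and hence the direction of the skin effect. The eigenvector estimate is then merely an application of $\lvert\sin\rvert\le1$ to a formula already in hand.
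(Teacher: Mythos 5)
Your proposal is correct and is exactly the paper's own route: the paper obtains this theorem by applying \cref{lemma: spectrum perturbed toeplitz} (the Yueh--Cheng formula) to $\capmatg$, and your checks that $a=\eta$, $b=\beta$, $\alpha+\beta+\eta=0$, $\eta\beta>0$ and $\eta/\beta=e^{-\gamma\ell}$ are precisely the bookkeeping behind that one-line application, as is the $\lvert\sin\rvert\le 1$ bound giving the eigenvector envelope and the (weak) constant $(1+e^{\gamma\ell/2})^2$. One pedantic remark: substituting the entries of \eqref{eq: explicit coef cap mat} actually yields $\lambda_k=\frac{\gamma\ell}{s}\coth(\gamma\ell/2)+\frac{2\lvert\gamma\rvert\ell}{s}\frac{e^{\gamma\ell/2}}{\lvert e^{\gamma\ell}-1\rvert}\cos\left(\frac{\pi}{N}(k-1)\right)$, so \eqref{eq: eigenvalues capmat} carries an implicit normalisation $\ell=1$ --- an inconsistency in the paper's statement, not in your argument.
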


Here, $\bm a_k^{(i)}$ denotes the $i$\textsuperscript{th} component of the $k$\textsuperscript{th} eigenvector. It is easy to show that the first eigenvector $\bm a_1$ is a constant vector (\emph{i.e.}, $\bm a_1^{(i)}$ is the same for all $i$). From \cref{thm:decay}, we can see that the eigenmodes display exponential decay both with respect to the site index $i$ and the factor $\gamma$.

Of particular interest is the application of \cref{thm:spectratoeplitz,prop: specra of fam of toeplitz} to the gauge capacitance matrix $\capmatg$. These results show that the localisation of the eigenvectors of both the finite and semi-infinite capacitance matrices depends on the Fredholm index of the symbol of the associated Toeplitz operator at the corresponding eigenvalue or, equivalently, of its winding number.

Indeed, the Toeplitz symbol of the Toeplitz operator associated to the capacitance matrix $\capmatg$ encloses the ellipse $E\subseteq\C$
\begin{align*}
    z-b\in E \Leftrightarrow \frac{\Re(z)^2}{a+c}+\frac{\Im(z)^2}{a-c}\leq1 ,
\end{align*}
where $a, b$ and $c$ are the below diagonal, diagonal and above diagonal elements of the Toeplitz operator. In particular, the boundary of the ellipses is drawn by $\theta\mapsto ae^{\i\theta}+b+ce^{-\i\theta}$ so that the winding in the interior of the ellipse is negative for $\gamma >0$ and positive for $\gamma < 0$. It is thus easy to show that
\begin{align}
    \lambda \in \sigma(\capmatg)\setminus \{0\} \Rightarrow \lambda \in \text{int}(E) \quad \text{and}\quad 0\in \partial E.
\end{align}

Specifically, every eigenvector having the corresponding eigenvalue laying in the region of the complex plane where the Fredholm index (or equivalently, the winding number of the symbol) is negative, has to be exponentially localised.
This shows the intrinsic topological nature of the skin effect.

In \cref{fig: winding and pseudospecturm}, we display the convergence of the pseudospectrum and the topologically protected region of negative winding number. We remark that the trivial eigenvalue $0$ is outside of the region as the winding number is not defined there. As a consequence, the corresponding eigenvector is not localised.
\begin{figure}[htb]
    \centering
    \begin{subfigure}[t]{0.48\textwidth}
        \centering
        \includegraphics[width=\textwidth]{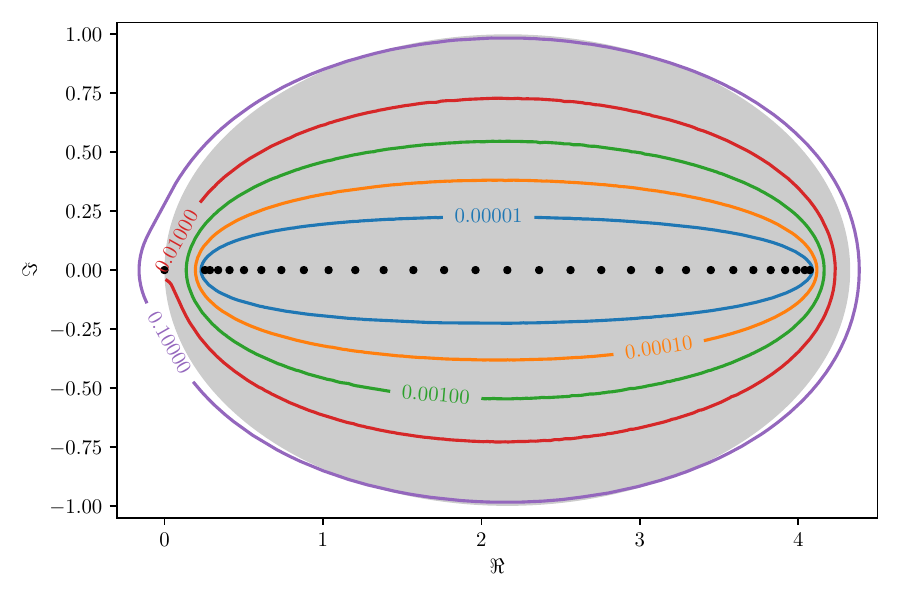}
        \caption{$N=30$.}
        \label{fig:pseudo30}
    \end{subfigure}
    \hfill
    \begin{subfigure}[t]{0.48\textwidth}
        \centering
        \includegraphics[width=\textwidth]{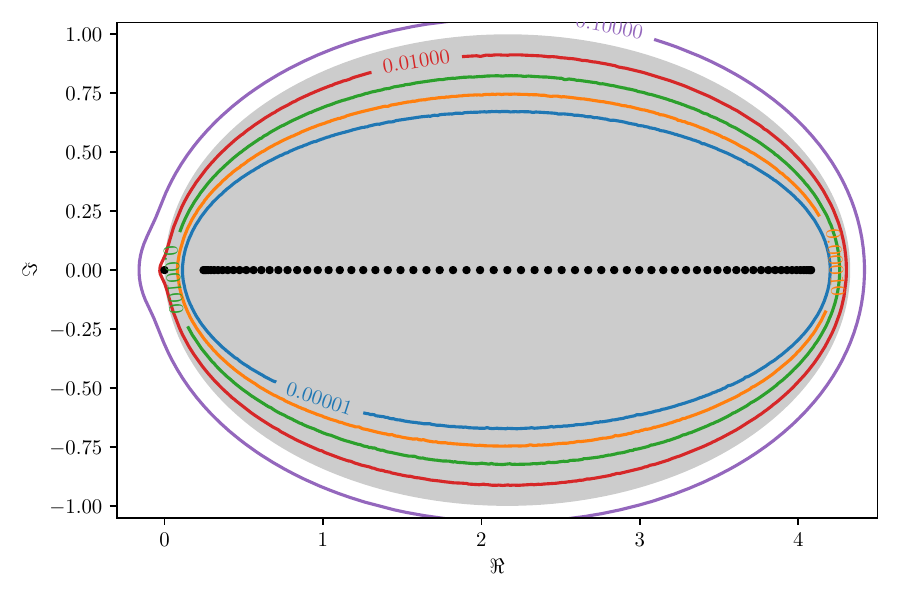}
        \caption{$N=70$.}
        \label{fig:pseudo70}
    \end{subfigure}
    \caption{$\epsilon$-pseudospectra of $\capmatg$ for $\epsilon = 10^{-k}$ for $k=1,\dots,5$. Shaded in grey is
        the region where the symbol $f_T$ has negative winding, for $T$ the Toeplitz operator corresponding to the semi-infinite structure. The pseudospectra are computed using \cite{gaulandrePseudoPy}.}
    \label{fig: winding and pseudospecturm}
\end{figure}

\begin{figure}[t]
    \centering
    \includegraphics[width=0.5\textwidth]{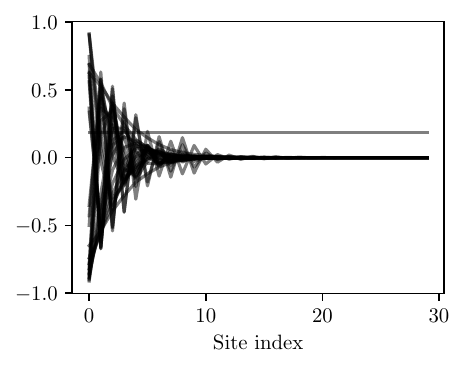}
    \caption{Eigenvectors of the capacitance matrix superimposed to display an exponential decay.}
    \label{fig: superimposed eigenvectors}
    \end{figure}

The explicit formula for eigenvalues from \cref{lemma: spectrum perturbed toeplitz} also gives some insight on the distribution of the \emph{density of states}. The property of the cosine to have minimal slope when it is close to its maxima and minima, causes eigenvalues to cluster at the edges of the range of the spectrum. This is demonstrated by the nonuniform distribution of the black dots in \cref{fig: winding and pseudospecturm}.

We conclude this section with a qualitative analysis of the spectral decomposition of $\capmatg$. In \cref{fig: eignemode localisation}, we show the eigenvector localisation as a function of the site index for finite arrays of various sizes. The localisation of a vector $v$ is measured using the quantity $\|v\|_\infty/\|v\|_2$. After rescaling the site index, we expect there to be some invariance to the array size $N$, based on the formulas of \cref{lemma: spectrum perturbed toeplitz}. \cref{fig: singular values decay} shows, on the other hand, the singular values of the eigenvector matrix, again with the indices normalised. The number of nonzero singular values is a proxy for the dimension of the range of the matrix. This has an exponential dependence on $N$. As a result, \cref{fig: singular val and localisation decay} shows that as $N$ increases, despite the number of eigenvalues growing like $N$, the rank of the matrix of eigenvectors
is fixed. 

\begin{figure}[htb]
    \centering
    \begin{subfigure}[t]{0.48\textwidth}
        \centering
        \includegraphics[width=\textwidth]{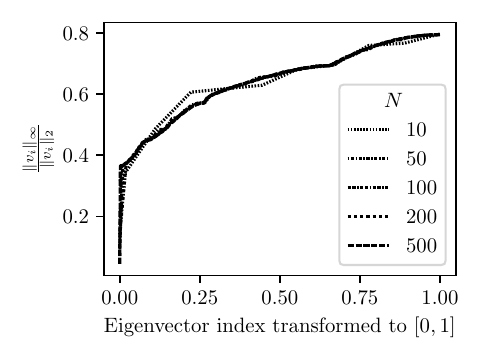}
        \caption{The eigenvector localisation $\frac{\vert v_i\vert_\infty}{\vert v_i\vert_2}$ does not depend on $N$ after rescaling.}
        \label{fig: eignemode localisation}
    \end{subfigure}
    \hfill
    \begin{subfigure}[t]{0.48\textwidth}
        \centering
        \includegraphics[width=\textwidth]{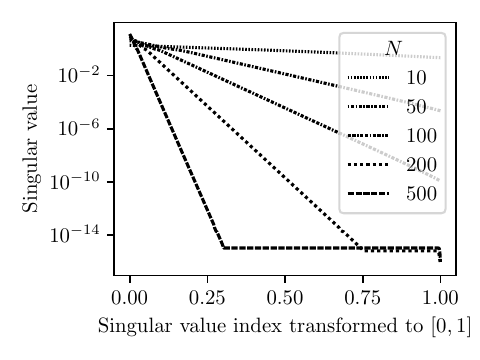}
        \caption{The singular values of the eigenvector matrix decay exponentially in $N$. This shows that the rank of the eigenvector matrix is independent of the size.}
        \label{fig: singular values decay}
    \end{subfigure}
    \caption{As $N\to \infty$, arbitrarily many eigenmodes become close to parallel. Here, this is shown in the case $s=\ell=1$ and $\gamma=0.5$.}
    \label{fig: singular val and localisation decay}
\end{figure}

\subsection{Non-Hermitian skin effect in dimer systems of subwavelength resonators}\label{sec: nonhermitian skin effect in dimer systems}

In \cref{sec: spectrum of tridiag 2 toeplitz matrices with perturbations}, we have discussed the general form of eigenvalues and eigenvectors of tridiagonal $2$-Toeplitz matrices with perturbations in the corners. In this subsection, we will use those results to show that the skin effect --- whose existence has been shown in the previous subsection --- is a feature present also in dimer systems. To do this, we will use quite different tools from those used for the case of identical and equally spaced resonators.

We consider a system as in \cref{fig:setting} composed of dimers --- that is $s_i=s_{i+2}$ for all $1\leq i\leq N-3$ and $\ell_i=\ell$ for all $1\leq i\leq N$ --- governed by \eqref{eq: gen Strum-Liouville}. The gauge capacitance matrix from \eqref{eq: explicit coef cap mat} takes the form
\begin{align}
    C^\gamma =
    \begin{pmatrix}
        \tilde{\alpha}_{1} & \beta_{1}  &            &            &          &          &                    \\
        \eta_{1}           & \alpha_{2} & \beta_{2}  &            &          &          &                    \\
                           & \eta_{2}   & \alpha_{1} & \beta_{1}  &          &          &                    \\
                           &            & \eta_{1}   & \alpha_{2} & \ddots   &          &                    \\
                           &            &            & \ddots     & \ddots   & \ddots   &                    \\
                           &            &            &            & \eta_{2} & \alpha_1 & \beta_1            \\
                           &            &            &            &          & \eta_{1} & \tilde{\alpha}_{2} \\
    \end{pmatrix},
\end{align}
where
\begin{align*}
    \alpha_1         & =\frac{\gamma}{s_1} \frac{\ell}{1-e^{-\gamma \ell}} -\frac{\gamma}{s_2} \frac{\ell}{1-e^{\gamma \ell}}, & \alpha_2         & =\frac{\gamma}{s_2} \frac{\ell}{1-e^{-\gamma \ell}} -\frac{\gamma}{s_1} \frac{\ell}{1-e^{\gamma \ell}}, \\
    \tilde{\alpha}_1 & =\frac{\gamma}{s_1} \frac{\ell}{1-e^{-\gamma \ell}},                                                    & \tilde{\alpha}_2 & =- \tilde{\alpha}_1,                                                                                    \\
    \beta_1          & = - \frac{\gamma}{s_1} \frac{\ell}{1-e^{-\gamma \ell}},                                                 & \beta_2          & =- \frac{\gamma}{s_2} \frac{\ell}{1-e^{-\gamma \ell}},                                                  \\
    \eta_1           & =\frac{\gamma}{s_1} \frac{\ell }{1-e^{\gamma \ell }},                                                   & \eta_2           & =\frac{\gamma}{s_2} \frac{\ell }{1-e^{\gamma \ell }}.                                                   \\
\end{align*}

One may remark that all rows of $C^\gamma$ sum to 0 and thus $\bm 1\in\ker(C^\gamma)$. We will see that this is the only eigenvector of $C^\gamma$ that is not localised.

The first result consists in showing that the eigenvectors of the capacitance matrix exhibit an exponential decay and that the eigenvectors therefore condense. Its proof is nearly a direct application of the eigenvector formula (\ref{equ:eigenvectoroddmatrixtwoperturb3}) in Theorem \ref{thm:eigenvectoroddmatrixtwoperturb2}. In particular, employing (\ref{equ:eigenvectoroddmatrixtwoperturb3}) we only need to control the ``Chebyshev-like'' polynomials $\widehat p_j^{(\xi_p, \xi_q)}(\lambda), \widehat q_j^{(\xi_p, \xi_q)}(\lambda)$'s. This then utilises the boundedness of Chebyshev polynomials in $[-1,1]$ shown in (\ref{equ:boundofchebyshev1}) and the eigenvalue distribution derived in Theorem \ref{thm:eigenvaluethm1}. We thus see that the interface modes and the skin effect actually correspond to the opposite behaviors of the Chebyshev polynomials in and outside the interval $[-1,1]$; see Figure \ref{fig: cheby poly}. We refer the reader to \cite{ammari.barandun.ea2023Perturbed} for a detailed proof. 

\begin{lemma} \label{thm:skineffect1}
    Except for at most $11$ eigenvalues $\{\lambda_r\}$ of $A_{2 m+1}^{(a, b)}$, the corresponding eigenvectors $\bm x$ in Theorem \ref{thm:eigenvectoroddmatrixtwoperturb2} satisfy 
    \begin{equation}\label{equ:skineffectequ1}
        \babs{ \bm x^{(j)}}\leq Mj \left(\sqrt{\frac{\eta_{1}\eta_{2}}{\beta_{1}\beta_{2}}}\right)^{\lfloor\frac{j-1}{2}\rfloor}
    \end{equation}
    for some constant $M>0$ independent of the $\lambda_r$'s, where $ \bm x^{(j)}$ is the $j$\textsuperscript{th} component of ${ \bm x}$. The estimate \eqref{equ:skineffectequ1} holds also for the eigenvectors $ \bm x$ of $A_{2 m}^{(a,b)}$ associated with the eigenvalue $\lambda_r$ in Theorem \ref{thm:eigenvectoroddmatrixtwoperturb2}, except for at most $12$ $r$'s.
\end{lemma}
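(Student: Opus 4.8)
The plan is to read each component of the eigenvector directly off the explicit formula \eqref{equ:eigenvectoroddmatrixtwoperturb3} of \cref{thm:eigenvectoroddmatrixtwoperturb2} and to bound the ``Chebyshev-like'' polynomials occurring there. Writing $j=2k+1$ for the odd-indexed components and $j=2k+2$ for the even-indexed ones, \eqref{equ:eigenvectoroddmatrixtwoperturb3} reads
$$\bm x^{(2k+1)} = s^{k}\,\widehat q_k^{(\xi_p,\xi_q)}(\mu_r), \qquad \bm x^{(2k+2)} = -\tfrac{1}{\beta_1}\,s^{k}\,(\alpha_1-\lambda_r)\,\widehat p_k^{(\xi_p,\xi_q)}(\mu_r),$$
with $s=\sqrt{\eta_1\eta_2/(\beta_1\beta_2)}$ and $\mu_r=y(\lambda_r)$ as in \eqref{equ:defiofstwoperturb2}. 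Since $\lfloor (j-1)/2\rfloor = k$ in both cases, the claimed power $s^{\lfloor (j-1)/2\rfloor}$ is already accounted for, and it remains to prove that $\babs{\widehat q_k^{(\xi_p,\xi_q)}(\mu_r)}$ and $\babs{(\alpha_1-\lambda_r)\,\widehat p_k^{(\xi_p,\xi_q)}(\mu_r)}$ are bounded by $C(k+1)$ for a constant $C$ independent of $r$ and $m$; as $k+1\leq j$, this delivers the factor $Mj$ in \eqref{equ:skineffectequ1}.

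First I would fix the range of $\mu_r$. By \cref{thm:eigenvaluethm1}, all but at most $11$ eigenvalues of $A_{2m+1}^{(a,b)}$ satisfy, via \eqref{equ:eigenvalue1}, the localisation $y(\lambda_r)\in[-1,1]$, i.e.\ $\mu_r\in[-1,1]$; these are exactly the eigenvalues we retain. Next, since $\widehat p_k^{(\xi_p,\xi_q)}$ and $\widehat q_k^{(\xi_p,\xi_q)}$ obey the Chebyshev three-term recurrences \eqref{equ:Chebyshevrecurrence1} and \eqref{equ:Chebyshevrecurrence2}, each is a linear combination of the fundamental solutions $U_k(\mu_r)$ and $U_{k-1}(\mu_r)$, with coefficients fixed by the initial data \eqref{equ:recurrenceinitial2}; explicitly one has $\widehat q_k^{(\xi_p,\xi_q)}(\mu_r)= \widehat q_0^{(\xi_p,\xi_q)}(\mu_r)\bigl(U_k(\mu_r)-2\mu_r U_{k-1}(\mu_r)\bigr)+\widehat q_1^{(\xi_p,\xi_q)}(\mu_r)\,U_{k-1}(\mu_r)$, and similarly for $\widehat p_k$. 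The bound \eqref{equ:boundofchebyshev1}, applicable precisely because $\mu_r\in[-1,1]$, gives $\babs{U_n(\mu_r)}\leq n+1$; combined with $\babs{\mu_r}\leq 1$ this yields $\babs{\widehat q_k^{(\xi_p,\xi_q)}(\mu_r)}\leq C_0(k+1)\max\bigl\{\babs{\widehat q_0^{(\xi_p,\xi_q)}(\mu_r)},\babs{\widehat q_1^{(\xi_p,\xi_q)}(\mu_r)}\bigr\}$, and the analogous estimate for $\widehat p_k$.

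The remaining point is the uniform control of the coefficients. The eigenvalues $\lambda_r$ of $A_{2m+1}^{(a,b)}$ lie in a fixed bounded set (by Gershgorin's theorem, the entries $\alpha_i,\beta_i,\eta_i,a,b$ being fixed and $m$-independent), so $\xi_q=\alpha_1-\lambda_r$, $\xi_p=\alpha_1+a-\lambda_r$ and the factor $(\alpha_1-\lambda_r)/\beta_1$ are bounded uniformly in $r$ and $m$; together with $\mu_r\in[-1,1]$ and the fixed value of $\beta$, the initial data \eqref{equ:recurrenceinitial2} are uniformly bounded. This produces a single constant $M$ independent of the $\lambda_r$'s with $\babs{\bm x^{(j)}}\leq M j\, s^{\lfloor(j-1)/2\rfloor}$, which is \eqref{equ:skineffectequ1}. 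The even case is identical: the eigenvector of $A_{2m}^{(a,b)}$ is the vector of the first $2m$ entries of the same $\bm x$, so the component-wise estimate carries over verbatim, the only change being that \cref{thm:eigenvaluethm2} excludes at most $12$ eigenvalues through \eqref{equ:eigenvalue2}. I expect the main obstacle to be exactly this uniformity of constants: the coefficients in the Chebyshev expansion involve $\lambda_r$ through the initial data and must stay bounded simultaneously over all retained eigenvalues and all $m$, which is what the eigenvalue localisation of \cref{thm:eigenvaluethm1,thm:eigenvaluethm2} guarantees by pinning $\mu_r$ into $[-1,1]$ and thereby keeping the Chebyshev factors polynomially (rather than exponentially) large.
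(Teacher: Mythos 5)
Your proposal follows essentially the same route as the paper's proof: read the components off the explicit eigenvector formula \eqref{equ:eigenvectoroddmatrixtwoperturb3}, use the eigenvalue localisation of \cref{thm:eigenvaluethm1,thm:eigenvaluethm2} to place $\mu_r=y(\lambda_r)\in[-1,1]$ for all but the $11$ (resp.\ $12$) exceptional eigenvalues, and then control the Chebyshev-like polynomials $\widehat p_k,\widehat q_k$ through the bound \eqref{equ:boundofchebyshev1} on $U_n$ over $[-1,1]$. Your explicit expansion of $\widehat p_k,\widehat q_k$ in terms of $U_k,U_{k-1}$ and the Gershgorin argument giving uniform boundedness of the initial data (hence of $M$) correctly supply the details the paper defers to its reference, so the proposal is sound.
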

As shown in \cite{ammari.barandun.ea2023Perturbed}, the natural consequence of this theorem is the condensation of the eigenvectors of the gauge capacitance matrix.

\begin{theorem} \label{thm: condensation of eigenmodes for dimer systems}
    All but a few (independent of $N$) eigenvectors of the gauge capacitance matrix $C^\gamma$ satisfy the following estimate:
    \begin{equation}\label{eq: estimate eigevector cap mat}
        \babs{ \bm x^{(j)}}\leq Mj e^{-\ell\gamma\lfloor\frac{j-1}{2}\rfloor},
    \end{equation}
    where $ \bm x^{(j)}$ is the $j$\textsuperscript{th} component of the eigenvector $ \bm x$.
\end{theorem}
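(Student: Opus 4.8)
The plan is to recognise the gauge capacitance matrix $C^\gamma$ as a perturbed tridiagonal $2$-Toeplitz matrix of the type studied in \cref{sec: spectrum of tridiag 2 toeplitz matrices with perturbations}, so that \cref{thm:skineffect1} (and hence the eigenvector formula of \cref{thm:eigenvectoroddmatrixtwoperturb2}) applies directly. First I would match entries: reading off the diagonal pattern $\tilde\alpha_1,\alpha_2,\alpha_1,\alpha_2,\dots,\alpha_1,\tilde\alpha_2$ together with the alternating superdiagonal $\beta_1,\beta_2,\dots$ and subdiagonal $\eta_1,\eta_2,\dots$, one sees that $C^\gamma$ is exactly $A_N^{(a,b)}$ with the bulk coefficients $\alpha_1,\alpha_2,\beta_1,\beta_2,\eta_1,\eta_2$ listed just after its display, and corner perturbations $a,b$ chosen so that the first and last diagonal entries become $\tilde\alpha_1$ and $\tilde\alpha_2$ (the parity of $N$ selecting the odd matrix $A_{2m+1}^{(a,b)}$ or the even matrix $A_{2m}^{(a,b)}$). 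The precise values of $a,b$ are immaterial for the decay rate; only the structural form matters.

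Before invoking \cref{thm:skineffect1}, I would verify its standing hypothesis $\eta_i\beta_i>0$. Since $\eta_i\beta_i=-\frac{\gamma^2\ell^2}{s_i^2}\big/\bigl[(1-e^{\gamma\ell})(1-e^{-\gamma\ell})\bigr]$ and
\[
(1-e^{\gamma\ell})(1-e^{-\gamma\ell})=2-2\cosh(\gamma\ell)<0\quad\text{for }\gamma\ell\neq 0,
\]
the negative denominator forces $\eta_i\beta_i>0$, as required. This is the one genuine computation needed to make the application legitimate.

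With the hypotheses in place, \cref{thm:skineffect1} yields, for all but at most $11$ (resp.\ $12$) eigenvectors, the bound $\babs{\bm x^{(j)}}\le M j\,(\sqrt{\eta_1\eta_2/\beta_1\beta_2})^{\lfloor (j-1)/2\rfloor}$ of \eqref{equ:skineffectequ1}. It then remains to evaluate the decay ratio. Substituting the coefficients, the $\gamma/s_i$ and $\ell$ factors cancel pairwise, so that
\[
\frac{\eta_1\eta_2}{\beta_1\beta_2}=\left(\frac{1-e^{-\gamma\ell}}{1-e^{\gamma\ell}}\right)^2=e^{-2\gamma\ell},
\]
where the last equality follows from $\frac{1-u^{-1}}{1-u}=-u^{-1}$ with $u=e^{\gamma\ell}$. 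Taking the positive square root (legitimate by the sign check above) gives $\sqrt{\eta_1\eta_2/\beta_1\beta_2}=e^{-\gamma\ell}$, and inserting this into the bound produces exactly \eqref{eq: estimate eigevector cap mat}, the finitely many exceptions ($\le 12$, independent of $N$) becoming the asserted ``few eigenvectors''.

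I expect no genuine obstacle: the analytic substance is already packaged in \cref{thm:skineffect1}, and the argument is essentially a change of notation plus an explicit simplification. The only points requiring care are the identification of the corner perturbations so that the structural hypothesis of \cref{thm:skineffect1} is met, and the sign verification $\eta_i\beta_i>0$, which rests on the hyperbolic identity $(1-e^{\gamma\ell})(1-e^{-\gamma\ell})=2-2\cosh(\gamma\ell)<0$; both are routine but must be recorded for the application to be valid.
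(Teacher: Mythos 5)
Your proposal is correct and follows essentially the same route as the paper: the paper likewise derives \cref{thm: condensation of eigenmodes for dimer systems} as a direct consequence of \cref{thm:skineffect1}, identifying $C^\gamma$ with the perturbed tridiagonal $2$-Toeplitz matrices of \cref{sec: spectrum of tridiag 2 toeplitz matrices with perturbations} and substituting the gauge capacitance coefficients so that $\sqrt{\eta_1\eta_2/(\beta_1\beta_2)}=e^{-\gamma\ell}$. Your two supporting checks (the sign verification $\eta_i\beta_i>0$ via $2-2\cosh(\gamma\ell)<0$, and the simplification $(1-e^{-\gamma\ell})/(1-e^{\gamma\ell})=-e^{-\gamma\ell}$) are exactly the computations needed to make that application legitimate.
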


\cref{thm:skineffect1} and \cref{thm: condensation of eigenmodes for dimer systems} are stated to hold up to a finite number of eigenpairs. This is due to the fact that the proof relies on multiple applications of Cauchy's interlacing Theorem. Nevertheless, numerically, we can verify that there is  exactly one eigenvector which is not localised. More precisely, we show in Figure \ref{fig: criteria decay} that only the eigenvalue $\lambda_1=0$ satisfies $\vert y(\lambda_i)\vert \geq 1$ for $y$ as in \eqref{equ:normalizedfunction1}. In Figure \ref{fig: skin effect dimers}, we plot the eigenmodes of a system of $25$ dimers.
\begin{figure}[h]
    \centering
    \begin{subfigure}[t]{0.48\textwidth}
        \includegraphics[width=1\textwidth]{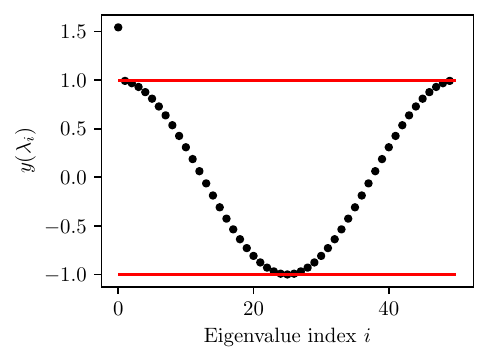}
        \caption{The black dots show the value $y(\lambda_i)$ for the eigenvalues $\lambda_i$ of $C^\gamma$. The red line show the boundaries stability zone $y=\pm 1$. Only for $\lambda=0$, $y(\lambda)$ lays outside of this zone.}
        \label{fig: criteria decay}
    \end{subfigure}\hfill
    \begin{subfigure}[t]{0.48\textwidth}
        \includegraphics[width=1\textwidth]{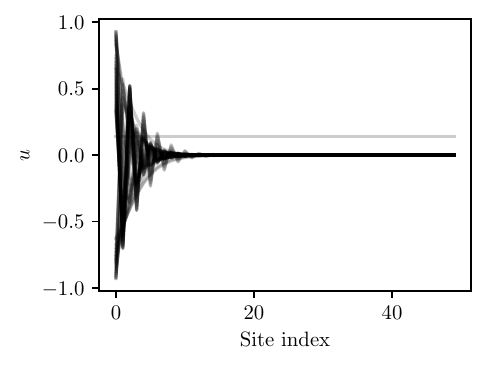}
        \caption{Eigenmodes superimposed on one another to portray the skin effect: all modes except one are exponentially localised on the left edge of the system.}
        \label{fig: skin effect dimers}
    \end{subfigure}
    \caption{Eigenmode localisation for a system of $25$ dimers ($N=50$), $\ell_i=1$, $s_1=1$, $s_2=2$ and $\gamma=1$.}
\end{figure}

It is interesting to remark that the eigenvalue $0$ is both the only outlier of Figure \ref{fig: criteria decay} and the only point laying on the trace of the eigenvalues of the symbol of the $2$-Toeplitz operator; see Figure \ref{Fig: spectrum_dimer_system} for an illustration and Theorem \ref{thm: exponential_decay_capacitance_operators} for a theoretical demonstration.

\subsection{Stability of the skin effect}
In this section, we present a stability estimate of the skin effect. We illustrate the stability of the skin effect for monomer systems of subwavelength resonators. For the case of dimer systems, one can derive similar results based on the constructions in Sections \ref{sec: spectrum of tridiag 2 toeplitz matrices with perturbations} and \ref{sec: nonhermitian skin effect in dimer systems}.

According to the theory in Section \ref{sec: nonhermitian skin effect in monomer systems}, the capacitance matrix in this case is given by $$
    \capmatg = T_N^{(\eta, \beta)},
$$
for $T_{N}^{(\eta, \beta)}$ defined as in (\ref{equ:toeplitzmatrix1}) and $\eta, \alpha, \beta$ are such that $$ \eta \beta > 0, \quad \text{ and } \quad  \eta+\alpha+\beta=0.$$

In order to study the stability of the non-Hermitian skin effect with respect to random imperfections in the system design, we either add random errors to the positions of the resonators (keeping the length of the resonators unchanged) or to the $\gamma$-term and then repeatedly compute the subwavelength eigenfrequencies and their associated eigenmodes. The perturbations in the positions and in the values of the $\gamma$-parameter are drawn at random from uniform distributions with zero-mean values.
Since these random perturbations affect only the tridiagonal entries of the gauge capacitance matrix $\hat{\capmat}^\gamma$ of the randomly perturbed system, we can write in both cases that
\begin{equation}\label{equ:toeplitzallperturbed1}
    \hat{\capmat}^\gamma = \widehat T_N^{(a,b)}:=\left(\begin{array}{cccccc}
            \alpha + a +\epsilon_{\alpha, 1} & \beta+\epsilon_{\beta, 1}     & 0                             & \ldots & 0                               & 0                              \\
            \eta +\epsilon_{\eta, 2}         & \alpha + \epsilon_{\alpha, 2} & \beta +\epsilon_{\beta, 2}  & \ldots & 0                               & 0                              \\
            0                                  & \eta+\epsilon_{\eta, 3}       & \alpha+\epsilon_{\alpha, 3} & \ldots & 0                               & 0                              \\
            \ldots                             & \ldots                          & \ldots                        & \ldots & \ldots                          & \ldots                         \\
            \ldots                             & \ldots                          & \ldots                        & \ldots & \alpha+\epsilon_{\alpha, N-1} & \beta+\epsilon_{\beta, N-1}  \\
            0                                  & 0                               & 0                             & \ldots & \eta+\epsilon_{\eta, N}       & \alpha+b+\epsilon_{\alpha,N}
        \end{array}\right)
\end{equation}
with $a=\eta, b=\beta, \eta = -\frac{\ell\gamma}{ s} \frac{1}{1-e^{-\gamma\ell}}, \alpha= \coth(\gamma\ell/2), \beta = \frac{\ell\gamma}{ s} \frac{1}{1-e^{\gamma\ell}}$.

We first derive stability results for the eigenvalues of $T_N^{(a,b)}$. This relies on a crucial observation that the tridiagonal matrix $T_N^{(a,b)}$ always has the same eigenvalues as a Hermitian matrix. Thus we first recall the following well-known Weyl theorem for the stability of eigenvalues of Hermitian matrices; see \cite[Theorem 1.1.7]{silva2011Matrix}, \cite[Theorem 8.1.6]{golub.vanloan2013Matrix} and \cite[Theorem 10.3.1]{parlett1998symmetric}. 

\begin{proposition}\label{thm:weyltheorem1}
    Let $A$ and $E$ be $N \times N$ Hermitian matrices. For $k \in\{1, \ldots, N\}$, denote by $\lambda_k(A+E), \lambda_k(A)$ the $k$\textsuperscript{th} eigenvalue of $A+E$ and $A$, respectively. Assume these to be arranged in a decreasing sequence. Then,
    \[
        \babs{\lambda_k(A+E)-\lambda_k(A)}\leqs \Vert{E}\Vert_2,
    \]
    where $\Vert{E}\Vert_2$ is the operator norm of $E$.
\end{proposition}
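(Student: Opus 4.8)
The plan is to deduce the bound from the Courant--Fischer min--max characterisation of the eigenvalues of a Hermitian matrix, which is the natural tool since the estimate is uniform in $k$ and symmetric in the two directions. Recall that if $M$ is an $N\times N$ Hermitian matrix with eigenvalues ordered decreasingly, $\lambda_1(M)\geq\cdots\geq\lambda_N(M)$, then for each $k\in\{1,\dots,N\}$,
$$
\lambda_k(M)=\max_{\dim S=k}\ \min_{\substack{x\in S\\ \Vert x\Vert_2=1}} x^*Mx,
$$
where the maximum runs over all $k$-dimensional subspaces $S\subseteq\C^N$. First I would apply this to $M=A+E$ and split the Rayleigh quotient additively as $x^*(A+E)x=x^*Ax+x^*Ex$.

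The key step is to control the perturbation term. Since $E$ is Hermitian, its operator norm equals its spectral radius, so $\babs{x^*Ex}\leq\Vert E\Vert_2$ for every unit vector $x$. This yields the pointwise two-sided bound
$$
x^*Ax-\Vert E\Vert_2\leq x^*(A+E)x\leq x^*Ax+\Vert E\Vert_2
$$
on the unit sphere. I would then propagate these inequalities through the inner minimum and the outer maximum: fixing a $k$-dimensional subspace $S$ and minimising over unit $x\in S$ preserves the upper inequality, giving $\min_{x\in S}x^*(A+E)x\leq\min_{x\in S}x^*Ax+\Vert E\Vert_2$; maximising over all such $S$ then produces $\lambda_k(A+E)\leq\lambda_k(A)+\Vert E\Vert_2$. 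For the reverse direction, either apply the same argument to the lower pointwise bound, or equivalently write $A=(A+E)+(-E)$ with $\Vert -E\Vert_2=\Vert E\Vert_2$ to obtain $\lambda_k(A+E)\geq\lambda_k(A)-\Vert E\Vert_2$. Combining the two inequalities gives the claimed estimate $\babs{\lambda_k(A+E)-\lambda_k(A)}\leq\Vert E\Vert_2$.

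There is essentially no deep obstacle here, as the result is classical; the statement in the excerpt already cites standard references, so this is the expected textbook argument. The only points requiring genuine care are keeping the direction of the inequalities consistent when passing through both the inner $\min$ and the outer $\max$ (since these act in opposite senses), and justifying $\babs{x^*Ex}\leq\Vert E\Vert_2$ from the Hermiticity of $E$ rather than treating it as a general matrix bound.
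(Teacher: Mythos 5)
Your proof is correct and complete: the Courant--Fischer min--max argument with the two-sided perturbation bound $\babs{x^*Ex}\leq \Vert E\Vert_2$ is exactly the classical proof of Weyl's inequality. The paper itself gives no proof but simply cites standard references (Golub--Van Loan, Parlett), and the argument in those references is precisely the one you have reproduced, so there is nothing to add.
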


Now we introduce the following result for the stability of the eigenvalues of $T_{N}^{(a,b)}$; see \cite{ammari.barandun.ea2023Stability} for a detailed justification.  From this we can state that the eigenvalues of matrices of the type $T_{N}^{(a,b)}$ with $a, b\in \mathbb R$ are stable under \emph{physical} perturbations. We remark here that physical means perturbing either the geometry or the imaginary gauge potential. It is well-known that $T_{N}^{(a,b)}$ is not stable under arbitrary perturbations, but these do not occur as a result of physical modifications to the system. 

\begin{proposition} \label{thm:eigenvaluestability0}
    The eigenvalues of $T_N^{(a,b)}$ and $\widehat T_N^{(a,b)}$ are all real numbers. Let $\{\lambda_k\}, \{\widehat \lambda_k \}$ be respectively the eigenvalues of $T_N^{(a,b)}$ and $\widehat T_N^{(a,b)}$, arranged in  decreasing sequences. Assuming that
    \begin{equation}
        \max_{j=1,\dots, N}\left(\babs{\epsilon_{\eta, j}}, \babs{\epsilon_{\alpha, j}}, \babs{\epsilon_{\beta, j}}\right)\eqqcolon \epsilon
    \end{equation}
    with $\epsilon\leqs \min(\babs{\eta}, \babs{\beta})$, then we have
    \[
        \babs{\widehat \lambda_k-\lambda_k}\leqs C_1(\eta, \beta, \epsilon)\epsilon,
    \]
    where
    \begin{equation}\label{equ:eigenvaluestability2}
        C_1(\eta, \beta, \epsilon)= \frac{\babs{\beta} +\babs{\eta} +\epsilon}{\sqrt{\beta \eta}}+1.
    \end{equation}
\end{proposition}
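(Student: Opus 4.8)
The plan is to exploit the observation quoted before the statement: a real tridiagonal matrix whose opposite off-diagonal entries have positive product is similar, through a real positive diagonal matrix, to a real symmetric (hence Hermitian) matrix, after which the Weyl bound of \cref{thm:weyltheorem1} applies.

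First I would symmetrise $T_N^{(a,b)}$. Because $\eta\beta>0$, conjugating by $D=\diag(d_1,\dots,d_N)$ with $d_{j+1}/d_j=\sqrt{\beta/\eta}$ produces a real symmetric tridiagonal matrix $S$ with unchanged diagonal and all off-diagonal entries equal to $\sqrt{\eta\beta}$; this alone shows that the spectrum of $T_N^{(a,b)}$ is real. The same construction applies to $\widehat T_N^{(a,b)}$ as soon as every product $(\beta+\epsilon_{\beta,j})(\eta+\epsilon_{\eta,j+1})$ of opposite off-diagonal entries is positive. This is precisely where the hypothesis $\epsilon\leqs\min(\babs\eta,\babs\beta)$ is used: it forces $\eta+\epsilon_{\eta,j}$ and $\beta+\epsilon_{\beta,j}$ to retain the signs of $\eta$ and $\beta$, so $\widehat T_N^{(a,b)}$ is likewise similar to a real symmetric matrix $\widehat S$ and has real spectrum.

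Next I would compare $S$ and $\widehat S$. Ordering the eigenvalues of both decreasingly, \cref{thm:weyltheorem1} gives $\babs{\widehat\lambda_k-\lambda_k}\leqs\Vert\widehat S-S\Vert_2$. The difference is symmetric tridiagonal: its diagonal entries are the $\epsilon_{\alpha,j}$, each at most $\epsilon$ in modulus, and its off-diagonal entries are $\sqrt{(\beta+\epsilon_{\beta,j})(\eta+\epsilon_{\eta,j+1})}-\sqrt{\eta\beta}$ up to sign. The decisive estimate is the Lipschitz bound for the square root,
\[
\babs{\sqrt{u}-\sqrt{v}}=\frac{\babs{u-v}}{\sqrt{u}+\sqrt{v}}\leqs\frac{\babs{u-v}}{\sqrt{\eta\beta}},
\]
with $u=(\beta+\epsilon_{\beta,j})(\eta+\epsilon_{\eta,j+1})$ and $v=\eta\beta$; since $u-v=\beta\epsilon_{\eta,j+1}+\eta\epsilon_{\beta,j}+\epsilon_{\beta,j}\epsilon_{\eta,j+1}$, one gets $\babs{u-v}\leqs(\babs\beta+\babs\eta+\epsilon)\epsilon$, so each off-diagonal entry of $\widehat S-S$ is at most $(\babs\beta+\babs\eta+\epsilon)\epsilon/\sqrt{\eta\beta}$. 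A norm estimate for the resulting symmetric tridiagonal difference, together with the diagonal bound $\epsilon$, then yields $\Vert\widehat S-S\Vert_2\leqs C_1(\eta,\beta,\epsilon)\epsilon$, which is the claim.

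I expect the main obstacle to be this last norm estimate: the off-diagonal entries depend on the perturbations through a square root, so no linear bound is directly available and one must feed the Lipschitz estimate above into a careful bound on the $2$-norm of a symmetric tridiagonal matrix. The only other point requiring care, though elementary, is the validity of the symmetrisation of $\widehat T_N^{(a,b)}$, which rests entirely on the smallness assumption $\epsilon\leqs\min(\babs\eta,\babs\beta)$ keeping all opposite off-diagonal products positive.
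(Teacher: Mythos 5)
Your strategy coincides with the paper's own route: the survey proves this proposition exactly by the "crucial observation" that $T_N^{(a,b)}$ (and, under the smallness hypothesis, $\widehat T_N^{(a,b)}$) has the same spectrum as a real symmetric tridiagonal matrix, followed by the Weyl bound of \cref{thm:weyltheorem1}; the quantitative details are deferred to the cited stability paper. Your first three steps are sound: $\eta\beta>0$ gives the symmetrisation of $T_N^{(a,b)}$, the hypothesis $\epsilon\leqs\min(|\eta|,|\beta|)$ keeps every product $(\beta+\epsilon_{\beta,j})(\eta+\epsilon_{\eta,j+1})$ nonnegative, and your square-root Lipschitz estimate correctly bounds each off-diagonal entry of $\widehat S-S$ by $M:=(|\beta|+|\eta|+\epsilon)\epsilon/\sqrt{\beta\eta}$. (One technical caveat: at the boundary case $\epsilon=\min(|\eta|,|\beta|)$ an off-diagonal entry of $\widehat T_N^{(a,b)}$ may vanish, so the diagonal similarity degenerates; one should instead use the fact that the characteristic polynomial of a tridiagonal matrix depends on the off-diagonal entries only through their pairwise products, which is valid whenever these products are $\geq 0$.)

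The genuine gap is your last sentence. The matrix $\widehat S-S$ is symmetric tridiagonal with diagonal entries bounded by $\epsilon$ and \emph{two} off-diagonal bands each bounded entrywise by $M$, so every standard estimate (Gershgorin, or splitting into the diagonal part and the two shift parts) yields
\begin{equation*}
    \Vert \widehat S-S\Vert_2 \;\leq\; \epsilon+2M \;=\;\Bigl(1+\tfrac{2(|\beta|+|\eta|+\epsilon)}{\sqrt{\beta\eta}}\Bigr)\epsilon ,
\end{equation*}
and the factor $2$ cannot be dropped: a symmetric tridiagonal matrix with zero diagonal and constant band $M$ has norm $2M\cos(\pi/(N+1))\to 2M$. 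So the assertion that the norm estimate "yields $C_1(\eta,\beta,\epsilon)\epsilon$" is a real jump, and in fact no argument can close it, because the constant \eqref{equ:eigenvaluestability2} as printed is too small. Concretely, take $\beta=1$, $\eta=100$ (so $\alpha=-101$, $a=\eta$, $b=\beta$), $\epsilon_{\alpha,j}=0$ and $\epsilon_{\beta,j}=\epsilon_{\eta,j}=-1$, so that $\epsilon=1=\min(|\eta|,|\beta|)$. Then $\widehat T_N^{(a,b)}$ is lower bidiagonal with spectrum $\{-1,-100,-101,\dots,-101\}$, while by \cref{lemma: spectrum perturbed toeplitz} the spectrum of $T_N^{(a,b)}$ is $\{0\}\cup\{-101+20\cos(\pi(k-1)/N)\colon 2\leq k\leq N\}$; pairing the decreasingly sorted sequences, $|\widehat\lambda_N-\lambda_N|=20\,|\cos(\pi(N-1)/N)|\to 2\sqrt{\beta\eta}=20$, whereas $C_1\epsilon=11.2$. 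Your method, carried out with the factor $2$, does prove the proposition with $C_1$ replaced by $1+2(|\beta|+|\eta|+\epsilon)/\sqrt{\beta\eta}$ (which the example shows is of the correct order); the stated constant itself appears to be off by this factor, so the discrepancy is as much a defect of the statement as of your proposal, but as written your final step does not — and cannot — deliver the claimed bound.
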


Applying \cref{thm:eigenvaluestability0} to the eigenvalues of $T_N^{(\eta, \beta)}$ (stated in \cref{lemma: spectrum perturbed toeplitz}) yields the following stability estimate first derived in \cite{ammari.barandun.ea2023Stability}. 
\begin{theorem} \label{thm:eigenvaluestability1thm}
    For the eigenvalue $\widehat\lambda_{k}$ of the perturbed capacitance matrix $\widehat \capmatg$ with
    \begin{equation}\label{equ:eigenvaluestability1}
        \max_{j=1,\dots,N}\left(\babs{\epsilon_{\eta, j}}, \babs{\epsilon_{\alpha, j}}, \babs{\epsilon_{\beta, j}}\right)\eqqcolon \epsilon
    \end{equation}
    and $\epsilon\leqs \min(\babs{\eta}, \babs{\beta})$, we have $\widehat \lambda_1 = \epsilon_1$ and
    \begin{equation}\label{equ:eigenvaluestability3}
        \widehat\lambda_{k} = \lambda_{k}+\epsilon_k = \alpha+2\sqrt{\eta\beta}\cos\left(\frac{(k-1)\pi}{N}\right)+ \epsilon_k, \quad k=2, \cdots, N,
    \end{equation}
    with $\babs{\epsilon_k}\leqs C_1(\eta, \beta, \epsilon)\epsilon, 1\leqs k\leqs N$ and $C_1(\eta, \beta, \epsilon)$ being defined by (\ref{equ:eigenvaluestability2}). In particular, all the ${\widehat \lambda}_k$'s are real numbers.
\end{theorem}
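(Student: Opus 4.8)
The plan is to recognise Theorem \ref{thm:eigenvaluestability1thm} as a direct specialisation of two results already established in this section: the explicit eigenvalue formula of \cref{lemma: spectrum perturbed toeplitz} for the unperturbed matrix, and the general stability estimate of \cref{thm:eigenvaluestability0}, applied to the particular corner values $a=\eta$, $b=\beta$ dictated by the monomer gauge capacitance matrix. No genuinely new analysis is needed; the content of the proof is to verify that the hypotheses line up and then to assemble the pieces.

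First I would confirm that the unperturbed capacitance matrix $\capmatg = T_N^{(\eta,\beta)}$ meets the hypotheses of \cref{lemma: spectrum perturbed toeplitz}. By the construction recalled at the start of this subsection we have $\eta\beta>0$ and $\eta+\alpha+\beta=0$ (equivalently, every row of $\capmatg$ sums to zero), while the corner perturbations are precisely $a=\eta$ and $b=\beta$. \Cref{lemma: spectrum perturbed toeplitz} then yields $\lambda_1=0$ together with $\lambda_k=\alpha+2\sqrt{\eta\beta}\cos\!\left(\frac{(k-1)\pi}{N}\right)$ for $2\leq k\leq N$, which are exactly the values displayed in the statement (taking the matrix size $m=N$).

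Next I would observe that the randomly perturbed matrix $\widehat{\capmatg}=\widehat T_N^{(\eta,\beta)}$ of (\ref{equ:toeplitzallperturbed1}) is precisely of the form covered by \cref{thm:eigenvaluestability0}, with perturbation magnitude $\epsilon:=\max_j\!\left(|\epsilon_{\eta,j}|,|\epsilon_{\alpha,j}|,|\epsilon_{\beta,j}|\right)$ satisfying the admissibility condition $\epsilon\leq\min(|\eta|,|\beta|)$. Applying that proposition directly delivers both the reality of every $\widehat\lambda_k$ (the perturbed tridiagonal matrix remains similar to a Hermitian one, since the products of its off-diagonal entries stay positive under the smallness condition) and the bound $|\widehat\lambda_k-\lambda_k|\leq C_1(\eta,\beta,\epsilon)\,\epsilon$, with $C_1$ as in (\ref{equ:eigenvaluestability2}).

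Finally I would set $\epsilon_k:=\widehat\lambda_k-\lambda_k$ and substitute the explicit $\lambda_k$ from the first step: for $k=1$ this reads $\widehat\lambda_1=\epsilon_1$, and for $2\leq k\leq N$ it gives $\widehat\lambda_k=\alpha+2\sqrt{\eta\beta}\cos\!\left(\frac{(k-1)\pi}{N}\right)+\epsilon_k$, with $|\epsilon_k|\leq C_1(\eta,\beta,\epsilon)\,\epsilon$ throughout. Since no fresh estimate is produced here, the only point requiring care is the bookkeeping that matches $a,b$ and the smallness hypothesis to \cref{thm:eigenvaluestability0}; the genuine difficulty --- the symmetrisability argument that makes the non-Hermitian $T_N^{(a,b)}$ isospectral to a Hermitian matrix, and the Weyl-type bound behind the constant $C_1$ --- is already discharged in that proposition and is imported wholesale.
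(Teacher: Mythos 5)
Your proposal is correct and matches the paper's own argument exactly: the paper likewise obtains this theorem by applying \cref{thm:eigenvaluestability0} (the Weyl-type stability estimate for $T_N^{(a,b)}$, which supplies both the reality of the $\widehat\lambda_k$'s and the constant $C_1(\eta,\beta,\epsilon)$) to the explicit unperturbed eigenvalues furnished by \cref{lemma: spectrum perturbed toeplitz} with $a=\eta$, $b=\beta$, $\eta+\alpha+\beta=0$. The bookkeeping you describe --- setting $\epsilon_k:=\widehat\lambda_k-\lambda_k$ and substituting the cosine formula --- is precisely what the paper does, so no further comment is needed.
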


\begin{remark}
    One can apply \cref{thm:eigenvaluestability0} and \cref{thm:eigenvectorstability1} to derive similar stability results for $T_N^{(0,0)}$. This corresponds to many examples of the non-Hermitian skin effect in condensed matter theory and quantum mechanics. The stability results presented here can therefore be immediately applied to these examples. For the eigenvalues of tridiagonal Toeplitz matrices with various perturbations on the corners, we refer the reader to \cite{yueh2005Eigenvalues, yueh.cheng2008Explicit}.
\end{remark}

Next, we estimate the stability of the eigenvectors of $\capmatg = T_N^{(\eta, \beta)}$. For $\lambda_k$ an eigenvalue of $\capmatg$, we let
\begin{equation}\label{equ:defiofpj1}
    p_{j}(\lambda_{k}) =  \left(\eta \sin \left(\frac{(j+1)(k-1) \pi}{N}\right)-\eta \sqrt{\frac{\eta}{\beta}} \sin \left(\frac{j(k-1) \pi}{N}\right)\right), \quad j=0,\cdots, N-1.
\end{equation}
Note that
\begin{equation}\label{equ:boundonp1}
    \babs{p_j(\lambda_{k})}\leqs \babs{\eta}\left(1+\sqrt{\frac{\eta}{\beta}}\right), \quad j=0, \cdots, N-1.
\end{equation}
The following results hold.
\begin{theorem}\label{thm:eigenvectorstability1}
    For $\widehat T_{N}^{(\eta, \beta)}$ defined by (\ref{equ:toeplitzallperturbed1}) and satisfying (\ref{equ:eigenvaluestability1}) with $\epsilon\leqs \min(\babs{\eta}, \babs{\beta})$ and its eigenvalues $\widehat\lambda_k=\lambda_{k}+\epsilon_k, k=2,\cdots, N$ defined by (\ref{equ:eigenvaluestability3}), the corresponding eigenvectors are given by
    \begin{equation}\label{equ:eigenvectorperturb1}
        \begin{aligned}
            {\widehat{\bm x}}_k = & \left(p_{0}(\lambda_{k})+\delta_0(\lambda_k),s \left(p_{1}(\lambda_{k})+\delta_{1}(\lambda_k)\right), s^2\left(p_2(\lambda_k)+\delta_{2}(\lambda_k)\right),  \cdots,\right. \\
                                    & \qquad \left. s^{N-1} \left(p_{N-1}(\lambda_k)+\delta_{N-1}(\lambda_k)\right) \right)^{\top},\quad k=2,\cdots, N,
        \end{aligned}
    \end{equation}
    where $s=\sqrt{\frac{\eta}{\beta}}$ and $p_{j}(\lambda_k)$ is defined in (\ref{equ:defiofpj1}). Moreover, we have
    \begin{equation}\label{equ:eigenvectorstability1}
        \babs{ (-s)^j p_{j}(\lambda_k)}\leqs \left(\frac{\eta}{\beta}\right)^{\frac{j}{2}}\babs{\eta}\left(1+\sqrt{\frac{\eta}{\beta}}\right), \quad j=0,1,\cdots, N-1,
    \end{equation}
    and
    \begin{equation}\label{equ:eigenvectorstability2}
        \babs{s^{j}\delta_j(\lambda_k)}\leqs \zeta_{k,j}  \epsilon,\quad j=0,1,\cdots, N-1,
    \end{equation}
    where
    \[
        \zeta_{k,j} = \left(\sqrt{\frac{\eta}{\beta}}\left(\frac{\babs{\beta}(\babs{\eta}+\epsilon)}{(\babs{\beta}-\epsilon)|\eta|}\right)\right)^j\left(a_+r_{k,+}^{j}(\eta, \beta, \epsilon) +a_-r_{k,-}^j(\eta, \beta, \epsilon) -\zeta\right)
    \]
    with
    \begin{equation}\label{equ:rpmformula1}
        \begin{aligned}
              & r_{k,\pm}(\eta, \beta, \epsilon)                                                                                                                                                                                                                          \\
            = & \left(\babs{\cos\left(\frac{(k-1)\pi}{N}\right)}+\frac{C_2(\eta, \beta, \epsilon)\epsilon}{\sqrt{\eta \beta }}\right)\pm\sqrt{\left(\babs{\cos\left(\frac{(k-1)\pi}{N}\right)}+\frac{C_2(\eta, \beta, \epsilon)\epsilon}{\sqrt{\eta \beta }}\right)^2+1},
        \end{aligned}
    \end{equation}
    $C_2(\eta, \beta, \epsilon) = \frac{\babs{\beta}+\babs{\eta}+\epsilon}{2\sqrt{\beta \eta}}+1$, and $a_+, a_-, \zeta$ being bounded constants. In particular, for those indices $k$ such that
    \begin{equation}\label{equ:eigenvectorstability3}
        \babs{\sqrt{\frac{\eta}{\beta}}\left(\frac{\babs{\beta}(\babs{\eta}+\epsilon)}{(\babs{\beta}-\epsilon)|\eta|}\right)r_{k,+}}<1,
    \end{equation}
    the corresponding eigenvector still has an exponential decay.

    As a result, there exists a constant $c$ such that if $\sqrt{\eta/\beta}<\sqrt{2}-1$ and $\epsilon< \frac{c}{N^2}$, then we still have an exponential decay for all the corresponding eigenvectors ${\widehat{\bm x}}_k, 2\leqs k \leqs N,$ of $\widehat T_{N}^{(a, b)}$. Furthermore, if we require $\sqrt{\eta/\beta}$ to be even smaller, then this exponential decay will remain for even larger values of $\varepsilon$.
\end{theorem}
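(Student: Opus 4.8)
The plan is to work directly with the three-term recurrence satisfied by an eigenvector of $\widehat T_N^{(\eta,\beta)}$ and to separate the known unperturbed profile from a perturbation-driven correction. Writing an eigenvector $\widehat{\bm x}_k$ for the eigenvalue $\widehat\lambda_k$ of the matrix in \eqref{equ:toeplitzallperturbed1}, the interior rows read
\[
(\eta+\epsilon_{\eta,j})x_{j-1}+(\alpha+\epsilon_{\alpha,j}-\widehat\lambda_k)x_j+(\beta+\epsilon_{\beta,j})x_{j+1}=0,
\]
with the two corner rows ($a=\eta$, $b=\beta$) supplying the boundary data. First I would peel off the geometric scale by writing the $(j+1)$-th component as $s^{\,j}z_j$ with $s=\sqrt{\eta/\beta}$; in the unperturbed limit this symmetrises the recurrence and identifies $z_j$ with $p_j(\lambda_k)$ from \eqref{equ:defiofpj1} and \cref{lemma: spectrum perturbed toeplitz}. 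Setting $z_j=p_j(\lambda_k)+\delta_j(\lambda_k)$ and using that $p_j$ already solves the unperturbed recurrence together with $\widehat\lambda_k=\lambda_k+\epsilon_k$, $|\epsilon_k|\le C_1\epsilon$ (from \cref{thm:eigenvaluestability0} and \eqref{equ:eigenvaluestability3}), the correction $\delta_j$ satisfies an inhomogeneous version of the same recurrence whose forcing is $O(\epsilon)$ times the bounded quantities $p_j$, $\delta_j$ and $\epsilon_k$. This reduces the theorem to a linear, weakly forced difference equation, and simultaneously establishes the form \eqref{equ:eigenvectorperturb1}.

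The bound \eqref{equ:eigenvectorstability1} is then immediate: it is exactly \eqref{equ:boundonp1} multiplied by $|{-s}|^{\,j}=(\eta/\beta)^{j/2}$, so no further work is needed there.

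The core of the argument is the estimate \eqref{equ:eigenvectorstability2} on $\delta_j$. Taking absolute values in the inhomogeneous recurrence and applying the triangle inequality produces a scalar majorising recurrence of the form $|\delta_{j+1}|\le 2c_k|\delta_j|+|\delta_{j-1}|+O(\epsilon)$, where $c_k=|\cos((k-1)\pi/N)|+C_2(\eta,\beta,\epsilon)\epsilon/\sqrt{\eta\beta}$ absorbs both the unperturbed recurrence coefficient and the $\epsilon$-shifts of $\lambda_k$ and of the entries. Its homogeneous solutions grow and decay like $r_{k,\pm}^{\,j}$, where $r_{k,\pm}=c_k\pm\sqrt{c_k^2+1}$ are precisely the roots in \eqref{equ:rpmformula1}; note $r_{k,+}r_{k,-}=-1$, so the discrete Wronskian stays bounded below and no small denominators appear. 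I would then solve by discrete variation of constants, propagate the $O(\epsilon)$ forcing through the Green's function, account for the per-step amplification $\tfrac{|\beta|(|\eta|+\epsilon)}{(|\beta|-\epsilon)|\eta|}$ coming from the perturbed off-diagonal ratios, and sum the resulting geometric series in $r_{k,\pm}$ to obtain the stated $\zeta_{k,j}\epsilon$. The main obstacle is uniformity: the forcing depends on $\delta_j$ itself, so the bound must be closed self-consistently by a fixed-point/Gronwall step, and the coefficients $a_\pm,\zeta$ must be shown bounded independently of $k$ and $N$.

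Finally, the decay statements. The $(j+1)$-th eigenvector component is $s^{\,j}(p_j+\delta_j)$, and combining \eqref{equ:eigenvectorstability1}--\eqref{equ:eigenvectorstability2} bounds its modulus by a constant times $\bigl[\sqrt{\eta/\beta}\,\tfrac{|\beta|(|\eta|+\epsilon)}{(|\beta|-\epsilon)|\eta|}\,r_{k,+}\bigr]^{\,j}$, so \eqref{equ:eigenvectorstability3} forces geometric decay and hence a genuine skin mode. To get the explicit regime I would specialise: at $\epsilon=0$ one has $r_{k,+}\le 1+\sqrt2$ with equality only when $|\cos((k-1)\pi/N)|=1$, so $\sqrt{\eta/\beta}\,(1+\sqrt2)<1$ is equivalent to $\sqrt{\eta/\beta}<\sqrt2-1$ and leaves a fixed positive margin $1-\sqrt{\eta/\beta}(1+\sqrt2)$. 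Switching on $\epsilon$ perturbs $r_{k,+}$ and the prefactor by $O(\epsilon)$ per site, so over $N$ sites the amplification is $(1+O(\epsilon))^N$; the quadratic threshold $\epsilon<c/N^2$ is dictated by the $O(1/N^2)$ spacing of the $\lambda_k$ near the spectral edge, where $\cos((k-1)\pi/N)-\cos(k\pi/N)\sim N^{-2}$ and the perturbation must not close this gap. Under this choice the strict inequality \eqref{equ:eigenvectorstability3} persists uniformly for $2\le k\le N$, and a smaller $\sqrt{\eta/\beta}$ enlarges the margin and thus tolerates larger $\epsilon$, giving the final assertion.
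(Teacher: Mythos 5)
Your proposal is correct and takes essentially the same route as the paper's proof (which the survey defers to the cited stability paper): peel off the geometric factor $s^j$, split the eigenvector into the unperturbed profile $p_j(\lambda_k)$ plus a correction $\delta_j$, majorise the resulting inhomogeneous three-term recurrence by the constant-coefficient one whose roots are exactly $r_{k,\pm}$ in \eqref{equ:rpmformula1}, and propagate the $O(\epsilon)$ forcing by discrete variation of constants together with the per-step off-diagonal amplification $\frac{|\beta|(|\eta|+\epsilon)}{(|\beta|-\epsilon)|\eta|}$ — this is precisely the origin of the stated $\zeta_{k,j}$ and of condition \eqref{equ:eigenvectorstability3}. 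The one imprecision is your reading of the threshold $\epsilon<c/N^2$: it is not dictated by the inter-eigenvalue spacing but by keeping the perturbed recurrence coefficient $|\cos((k-1)\pi/N)|+C_2(\eta,\beta,\epsilon)\epsilon/\sqrt{\eta\beta}$ below $1$ (equivalently $r_{k,+}\leq 1+\sqrt{2}$) for the extreme indices, whose distance to the band edge is $1-\cos(\pi/N)\sim N^{-2}$; since your margin argument in any case establishes the conclusion under the weaker hypothesis that $\epsilon$ is less than a fixed constant, this does not affect the validity of your proof of the stated (sufficiency) claim.
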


Before applying Theorem \ref{thm:eigenvectorstability1}  to show the stability of the skin effect, we illustrate numerically the results stated there. In particular, we consider typical values in physical applications.  We let $\eta=0.15$, $\beta=3.15$ (which corresponds to $\ell=s=1$ and $\gamma=3$) and $\epsilon$ satisfying (\ref{equ:eigenvectorstability3}). The results are presented in \cref{fig: validation theoreical result}, where we show the eigenvectors of a  $50 \times 50$ matrix on a logarithmic axis. If the perturbations are sufficiently small that the condition \eqref{equ:eigenvectorstability3} is satisfied, then the eigenvectors still all have the $(\sqrt{\beta/\eta})$ decay rate. However, when the perturbations are large enough that condition \eqref{equ:eigenvectorstability3} does not hold for some indices,  the corresponding eigenmodes have a much lower decay rate.

\begin{figure}[h]
    \centering
    \begin{subfigure}[t]{0.48\textwidth}
        \centering
        \includegraphics[height=0.76\textwidth]{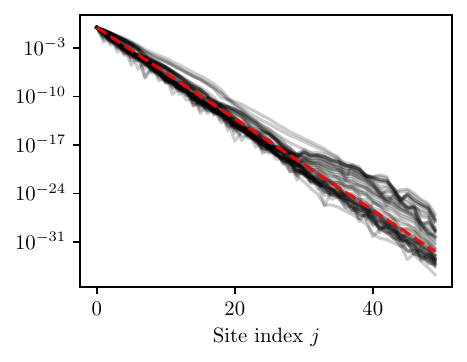}
        \caption{Exponential decay of the eigenvectors for $\epsilon$ satisfying \eqref{equ:eigenvectorstability3}. The eigenvectors superimposed on one another on a semi-log plot. The red dashed line represents $(\sqrt{\beta/\eta})^j$. We observe the same decay rate as the unperturbed case.}
        \label{fig: exponential decay small eps}
    \end{subfigure}\hfill
    \begin{subfigure}[t]{0.48\textwidth}
        \includegraphics[height=0.75\textwidth]{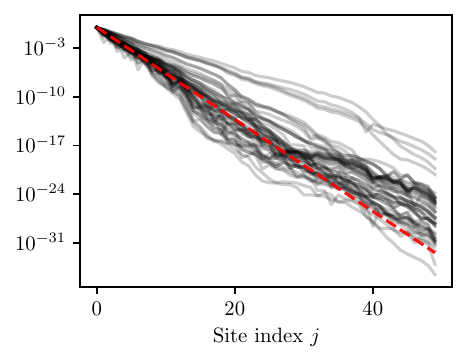}
        \caption{Decay of the eigenvectors for $\epsilon$ \emph{not} satisfying \eqref{equ:eigenvectorstability3}. The eigenvectors superimposed on one another on a semi-log plot. The red dashed line represents $(\sqrt{\beta/\eta})^j$. We observe several eigenvectors with lower decay rate than the one in Figure \ref{fig: exponential decay small eps}.}
        \label{fig: exponential decay big eps}
    \end{subfigure}
    \caption{Numerical illustration of the stability of the eigenvector decay rate predicted by \cref{thm:eigenvectorstability1}. The Toeplitz matrix has coefficients $\eta=0.15$, $\beta=3.15$ and is of size $50\times 50$.}
    \label{fig: validation theoreical result}
\end{figure}

Now, since a perturbation of order $\epsilon$ to the spacings $\{s_i\}$ between the subwavelength resonators or the coefficient $\gamma$ results in an $\BO(\epsilon)$ perturbation in the nonzero entries of the gauge capacitance matrix $\capmatg$, Theorem \ref{thm:eigenvectorstability1} can be immediately applied to $\hat \capmat^{\gamma}$ to obtain the stability of the eigenvectors of $\capmatg$ and the skin effect in the structure. We present below a corollary for a simplified stability estimate on the spectrum of $\capmatg$.

\begin{corollary}\label{cor:stabilityofcapacitancematrix1}
    Let $\widehat \capmat^{\gamma}$ be defined by (\ref{equ:toeplitzallperturbed1}).  Assume that (\ref{equ:eigenvaluestability1}) holds with $\epsilon\leqs \min(\babs{\eta}, \babs{\beta})$. Then the  normalised eigenvector ${\widehat {\bm x}}_k$ corresponding to
    the eigenvalue $\widehat\lambda_k=\lambda_{k}+\epsilon_k, k=2,\cdots, N$, defined by (\ref{equ:eigenvaluestability3}) is given by
    \[
        {\widehat {\bm x}}_k = \bm x_k +  \bm \Delta_k, \quad k=2,\cdots, N,
    \]
    where $\bm x_k$ is the eigenvector of $\capmatg$ and $\vect \Delta_k$ satisfies
    \[
        \bnorm{\vect \Delta_k}_{\infty}\leq C_1 \epsilon,
    \]
   for some constant $C_1$ independent of $\epsilon$, when
    \[
        \babs{\sqrt{\frac{\eta}{\beta}}\left(\frac{\babs{\beta}(\babs{\eta}+\epsilon)}{(\babs{\beta}-\epsilon)|\eta|}\right)r_{k,+}(\eta, \beta, \epsilon)}<1,
    \]
    with $r_{k,+}(\eta, \beta, \epsilon)$ being given by (\ref{equ:rpmformula1}). In particular, ${\widehat {\bm x}}_k$ still has an exponential decay.
\end{corollary}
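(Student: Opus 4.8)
The plan is to obtain the corollary as a bookkeeping consequence of Theorem~\ref{thm:eigenvectorstability1}, which already carries all the hard analytic work; what is left is to split the perturbed eigenvector into an unperturbed part plus a controlled remainder and then to pass to the normalisation demanded by the statement. Since in the monomer case $\capmatg = T_N^{(\eta,\beta)}$ and $\widehat\capmat^{\gamma} = \widehat T_N^{(\eta,\beta)}$, the two eigenvector families are directly comparable.

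First I would invoke Theorem~\ref{thm:eigenvectorstability1} to write the eigenvector $\widehat{\bm x}_k$ of $\widehat\capmat^\gamma$ componentwise as $s^{j}\bigl(p_j(\lambda_k)+\delta_j(\lambda_k)\bigr)$ for $0\leq j\leq N-1$, and identify $\bm x_k$, the eigenvector of the unperturbed matrix $\capmatg$ supplied by Lemma~\ref{lemma: spectrum perturbed toeplitz} (with $a=\eta$), with the vector of components $s^{j}p_j(\lambda_k)$; indeed, setting $\epsilon=0$ in the formula of Theorem~\ref{thm:eigenvectorstability1} kills all the $\delta_j$ and recovers exactly this vector. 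The difference $\bm\Delta_k\coloneqq\widehat{\bm x}_k-\bm x_k$ then has $j$-th component $s^{j}\delta_j(\lambda_k)$, which by \eqref{equ:eigenvectorstability2} satisfies $\lvert s^{j}\delta_j(\lambda_k)\rvert\leq \zeta_{k,j}\,\epsilon$.

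Next I would bound $\bnorm{\bm\Delta_k}_{\infty}=\max_j\lvert s^{j}\delta_j(\lambda_k)\rvert\leq \epsilon\max_j\zeta_{k,j}$ uniformly in the component index. Here the hypothesis $\lvert\sqrt{\eta/\beta}\,(\tfrac{\lvert\beta\rvert(\lvert\eta\rvert+\epsilon)}{(\lvert\beta\rvert-\epsilon)\lvert\eta\rvert})\,r_{k,+}\rvert<1$ is precisely what is needed: inspecting $\zeta_{k,j}$, the only potentially growing factor is $\bigl(\sqrt{\eta/\beta}\,\tfrac{\lvert\beta\rvert(\lvert\eta\rvert+\epsilon)}{(\lvert\beta\rvert-\epsilon)\lvert\eta\rvert}\,r_{k,+}\bigr)^{j}a_+$, and the condition forces the base of this geometric factor below one, so the sequence $\{\zeta_{k,j}\}_j$ is bounded by a constant independent of $j$ (the $r_{k,-}$ contribution is smaller still, $r_{k,-}$ being the subdominant root, and is likewise controlled). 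Since $r_{k,\pm}(\eta,\beta,\epsilon)$ stays bounded over the admissible range $\epsilon\leq\min(\lvert\eta\rvert,\lvert\beta\rvert)$, this yields $\bnorm{\bm\Delta_k}_{\infty}\leq C_1\epsilon$ with $C_1$ independent of $\epsilon$.

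The step I expect to require the most care is the normalisation: Theorem~\ref{thm:eigenvectorstability1} furnishes the eigenvectors in a fixed, unnormalised scaling, whereas the corollary asserts the estimate for the \emph{normalised} eigenvectors. To close this gap I would exploit the exponential decay of $\bm x_k$ established in \eqref{equ:eigenvectorstability1}: its leading components are $\BO(1)$ while the tail decays geometrically, so $\btwonorm{\bm x_k}$ is bounded below by a positive constant uniformly in $N$, and the $\BO(\epsilon)$ remainder shifts the norm by at most $\BO(\epsilon)$. A first-order expansion of $\widehat{\bm x}_k/\btwonorm{\widehat{\bm x}_k}$ about $\bm x_k/\btwonorm{\bm x_k}$ then transfers the $\BO(\epsilon)$ bound to the normalised pair, and the persistence of exponential decay follows by combining the decay of $\bm x_k$ with the uniform smallness of $\bm\Delta_k$.
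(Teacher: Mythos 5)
Your main line is exactly the paper's route: the paper offers no separate argument for this corollary, presenting it as an immediate application of \cref{thm:eigenvectorstability1}, and your decomposition is the intended one. Identifying $\bm x_k$ with the vector of entries $s^jp_j(\lambda_k)$ (which is the eigenvector of Lemma \ref{lemma: spectrum perturbed toeplitz} with $a=\eta$, $s=\sqrt{\eta/\beta}$) and setting $\bm\Delta_k^{(j+1)}=s^j\delta_j(\lambda_k)$, the bound \eqref{equ:eigenvectorstability2} together with your observation that the hypothesis forces the dominant geometric base below one --- and hence also controls the $r_{k,-}$ and $\zeta$ contributions, since $r_{k,+}r_{k,-}=-1$ gives $\babs{r_{k,-}}=1/r_{k,+}\leq 1\leq r_{k,+}$ --- yields $\sup_j \zeta_{k,j}\leq \babs{a_+}+\babs{a_-}+\babs{\zeta}$ and so $\bnorm{\bm\Delta_k}_\infty\leq C_1\epsilon$. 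One small imprecision: the final exponential decay of $\widehat{\bm x}_k$ does not follow from the decay of $\bm x_k$ plus ``uniform smallness'' of $\bm\Delta_k$ alone; it uses the \emph{componentwise} bound $\babs{s^j\delta_j(\lambda_k)}\leq\zeta_{k,j}\epsilon$, whose right-hand side itself decays geometrically in $j$ under the hypothesis.

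The genuine flaw is in your normalisation paragraph: the claim that $\btwonorm{\bm x_k}$ is bounded below by a positive constant uniformly in $N$ is false for indices near the spectral edges. For $k=2$ one has $p_j(\lambda_2)=\eta\bigl(\sin\bigl(\tfrac{(j+1)\pi}{N}\bigr)-s\sin\bigl(\tfrac{j\pi}{N}\bigr)\bigr)=\BO\bigl(\tfrac{j+1}{N}\bigr)$, so every entry $s^jp_j(\lambda_2)$ carries a factor $1/N$ and $\btwonorm{\bm x_2}=\BO(1/N)$; such $k$ are not excluded by the hypothesis (e.g.\ when $\sqrt{\eta/\beta}<\sqrt{2}-1$ the condition holds for all $k$). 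This does not sink the corollary as literally stated, because $C_1$ is only required to be independent of $\epsilon$: for each fixed $(N,k)$ the unperturbed eigenvector is nonzero, so your first-order expansion still produces an $\epsilon$-free constant, albeit one that may grow like $N$ for edge indices. But it does mean your argument cannot deliver the stronger, $N$-uniform statement it implicitly aims at for unit-norm eigenvectors. The cleanest reading --- and most plausibly the paper's intent --- is that ``normalised'' refers to the fixed scaling \eqref{equ:eigenvectorperturb1} of \cref{thm:eigenvectorstability1}, in which case $\bm\Delta_k$ is literally the vector with entries $s^j\delta_j(\lambda_k)$ and no renormalisation step is needed at all.
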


It is important to notice that the asymptotic approximations of the resonant eigenfrequencies and eigenmodes through the spectrum of the gauge capacitance matrix hold also for the perturbed structure. The stability of the spectrum of the gauge capacitance matrix $C^{\gamma}$ implies the stability of the skin effect in the subwavelength regime. In particular, we have the following corollary on the stability of the eigenfrequencies and eigenmodes of the system (\ref{eq: gen Strum-Liouville}).

\begin{corollary}\label{cor:systemstability1}
Let \begin{equation}\label{equ:hatepsilon1}
\max_{j=1,\cdots, N}(|\epsilon_{\alpha, j}|, |\epsilon_{\beta, j}|, |\epsilon_{\eta, j}|)\leqs \frac{\ell\gamma}{s} \coth(\frac{\gamma \ell}{2})\epsilon : = \hat \epsilon,
\end{equation}
and 
\begin{equation}\label{equ:eigenvectorequ1}
	\bm x_k^{(j)}=\left(\frac{\eta}{\beta}\right)^{\frac{j-1}{2}}\left(\eta \sin \left(\frac{j(k-1) \pi}{N}\right)-\eta \sqrt{\frac{\eta}{\beta}} \sin \left(\frac{(j-1)(k-1) \pi}{N}\right)\right),\ j=1,\cdots, N.
	\end{equation}

    For a perturbed structure of resulting in \eqref{equ:toeplitzallperturbed1} such that
    \begin{equation}\label{equ:perturbedstructue1}
        s_i = s(1 + \epsilon_{i}), \quad i=1,\cdots, N,
    \end{equation}
    with $|\epsilon_{i}|\leqs \epsilon$ for $\epsilon$ sufficiently small, the  $N$ subwavelength eigenfrequencies $\hat \omega_i$ of (\ref{eq: gen Strum-Liouville}) satisfy, as $\delta\to0$,
    \begin{align*}
        \hat \omega_i =  \omega_i + \BO(\delta+\sqrt{\delta \epsilon}),
    \end{align*}
    where $(\omega_i)_{1\leqs i\leqs N}$ are the eigenfrequencies  for the unperturbed structure. Furthermore, let $\hat u_i$ be a subwavelength eigenmode corresponding to $\hat \omega_i$ and let $\bm a_i$ be the corresponding eigenvector of $\ell\inv\capmatg$. Then, for $\ell \vect a_i, i=2,\cdots, n,$ given by (\ref{equ:eigenvectorequ1}) and those index $i$ satisfying 
    \[
        \babs{\sqrt{\frac{\eta}{\beta}}\left(\frac{\babs{\beta}(\babs{\eta}+\hat \epsilon)}{(\babs{\beta}-\hat \epsilon)|\eta|}\right)r_{i,+}(\eta, \beta , \hat \epsilon)}<1
    \]
    with $r_{i,+}(\eta, \beta , \hat \epsilon)$, $\hat\epsilon$ being given by (\ref{equ:rpmformula1}),(\ref{equ:hatepsilon1}), respectively, we have
    \begin{align*}
        \hat u_i(x) = \sum_j \bm a_i^{(j)}V_j(x) + \BO(\delta+\epsilon), \quad i=2,\cdots, N,
    \end{align*}
    where  the $V_j$'s are defined by \eqref{eq: def cap mat}.
\end{corollary}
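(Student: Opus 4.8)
The plan is to chain together the asymptotic capacitance-matrix characterisation of \Cref{cor: approx via eva eve} with the matrix-level stability estimates of \Cref{thm:eigenvaluestability1thm,thm:eigenvectorstability1,cor:stabilityofcapacitancematrix1}, the only genuinely new ingredient being the careful propagation of a square root. First I would verify that the physical perturbation \eqref{equ:perturbedstructue1} induces a perturbation of the gauge capacitance matrix of exactly the form \eqref{equ:toeplitzallperturbed1}. Substituting $s_i = s(1+\epsilon_i)$ into the explicit entry formulas \eqref{eq: explicit coef cap mat} and Taylor expanding each entry as a function of $s_i$ around $s$, every nonzero entry is perturbed by an amount controlled by $\frac{\ell\gamma}{s}\coth(\gamma\ell/2)\babs{\epsilon_i}$; this yields the bound \eqref{equ:hatepsilon1} with $\hat\epsilon = \frac{\ell\gamma}{s}\coth(\gamma\ell/2)\epsilon$ on the entrywise perturbations $\epsilon_{\alpha,j},\epsilon_{\beta,j},\epsilon_{\eta,j}$.

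With this reduction in hand, I would apply \Cref{thm:eigenvaluestability1thm} to obtain $\hat\lambda_k = \lambda_k + \epsilon_k$ with $\babs{\epsilon_k}\leq C_1(\eta,\beta,\hat\epsilon)\hat\epsilon = \BO(\epsilon)$, so that the eigenvalues of the perturbed gauge capacitance matrix stay within $\BO(\epsilon)$ of the unperturbed ones. To pass to eigenfrequencies, \Cref{cor: approx via eva eve} gives $\omega_i = v_b\sqrt{\delta\lambda_i}+\BO(\delta)$ and likewise $\hat\omega_i = v_b\sqrt{\delta\hat\lambda_i}+\BO(\delta)$. The crucial estimate is then the subadditivity of the square root, $\babs{\sqrt{a}-\sqrt{b}}\leq\sqrt{\babs{a-b}}$ valid for $a,b\geq 0$, which gives
\[
    \babs{\hat\omega_i - \omega_i} \leq v_b\sqrt{\delta}\,\babs{\sqrt{\hat\lambda_i}-\sqrt{\lambda_i}} + \BO(\delta) \leq v_b\sqrt{\delta}\,\sqrt{\babs{\epsilon_i}} + \BO(\delta) = \BO(\sqrt{\delta\epsilon}+\delta).
\]
This explains the appearance of the $\sqrt{\delta\epsilon}$ rate rather than the naive $\sqrt{\delta}\,\epsilon$: it is forced by the non-Lipschitz behaviour of the square root near the origin, which is precisely where the degenerate eigenvalue $\lambda_1 = 0$ sits.

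For the eigenmodes I would invoke \Cref{cor:stabilityofcapacitancematrix1} (under the stated condition on $r_{i,+}$), which gives the normalised perturbed eigenvector $\widehat{\bm x}_i = \bm x_i + \bm\Delta_i$ with $\bnorm{\bm\Delta_i}_\infty = \BO(\epsilon)$. Writing the perturbed eigenmode via \Cref{cor: approx via eva eve} as $\hat u_i(x) = \sum_j \hat{\bm a}_i^{(j)}\widehat V_j(x) + \BO(\delta)$, I would then account for the fact that the harmonic profiles themselves move with the geometry: since each $V_j$ solving \eqref{eq: def Her cap mat} depends Lipschitz-continuously on the resonator endpoints, a perturbation of size $\epsilon$ yields $\widehat V_j = V_j + \BO(\epsilon)$ in the relevant norm. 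Expanding the product and collecting the two $\BO(\epsilon)$ contributions --- from $\bm\Delta_i$ and from $\widehat V_j - V_j$ --- together with the $\BO(\delta)$ remainder gives $\hat u_i(x) = \sum_j \bm a_i^{(j)} V_j(x) + \BO(\delta+\epsilon)$, as claimed.

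The main obstacle I anticipate is ensuring that the $\BO(\delta)$ remainders in \Cref{cor: approx via eva eve} are \emph{uniform} across the whole family of perturbed structures, so that they do not degrade as $\epsilon$ grows; this requires that the implicit constants depend only on the unperturbed geometry and on the fixed bound on $\epsilon$, which in turn rests on the stability of the spectrum of $\capmatg$ away from the degenerate eigenvalue established above. Secondarily, one must confirm that the condition \eqref{equ:eigenvectorstability3} on $r_{i,+}$ is preserved when $\epsilon$ is replaced by $\hat\epsilon$, so that the exponential-decay hypothesis of \Cref{cor:stabilityofcapacitancematrix1} genuinely applies to the indices $i$ singled out in the statement.
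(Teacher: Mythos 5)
Your proposal is correct and follows essentially the same route the paper intends for this corollary: reduce the geometric perturbation \eqref{equ:perturbedstructue1} to an entrywise perturbation of $\capmatg$ bounded by $\hat\epsilon=\frac{\ell\gamma}{s}\coth(\gamma\ell/2)\epsilon$, invoke \cref{thm:eigenvaluestability1thm} and \cref{cor:stabilityofcapacitancematrix1} for the matrix-level stability, and transfer to eigenfrequencies and eigenmodes via \cref{cor: approx via eva eve}. Your explicit use of $\lvert\sqrt{a}-\sqrt{b}\rvert\leq\sqrt{\lvert a-b\rvert}$ to account for the $\sqrt{\delta\epsilon}$ rate, and the Lipschitz dependence of the $V_j$ profiles on the (fixed-$N$) geometry, merely spell out steps the paper leaves implicit.
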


\subsubsection*{Random perturbations of the geometry}
In this subsection, we will provide numerical evidence of the stability of the non-Hermitian skin effect and show how it competes with Anderson-type localisation of the eigenmodes in the bulk when the disorder is large. We can consider perturbations in both the geometry and the local values of the damping parameter (\emph{i.e.}, the imaginary gauge potential). We will present here the case of a perturbed geometry and refer the reader to \cite{ammari.barandun.ea2023Stability} for the other case. For the sake of brevity, we fix the size $\ell$ of the resonators and perturb independently the spacing $s$ between the resonators. The numerical experiments presented here are for the discrete approximation \eqref{equ:toeplitzallperturbed1} and not for the damped wave equation \eqref{eq: gen Strum-Liouville}.

We consider the following perturbation:
\begin{align}
    {s_i} = 1 + \epsilon_i, \qquad \epsilon_i \sim \mathcal{U}_{[-\varepsilon,\varepsilon]}.
\end{align}
Here, we recall that $\mathcal{U}_{[-\varepsilon,\varepsilon]}$ is a uniform distribution with support in $[-\varepsilon,\varepsilon]$. In \cref{fig: condensation and winding}, we study how the eigenmodes of a system of $30$ subwavelength resonators behave as the disorder increases. These results are averaged based on 500 independent realisations. We show the relative proportion of eigenvalues that fall within the region of negative winding of the associated Toeplitz operator, as well as the proportion of eigenmodes accumulating at the left edge (which for this and the following figures has been defined as the number of eigenvectors that attain their maximal values, in absolute terms, in one of the first four resonators). We consider values of the disorder strength that are small enough to ensure that the resonators do not overlap. Both these quantities are constant for small disorder strengths then decrease once the disorder strength passes a certain threshold (as predicted by Theorem~\ref{thm:eigenvectorstability1}). The intersection of these two sets is also shown.

One notices very similar trends in the three lines in \cref{fig: condensation and winding}, with small differences due to the imperfect formulation of the accumulation measure and the perturbations. On the other hand, \cref{fig: localisation} shows the localisation of the eigenvectors of $\hat{\capmat}^\gamma$ from \cref{equ:toeplitzallperturbed1} for different disorder strengths. 
The localisation of the eigenvectors is measured using the quantity $\Vert v_i\Vert_\infty / \Vert v_i\Vert_2$ and the different lines correspond to different disorder strengths $\varepsilon$. We notice that the lines are indistinguishable, indicating that the localisation of the eigenvectors is independent of any random perturbation of the positions of the resonators.

\cref{fig: phase change and topological protection} shows similar stability properties  as those in \cref{fig: condensation and winding}, but here the relative number of eigenvalues falling within the region with negative winding is plotted for different values of $\varepsilon$ and $\gamma$. On the left side of the figure, we see the topologically protected region: for these values of $\gamma$ any small perturbation size $\varepsilon$ will not cause any eigenvalue to exit the region and thus the corresponding eigenvector remains accumulated at the left edge of the structure.

In \Cref{fig: IPR geometry}, the localisation of the eigenvectors is measured using the inverse participation ratio (IPR) defined as $\Vert v_i\Vert_4 / \Vert v_i\Vert_2$. IPR is commonly used to
distinguish Anderson localised and extended eigenmodes \cite{murphy.wortis.ea2011Generalized,lyra.mayboroda.ea2015Dual}. One notices that the results in \Cref{fig: IPR geometry} are similar to those in \cref{fig: localisation}. 

The results in \cref{fig: geometry perturbation} show how the proportion of eigenvectors localised to the left edge of the system decreases as the disorder increases. Studying in the eigenvectors themselves, as shown in \cref{fig: condensed_single_realisations} for three different values of the disorder strength, we see that increasing disorder means an increasing number of eigenvectors that are localised in the bulk rather than on the left edge. This behaviour is typical of Anderson-type localisation in disordered systems and demonstrates the internal competition between the skin effect and Anderson localisation.

\begin{figure}[h]
    \centering
    \begin{subfigure}[t]{0.45\textwidth}
    \centering
\includegraphics[height=0.75\textwidth]{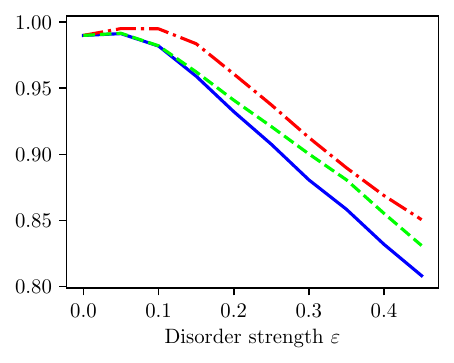}
    \caption{Eigenmode accumulation at one edge and topological winding. The green dashed line shows the average proportion of eigenvectors which are localised at the left edge. The red dash-dot line shows the average proportion of eigenvalues that lay in the topologically protected region. The blue solid line shows the proportion of eigenpairs that have \emph{both} eigenvalues in the topologically protected region and eigenvectors accumulated on the left edge.}
    \label{fig: condensation and winding}
    \end{subfigure}\hfill
    \begin{subfigure}[t]{0.45\textwidth}
    \centering
    \includegraphics[height=0.76\textwidth]{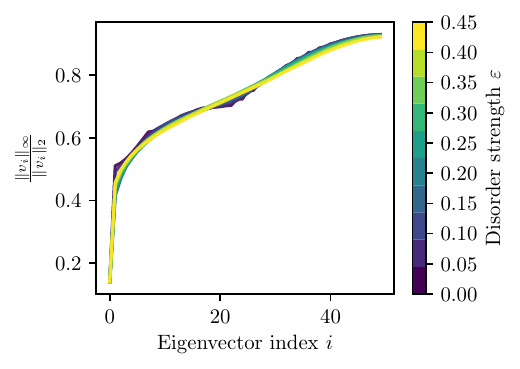}
    \caption{Eigenmode localisation. Each line shows the average eigenmode localisation for a different value of the disorder strength $\varepsilon$. For small $\epsilon$, the localisation is due to the skin effect while, for large $\epsilon$, it is consequence of the Anderson localisation. As the lines are indistinguishable we conclude that the eigenmode localisation is independent of disorder strength; as $\varepsilon$ increases, modes might be localised in the bulk but will not become delocalised.}
    \label{fig: localisation}
    \end{subfigure}
    \\[5mm]
    \begin{subfigure}[t]{0.45\textwidth}
    \centering
    \includegraphics[height=0.76\textwidth]{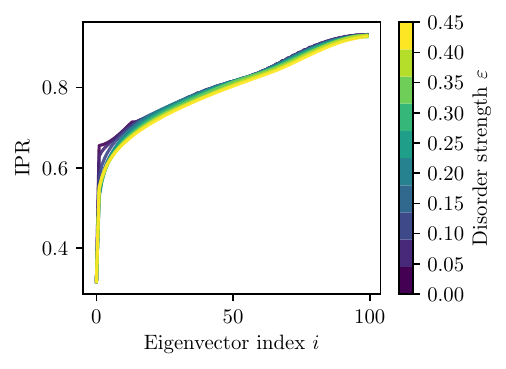}
    \caption{Similar plot as in \cref{fig: localisation} but using IPR as localisation measure. No significant difference is noticed.}
    \label{fig: IPR geometry}
    \end{subfigure}\hfill
    \begin{subfigure}[t]{0.45\textwidth}
    \includegraphics[height=0.75\textwidth]{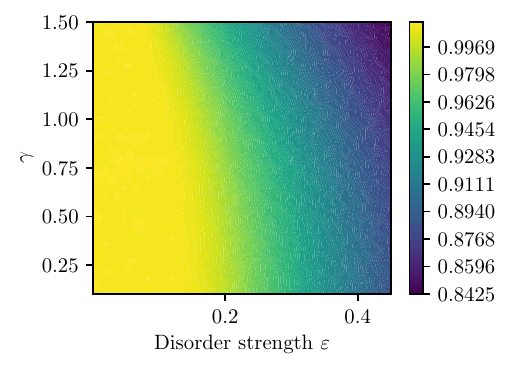}
        \caption{Phase change and topological protection. The color scale shows the average proportion of eigenvalues that lay in the topologically protected region for different values of $\gamma$. The left yellow zone is the stability region.}
    \label{fig: phase change and topological protection}
    \end{subfigure}
    \caption{Competition between the non-Hermitian skin effect and Anderson localisation when perturbing the geometry. The non-Hermitian skin effect shows stability with respect to random perturbations. Outside of the stability region, there is competition with Anderson localisation. Averages are computed over $500$ runs for a system of $50$ resonators with $\ell=s=1$.}
    \label{fig: geometry perturbation}
\end{figure}

\begin{figure}
    \centering
    \begin{subfigure}[t]{0.32\textwidth}
    \centering
    \includegraphics[height=0.76\textwidth]{fig/condensed_eigenvectors_perturbed_s_ep01.pdf}
    \caption{Single realisation with disorder strength $\varepsilon=0.1$. All eigenmodes are accumulated on the left edge.}
    \label{fig: condensed_single_realisation1}
    \end{subfigure}\hfill
    \begin{subfigure}[t]{0.32\textwidth}
    \centering
    \includegraphics[height=0.76\textwidth]{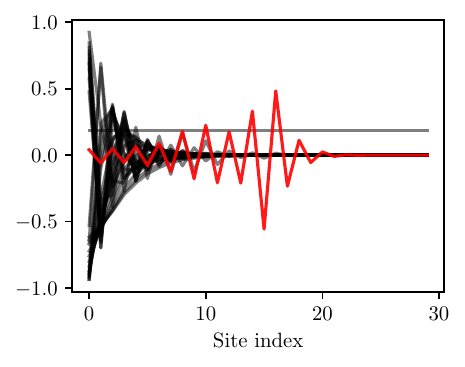}
    \caption{Single realisation with disorder strength $\varepsilon=0.2$. One eigenmode localised in the bulk is highlighted in red.}
    \label{fig: condensed_single_realisation2}
    \end{subfigure}\hfill
    \begin{subfigure}[t]{0.32\textwidth}
    \centering
    \includegraphics[height=0.76\textwidth]{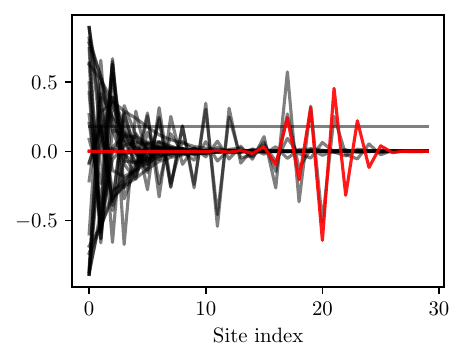}
    \caption{Single realisation with disorder strength $\varepsilon=0.4$. One eigenmode localised in the bulk is highlighted in red.}
    \label{fig: condensed_single_realisation4}
    \end{subfigure}\\[2mm]
    \begin{subfigure}[t]{0.64\textwidth}
    \centering
    \includegraphics[height=0.38\textwidth]{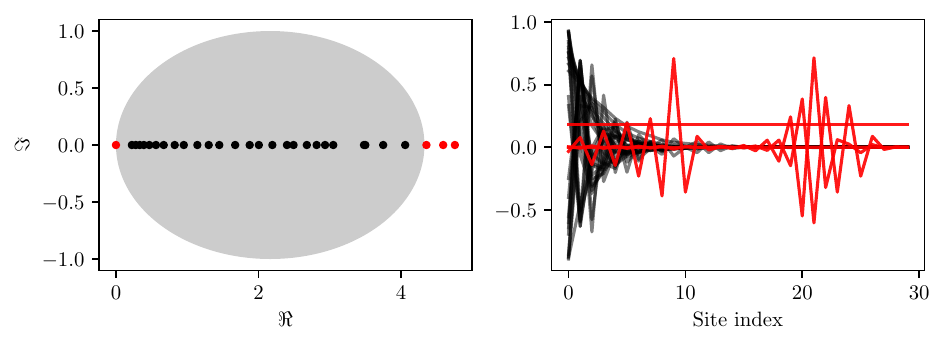}
    \caption{Relation between the eigenvalues in the topologically protected region and the condensation of eigenvectors in a single realisation with $\epsilon=0.4$. Eigenvalues laying outside the protected region and the corresponding eigenvectors are shown in red.}
    \label{fig: condensed_relation}
    \end{subfigure}

    \caption{Stability of the non-Hermitian skin effect under perturbations of the geometry. Eigenmode condensation on the left edge of the structure with some eigenmodes localised in the bulk. Single realisations with  $N=30$, $s=\ell=1$, and $\varepsilon=0.1,0.2,0.4,0.4$ for \cref{fig: condensed_single_realisation1}, \ref{fig: condensed_single_realisation2}, \ref{fig: condensed_single_realisation4} and \ref{fig: condensed_relation},  respectively. This should be compared with \cref{fig: superimposed eigenvectors}, where there is no disorder.}
    \label{fig: condensed_single_realisations}
\end{figure}

\subsection{Topological origin of the skin effect}\label{sec:topologicalorigin}
In this section, employing the Toeplitz theory in Section \ref{section:ktoeplitztheory}, we elucidate the topological origin of the skin effect in the polymer systems of subwavelength resonators which includes the monomer and dimer systems as special cases. For the semi-infinite polymer systems of subwavelength resonators, we obtain a perturbed tridiagonal $k$-Toeplitz operator
\[
    T^{a}(f) = \begin{pmatrix}
        \alpha_1 + a & \beta_1  &                                                           \\
        \eta_1       & \alpha_2 & \beta_2    &                                              \\
                     & \ddots   & \ddots     & \ddots                                       \\
                     &          & \eta_{k-1} & \alpha_{k} & \beta_{k}                       \\
                     &          &            & \eta_{k}   & \alpha_{1} & \beta_{1}          \\
                     &          &            &            & \ddots     & \ddots    & \ddots
    \end{pmatrix}
\]
with $f$ being the symbol defined in (\ref{eq: symbol tridiagonal operator}).

It has been observed in \cite{ammari.barandun.ea2023Perturbed} that the exponential decay of the eigenvectors of $k$-Toeplitz matrices for $k \geq 2$ is due to the winding of the eigenvalues of the symbol being nontrivial. Now, by the spectral theory for the tridiagonal $k$-Toeplitz operators introduced in Section \ref{section:ktoeplitztheory}, we have the following results for the topological origin of the skin effect in subwavelength resonator systems, validating the observation made in \cite{ammari.barandun.ea2023Perturbed}.
\begin{theorem}\label{thm: exponential_decay_capacitance_operators}
    Suppose $\Pi_{j=1}^k \eta_j\neq 0$ and $\Pi_{j=1}^k \beta_j\neq 0$. Let $f(z) \in \mathbb{C}^{k\times k}$ be the symbol (\ref{eq: symbol tridiagonal operator}) and let $\lambda \in \C\setminus\sigma_{\mathrm{ess}}(T^{a}(f))$. If $\sum_{j=1}^k\operatorname{wind}(\lambda_j, \lambda)<0$, then there exists an eigenvector $\bm x$ of $T^{a}(f)$ associated to $\lambda$ and some $\rho<1$ such that
    \begin{equation}\label{eq: bound on eigenvectorN}
        \frac{\lvert \bm x_j\rvert}{\max_{i}\lvert \bm x_i\rvert} \leq C_1 \lceil j/k\rceil \rho^{\lceil j/k\rceil-1},
    \end{equation}
    where $C_1>0$ is a constant depending only on $\lambda, \alpha_j, \beta_j, \eta_j, j=1,\cdots, k$. If $\sum_{j=1}^k\operatorname{wind}(\lambda_j, \lambda)>0$, then the above results hold for the left eigenvectors.
\end{theorem}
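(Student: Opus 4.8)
The plan is to treat $T^{a}(f)$ as a rank-one corner perturbation of the tridiagonal $k$-Toeplitz operator $T(f)$ and to reduce the statement to \Cref{thm: exponential_decay_k_operators}. Writing $T^{a}(f) = T(f) + a\,e_1 e_1^{\top}$ with $e_1$ the first coordinate vector, the perturbation is finite-rank, hence compact, so Weyl's theorem on the invariance of the essential spectrum together with \Cref{thm: essential spectrum} gives $\sigma_{\mathrm{ess}}(T^{a}(f)) = \sigma_{\mathrm{ess}}(T(f)) = \sigma_{\mathrm{det}}(f)$. Since the symbol \eqref{eq: symbol tridiagonal operator} records only the bulk periodic structure, $T^{a}(f)$ and $T(f)$ share the \emph{same} symbol $f$, hence the same eigenvalue curves $\lambda_j(\mathbb{T})$; in particular the hypothesis $\lambda\in\C\setminus\sigma_{\mathrm{ess}}(T^{a}(f))$ coincides with $\lambda\notin\sigma_{\mathrm{det}}(f)$ and the winding quantities $\operatorname{wind}(\lambda_j,\lambda)$ in the statement are exactly those of the unperturbed symbol.

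First I would write the eigenvalue equation $(T^{a}(f)-\lambda)\bm x = 0$ as a second-order $k$-periodic three-term recurrence. The structural observation is that the scalar $a$ enters only the very first equation $(\alpha_1+a-\lambda)\bm x_1 + \beta_1\bm x_2 = 0$, while every equation with index $\geq 2$ reproduces the bulk recurrence of $T(f)$ verbatim. Consequently the two-dimensional solution space of the bulk recurrence, its transfer-matrix (Floquet) decomposition, and the splitting into decaying and growing modes as the block index $\lceil j/k\rceil\to\infty$ are exactly those used in the proof of \Cref{thm: exponential_decay_k_operators}. Here $\det(f(z)-\lambda)=c_{+}z+c_{-}z^{-1}+g(\lambda)$ has, by $\prod\eta_j\neq 0$ and $\prod\beta_j\neq 0$, exactly two nonzero roots $z_1,z_2$, the decaying modes being $z_i^{-\lceil j/k\rceil}$ for those roots with $|z_i|>1$. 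The hypothesis $\sum_{j=1}^{k}\operatorname{wind}(\lambda_j,\lambda)<0$ forces, via \eqref{equ:windspectraformula3} and the argument principle, both roots outside $\mathbb{T}^1$, so the number $d$ of independent decaying modes equals $2$.

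With this the eigenvector is a dimension count. The initial data $(\bm x_1,\bm x_2)$ range over $\C^2$, the $\ell^2$ requirement restricts them to the $d$-dimensional decaying subspace, and the single modified boundary equation at index $1$ imposes one scalar linear constraint; since $d=2>1$ the homogeneous system has a nontrivial solution, yielding a nonzero eigenvector $\bm x\in\ell^2$. Setting $\rho=\max_i|z_i|^{-1}<1$ over the roots outside the disk and expanding $\bm x$ in the modes $z_i^{-\lceil j/k\rceil}$ (together with the generalized mode $\lceil j/k\rceil\,z_i^{-\lceil j/k\rceil}$ when the two multipliers coincide) yields the bound \eqref{eq: bound on eigenvectorN}, with $C_1$ depending only on $\lambda$ and the entries $\alpha_p,\beta_p,\eta_p$, precisely as in \eqref{eq: bound on eigenvector2}. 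The case $\sum_{j=1}^{k}\operatorname{wind}(\lambda_j,\lambda)>0$ follows by passing to the adjoint: $T^{a}(f)^{*}$ is again a corner-perturbed tridiagonal $k$-Toeplitz operator whose symbol traces the conjugated eigenvalue curves with reversed orientation, so positive winding for $f$ at $\lambda$ becomes negative winding for the adjoint symbol at $\bar\lambda$; applying the first part gives a decaying $\ell^2$ eigenvector of $T^{a}(f)^{*}$, i.e.\ a decaying left eigenvector of $T^{a}(f)$.

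The main obstacle is verifying, in the second step, that the corner perturbation does not spoil the solvability of the boundary-matching system. What must be checked is that replacing $\alpha_1$ by $\alpha_1+a$ alters only the \emph{coefficients} of the single boundary equation, not the number of boundary equations nor the dimension $d$ of the decaying subspace; both of the latter are fixed by the bulk symbol and the winding number, which the perturbation leaves untouched. Granting this, the surplus $d-1=1$ of decaying modes over boundary constraints that underlies \Cref{thm: exponential_decay_k_operators} survives for every value of $a$, and the remainder of the argument is the routine Floquet estimate already carried out in \cite{ammari.barandun.ea2024Spectra}.
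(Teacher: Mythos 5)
Your proposal is correct and follows essentially the same route as the paper, which obtains this theorem by applying the tridiagonal $k$-Toeplitz spectral theory of \Cref{section:ktoeplitztheory} (essential spectrum via Fredholmness, the winding formula \eqref{equ:windspectraformula3} locating both roots of $\det(f(z)-\lambda)$ outside $\mathbb{T}^1$, and the Floquet-mode decay mechanism of \Cref{thm: exponential_decay_k_operators}) to the corner-perturbed operator, exactly as worked out in the cited reference \cite{ammari.barandun.ea2024Spectra}. Your key observations --- that the rank-one corner term leaves the symbol, the essential spectrum and the bulk recurrence untouched, so the two-dimensional decaying solution space meets the single modified boundary constraint nontrivially, with the adjoint/orientation-reversal argument handling the positive-winding case --- are precisely the ingredients of that proof.
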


Theorem \ref{thm: exponential_decay_capacitance_operators} elucidates the topological origin of the skin effect in the polymer system of subwavelength resonators. In particular, the skin effect holds for all $\lambda$ in the region
\begin{equation}\label{equ:nonzerowindingregion1}
    G:=\bigg\{ \lambda \in \mathbb{C}\setminus \sigma_{\mathrm{det}}(f) :\sum_{j=1}^k\operatorname{wind}(\lambda_j, \lambda)\neq 0 \bigg\}.
\end{equation}
This is a generalisation for the topological origin of the skin effect in the Toeplitz operator case given in \cref{sec: nonhermitian skin effect in monomer systems}.

The last part of this section is devoted to illustrating numerically the skin effect and its topological origin in chains of $2$, $3$, and $4$ periodically repeated resonators.
We start by illustrating in Figure \ref{Fig: spectrum_dimer_system} the results of a
system of $2$ periodically repeated resonators as in \cite{ammari.barandun.ea2023Perturbed}. In Figure \ref{Fig: spectrum_dimer_system}a,
we show the spectrum and pseudospectrum of the gauge capacitance matrix of a system of $25$ dimers together with the winding of the two eigenvalues of the symbol of the corresponding $2$-Toeplitz operator. Figure \ref{Fig: spectrum_dimer_system}b shows that all the eigenvectors (black eigenvectors) associated with eigenvalues inside the region $G$ in (\ref{equ:nonzerowindingregion1}) are localised and the only 
nondecreasing eigenvector (gray eigenvector) corresponds to the eigenvalue $0$ in the boundary of the region $G$. On the other hand, in \cite{ammari.barandun.ea2023Perturbed} it is observed that the nontrivial winding of the eigenvalues $\lambda_j(z)$ predicts the exponential decay of the eigenmodes. This is due to  $\operatorname{wind}(\lambda_j, \lambda)\leq 0, j=1,2$ in the example, which yields
\[
    G = \left\{ \lambda \in \mathbb{C}\setminus \sigma_{\mathrm{det}}(f) :  \bigcup_{j=1}^2 \operatorname{wind}(\lambda_j, \lambda)\neq 0\right\}.
\]

\begin{figure}[h]
    \centering
    \subfloat[][The region of nontrivial winding of the \\eigenvalues and the pseudospectrum.]{{\includegraphics[width=6.3cm]{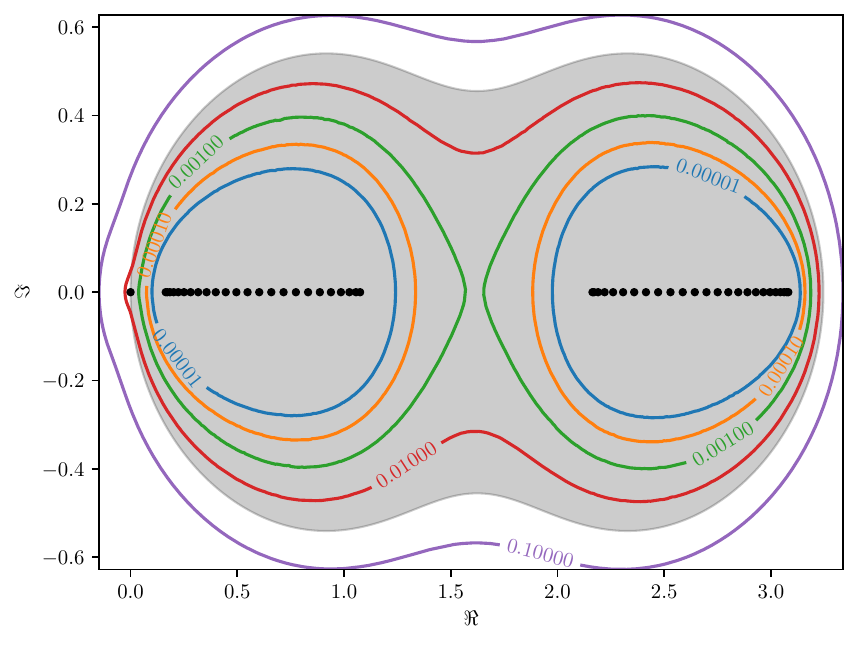} }}
    \qquad
    \subfloat[][Eigenmodes superimposed on one another to portray the skin effect. ]{{\includegraphics[width=6.3cm]{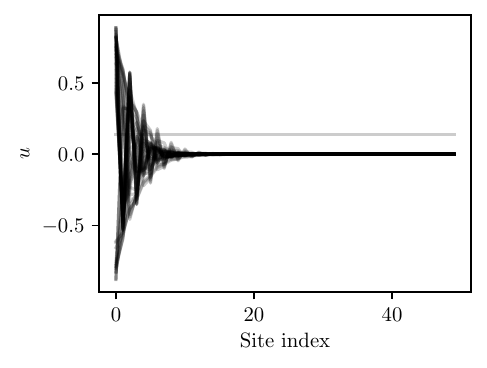} }}
    \caption{The region of $\lambda$ so that $\sum_{j=1}^k\operatorname{wind}(\lambda_j(\mathbb T), \lambda)\neq 0$ and the localisation of the eigenvectors. Computation performed for $s_1 = 1, s_2 = 2,$ and $N = 50$.}
    \label{Fig: spectrum_dimer_system}
\end{figure}

\begin{figure}[h]
    \centering
    \subfloat[][Computation performed for $s_1 = 1, s_2 = 2, s_3 = 3,$ and $N = 48$. ]{{\includegraphics[width=6.3cm]{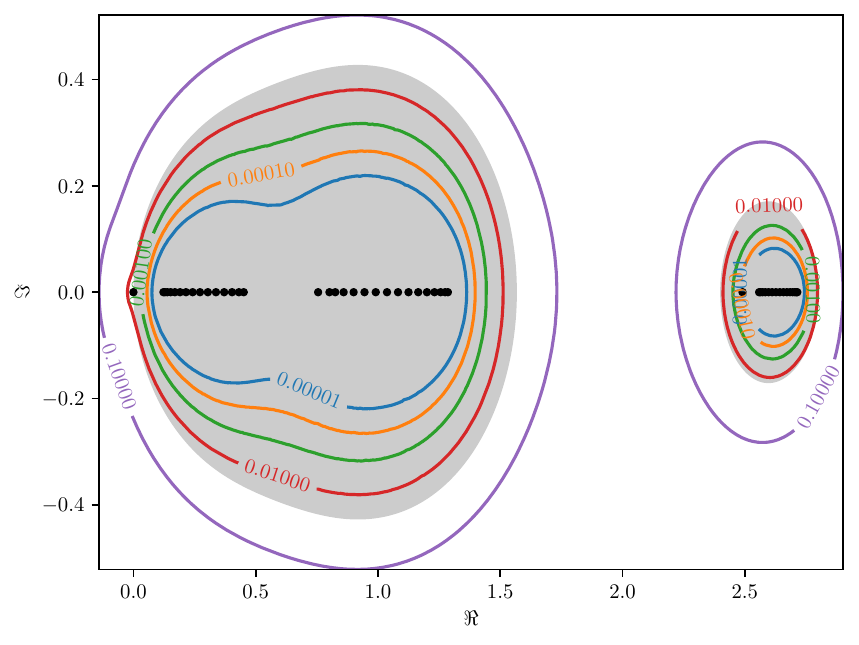} }}
    \qquad
    \subfloat[][ Simulation performed with $s_1 = 1, s_2 =2, s_3 = 3,$ and $N = 48$.]{{\includegraphics[width=6.3cm]{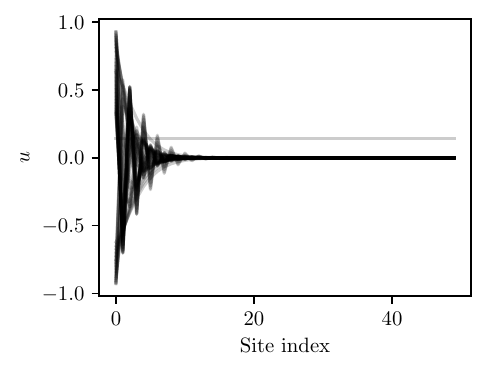} }}
    \qquad
    \subfloat[][Computation performed for $s_1 = 2, s_2 = 3, s_3 = 4,$ and $N = 52$. ]{{\includegraphics[width=6.3cm]{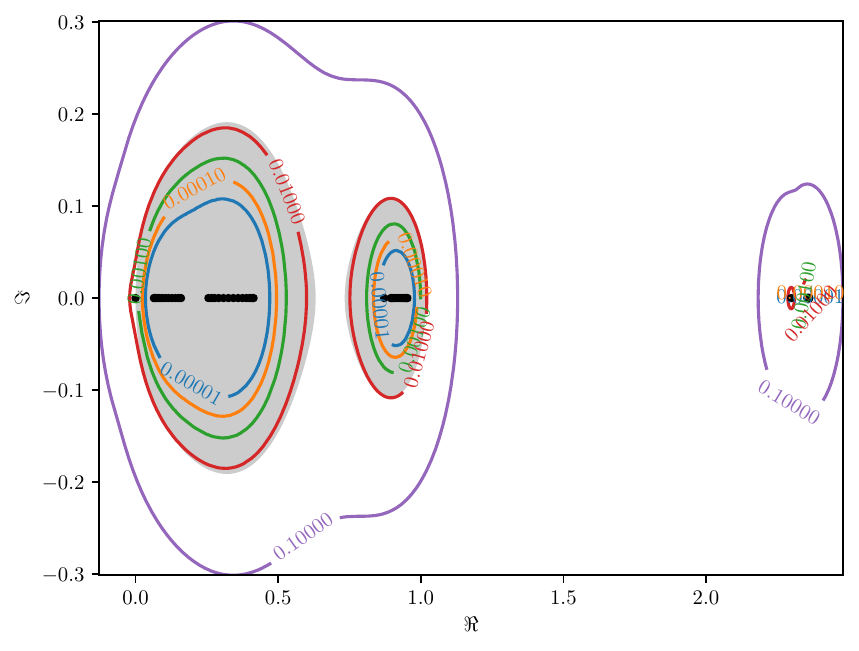} }}
    \qquad
    \subfloat[][Simulation performed with $s_1 = 2, s_2 =3, s_3 = 4,$ and $N = 52$.]{{\includegraphics[width=6.3cm]{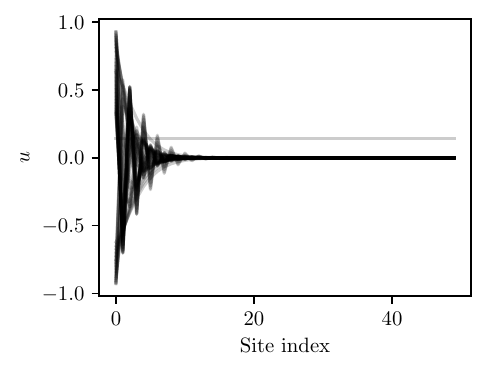} }}
    \caption{Figures A and C show the spectrum of the operator. The green regions consist of all the eigenvalues $\lambda$ that satisfy $\sum_{j=1}^3 \operatorname{wind}(\lambda_j, \lambda)\neq 0$. The black dots along the real line denote the spectrum of the gauge capacitance matrix $C^\gamma$ and the solid blue and orange lines around the spectrum are the $\varepsilon$-pseudospectra for $\varepsilon = 10^{k}$ and $k = -5, -2$. Figures B and D show the eigenvectors of $C^{\gamma}$.}
    \label{Fig: spectrum_trimer_system}
\end{figure}

Figure \ref{Fig: spectrum_trimer_system} illustrates the results for $3$ and $4$ periodically repeated resonators. We numerically verify that indeed all the eigenvectors, except the one associated with eigenvalue $0$ on the boundary of $G$, are localised at the left edge of the structure. We remark that due to the higher freedom in the parameter choice, for $n\geq 3$ periodically repeated resonators, various topologies are achievable for the area $G$ and the one presented in \cref{Fig: spectrum_trimer_system} is only one of the possibilities.

\subsection{Non-Hermitian interface modes between opposing signs of the gauge potential}\label{sec: double gamma}
We now briefly consider interface modes between two structures where the sign of $\gamma$ is switched from negative (on the left part) to positive (on the right part). Most commonly, localised interface modes are formed by creating a defect in the system's geometric periodicity (see, for example, \cite{ammari.davies.ea2020Topologically}). In non-Hermitian systems based on complex material parameters, similar localised interface modes have been shown to exist in the presence of a defect in the periodicity of the material parameters \cite{ammari.barandun.ea2023Edge,ammari.hiltunen2020Edge}. Given the existence of the skin effect, demonstrated in the previous section, it is reasonable to expect that we might be able to produce a similar localisation effect using systems of resonators with imaginary gauge potentials. With this in mind, we consider the following system of $N=2n+1$ resonators:

\begin{align}
    \begin{dcases}
        u\prii(x) \bm+ \gamma u\pri(x)+\frac{\omega^2}{v_b^2}u=0,                                                                  & x\in\bigcup_{i=1}^{n}(x_i^{\iL},x_i^{\iR}),          \\
        u\prii(x) \bm- \gamma u\pri(x)+\frac{\omega^2}{v_b^2}u=0,                                                                  & x\in\bigcup_{n+1}^N(x_i^{\iL},x_i^{\iR}),            \\
        u\prii(x) + \frac{\omega^2}{v^2}u=0,                                                                                       & x\in\R\setminus\bigcup_{i=1}^N(x_i^{\iL},x_i^{\iR}), \\
        u\vert_{\iR}(x^{\iLR}_i) - u\vert_{\iL}(x^{\iLR}_i) = 0,                                                                   & \text{for all } 1\leq i\leq N,                       \\
        \left.\frac{\dd u}{\dd x}\right\vert_{\iR}(x^{\iL}_{{i}})=\delta\left.\frac{\dd u}{\dd x}\right\vert_{\iL}(x^{\iL}_{{i}}), & \text{for all } 1\leq i\leq N,                       \\
        \left.\frac{\dd u}{\dd x}\right\vert_{\iR}(x^{\iR}_{{i}})=\delta\left.\frac{\dd u}{\dd x}\right\vert_{\iR}(x^{\iL}_{{i}}), & \text{for all } 1\leq i\leq N,                       \\
        \frac{\dd u}{\dd\ \abs{x}}u -\i k u = 0,                                                                                   & x\in(-\infty,x_1^{\iL})\cup (x_N^{\iR},\infty).
    \end{dcases}
    \label{eq:coupled ods different gammas}
\end{align}

It is not difficult to see that also with this system we can recover a capacitance matrix for which a similar result as the one in \cref{cor: approx via eva eve} holds. In particular, generalising \eqref{eq: def cap mat}, we get
\begin{align*}
    \capmat_{i,j}^{\gamma,-\gamma} =
    \begin{dcases}
        \capmat_{i,j}^{\gamma},\quad i \leq n, \\
        \capmat_{i,j}^{-\gamma},\quad i \geq n + 1.
    \end{dcases}
\end{align*}

The decay properties of the eigenvectors \eqref{eq: decay for eigemodes} and the symmetry property with respect to $\gamma$ show that this symmetric system \eqref{eq:coupled ods different gammas} has all but two modes localised at the interface. These interface modes are shown in \cref{fig: double_gamma}, superimposed on one another to portray the general trend.

\begin{figure}[htb]

    \centering
    \includegraphics[width=0.5\textwidth]{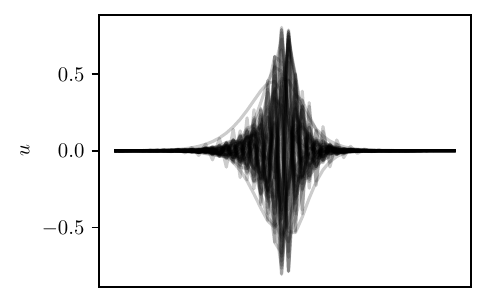}

    \caption{Plot of all the eigenmodes localised at the interface associated to a system described by \eqref{eq:coupled ods different gammas}. The $x$-axis encodes the site index of the resonators. The simulation is performed with a structure of $N=50$ resonators, $\ell=s=1$ and $\gamma=1$. The two trivial eigenmodes are not shown.}
    \label{fig: double_gamma}
\end{figure}

The case presented here is the choice of $\gamma$'s that generates localised mode at the middle of the structure as do defected Hermitian interface structures in \cref{sec: interface modes}. In the next section, we consider the other case, which generates localisation at the edges of the system.

\section{Tunable localisation in parity-time symmetric systems}\label{sec: PT symm EP}

The aim of this section is to consider a mirrored system with two imaginary gauge potentials (opposite to each other as presented briefly in \cref{sec: double gamma}) and study the phase change of the spectrum from purely real to complex when gain and loss are introduced in a balanced way into the system as a function of the gain to loss ratio. This ensures that parity--time  ($\mathcal{PT}$-) symmetry is preserved as this ratio is increased. Crucially, the parity--time symmetry of the system is reflected in the gauge capacitance matrix $C^\gamma$, ensuring that it is \emph{pseudo--Hermitian}, that is there exists some invertible self-adjoint matrix $M$ so that the adjoint $(C^\gamma)^*$ of $C^\gamma$ is given by $(C^\gamma)^*=M C^\gamma M^{-1}$.
We observe that the eigenmodes of the parity--time symmetric system decouple when going through an exceptional point. Tuning the gain-to-loss ratio, we change the system from a phase with unbroken parity-time symmetry to a phase with broken parity-time symmetry where the condensed eigenmodes at one edge are decoupled from the ones at the opposite edge of the structure. To understand this behaviour we extend the standard Toeplitz theory to encompass symmetrical parameter changes across an interface. We show that the intrinsic nature of this switch from unbroken to broken $\mathcal{PT}$-symmetry is due to a change in the topological nature of the mode.
Furthermore, we are able to show that as the number of resonators is increased, the amount of tuning required for exceptional points and the corresponding decoupling to occur goes to zero. This leads to an increasingly dense concentration of exceptional points. The results in this section are from \cite{ammari.barandun.ea2024Tunable}. 

We will only consider systems of equally spaced identical resonators, that is,
\begin{align*}
    \ell_i = \ell \in \R_{>0}\text{ for all } 1\leq i\leq N \quad \text{and} \quad s_i = s \in \R_{>0}  \text{ for all } 1\leq i\leq N-1,
\end{align*}

and apply an imaginary gauge potential as illustrated in \cref{fig:setting_PT}.
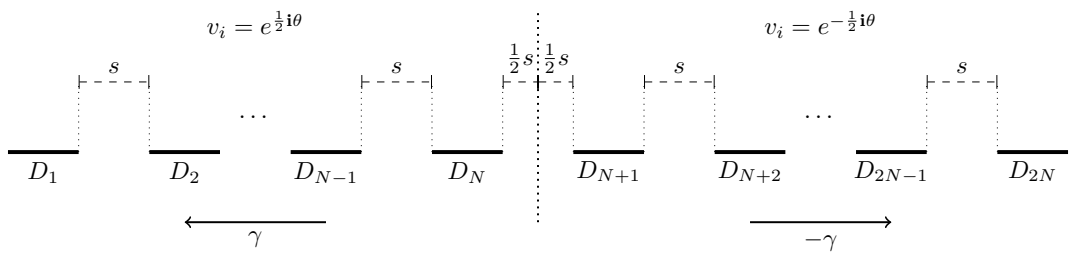
\begin{figure}[htb]
    \centering
    \begin{adjustbox}{width=\textwidth}
        \begin{tikzpicture}

            \begin{scope}[shift={(-7.5,0)}]
                \draw[ultra thick] (0,0) -- node[below]{$D_{1}$} (1,0);

                \draw[-,dotted] (1,0) -- (1,1);
                \draw[|-|,dashed] (1,1) -- node[above]{$s$} (2,1);
                \draw[-,dotted] (2,0) -- (2,1);
            \end{scope}

            \begin{scope}[shift={(-5.5,0)}]
                \draw[ultra thick] (0,0) -- node[below] {$D_{2}$} (1,0);
                \node at (1.5,.5) {\dots};
            \end{scope}

            \begin{scope}[shift={(-3.5,0)}]
                \draw[ultra thick] (0,0) -- node[below]{$D_{N-1}$} (1,0);

                \draw[-,dotted] (1,0) -- (1,1);
                \draw[|-|,dashed] (1,1) -- node[above]{$s$} (2,1);
                \draw[-,dotted] (2,0) -- (2,1);
            \end{scope}

            \begin{scope}[shift={(-1.5,0)}]
                \draw[ultra thick] (0,0) -- node[below] {$D_{N}$} (1,0);
                \draw[-,dotted] (1,0) -- (1,1);
                \draw[|-|,dashed] (1,1) -- node[above]{$\frac{1}{2}s$} (1.5,1);
            \end{scope}

            \draw[->,thick] (-3,-1) -- node[below]{$\gamma$} (-5,-1);
            \node[above] at (-4,1.5) {$v_i = e^{\frac{1}{2}\i\theta}$};

            \draw[-,thick,dotted] (0,-1) -- (0,2);

            \begin{scope}[shift={(-0.5,0)}]
                \draw[|-|,dashed] (0.5,1) -- node[above]{$\frac{1}{2}s$} (1,1);
                \draw[ultra thick] (1,0) -- (2,0);
                \node[below] at (1.5,0) {$D_{N+1}$};
                \draw[-,dotted] (1,0) -- (1,1);
            \end{scope}

            \begin{scope}[shift={(+1.5,0)}]
                \draw[-,dotted] (0,0) -- (0,1);
                \draw[|-|,dashed] (0,1) -- (1,1);
                \node[above] at (0.5,1) {$s$};
                \draw[ultra thick] (1,0) -- (2,0);
                \node[below] at (1.5,0) {$D_{N+2}$};
                \draw[-,dotted] (1,0) -- (1,1);
                \node at (2.5,.5) {\dots};
            \end{scope}

            \begin{scope}[shift={(+3.5,0)}]
                \draw[ultra thick] (1,0) -- (2,0);
                \node[below] at (1.5,0) {$D_{2N-1}$};
            \end{scope}

            \begin{scope}[shift={(+5.5,0)}]
                \draw[-,dotted] (0,0) -- (0,1);
                \draw[|-|,dashed] (0,1) -- (1,1);
                \node[above] at (0.5,1) {$s$};
                \draw[ultra thick] (1,0) -- (2,0);
                \node[below] at (1.5,0) {$D_{2N}$};
                \draw[-,dotted] (1,0) -- (1,1);
            \end{scope}

            \draw[->,thick] (3,-1) -- node[below]{$-\gamma$} (5,-1);
            \node[above] at (4,1.5) {$v_i = e^{-\frac{1}{2}\i\theta}$};
        \end{tikzpicture}
    \end{adjustbox}
    \caption{A chain of $2N$ one-dimensional identical and equally spaced resonators. Material parameters and sign of the imaginary gauge potentials depend on the resonator's position.}
    \label{fig:setting_PT}
\end{figure}
On one side, the gauge capacitance matrix is given by
\begin{gather}
    \label{eq:cdef}
    C^\gamma =
    \left(\begin{array}{ccccc|ccccc}
            \alpha + \beta & \eta   &        &        &        &        &        &        &        &              \\
            \beta          & \alpha & \ddots &        &        &        &        &        &        &              \\
                           & \ddots & \ddots &        &        &        &        &        &        &              \\
                           &        &        & \alpha & \eta   &        &        &        &        &              \\
                           &        &        & \beta  & \alpha & \eta   &        &        &        &              \\
            \hline
                           &        &        &        & \eta   & \alpha & \beta                                   \\
                           &        &        &        &        & \eta   & \alpha & \ddots                         \\
                           &        &        &        &        &        & \ddots & \ddots                         \\
                           &        &        &        &        &        &        &        & \alpha & \beta        \\
                           &        &        &        &        &        &        &        & \eta   & \alpha+\beta
        \end{array}\right) \in \R^{2N\times 2N}
\end{gather}
with
\begin{equation}\label{equ:alphabetagamma}
    \alpha = \frac{\gamma}{1-e^{-\gamma}} - \frac{\gamma}{1-e^{\gamma}} = \gamma\coth(\gamma/2),\quad \eta = \frac{-\gamma}{1-e^{-\gamma}},\quad \beta = \frac{\gamma}{1-e^{\gamma}},
\end{equation}
because of the sign change of the imaginary gauge potential. On the other side, we have to model the complex (and varying) material parameters. Thus, we consider the \emph{generalised gauge capacitance matrix}
\begin{align}
    \label{eq:cgdef}
    C^{\theta,\gamma} = V^\theta C^\gamma \quad \text{ with } \quad V^\theta =
    \left(\begin{array}{c|c}
                  e^{\i\theta}I_{N} & \mathbf{0}         \\
                  \hline
                  \mathbf{0}        & e^{-\i\theta}I_{N}
              \end{array}\right) \in \C^{2N\times 2N}.
\end{align}
The same result as the one stated in \cref{cor: approx via eva eve} holds for the system described by \cref{fig:setting_PT} when considering the generalised gauge capacitance matrix from \eqref{eq:cgdef} (generalising the proof presented in \cite{ammari.barandun.ea2024Mathematical} is easily achieved by the same procedure as the one used in \cite{ammari.barandun.ea2023Edge}). Throughout this section, $C^{\theta,\gamma}$ and $C^{\gamma}$ are $2N\times 2N$ matrices.

This section will extensively study \emph{non-diagonalisability} of $\cg$.
\begin{definition}
    A setup for which $\cg$ is \emph{not} diagonalisable is called an \emph{exceptional point}.
\end{definition}
Specifically, we will study setups where the geometry and the imaginary gauge potentials remain fixed and the material parameters (here modelled by $\theta$) lay in a specific range.

\subsection{Properties of the generalised gauge capacitance matrix}
Let $P\in \R^{2N\times 2N}$ be the anti-diagonal involution, \emph{i.e.},  $P_{ij} = \delta_{i,2N-i+1}$ and $D^\gamma=\diag(1,e^\gamma,\dots,(e^{\gamma})^{N-1},(e^{\gamma})^{N-1},\dots,e^\gamma,1)\in \R^{2N\times 2N}$. We refer, for instance,  to  \cite{ammari.barandun.ea2024Tunable} for the precise definitions of pseudo--Hermitian and quasi-Hermitian matrices. Here and elsewhere in this section, $M^*$ denotes the adjoint of $M$: $(M^*)_{i,j}=\overline{M_{j,i}}$.

\begin{proposition}\label{prop:csymm}
    The generalised gauge capacitance matrix has the following symmetry:
    \begin{equation}\label{eq:ptsymm}
        P\cg P = \overline{\cg}.
    \end{equation}
    For the unmodified gauge capacitance matrix $C^\gamma$, we have
    \begin{equation}\label{eq:csymm}
        PC^\gamma P = C^\gamma.
    \end{equation}
\end{proposition}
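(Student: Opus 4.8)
The plan is to establish \eqref{eq:csymm} first and then obtain \eqref{eq:ptsymm} as an essentially formal consequence. The only structural fact needed is that conjugation by the exchange matrix $P$ acts on any $2N\times 2N$ matrix $M$ by the $180^\circ$ rotation $(PMP)_{ij}=M_{2N+1-i,\,2N+1-j}$, which follows immediately from $P_{ij}=\delta_{i+j,\,2N+1}$ together with $P=P^\top=P^{-1}$.

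For \eqref{eq:csymm}, I would note that $PC^\gamma P=C^\gamma$ is equivalent to the centrosymmetry relation $C^\gamma_{ij}=C^\gamma_{2N+1-i,\,2N+1-j}$, and verify this entrywise from the explicit form \eqref{eq:cdef}. The rotation exchanges the super- and sub-diagonals and swaps the top-left $N\times N$ block with the bottom-right one; this is compatible with \eqref{eq:cdef} precisely because the top block carries $\eta$ on its superdiagonal and $\beta$ on its subdiagonal while the bottom block carries $\beta$ on its superdiagonal and $\eta$ on its subdiagonal, so the roles of $\eta$ and $\beta$ are interchanged exactly as the rotation demands. One also checks that the two interface entries $C^\gamma_{N,N+1}=\eta=C^\gamma_{N+1,N}$ are fixed and that the corners $C^\gamma_{1,1}=\alpha+\beta=C^\gamma_{2N,2N}$ map onto each other. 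Since \eqref{equ:alphabetagamma} makes $\alpha,\beta,\eta$ real, $C^\gamma$ is a real centrosymmetric matrix.

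Given \eqref{eq:csymm}, I would prove \eqref{eq:ptsymm} by factoring $\cg=V^\theta C^\gamma$ and inserting $P^2=I$:
\[
P\cg P=(PV^\theta P)(PC^\gamma P)=(PV^\theta P)\,C^\gamma .
\]
The block-diagonal form $V^\theta=\diag(e^{\i\theta}I_N,\,e^{-\i\theta}I_N)$ is interchanged by the rotation: conjugation by $P$ swaps its $e^{\i\theta}I_N$ and $e^{-\i\theta}I_N$ blocks, so $PV^\theta P=\diag(e^{-\i\theta}I_N,\,e^{\i\theta}I_N)=\overline{V^\theta}$. Because $C^\gamma$ is real, $C^\gamma=\overline{C^\gamma}$, and hence
\[
P\cg P=\overline{V^\theta}\,\overline{C^\gamma}=\overline{V^\theta C^\gamma}=\overline{\cg},
\]
which is \eqref{eq:ptsymm}.

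No genuine obstacle is expected: the argument is pure bookkeeping once the action of $P$ is recorded. The one place to be careful is the entrywise check in \eqref{eq:csymm}, where the corner terms $\alpha+\beta$ and the interchange of $\eta$ and $\beta$ across the interface must be matched correctly; keeping the flip indexing straight (and remembering that it is centrosymmetry, not persymmetry, that is at issue) is the only subtlety.
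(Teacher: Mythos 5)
Your proof is correct and follows the same direct-verification route one would expect (and that the cited source uses): check centrosymmetry $C^\gamma_{ij}=C^\gamma_{2N+1-i,2N+1-j}$ entrywise from \eqref{eq:cdef}, then use the factorisation $\cg=V^\theta C^\gamma$ together with $PV^\theta P=\overline{V^\theta}$ and the realness of $C^\gamma$ to get \eqref{eq:ptsymm}. Your bookkeeping is accurate — in particular the swap of the $\eta$/$\beta$ diagonals between the two blocks, the interface entries both equal to $\eta$, the matching corners $\alpha+\beta$, and the distinction between centrosymmetry and persymmetry are exactly the points that need checking.
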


\begin{proposition}
    Let $M^{\theta,\gamma} = PV^\theta D^\gamma$.
    Then, $M^{\theta,\gamma}$ is invertible and Hermitian and we have
    \begin{equation}
        M^{\theta,\gamma} \cg = (\cg)^*M^{\theta,\gamma}.
    \end{equation}
\end{proposition}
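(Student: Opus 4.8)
The plan is to prove the three assertions—invertibility, self-adjointness, and the intertwining relation—in turn, reducing the last and least routine one to the single statement that $D^\gamma$ symmetrises $C^\gamma$. Throughout I would rely on four elementary relations. First, $P^2=I$ and $P^\ast=P$, since $P$ is a real symmetric permutation. Second, $D^\gamma$ is a real positive diagonal matrix whose diagonal is palindromic, so $(D^\gamma)^\ast=D^\gamma$ and $P D^\gamma = D^\gamma P$. Third, $V^\theta$ is diagonal and unitary, so $(V^\theta)^\ast=\overline{V^\theta}=(V^\theta)^{-1}$, and a direct block computation—equivalently, comparing \eqref{eq:csymm} and \eqref{eq:ptsymm} of \cref{prop:csymm} on $\cg=V^\theta C^\gamma$—gives $P V^\theta P=\overline{V^\theta}$, i.e. $\overline{V^\theta}P=P V^\theta$. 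Fourth, \eqref{eq:csymm} yields $P C^\gamma = C^\gamma P$, and transposing it, $P (C^\gamma)^\top=(C^\gamma)^\top P$. Invertibility of $M^{\theta,\gamma}=PV^\theta D^\gamma$ is then immediate, being a product of a permutation, a unitary, and a positive diagonal matrix.

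For self-adjointness I would compute directly $(M^{\theta,\gamma})^\ast=(D^\gamma)^\ast(V^\theta)^\ast P^\ast=D^\gamma\overline{V^\theta}P$, and then move $P$ to the front using $\overline{V^\theta}P=PV^\theta$ and $D^\gamma P=PD^\gamma$, together with the fact that the two diagonal matrices commute. This returns $PV^\theta D^\gamma=M^{\theta,\gamma}$, so $M^{\theta,\gamma}$ is Hermitian.

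The heart of the matter is the intertwining $M^{\theta,\gamma}\cg=(\cg)^\ast M^{\theta,\gamma}$. Writing $\cg=V^\theta C^\gamma$, I would expand both products and use the relations above to push every permutation to the far left and to eliminate the material phases. The decisive point is that the phase matrix $V^\theta$ carried by $\cg$ meets the conjugate phase carried by $M^{\theta,\gamma}$ through the adjoint and the braiding $\overline{V^\theta}P=PV^\theta$; these collapse via the unitarity $V^\theta(V^\theta)^\ast=I$, while $D^\gamma$ commutes with every phase and passes through $P$. After these cancellations both sides reduce, up to the common left factor, to $P D^\gamma C^\gamma$ and $P (C^\gamma)^\top D^\gamma$, so the whole identity becomes equivalent to the single real relation
\[ D^\gamma C^\gamma=(C^\gamma)^\top D^\gamma, \]
that is, to $D^\gamma C^\gamma$ being symmetric.

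This symmetrisation is the main obstacle and the only step where the explicit coefficients enter. I would check it entrywise on the tridiagonal matrix \eqref{eq:cdef}: writing $d_i$ for the $i$-th diagonal entry of $D^\gamma$, symmetry of $D^\gamma C^\gamma$ amounts to $d_i\,C^\gamma_{i,i+1}=d_{i+1}\,C^\gamma_{i+1,i}$ for every $i$. In the first block this reads $d_i\eta=d_{i+1}\beta$, which holds because $d_{i+1}/d_i=e^{\gamma}$ and, from \eqref{equ:alphabetagamma}, $\eta/\beta=e^{\gamma}$ (using $1-e^{-\gamma}=e^{-\gamma}(e^{\gamma}-1)$); in the second block the sub- and super-diagonals are exchanged, and the ratio $d_{i+1}/d_i=e^{-\gamma}$ compensates; at the interface both off-diagonal entries equal $\eta$ while $d_N=d_{N+1}$ by the palindromic symmetry of $D^\gamma$. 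Hence $D^\gamma C^\gamma=(C^\gamma)^\top D^\gamma$, which closes the argument. The identity $\eta/\beta=e^{\gamma}$ is the single quantitative input, and it is exactly the symmetrisability of the gauge matrix that makes $C^{\theta,\gamma}$ pseudo-Hermitian with respect to $M^{\theta,\gamma}$.
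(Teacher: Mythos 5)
Your handling of invertibility and Hermiticity is correct, and you have correctly identified the crux of the matter: the real symmetrisation identity $D^\gamma C^\gamma = (C^\gamma)^\top D^\gamma$, which you verify entrywise exactly as one should (the ratio $\eta/\beta = e^\gamma$ against $d_{i+1}/d_i = e^{\pm\gamma}$ in the two blocks, and $d_N = d_{N+1}$ with both interface entries equal to $\eta$). The gap is in the reduction step. With the definitions of this paper, $\cg = V^\theta C^\gamma$ from \eqref{eq:cgdef} and $M^{\theta,\gamma} = PV^\theta D^\gamma$, the phases do \emph{not} cancel. Commuting the diagonal factors gives
\[
    M^{\theta,\gamma}\cg = P\,(V^\theta)^2 D^\gamma C^\gamma,
\]
while on the other side the braiding $\overline{V^\theta}P = PV^\theta$ turns the conjugate phase coming from $(\cg)^*$ back into $V^\theta$, so that
\[
    (\cg)^* M^{\theta,\gamma} = (C^\gamma)^\top \overline{V^\theta}P V^\theta D^\gamma = P\,(C^\gamma)^\top (V^\theta)^2 D^\gamma.
\]
The unitarity cancellation $\overline{V^\theta}V^\theta = I$ that you invoke never takes place: both sides retain a factor $(V^\theta)^2$, and the identity you need is $(V^\theta)^2 D^\gamma C^\gamma = (C^\gamma)^\top (V^\theta)^2 D^\gamma$, not $D^\gamma C^\gamma = (C^\gamma)^\top D^\gamma$. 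This stronger identity is false, because $(V^\theta)^2 = \diag(e^{2\i\theta}I_N, e^{-2\i\theta}I_N)$ fails to commute with $C^\gamma$ exactly at the interface entries $(N,N+1)$ and $(N+1,N)$, which are both $\eta\neq 0$: comparing $(N,N+1)$ entries forces $e^{2\i\theta}d_N\eta = e^{-2\i\theta}d_{N+1}\eta$, i.e.\ $e^{4\i\theta}=1$. For $N=2$ one checks directly that $(M^{\theta,\gamma}\cg)_{22} = e^{-2\i\theta}e^\gamma\eta$ whereas $((\cg)^*M^{\theta,\gamma})_{22} = e^{2\i\theta}e^\gamma\eta$.

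What your computation actually establishes is the intertwining relation for the \emph{conjugated} symmetriser $M = P\overline{V^\theta}D^\gamma = V^\theta P D^\gamma$, which is also Hermitian and invertible: there the phases genuinely collapse, $M\cg = PD^\gamma C^\gamma$ and $(\cg)^*M = P(C^\gamma)^\top D^\gamma$, and your symmetrisation lemma closes the argument. Equivalently, the printed $M^{\theta,\gamma} = PV^\theta D^\gamma$ does work if the generalised gauge capacitance matrix is defined as $C^\gamma V^\theta$ instead of $V^\theta C^\gamma$. In other words, the statement as printed carries a conjugation/ordering mismatch with \eqref{eq:cgdef} (the pseudo-Hermiticity of $\cg$, and hence the conjugation symmetry of its spectrum, survives with $M = PV^{-\theta}D^\gamma$), but your write-up, which claims the cancellation for the printed $M$, asserts an intermediate identity that is false at the interface and therefore does not prove the statement as given. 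A correct treatment must either carry the conjugate phase in $M$ throughout, or explicitly flag and repair the mismatch before reducing to the symmetrisation of $D^\gamma C^\gamma$.
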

Since $\cg$ as in \eqref{eq:cgdef} is pseudo--Hermitian, its spectrum must be invariant under complex conjugation \cite{ammari.barandun.ea2024Tunable}, that is,
\[\sigma(\cg) = \overline{\sigma(\cg)}.\]

For the case $\theta=0$, the matrix $\mathnormal{C}^{0,\gamma} = \cm$ satisfies an even stronger notion of Hermiticity.
\begin{proposition}\label{prop:cquasiherm}
    Let $\cm=\mathnormal{C}^{\theta=0,\gamma}$ as in equation (\ref{eq:cdef}). Then, $\cm$ is quasi-Hermitian with \emph{metric operator} $D$, that is,
    \begin{equation}
        D^{-1}\cm  = (\cm)^*D^{-1}.
    \end{equation}
\end{proposition}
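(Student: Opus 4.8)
The plan is to prove the intertwining in the form $D^\gamma\cm=(\cm)^\ast D^\gamma$, which is precisely the statement that $\cm$ is quasi-Hermitian with metric $D^\gamma$. First I would record three elementary reductions. Since $\theta=0$ forces $V^\theta=I$ in \eqref{eq:cgdef}, the matrix $\cm=\mathnormal{C}^{0,\gamma}$ coincides with $C^\gamma$ of \eqref{eq:cdef} and is \emph{real}, so $(\cm)^\ast=\cm^\top$; hence the claim is equivalent to the assertion that $D^\gamma\cm$ is a symmetric matrix (using $(D^\gamma\cm)^\top=\cm^\top D^\gamma=(\cm)^\ast D^\gamma$). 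Moreover $D^\gamma$ is a positive diagonal matrix, hence a legitimate invertible self-adjoint metric, and it is \emph{palindromic}: its $i$\textsuperscript{th} and $(2N+1-i)$\textsuperscript{th} diagonal entries agree, both equalling $(e^\gamma)^{i-1}$. Together with $P^2=I$ this gives $PD^\gamma P=D^\gamma$, i.e.\ $PD^\gamma=D^\gamma P$, where $P$ is the anti-diagonal involution.

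The quickest route bootstraps the pseudo-Hermiticity already established above. Specialising the preceding proposition to $\theta=0$ gives $M^{\theta,\gamma}=PV^0D^\gamma=PD^\gamma$, and therefore $PD^\gamma\cm=(\cm)^\ast PD^\gamma$. By \eqref{eq:csymm} in \cref{prop:csymm} we have $P\cm P=\cm$, that is $P\cm=\cm P$. Using $P\cm=\cm P$ and $PD^\gamma=D^\gamma P$, the left-hand side rewrites as $PD^\gamma\cm=D^\gamma P\cm=D^\gamma\cm P$, while the right-hand side rewrites as $(\cm)^\ast PD^\gamma=(\cm)^\ast D^\gamma P$. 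The identity thus becomes $D^\gamma\cm P=(\cm)^\ast D^\gamma P$, and cancelling the invertible factor $P$ on the right yields $D^\gamma\cm=(\cm)^\ast D^\gamma$, the desired quasi-Hermiticity with metric $D^\gamma$.

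Should one prefer a self-contained argument, the alternative is a direct entrywise verification that $D^\gamma\cm$ is symmetric. Both $D^\gamma\cm$ and its transpose are tridiagonal with the same sparsity pattern, so symmetry reduces to the main diagonal (where it is automatic, and where the corner perturbations $\alpha+\beta$ are therefore harmless) together with the single scalar condition $d_i\,\cm_{i,i+1}=d_{i+1}\,\cm_{i+1,i}$ for each $i$, with $d_i\coloneqq(D^\gamma)_{ii}$. This is just $d_{i+1}/d_i=\cm_{i,i+1}/\cm_{i+1,i}$: by \eqref{equ:alphabetagamma} one computes $\eta/\beta=e^{\gamma}$, which matches $d_{i+1}/d_i=e^{\gamma}$ in the left block, whereas after the gauge sign flip the off-diagonal entries swap and $\beta/\eta=e^{-\gamma}=d_{i+1}/d_i$ holds in the right block. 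The main obstacle, and the one place that needs care, is the interface seam: there the super- and sub-diagonal entries coincide, $\cm_{N,N+1}=\cm_{N+1,N}=\eta$, because the junction is symmetric, while the palindromic profile forces $d_N=d_{N+1}$, so the ratio condition again holds. I expect the bookkeeping across this seam — checking that the up-then-down shape of $D^\gamma$ correctly absorbs the $\gamma\mapsto-\gamma$ reversal between the two halves — to be the only subtlety, every other band reducing to a one-line ratio identity.
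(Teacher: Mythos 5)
Your argument is mathematically correct, and since the survey states this proposition without proof (it is imported from the cited reference \cite{ammari.barandun.ea2024Tunable}), both of your routes stand on their own. The bootstrap --- specialising the pseudo-Hermiticity proposition $M^{\theta,\gamma}\cg=(\cg)^*M^{\theta,\gamma}$ to $\theta=0$ and cancelling the invertible factor $P$ using $P\cm P=\cm$ from \eqref{eq:csymm} together with the palindromic identity $PD^\gamma=D^\gamma P$ --- is a clean three-line derivation. The entrywise verification (the ratio condition $d_{i+1}/d_i=\cm_{i,i+1}/\cm_{i+1,i}$, equal to $\eta/\beta=e^{\gamma}$ on the left block, $\beta/\eta=e^{-\gamma}$ on the right block, and trivially satisfied at the seam where $\cm_{N,N+1}=\cm_{N+1,N}=\eta$ and $d_N=d_{N+1}$) is self-contained, and you correctly identify the seam as the only place needing care.

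There is, however, one point you should make explicit rather than pass over. What you prove is $D^\gamma\cm=(\cm)^*D^\gamma$, but the proposition as displayed reads $D^{-1}\cm=(\cm)^*D^{-1}$; with $D=D^\gamma$ as defined in the survey these are \emph{different} intertwining relations, and only yours is true. Indeed, for the real matrix $\cm$ the displayed relation is equivalent to $d_{i+1}\cm_{i,i+1}=d_i\cm_{i+1,i}$, which on the left block forces $e^{\gamma}\eta=\beta$, i.e.\ $e^{2\gamma}=1$, false for $\gamma\neq 0$. So the displayed equation has the inverses misplaced (equivalently, its $D$ must be read as $(D^\gamma)^{-1}$, in which case the operator intertwining $\cm$ with $(\cm)^*$ is $D^{-1}$ rather than $D$); your relation $D^\gamma\cm(D^\gamma)^{-1}=(\cm)^*$ is the correct quasi-Hermiticity statement and the only one consistent with the $\theta=0$ specialisation of the pseudo-Hermiticity proposition that you invoke in your first route. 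Your proof silently corrects this typo while asserting that your form ``is precisely the statement''; state the correction openly, since otherwise a reader comparing your first line against the displayed equation will conclude that you proved the wrong identity.
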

As a quasi-Hermitian matrix, $\cm$ is diagonalisable with real spectrum.

Finally, we characterise the kernel of $\cg$.
\begin{lemma}
    For any $\gamma>0$ and $\theta \in [0,2\pi)$, we have $(1,\dots,1)^\top \in \ker \cg\subset \R^{2N}$.
\end{lemma}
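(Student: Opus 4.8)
The plan is to reduce the assertion about $\cg$ to the corresponding one for the unmodified gauge capacitance matrix $C^\gamma$, and then verify a single scalar cancellation. Since $\cg = V^\theta C^\gamma$ by \eqref{eq:cgdef} and $V^\theta = \diag(e^{\i\theta}I_N, e^{-\i\theta}I_N)$ is invertible (its diagonal entries $e^{\pm\i\theta}$ never vanish), we have $\ker(\cg) = \ker(V^\theta C^\gamma) = \ker(C^\gamma)$ for every admissible $\theta$. Hence it suffices to show that $C^\gamma \mathbf{1} = 0$ for $\mathbf{1} = (1,\dots,1)^\top \in \R^{2N}$; equivalently, that every row of $C^\gamma$ sums to zero.

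First I would record the key identity $\alpha + \beta + \eta = 0$. This is immediate from \eqref{equ:alphabetagamma}: writing $\alpha = \frac{\gamma}{1-e^{-\gamma}} - \frac{\gamma}{1-e^{\gamma}}$, $\eta = -\frac{\gamma}{1-e^{-\gamma}}$, and $\beta = \frac{\gamma}{1-e^{\gamma}}$, the two $\frac{\gamma}{1-e^{-\gamma}}$ contributions cancel and the two $\frac{\gamma}{1-e^{\gamma}}$ contributions cancel. Then I would read off the row sums directly from the tridiagonal block form \eqref{eq:cdef}. Each interior row of $C^\gamma$ (including the two rows straddling the block interface, where the order of $\eta$ and $\beta$ flips) has exactly the three nonzero entries $\beta,\alpha,\eta$, so its sum is $\alpha+\beta+\eta = 0$. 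The two corner rows are the only ones of different shape: the first has diagonal entry $\alpha+\beta$ and super-diagonal entry $\eta$, while the last has sub-diagonal entry $\eta$ and diagonal entry $\alpha+\beta$; in both cases the sum is again $\alpha+\beta+\eta=0$. Therefore $C^\gamma \mathbf{1} = 0$, and consequently $\cg\,\mathbf{1} = V^\theta(C^\gamma \mathbf{1}) = V^\theta \cdot 0 = 0$, which is exactly the claim, with the witnessing vector $\mathbf{1}$ lying in $\R^{2N}$ as required.

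There is no genuine obstacle here: the entire content is the cancellation $\alpha+\beta+\eta=0$, which reflects the conservation-law origin of the gauge capacitance matrix (the constant mode is always a resonator-wise equilibrium, so it is annihilated). The only point worth stating carefully is the invertibility of $V^\theta$, since that is precisely what permits passing from $\ker(\cg)$ to $\ker(C^\gamma)$ uniformly in $\theta$.
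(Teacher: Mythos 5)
Your proof is correct and follows essentially the same route as the paper: the key observation that every row of $C^\gamma$ sums to zero (via $\alpha+\beta+\eta=0$, including at the corners and across the block interface) is exactly the argument the paper uses, with the reduction $\ker\cg=\ker C^\gamma$ by invertibility of $V^\theta$ being the intended, standard step. Nothing is missing.
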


As a consequence of \cref{thm:spectrarepresentation1}, the eigenspaces of $\cg$ are always one-dimensional. Consequently, the kernel of $\cg$ is also one-dimensional and is exactly the span of $\bm 1 = (1,\dots,1)^\top$.

The fact that $\cg$ is tridiagonal allows us to determine the eigenvectors recursively. The symmetry of $\cg$ across the interface in the middle will yield very similar forms for the first and second half of its eigenvectors. The eigenvalues will then be characterised by a compatibility condition across this interface.

\begin{proposition}\label{thm:spectrarepresentation1}
    Let the affine transformation $\mu^\theta:\C\to\C$ be defined by
    \begin{align}\label{eq:mudef}
        \mu^\theta(\lambda)\coloneqq \frac{e^{-\i\theta}\lambda-\alpha}{2\sqrt{\beta\eta}}=e^{-\i\theta}\lambda\frac{1}{\gamma}\sinh \frac{\gamma}{2} - \cosh\frac{\gamma}{2}.
    \end{align}
    For $\lambda\in \C$ an eigenvalue of $\cg$, the corresponding eigenvector is given by $\bm v = (\bm x, \bm y)^{\top}$, where
    \begin{equation}\label{eq:evecform}
        \begin{aligned}
             & \bm x=\left(P_0(\mu^\theta(\lambda)), \left(e^{-\frac{\gamma}{2}}\right)P_1(\mu^\theta(\lambda)),  \cdots,  \left(e^{-\frac{\gamma}{2}}\right)^{N-1} P_{N-1}(\mu^\theta(\lambda)) \right)^{\top},            \\
             & \bm y = C_1\left(\left(e^{-\frac{\gamma}{2}}\right)^{N-1} P_{N-1}(\mu^{-\theta}(\lambda)), \cdots,  \left(e^{-\frac{\gamma}{2}}\right)P_1(\mu^{-\theta}(\lambda)), P_0(\mu^{-\theta}(\lambda)) \right)^{\top}.
        \end{aligned}
    \end{equation}
    Here, $P_n (x) = U_n(x) + e^{-\frac{\gamma}{2}}U_{n-1}(x)$ is the sum of two Chebyshev polynomials of the second kind, with $P_0 = 1$. Specifically, $U_{n+1}(x)\coloneqq 2xU_n(x)-U_{n-1}(x)$ for $n\geq 1 $ with $U_0(x)=1$ and $U_1(x)=2x$.
    Furthermore, we have
    \begin{equation}\label{eq:cdef2}
        C_1= e^{-\frac{\gamma}{2}}\frac{P_N(\mu^{\theta}(\lambda))}{P_{N-1}(\mu^{-\theta}(\lambda))}= e^{\frac{\gamma}{2}}\frac{P_{N-1}(\mu^{\theta}(\lambda))}{P_N(\mu^{-\theta}(\lambda))},
    \end{equation}
    which yields the following characterisation of the spectrum of $\cg$:
    \begin{equation}\label{eq:evalconstraint}
        \frac{P_{N}(\mu^{\theta}(\lambda))P_{N}(\mu^{-\theta}(\lambda))}{P_{N-1}(\mu^{\theta}(\lambda))P_{N-1}(\mu^{-\theta}(\lambda))} = e^\gamma.
    \end{equation}
    Namely, $\lambda\in \C$ is an eigenvalue of $\cg$ if and only if it satisfies \eqref{eq:evalconstraint}. Moreover, its corresponding eigenspace is always one-dimensional.
\end{proposition}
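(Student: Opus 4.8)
The plan is to solve the eigenvalue equation $\cg \bm v = \lambda \bm v$ directly, exploiting the tridiagonal structure and the block symmetry of $C^\gamma$. Writing $\bm v = (\bm x, \bm y)^\top$ with $\bm x, \bm y \in \C^N$ and using $\cg = V^\theta C^\gamma$, the eigenvalue equation is equivalent to $C^\gamma \bm v = \lambda (V^\theta)^{-1}\bm v$. Since $V^\theta = \diag(e^{\i\theta}I_N, e^{-\i\theta}I_N)$, this amounts to replacing $\lambda$ by $e^{-\i\theta}\lambda$ in the top $N$ rows and by $e^{\i\theta}\lambda$ in the bottom $N$ rows. First I would treat the two halves separately.

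For the top block, the interior rows $2\leq i\leq N-1$ read $\beta x_{i-1}+\eta x_{i+1}=(e^{-\i\theta}\lambda-\alpha)x_i$. Because $\beta\eta>0$ and $\beta/\eta=e^{-\gamma}$, the substitution $x_i=(e^{-\gamma/2})^{i-1}w_{i-1}$ and division by the common factor turns this into the plain Chebyshev three-term recurrence
\begin{equation*}
    w_{k+1}=2\mu^\theta(\lambda)\,w_k-w_{k-1},
\end{equation*}
with $\mu^\theta$ as in \eqref{eq:mudef}. The outer edge (row $1$), which carries the corner value $\alpha+\beta$ rather than $\alpha$, fixes the initial data $w_0=1$ and $w_1=2\mu^\theta(\lambda)+e^{-\gamma/2}$; this is precisely what forces the solution to be the combination $P_n=U_n+e^{-\gamma/2}U_{n-1}$ rather than a bare $U_n$, giving $\bm x$ the form in \eqref{eq:evecform}. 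By the symmetry $PC^\gamma P=C^\gamma$ of \cref{prop:csymm}, the bottom block is the mirror image of the top one with the roles of $\beta$ and $\eta$ interchanged, so the identical computation — now with $\mu^{-\theta}$ — yields $\bm y$ in reversed order up to an as-yet-undetermined scalar $C_1$.

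The two halves couple only through the interface rows $N$ and $N+1$, and here I would use a ``virtual index'' device. Extend the top recurrence one step to define $x_{N+1}\coloneqq (e^{-\gamma/2})^N P_N(\mu^\theta(\lambda))$ and the bottom recurrence one step to define $y_0\coloneqq C_1(e^{-\gamma/2})^N P_N(\mu^{-\theta}(\lambda))$. Row $N$ then states exactly $\eta y_1=\eta x_{N+1}$, i.e. $y_1=x_{N+1}$, which upon inserting the closed forms gives the first expression $C_1=e^{-\gamma/2}P_N(\mu^\theta(\lambda))/P_{N-1}(\mu^{-\theta}(\lambda))$ in \eqref{eq:cdef2}. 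Symmetrically, row $N+1$ states $x_N=y_0$, yielding the second expression $C_1=e^{\gamma/2}P_{N-1}(\mu^\theta(\lambda))/P_N(\mu^{-\theta}(\lambda))$. Equating the two and cross-multiplying collapses immediately to the spectral constraint \eqref{eq:evalconstraint}. Conversely, whenever $\lambda$ satisfies \eqref{eq:evalconstraint} the construction produces a genuine eigenvector; since every step is forced once $w_0=1$ is set, the eigenvector is determined up to the single overall scale $x_1$, which establishes one-dimensionality of the eigenspace.

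The hard part will be the sign bookkeeping around $\sqrt{\beta\eta}$: both $\beta$ and $\eta$ are negative for $\gamma>0$ while $\beta\eta>0$, so the root and the rescaling factor $e^{-\gamma/2}=\sqrt{\beta/\eta}$ must be oriented consistently for the edge row to produce $+e^{-\gamma/2}$ (and not $-e^{-\gamma/2}$) in $P_1$, and for the mirror symmetry to reproduce the \emph{same} polynomials $P_n$ evaluated at $\mu^{-\theta}$ in the lower block. A minor additional point is the degenerate case in which one of the denominators $P_{N-1}(\mu^{\pm\theta}(\lambda))$ or $P_N(\mu^{\pm\theta}(\lambda))$ vanishes; I would treat this as a limiting case of \eqref{eq:cdef2} in which the matching conditions force the corresponding numerator to vanish as well, so that \eqref{eq:evalconstraint} still characterises the spectrum.
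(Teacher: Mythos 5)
Your proposal is correct and follows essentially the same route as the paper's proof: the eigenvector is determined recursively from the tridiagonal structure (each half reducing, after rescaling, to the Chebyshev recurrence with the edge rows forcing the combination $P_n = U_n + e^{-\gamma/2}U_{n-1}$), the mirror symmetry $PC^\gamma P = C^\gamma$ handles the lower half with $\mu^{-\theta}$, and the two expressions \eqref{eq:cdef2} for $C_1$ together with the spectral constraint \eqref{eq:evalconstraint} arise exactly as the compatibility conditions across the interface rows $N$ and $N+1$, with one-dimensionality following because the recursion is forced once $x_1$ is fixed. The sign bookkeeping you flag is indeed the genuinely delicate point: with the positive branch of $\sqrt{\beta\eta}$ fixed by the second equality in \eqref{eq:mudef}, the consistent rescaling factor in the eigenvector is $-e^{-\gamma/2}$ rather than $e^{-\gamma/2}$ (as one can check at $\lambda=0$, where the eigenvector must be $\bm 1$ and $P_{i-1}(-\cosh\frac{\gamma}{2})=(-1)^{i-1}e^{(i-1)\gamma/2}$), while the constraint \eqref{eq:evalconstraint} is unaffected since the sign cancels in the cross-multiplied product.
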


These final two facts immediately yield the following characterisation for the exceptional points of $\cg$.
\begin{corollary}\label{cor:epnondestinct}
    An exceptional point occurs when \eqref{eq:evalconstraint} has less than $2N$ distinct solutions.
\end{corollary}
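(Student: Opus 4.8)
The plan is to read the corollary off directly from the two structural facts supplied by Proposition \ref{thm:spectrarepresentation1}: that $\lambda \in \C$ is an eigenvalue of $\cg$ precisely when it satisfies the constraint \eqref{eq:evalconstraint}, and that every eigenspace of $\cg$ is one-dimensional. Once these are in hand, the statement is a formal consequence of the standard diagonalisability criterion for a $2N\times 2N$ matrix whose eigenvalues all have geometric multiplicity one; no further analytic input is needed.

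First I would record that, since $\cg$ acts on $\C^{2N}$, its characteristic polynomial has degree $2N$ and hence it has at most $2N$ distinct eigenvalues. By Proposition \ref{thm:spectrarepresentation1}, these distinct eigenvalues are exactly the distinct solutions $\lambda\in\C$ of \eqref{eq:evalconstraint}, so the phrase ``number of distinct solutions of \eqref{eq:evalconstraint}'' in the corollary is synonymous with ``number of distinct eigenvalues of $\cg$''.

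Next I would invoke the fact that a matrix on $\C^{2N}$ is diagonalisable if and only if the sum of the geometric multiplicities of its eigenvalues equals $2N$. Because Proposition \ref{thm:spectrarepresentation1} guarantees that each eigenspace is one-dimensional, this sum coincides with the number of distinct eigenvalues. Therefore $\cg$ is diagonalisable if and only if it possesses exactly $2N$ distinct eigenvalues, i.e.\ if and only if \eqref{eq:evalconstraint} has exactly $2N$ distinct solutions. Passing to the contrapositive, $\cg$ fails to be diagonalisable — which is by definition the occurrence of an exceptional point — precisely when \eqref{eq:evalconstraint} has fewer than $2N$ distinct solutions, which is the claim.

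I do not anticipate a genuine obstacle, as all the substantive work lies in Proposition \ref{thm:spectrarepresentation1}. The only point deserving a line of care is the bookkeeping identifying the distinct roots of the \emph{rational} equation \eqref{eq:evalconstraint} with the distinct eigenvalues: a value of $\lambda$ making the denominator $P_{N-1}(\mu^{\theta}(\lambda))P_{N-1}(\mu^{-\theta}(\lambda))$ vanish cannot satisfy \eqref{eq:evalconstraint}, since the left-hand side would then be singular rather than equal to the finite value $e^{\gamma}$. Consequently clearing denominators introduces no spurious solutions affecting the count, and the distinct solutions of \eqref{eq:evalconstraint} are unambiguously in bijection with the distinct eigenvalues of $\cg$, completing the reduction.
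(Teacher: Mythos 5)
Your proof is correct and takes essentially the same route as the paper: there the corollary is presented as an immediate consequence of \cref{thm:spectrarepresentation1}, precisely via the argument you spell out --- since every eigenspace is one-dimensional, $\cg$ is diagonalisable if and only if it has $2N$ distinct eigenvalues, which by the proposition are exactly the distinct solutions of \eqref{eq:evalconstraint}. Your closing remark on vanishing denominators is careful bookkeeping that the paper leaves implicit, but it introduces no divergence in approach.
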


We aim to use this corollary to show that for a given $\theta>0$, any nonreal eigenvalue $\lambda\in \C\setminus\R$ of $\cg$ must have passed through an exceptional point. However, we must first formalise the notion of an eigenvalue ``having passed through'' an exceptional point. To that end, we would like to associate each eigenvalue $\lambda_i$ of $\cg$ with some corresponding continuous path $\lambda_i(\theta)$ such that $\lambda_i(\theta)$ is an eigenvalue of $\cg$ for all values of $\theta$ and all $i=1,\dots, 2N$.
However, precisely because exceptional points occur, we cannot choose these paths in a canonical fashion, as at these exceptional points, two eigenvalue paths $\lambda_i(\theta)$ and $\lambda_j(\theta)$, $i\neq j$, meet and cannot be distinguished.
What we can do is the following: Let $0<\theta'<\frac{\pi}{2}$ be fixed. For any simple eigenvalue $\lambda$ of $\mathnormal{C}^{\theta',\gamma}$, we can then define the maximal unique continuous  \emph{eigenvalue path} $\lambda : \theta\in [\theta_0,\theta']\to \C$ such that $\lambda(\theta)$ is always an eigenvalue of $\cg$ for any $\theta\in [\theta_0,\theta']$ and $\lambda(\theta') = \lambda$. $\theta_0$ is chosen to be either the largest $\theta<\theta'$ such that $\lambda(\theta)$ is an exceptional point, or zero - whichever is greater.

For some $0<\theta'<\frac{\pi}{2}$ and  $\lambda\in\C$ eigenvalue of $\mathnormal{C}^{\theta',\gamma}$, we can then say that $\lambda$ \emph{has passed through an exceptional point} if and only if $\theta_0$ is greater than zero. Note also that because $C^\gamma$ is diagonalisable, $\lambda(0)$ is never an exceptional point.

We can now state the following result.
\begin{corollary}\label{cor:eppassthrough}
    Let $0<\theta'<\frac{\pi}{2}$ and let $\lambda\in\C$ be an eigenvalue of $\cg$. If $\lambda$ is in $\C\setminus \R$, it must have passed through an exceptional point.
\end{corollary}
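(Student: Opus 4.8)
The plan is to argue by contradiction against the definition of $\theta_0$. Suppose the eigenvalue $\lambda$ of $\mathnormal{C}^{\theta',\gamma}$ satisfies $\lambda\in\C\setminus\R$ but has \emph{not} passed through an exceptional point, so that $\theta_0=0$. By construction this means the eigenvalue path $\lambda(\theta)$ extends continuously to the whole interval $[0,\theta']$ and meets no exceptional point on $(0,\theta']$. First I would record the two structural facts that drive everything. By pseudo--Hermiticity the spectrum is conjugation symmetric, $\sigma(\cg)=\overline{\sigma(\cg)}$, so $\overline{\lambda(\cdot)}$ is again an admissible continuous eigenvalue path of $\cg$. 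At the endpoint $\theta=0$ the matrix $\mathnormal{C}^{0,\gamma}=\cm$ is quasi-Hermitian, hence diagonalisable with real spectrum (\cref{prop:cquasiherm}), which forces $\lambda(0)\in\R$.

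Next I would track the imaginary part $g(\theta)\coloneqq\Im\lambda(\theta)$. It is continuous with $g(0)=0$ and $g(\theta')=\Im\lambda\neq 0$, so the zero set $\{\theta\in[0,\theta']:g(\theta)=0\}$ is nonempty and closed, and its supremum $\theta_1$ satisfies $\theta_1<\theta'$. Then $\lambda(\theta_1)=\overline{\lambda(\theta_1)}\in\R$, while for every $\theta\in(\theta_1,\theta']$ one has $g(\theta)\neq 0$, so $\lambda(\theta)$ and $\overline{\lambda(\theta)}$ are two \emph{distinct} eigenvalues of $\cg$. Letting $\theta\downarrow\theta_1$, both branches converge to the single real value $\lambda(\theta_1)$; since the characteristic polynomial of $\cg$ depends analytically on $\theta$ through $e^{\pm\i\theta}$, continuity of its roots shows that $\lambda(\theta_1)$ is a root of multiplicity at least two, i.e.\ an eigenvalue of $\mathnormal{C}^{\theta_1,\gamma}$ of algebraic multiplicity $\geq 2$.

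The crux is then to upgrade this algebraic degeneracy to genuine non-diagonalisability. Here I would invoke \cref{thm:spectrarepresentation1}, which guarantees that every eigenspace of $\cg$ is one-dimensional; hence $\lambda(\theta_1)$ has geometric multiplicity exactly $1$, strictly below its algebraic multiplicity, and $\mathnormal{C}^{\theta_1,\gamma}$ fails to be diagonalisable. In other words, $\theta_1$ is an exceptional point (consistent with the criterion in \cref{cor:epnondestinct}). This closes both cases: if $\theta_1>0$ then $\theta_1\in(0,\theta')$ is an exceptional point lying on the path, contradicting $\theta_0=0$; if $\theta_1=0$ then $\cm$ itself would be non-diagonalisable, contradicting its quasi-Hermiticity. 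Either way the assumption $\theta_0=0$ is untenable, so $\theta_0>0$ and $\lambda$ has passed through an exceptional point.

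I expect the main obstacle to be the bookkeeping around the eigenvalue paths rather than any single hard estimate. One must verify carefully that $\overline{\lambda(\cdot)}$ is genuinely a second admissible continuous path and that the two branches collide \emph{exactly} at the boundary value $\theta_1$ of the real locus, so that the merging point is forced to be a degeneracy of the matrix and not an artefact of a non-canonical parametrisation of the branches. The one-dimensionality of the eigenspaces from \cref{thm:spectrarepresentation1} is precisely what makes this colliding point a true exceptional point rather than a harmless crossing of two independent eigenvectors; isolating and citing this correctly is the delicate part of the argument.
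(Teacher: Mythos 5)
Your proof is correct and takes essentially the same route the paper intends: conjugation symmetry of the spectrum from pseudo-Hermiticity, reality and diagonalisability at $\theta=0$ from quasi-Hermiticity, and the collision of the conjugate pair at the last real point of the path, which by one-dimensionality of the eigenspaces (\cref{thm:spectrarepresentation1}) forces non-diagonalisability there. Your algebraic-versus-geometric multiplicity bookkeeping is just a rephrasing of the paper's criterion in \cref{cor:epnondestinct} that an exceptional point occurs exactly when there are fewer than $2N$ distinct eigenvalues.
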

This corollary is very useful because it allows us to prove the existence of exceptional points merely from the fact that some eigenvalues are nonreal.

The final result of this section characterises the relation between the factors $C_\lambda$ and $C_{\overline{\lambda}}$ for complex conjugate pairs $\lambda, \overline{\lambda}$ of $\cg$.
\begin{corollary}\label{prop:cconj}
    Let $\lambda, \overline{\lambda}, \in \C$ be a pair of eigenvalues of $\cg$ and let $C_\lambda, C_{\overline{\lambda}}$ be the corresponding factors as defined in (\ref{eq:cdef2}). Then, we have
    \begin{equation}
        \overline{C_\lambda}C_{\overline{\lambda}}=1.
    \end{equation}
    In particular, we have $\lvert C_\lambda \rvert =1$ for $\lambda\in \R$.
\end{corollary}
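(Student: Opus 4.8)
The plan is to reduce the identity $\overline{C_\lambda}\,C_{\overline{\lambda}}=1$ to two elementary observations: that the affine map $\mu^\theta$ and the polynomials $P_n$ interact predictably with complex conjugation, and that the two equivalent expressions for the factor in (\ref{eq:cdef2}) can be paired so that all polynomial factors cancel. Here $C_\lambda$ denotes the factor $C_1$ of (\ref{eq:cdef2}) associated with the eigenvalue $\lambda$. The first step is to record the conjugation behaviour of $\mu^\theta$. From (\ref{equ:alphabetagamma}) one computes $\beta\eta = \gamma^2/(4\sinh^2(\gamma/2))>0$ for $\gamma\neq 0$, so that both $\alpha=\gamma\coth(\gamma/2)$ and $2\sqrt{\beta\eta}=\gamma/\sinh(\gamma/2)$ are real. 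Taking complex conjugates in the definition (\ref{eq:mudef}) and using $\overline{e^{-\i\theta}}=e^{\i\theta}$ then yields the key identity
\[
    \overline{\mu^\theta(\lambda)} = \frac{e^{\i\theta}\overline{\lambda}-\alpha}{2\sqrt{\beta\eta}} = \mu^{-\theta}(\overline{\lambda}),
\]
together with its mirror $\overline{\mu^{-\theta}(\lambda)} = \mu^{\theta}(\overline{\lambda})$.

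Next I would use that the Chebyshev polynomials $U_n$ have real coefficients and that $e^{-\gamma/2}$ is real, so the combinations $P_n(x) = U_n(x)+e^{-\gamma/2}U_{n-1}(x)$ also have real coefficients; hence $\overline{P_n(x)}=P_n(\overline{x})$ for every $x\in\C$. Composing this with the previous step gives
\[
    \overline{P_N(\mu^\theta(\lambda))} = P_N(\mu^{-\theta}(\overline{\lambda})), \qquad \overline{P_{N-1}(\mu^{-\theta}(\lambda))} = P_{N-1}(\mu^{\theta}(\overline{\lambda})).
\]
Conjugating the \emph{first} expression for $C_\lambda$ in (\ref{eq:cdef2}) therefore produces
\[
    \overline{C_\lambda} = e^{-\frac{\gamma}{2}}\,\frac{P_N(\mu^{-\theta}(\overline{\lambda}))}{P_{N-1}(\mu^{\theta}(\overline{\lambda}))}.
\]

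Now I would invoke the \emph{second} expression for the factor in (\ref{eq:cdef2}), evaluated at the eigenvalue $\overline{\lambda}$, namely $C_{\overline{\lambda}} = e^{\frac{\gamma}{2}}\,P_{N-1}(\mu^{\theta}(\overline{\lambda}))/P_N(\mu^{-\theta}(\overline{\lambda}))$. Multiplying the two, the prefactors $e^{-\gamma/2}$ and $e^{\gamma/2}$ combine to $1$ and all four polynomial factors cancel pairwise, leaving $\overline{C_\lambda}\,C_{\overline{\lambda}}=1$. The ``in particular'' assertion then follows by specialising to $\lambda=\overline{\lambda}\in\R$, where $C_{\overline{\lambda}}=C_\lambda$ and the identity degenerates to $\lvert C_\lambda\rvert^2=1$.

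The only genuinely delicate point is the pairing in the last step: one must conjugate the \emph{first} of the two equivalent forms of $C_\lambda$ and match it against the \emph{second} form of $C_{\overline{\lambda}}$, so that numerator and denominator cancel directly. Equivalently, one could use the same form for both factors and instead invoke the spectral constraint (\ref{eq:evalconstraint}) applied to $\overline{\lambda}$ — which is an eigenvalue precisely because $\sigma(\cg)$ is invariant under conjugation — to supply the factor $e^{\gamma}$ cancelling the $e^{-\gamma}$ prefactor. Either route rests on the reality of $\alpha$ and $\sqrt{\beta\eta}$ secured in the first step, which is what turns conjugation into the clean symmetry $\theta\mapsto-\theta$, $\lambda\mapsto\overline{\lambda}$.
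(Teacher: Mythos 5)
Your proof is correct: the reality of $\alpha$ and $\sqrt{\beta\eta}$ gives $\overline{\mu^{\pm\theta}(\lambda)}=\mu^{\mp\theta}(\overline{\lambda})$, the real coefficients of $P_n$ turn conjugation into evaluation at the conjugate argument, and pairing the first form of $C_\lambda$ in \eqref{eq:cdef2} with the second form of $C_{\overline{\lambda}}$ (each legitimate since both $\lambda$ and $\overline{\lambda}$ are eigenvalues) makes everything cancel. This is precisely the argument the paper's presentation implies — the corollary is stated as an immediate consequence of the two equivalent expressions in \eqref{eq:cdef2} together with the conjugation symmetry of the spectrum — so no further comment is needed.
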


\subsection{Eigenvalue of the generalised gauge capacitance matrix}\label{sec:eva}
In this section, we study the eigenvalues of the generalised gauge capacitance matrix $\cg$ for the system described in the previous section. We prove that for small $\theta$ all the eigenvalues of $\cg$ are real, which corresponds to the coupled regime. We also prove existence and density of exceptional points. Finally,  we  approximate the locations of eigenvalues. Understanding the movement of eigenvalues will prove to be a crucial prerequisite to understand the decoupling behaviour of the eigenvectors in the next section.

\subsubsection{Coupled regime}\label{ssec:evacoupled}
This subsection is dedicated to the case where all eigenvalues of $\cg$ are real. We will show that for small $\theta$ the eigenvalues behave similarly to the ones of the gauge capacitance matrix $\capmat^{\theta=0,\gamma}$. Consequently, as we will show in \cref{sec:eves}, also the eigenvectors of $\cg$ will have a  form similar to that of $\capmat^{\theta=0,\gamma}$.
\begin{proposition}
    For any $N\in \N$ and $\gamma>0$, there exists an $\varepsilon>0$ such that for $0\leq \theta<\varepsilon$ all the eigenvalues of $\cg$ are real. For a real eigenvalue $\lambda$ of $\cg$, the eigenvector $\bm = (\bm x, \bm y)^\top$ decomposed as in \cref{thm:spectrarepresentation1}, has the following symmetry:
    \begin{equation}
        \bm y = e^{\i\phi}P\overline{\bm x}
    \end{equation}
    for some $\phi \in [0,2\pi)$. In particular, we have $\abs{\bm x^{(j)}} = \abs{\bm y^{(2N + 1 - j)}}$ for $j=1,\dots N$.
\end{proposition}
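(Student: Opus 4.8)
\section*{Proof proposal}

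The plan is to handle the reality of the spectrum and the eigenvector symmetry as two separate arguments: the first rests on the quasi-Hermiticity of $\cm$ at $\theta=0$ plus a continuity-of-roots argument combined with pseudo-Hermiticity, while the second is a short symmetry argument exploiting $P\cg P=\overline{\cg}$ and the one-dimensionality of the eigenspaces.

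First I would establish reality for small $\theta$. At $\theta=0$, by \cref{prop:cquasiherm} the matrix $\cm$ is quasi-Hermitian, hence diagonalisable with real spectrum; since \cref{thm:spectrarepresentation1} guarantees that every eigenspace is one-dimensional, diagonalisability forces every eigenvalue to have algebraic multiplicity one, so $\cm$ has $2N$ simple real eigenvalues $\lambda_1^0,\dots,\lambda_{2N}^0$. I would then fix $\delta>0$ so small that the closed discs $\overline{B(\lambda_k^0,\delta)}$ are pairwise disjoint. Because the entries of $\cg=V^\theta\cm$ depend analytically on $\theta$, so do the coefficients of $\det(\cg-\lambda I)$, and hence its roots vary continuously; the number of roots inside each disc is integer-valued and locally constant, so it stays equal to $1$ for all $0\le\theta<\varepsilon$ with $\varepsilon$ small enough. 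If some eigenvalue $\lambda_k(\theta)$ were non-real, then pseudo-Hermiticity of $\cg$ (which yields $\sigma(\cg)=\overline{\sigma(\cg)}$) would make $\overline{\lambda_k(\theta)}$ a second, distinct eigenvalue inside the same disc $B(\lambda_k^0,\delta)$ (as $\lambda_k^0\in\R$), contradicting that the disc contains exactly one root. Therefore all eigenvalues are real for $0\le\theta<\varepsilon$.

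For the eigenvector symmetry, let $\lambda\in\R$ be an eigenvalue with eigenvector $\bm v=(\bm x,\bm y)^\top$. Conjugating $\cg\bm v=\lambda\bm v$ and using $\overline{\cg}=P\cg P$ from \cref{prop:csymm} together with $\overline{\lambda}=\lambda$ gives $\cg(P\overline{\bm v})=\lambda(P\overline{\bm v})$, so $P\overline{\bm v}$ lies in the $\lambda$-eigenspace. Since that eigenspace is one-dimensional by \cref{thm:spectrarepresentation1}, we get $P\overline{\bm v}=c\,\bm v$ for some $c\in\C$. Applying $P$ and complex conjugation to this identity and using that $P$ is real with $P^2=I$ yields $\overline{\bm v}=|c|^2\overline{\bm v}$, so $|c|=1$.

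Finally I would unwind the block structure. Writing the $2N\times 2N$ reversal as $P(\bm x,\bm y)^\top=(\widehat{\bm y},\widehat{\bm x})^\top$, where $\widehat{\bm w}$ denotes the $N$-dimensional reversal of $\bm w$ (the $P$ of the statement), the relation $P\overline{\bm v}=c\,\bm v$ reads $\widehat{\overline{\bm x}}=c\,\bm y$. Setting $e^{\i\phi}=c^{-1}=\overline{c}$ gives $\bm y=e^{\i\phi}P\overline{\bm x}$, and taking componentwise moduli followed by the reindexing $k=N+1-j$ produces $\abs{\bm x^{(j)}}=\abs{\bm y^{(2N+1-j)}}$ for $j=1,\dots,N$. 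The only genuinely delicate step is the reality claim: continuity of the characteristic roots by itself does not keep them on the real axis, and it is precisely the combination of simplicity of the spectrum at $\theta=0$ with the conjugation symmetry $\sigma(\cg)=\overline{\sigma(\cg)}$ that prevents eigenvalues from splitting off into conjugate pairs for small $\theta$.
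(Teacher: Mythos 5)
Your proof is correct and assembles exactly the ingredients the paper sets up for this purpose: quasi-Hermiticity of $C^{0,\gamma}$ combined with the one-dimensionality of eigenspaces from \cref{thm:spectrarepresentation1} to get $2N$ simple real eigenvalues at $\theta=0$, continuity of the characteristic roots together with the pseudo-Hermitian conjugation symmetry $\sigma(\cg)=\overline{\sigma(\cg)}$ to prevent complex-conjugate splitting for small $\theta$, and the symmetry $P\cg P=\overline{\cg}$ plus one-dimensionality again to obtain $P\overline{\bm v}=c\,\bm v$ with $\lvert c\rvert=1$ and hence $\bm y = e^{\i\phi}P\overline{\bm x}$. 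This is essentially the same route as the argument behind the paper's statement (which the survey defers to its cited source), so no further comparison is needed.
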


\subsubsection{Existence of exceptional points}\label{ssec:epexistence}
In this subsection, our aim is to show that regardless of $N$ and $\gamma$, all eigenvalues must pass through an exceptional point as $\theta$ is increased from $0$ to $\frac{\pi}{2}$.

\begin{theorem}\label{thm::all_eigs_excpetional_point}
    Let $\gamma>0$ and $N\in\N$. Then, all but two eigenvalues $\lambda\in \C$ of $\cg$ for $\theta=\frac{\pi}{2}$ must have passed through an exceptional point. The remaining two eigenvalues experience an exceptional point at $\theta=\frac{\pi}{2}$.
\end{theorem}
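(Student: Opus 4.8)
The plan is to reduce everything to the scalar spectral constraint \eqref{eq:evalconstraint} of \cref{thm:spectrarepresentation1}. Define the polynomial
\[
F(\lambda,\theta) \coloneqq P_N(\mu^\theta(\lambda))\,P_N(\mu^{-\theta}(\lambda)) - e^\gamma\, P_{N-1}(\mu^\theta(\lambda))\,P_{N-1}(\mu^{-\theta}(\lambda)),
\]
which, since $\mu^{\pm\theta}$ are affine in $\lambda$ and $\deg P_n=n$, is a polynomial of degree $2N$ in $\lambda$ whose roots counted with multiplicity are exactly the $2N$ eigenvalues of $\cg$. By \cref{cor:epnondestinct} an exceptional point occurs precisely when $F(\cdot,\theta)$ has a repeated root. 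Recall also that $\bm 1=(1,\dots,1)^\top\in\ker\cg$ for every $\theta$, so $F(0,\theta)=0$ identically.

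First I would pin down the exceptional point at $\theta=\pi/2$. A direct computation from \eqref{eq:mudef} gives $\mu^{\pi/2}(-\lambda)=\mu^{-\pi/2}(\lambda)$ and $\mu^{-\pi/2}(-\lambda)=\mu^{\pi/2}(\lambda)$, so that $\lambda\mapsto-\lambda$ merely interchanges the two factors in each product. Hence $F(-\lambda,\pi/2)=F(\lambda,\pi/2)$, i.e. $F(\cdot,\pi/2)$ is \emph{even}; since it also vanishes at the origin, $\lambda=0$ is a root of even order, in particular of algebraic multiplicity at least two. By \cref{thm:spectrarepresentation1} every eigenspace of $\cg$ is one-dimensional, so the geometric multiplicity of the eigenvalue $0$ equals one. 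A defective eigenvalue (algebraic multiplicity $\geq2$, geometric multiplicity $1$) makes $\cg$ non-diagonalisable at $\theta=\pi/2$; these two coalescing copies of $\lambda=0$ are the ``remaining two'' eigenvalues experiencing an exceptional point at $\theta=\pi/2$.

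Next I would show that, apart from this double root at the origin, every root of $F(\cdot,\pi/2)$ is non-real. For real $\lambda$ one has $\mu^{-\pi/2}(\lambda)=\overline{\mu^{\pi/2}(\lambda)}$ (using that $\alpha$ and $\sqrt{\beta\eta}>0$ are real), and $P_n$ has real coefficients; writing $w=\mu^{\pi/2}(\lambda)=-\cosh(\tfrac\gamma2)+\i s$ with $s=-\lambda\,\gamma^{-1}\sinh(\tfrac\gamma2)$, the restriction of $F$ to the real axis becomes
\[
\varphi(s)\coloneqq\bigl|P_N(w)\bigr|^2-e^\gamma\bigl|P_{N-1}(w)\bigr|^2 .
\]
The identity $P_n(-\cosh(\tfrac\gamma2))=(-1)^n e^{n\gamma/2}$ (obtained from the closed form of $U_n$ outside $[-1,1]$) yields $\varphi(0)=0$, consistent with the double root, while the leading term $P_n(w)\sim 2^n w^n$ gives $\varphi(s)\to+\infty$ as $|s|\to\infty$. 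The crux is to prove that $\varphi(s)>0$ for all $s\neq0$ with a nondegenerate zero at $s=0$; I expect this to be the main obstacle. I would attack it with a Christoffel--Darboux/telescoping identity for $|P_N|^2-e^\gamma|P_{N-1}|^2$ along the line $\operatorname{Re}w=-\cosh(\tfrac\gamma2)$, or by exploiting the monotonicity of the Chebyshev ratio $U_{n-1}/U_n$ outside $[-1,1]$ from \cref{lemma: monotonicity_Chebyshev}(iii). Strict positivity simultaneously certifies that $\lambda=0$ is the only real eigenvalue at $\theta=\pi/2$ and that its algebraic multiplicity is exactly two.

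Finally I would transfer this to the dynamical claim about the other $2N-2$ eigenvalues. By the previous step these are non-real at $\theta=\pi/2$, hence have strictly nonzero imaginary part, and by continuity of the spectrum in $\theta$ they remain non-real for all $\theta'$ in a left-neighbourhood of $\pi/2$. For such $\theta'\in(0,\pi/2)$, \cref{cor:eppassthrough} asserts that each non-real eigenvalue must already have passed through an exceptional point; letting $\theta'\to\pi/2^-$ locates that exceptional point in $(0,\pi/2)$. Thus exactly $2N-2$ eigenvalues pass through exceptional points strictly before $\theta=\pi/2$, while the two copies of $\lambda=0$ meet at an exceptional point precisely at $\theta=\pi/2$, which is the asserted dichotomy.
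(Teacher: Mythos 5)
Your overall strategy is essentially the paper's: show that at $\theta=\pi/2$ the eigenvalue $0$ is a root of even order of the spectral condition (hence an exceptional point, by \cref{cor:epnondestinct} or by defectiveness, since \cref{thm:spectrarepresentation1} forces geometric multiplicity one), show that all other eigenvalues of $\cg$ at $\theta=\pi/2$ are non-real, and then invoke \cref{cor:eppassthrough} to conclude that those must have passed through exceptional points strictly before $\pi/2$. Your evenness computation ($\mu^{\pi/2}(-\lambda)=\mu^{-\pi/2}(\lambda)$, so $F(\cdot,\pi/2)$ is even) is correct and is a clean way to obtain the exceptional point at $\theta=\pi/2$, and your continuity argument at the end is an acceptable substitute for the paper's appeal to the distinctness of the nonzero eigenvalues at $\pi/2$.

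However, there is a genuine gap exactly where you flag ``the main obstacle'': the claim that $\varphi(s)=\lvert P_N(w)\rvert^2-e^\gamma\lvert P_{N-1}(w)\rvert^2>0$ for all $s\neq 0$, with a nondegenerate zero at $s=0$ (equivalently, that $\cg$ at $\theta=\pi/2$ has no nonzero real eigenvalue and that $0$ has multiplicity exactly two), is asserted but not proven, and the whole theorem rests on it. This is precisely the content of the paper's \cref{prop:Szeros}, which states that the spectral function $\mathcal{S}$ at $\theta=\pi/2$ has exactly $2N-2$ distinct real zeros (the purely imaginary eigenvalues of $\cg$) and a double zero at $0$; the paper proves it by exploiting the interlacing of the Chebyshev-type polynomials to produce and count $N$ real zeros of $P_N$ and $N-1$ real zeros of $P_N+P_{N-1}$, combined with the evenness of $\mathcal{S}$. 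Neither of your two suggested attacks closes the gap as stated: item (iii) of \cref{lemma: monotonicity_Chebyshev} concerns the ratio $U_{n-1}/U_n$ for \emph{real} arguments outside $[-1,1]$, whereas your $w$ ranges over the complex vertical line $\operatorname{Re} w=-\cosh(\gamma/2)$, and the Christoffel--Darboux/telescoping identity you invoke is never written down. Note that, via \cref{prop: real eig means on level set of ratio of poly}, your inequality is equivalent to saying that the line $\mu^{\pi/2}(\R)$ meets the level set $\left\{\mu:\,\abs{\pnr{\mu}}=e^{\gamma/2}\right\}$ only at the tangency point $-\cosh(\gamma/2)$; for finite $N$ this level set is only asymptotically the ellipse $\{\abs{a(\mu)}=e^{\gamma/2}\}$, so this is a nontrivial statement requiring an argument of exactly the kind \cref{prop:Szeros} supplies. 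Without it, your proof establishes only the exceptional point at $\theta=\pi/2$, not the assertion about the other $2N-2$ eigenvalues.
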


In line with \cref{cor:eppassthrough}, in order to show that an eigenvalue $\lambda$ of $\cg$ for $\theta=\frac{\pi}{2}$ has passed through an exceptional point, it is sufficient to show that $\lambda$ lies in $\C\setminus\R$. Indeed, for $\theta=\frac{\pi}{2}$ we will show that the spectrum of $\cg$ lies entirely on the imaginary axis. All nonzero eigenvalues must thus have gone through an exceptional point. Two eigenvalues will turn out to be zero, yielding another exceptional point exactly at $\theta=\frac{\pi}{2}$.

We will need a good understanding of  the zeros of Chebyshev polynomials.
\begin{proposition}\label{prop:Szeros}
    $\mathcal{S}(\lambda)$ has exactly $(2N-2)$ distinct real zeros and a double zero $\lambda=0$.
\end{proposition}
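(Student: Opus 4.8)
The plan is to read off $\mathcal{S}$ from the spectral constraint \eqref{eq:evalconstraint} specialised to $\theta=\tfrac{\pi}{2}$, where by \eqref{eq:mudef} one has $\mu^{\pi/2}(\lambda)=-\i\lambda\,\sinh(\gamma/2)/\gamma-\cosh(\gamma/2)$ and $\mu^{-\pi/2}(\lambda)=\i\lambda\,\sinh(\gamma/2)/\gamma-\cosh(\gamma/2)$. Along the imaginary axis $\lambda=\i\nu$ both arguments become the \emph{real} quantities $a(\nu)=\nu\sinh(\gamma/2)/\gamma-\cosh(\gamma/2)$ and $b(\nu)=-\nu\sinh(\gamma/2)/\gamma-\cosh(\gamma/2)$; keeping (with slight abuse) the name $\lambda$ for this real coordinate, I take
\[
  \mathcal{S}(\lambda)=P_N\!\big(a(\lambda)\big)P_N\!\big(b(\lambda)\big)-e^{\gamma}P_{N-1}\!\big(a(\lambda)\big)P_{N-1}\!\big(b(\lambda)\big),
\]
whose real zeros are exactly the imaginary-axis coordinates of the eigenvalues of $\cg$ at $\theta=\tfrac{\pi}{2}$. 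First I would record three structural facts: (i) since the $P_n$ have real coefficients, $\mathcal{S}$ has real coefficients; (ii) the reflection $\lambda\mapsto-\lambda$ interchanges $a$ and $b$, so $\mathcal{S}$ is even; (iii) since each factor $P_n(a(\lambda))$, $P_n(b(\lambda))$ has degree $n$ with nonvanishing leading coefficient, $\deg\mathcal{S}=2N$. Writing $\mathcal{S}(\lambda)=Q(\lambda^2)$ with $\deg Q=N$, the whole statement reduces to showing that $Q$ has a simple root at $0$ together with $N-1$ further real, simple roots of a single sign.

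For the root at the origin I would use the closed evaluation $P_n(-\cosh(\gamma/2))=(-1)^n e^{n\gamma/2}$, which follows from $U_n(-\cosh\phi)=(-1)^n\sinh((n+1)\phi)/\sinh\phi$ and the telescoping identity $\sinh((n+1)\tfrac{\gamma}{2})-e^{-\gamma/2}\sinh(n\tfrac{\gamma}{2})=\sinh(\tfrac{\gamma}{2})\,e^{n\gamma/2}$. Since $a(0)=b(0)=-\cosh(\gamma/2)$, this gives $\mathcal{S}(0)=e^{N\gamma}-e^{\gamma}e^{(N-1)\gamma}=0$, so $\lambda=0$ is a zero and, by evenness, at least a double one. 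I would not argue directly that it is exactly double; this will fall out of the final degree count.

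The heart of the argument is to produce $2N-2$ distinct nonzero real zeros. Here I would exploit that $\{P_n\}$ obeys the recurrence $P_{n+1}=2xP_n-P_{n-1}$ with $P_0=1$, $P_1=2x+e^{-\gamma/2}$, hence is an orthogonal-type sequence with positive subdiagonal; consequently each $P_n$ has $n$ simple real zeros, the zeros of $P_N$ and $P_{N-1}$ strictly interlace, and $x\mapsto P_N(x)/P_{N-1}(x)$ is strictly monotone between consecutive poles — the exact analogue of the monotonicity in \cref{lemma: monotonicity_Chebyshev}(iii). Setting $\Phi(x)=P_N(x)/P_{N-1}(x)$, the constraint reads $\Phi(a(\lambda))\Phi(b(\lambda))=e^{\gamma}$, with $a$ increasing and $b$ decreasing in $\lambda$ and $a(0)=b(0)=-\cosh(\gamma/2)$, where $\Phi(-\cosh(\gamma/2))=-e^{\gamma/2}$ so the identity holds (recovering the origin). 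I would then track $\Phi(a)\Phi(b)-e^{\gamma}$ across the intervals cut out by the poles of the two factors (the $\lambda$-preimages of the zeros of $P_{N-1}$): on each such interval the strictly monotone branches of $\Phi$ force a controlled transversal crossing, and summing over the $2(N-1)$ poles on the $\lambda$-line yields $2N-2$ real solutions, arranged symmetrically under $\lambda\mapsto-\lambda$.

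The main obstacle is exactly this counting. Because $\Phi(a)\Phi(b)$ is a product of two monotone compositions, one increasing and one decreasing in $\lambda$, it is not globally monotone, so one must argue pole-by-pole, keeping track of the signs of $P_{N-1}(a)$ and $P_{N-1}(b)$ and of whether $a,b$ sit inside or outside $[-1,1]$, to show that each pole contributes precisely one crossing and that no spurious or missing roots appear near $\lambda=0$. Once at least $2N-2$ distinct nonzero real zeros are secured, the proof closes by bookkeeping: these account for multiplicity $2N-2$, the even zero at $0$ for at least $2$, and $\deg\mathcal{S}=2N$ forces equality. Hence $0$ is \emph{exactly} double, all remaining zeros are simple, and there are no further (nonreal) zeros — which is the assertion, and simultaneously shows that the spectrum of $\cg$ at $\theta=\tfrac{\pi}{2}$ lies entirely on the imaginary axis.
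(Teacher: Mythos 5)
Your setup is the right one: since the survey never defines $\mathcal{S}$, reconstructing it from the spectral constraint \eqref{eq:evalconstraint} at $\theta=\tfrac{\pi}{2}$ in the rotated (imaginary-axis) coordinate is exactly what is needed for the way \cref{prop:Szeros} is used afterwards, and your structural facts are all correct: real coefficients, evenness, $\deg\mathcal{S}=2N$, the evaluation $P_n(-\cosh(\gamma/2))=(-1)^n e^{n\gamma/2}$ forcing an at-least-double zero at the origin, and the closing degree count that upgrades ``at least'' to ``exactly''. The genuine gap is that the central step --- producing the $2N-2$ distinct nonzero real zeros, which is the entire content of the proposition --- is never actually carried out. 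You assert that the monotone branches of $\Phi=P_N/P_{N-1}$ ``force a controlled transversal crossing'' on each pole-delimited interval and that summing over the $2(N-1)$ poles yields $2N-2$ solutions, but then you concede yourself that one must still argue pole-by-pole and track signs to exclude missing or spurious crossings. That concession is precisely the proposition: $\Phi(a(\lambda))\Phi(b(\lambda))$ is a product of an increasing and a decreasing composition, so a priori a branch could miss the level $e^\gamma$ entirely, or cross it more than once, and nothing in your write-up rules this out.

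The missing ingredient can be pinned down as a single uniform bound. For $\lambda>0$ one has $b(\lambda)<-\cosh(\gamma/2)<-1$, which lies to the left of all zeros of $P_{N-1}$ and $P_N$; since $\Phi$ is increasing on that interval and $\Phi(-\cosh(\gamma/2))=-e^{\gamma/2}$ (the same telescoping identity you used at the origin), it follows that $\Phi(b(\lambda))<-e^{\gamma/2}$, so the target value $e^{\gamma}/\Phi(b(\lambda))$ stays trapped in the bounded interval $(-e^{\gamma/2},0)$ for all $\lambda>0$. Only with this trapping does the intermediate value theorem apply: on each of the $N-1$ branches between consecutive poles of $\lambda\mapsto\Phi(a(\lambda))$ (including the unbounded last one), $\Phi(a(\lambda))$ sweeps all of $(-\infty,+\infty)$ and hence meets the trapped target at least once. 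This gives $N-1$ distinct nonzero zeros on each half-line; together with the double zero at $0$ this already saturates the degree $2N$, and the degree count then retroactively shows that each crossing is simple, that the interval adjacent to the origin contributes no extra root, and that no nonreal zeros exist. Note finally that the paper's own route, per the sketch accompanying the proposition, is different: it does not track the ratio $\Phi$ at all, but uses the interlacing of Chebyshev-type zeros (those of $P_N$ and of $P_N+P_{N-1}$) to read off alternating signs of $\mathcal{S}$ at explicit points, and then invokes evenness --- a direct sign-change argument that bypasses the pole-by-pole bookkeeping your sketch leaves unresolved.
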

The main idea of the proof is to exploit the heavily interlaced nature of the Chebyshev polynomials, which will prove to be a robust source of zeros of their composites. This will allow us to guarantee and bound $N$ real zeros for $P_N$ and $(N-1)$ real zeros for $P_N+P_{N-1}$. The evenness of $\mathcal{S}(\lambda)$ will then allow us to use these results to guarantee zeros of $\mathcal{S}$ as well.

We can then combine the arguments of this subsection to prove \cref{thm::all_eigs_excpetional_point}.
\begin{proof}[Proof of \cref{thm::all_eigs_excpetional_point}]
    For $\theta=0$, all the eigenvalues are real and for $\theta=\frac{\pi}{2}$ all but two eigenvalues are purely imaginary by \cref{prop:Szeros}. The two not purely imaginary eigenvalues are both zero, causing an exceptional point by \cref{cor:epnondestinct}. By \cref{cor:eppassthrough}, all the purely imaginary, nonzero eigenvalues must have passed through an exceptional point by $\theta=\frac{\pi}{2}$. Furthermore, these eigenvalues are distinct, which ensures that they passed through that exceptional point \emph{before} $\theta=\frac{\pi}{2}$.
\end{proof}

\subsubsection{Asymptotic density of exceptional points}\label{ssec:epdensity}

\begin{figure}
    \centering
    \includegraphics[width=0.5\textwidth]{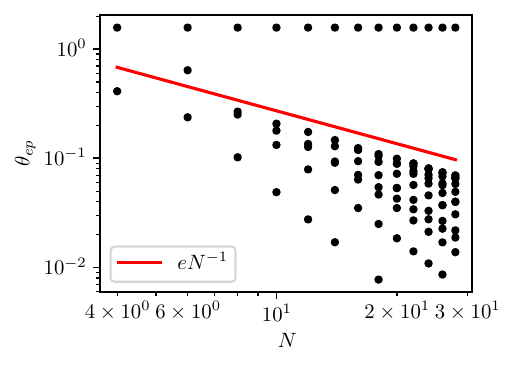}
    \caption{Distribution of the exceptional points for varying $N$. For any $N$, the system exhibits a trivial exceptional point at $\theta=\frac{\pi}{2}$. All other exceptional points concentrate in the interval $[0,e/N]$ and become increasingly dense as $N$ grows.}
    \label{fig:epdependence}
\end{figure}
We are now interested in showing that exceptional points do not only occur (as shown in the previous subsection) but also cluster creating a parameter region with high density of such points. Many of the results developed in this subsection will also be used in \cref{ssec:evalocations} and \cref{sec:eves} as they enable the asymptotic characterisation of the eigenvalue locations and eigenvector growth.
\begin{theorem}\label{thm:asymptotic density}
    Let $0<\theta<\frac{\pi}{2}$ and $\gamma>0$ be fixed. Then, there exists an $N_0\in \N$ such that for every $N\geq N_0$, the corresponding $\cg$ has exactly two real eigenvalues.
\end{theorem}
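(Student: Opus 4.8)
The plan is to turn the spectral characterisation \eqref{eq:evalconstraint} into a count of real solutions of a single real equation and then compare it with its $N\to\infty$ limit. By \cref{thm:spectrarepresentation1}, $\lambda$ is an eigenvalue of $\cg$ precisely when $R_N(\mu^\theta(\lambda))R_N(\mu^{-\theta}(\lambda))=e^\gamma$, where $R_N:=P_N/P_{N-1}$. First I would record that, by \eqref{equ:alphabetagamma}, $2\sqrt{\beta\eta}=\gamma/\sinh(\gamma/2)$ and $\alpha/(2\sqrt{\beta\eta})=\cosh(\gamma/2)$ are real, so that for real $\lambda$ one has $\mu^{-\theta}(\lambda)=\overline{\mu^\theta(\lambda)}$; since the $P_n$ have real coefficients, the constraint collapses to the single real equation $|R_N(\mu^\theta(\lambda))|=e^{\gamma/2}$. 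Thus the real eigenvalues are exactly the real zeros of $\tilde F_N(\lambda):=|R_N(\mu^\theta(\lambda))|^2-e^\gamma$, and $\lambda=0$ is always one of them (it is the eigenvalue with eigenvector $\bm 1\in\ker\cg$; equivalently $\mu^\theta(0)=-\cosh(\gamma/2)$, and a direct Joukowski computation gives $P_N(-\cosh(\gamma/2))=(-1)^Ne^{N\gamma/2}$, hence $R_N=-e^{\gamma/2}$ there).

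Next I would set up the limiting picture. As $\lambda$ runs over $\R$, the point $w=\mu^\theta(\lambda)$ runs over the straight line $L$ through $-\cosh(\gamma/2)$ in the direction $e^{-\i\theta}$ by \eqref{eq:mudef}. For $\theta\in(0,\tfrac\pi2)$ and $\gamma>0$ the line $L$ meets the real axis only at $-\cosh(\gamma/2)<-1$ and stays at a positive distance from $[-1,1]$; writing $w=\tfrac12(z+z^{-1})$ with $|z|\geq1$ (the Joukowski inverse) this yields a uniform bound $|z(w)|\geq\rho_0>1$ on $L$. Using $U_n(w)=(z^{n+1}-z^{-n-1})/(z-z^{-1})$ and $P_n=U_n+e^{-\gamma/2}U_{n-1}$, I would show that $P_{N-1}$ has no zeros on $L$ for large $N$ and that $R_N(w)\to z(w)$ with an $O(\rho_0^{-2N})$ error uniform on $L$. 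Because $R_N$ and $z$ are analytic on the open set $\{|z(w)|>1\}\supset L$, Cauchy estimates upgrade this to $C^1_{\mathrm{loc}}$ convergence, so $\tilde F_N\to\Phi$ in $C^1_{\mathrm{loc}}$, where $\Phi(\lambda):=|z(\mu^\theta(\lambda))|^2-e^\gamma$.

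The limit equation $\Phi=0$ is $|z(w)|=e^{\gamma/2}$, which describes the ellipse $w=\cosh(\tfrac\gamma2)\cos\phi+\i\sinh(\tfrac\gamma2)\sin\phi$. The key geometric step is that $L$ meets this ellipse in exactly two points, both transversally: at $\lambda=0$ it passes through the left vertex $-\cosh(\tfrac\gamma2)$, where the ellipse has a vertical tangent whereas $L$ has direction $e^{-\i\theta}$ with nonzero real part, so the crossing is transversal and $L$ enters the convex interior as $\lambda$ increases; since $L$ escapes to infinity it must leave the interior at a single further point $\lambda_1>0$, again transversally by strict convexity. Consequently $\Phi$ is real-analytic with exactly two simple zeros $0,\lambda_1$, is negative on $(0,\lambda_1)$, positive elsewhere, and $\to+\infty$ as $\lambda\to\pm\infty$.

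It remains to count. Fixing $R$ with $\lambda_1<R$ and $|z(w(\lambda))|\geq 2e^{\gamma/2}$ for $|\lambda|\geq R$, the uniform estimate gives $\tilde F_N>0$ on $\{|\lambda|\geq R\}$ for $N$ large, so all real eigenvalues lie in $[-R,R]$. On this compact interval the $C^1$ convergence $\tilde F_N\to\Phi$ together with the nondegeneracy $\Phi'(0),\Phi'(\lambda_1)\neq0$ forces $\tilde F_N$ to have exactly one transversal zero in a small neighbourhood of each of $0$ and $\lambda_1$, while on the complement of these neighbourhoods $\Phi$ is bounded away from $0$ and hence so is $\tilde F_N$. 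This produces exactly two real zeros of $\tilde F_N$, i.e. exactly two real eigenvalues of $\cg$, for all $N\geq N_0$. The main obstacle is precisely this last step: mere uniform convergence of $|R_N|$ would leave open the possibility of spurious real eigenvalues created by oscillations of $|R_N|$, and ruling them out is exactly what forces the use of the stronger analytic ($C^1$) convergence and of the transversality of the two crossings of $L$ with the limiting ellipse.
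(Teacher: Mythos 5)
Your proposal is correct and takes essentially the same approach as the paper: you rederive the reformulation of \cref{prop: real eig means on level set of ratio of poly} (real eigenvalues as intersections of the line $\mu^\theta(\R)$ with the level set $\{\mu\in\C : \lvert P_N(\mu)/P_{N-1}(\mu)\rvert = e^{\gamma/2}\}$), use the Joukowski-type representation of $U_n$ to get $P_N/P_{N-1}\to a$ locally uniformly away from $[-1,1]$, and count the two transversal intersections of the line with the limiting ellipse $\{\lvert a(\mu)\rvert = e^{\gamma/2}\}$. Your $C^1$-convergence-plus-transversality packaging of the final counting step is a sound way of making rigorous exactly the level-set picture the paper sketches in its discussion and figure.
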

By \cref{cor:eppassthrough}, this ensures that all other $2N-2$ eigenvalues in $\C\setminus\R$ must have already passed through an exceptional point before $\theta$.

The building blocks for this result start with a helpful reformulation of the characterisation \eqref{eq:evalconstraint} for real eigenvalues.

\begin{proposition}\label{prop: real eig means on level set of ratio of poly}
    $\lambda\in \R$ is a real eigenvalue of $\cg$ if and only if
    \begin{equation}\label{eq:realconstraint}
        \left\vert\frac{P_N(\mu^\theta(\lambda))}{P_{N-1}(\mu^\theta(\lambda))}\right\vert = e^\frac{\gamma}{2}.
    \end{equation}
\end{proposition}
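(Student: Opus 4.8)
The plan is to reduce the claimed real-eigenvalue criterion directly to the general spectral characterisation \eqref{eq:evalconstraint} of \cref{thm:spectrarepresentation1}, exploiting that restricting $\lambda$ to $\R$ collapses the product over the two branches $\mu^{\theta}$ and $\mu^{-\theta}$ into a single modulus. First I would record the two elementary reality facts that make this work. Since $\gamma>0$, the coefficients in \eqref{equ:alphabetagamma} give $\alpha=\gamma\coth(\gamma/2)\in\R$ and
\[
\beta\eta=\frac{-\gamma^{2}}{(1-e^{\gamma})(1-e^{-\gamma})}=\frac{\gamma^{2}}{(e^{\gamma/2}-e^{-\gamma/2})^{2}}>0,
\]
so $\sqrt{\beta\eta}$ is a positive real number. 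Reading off \eqref{eq:mudef}, for $\lambda\in\R$ we then have
\[
\overline{\mu^{\theta}(\lambda)}=\frac{e^{\i\theta}\lambda-\alpha}{2\sqrt{\beta\eta}}=\mu^{-\theta}(\lambda),
\]
i.e.\ the two branches are complex conjugates of one another.

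Next I would use that the polynomials $P_n(x)=U_n(x)+e^{-\gamma/2}U_{n-1}(x)$ have real coefficients (the Chebyshev polynomials $U_n$ do, and $e^{-\gamma/2}\in\R$), hence $P_n(\overline{w})=\overline{P_n(w)}$ for every $w\in\C$. Combining this with the conjugation identity above yields, for real $\lambda$,
\[
P_N(\mu^{-\theta}(\lambda))=\overline{P_N(\mu^{\theta}(\lambda))},\qquad P_{N-1}(\mu^{-\theta}(\lambda))=\overline{P_{N-1}(\mu^{\theta}(\lambda))}.
\]
Substituting these into the spectral constraint \eqref{eq:evalconstraint} turns each product over the two branches into a squared modulus:
\[
\frac{P_N(\mu^{\theta}(\lambda))P_N(\mu^{-\theta}(\lambda))}{P_{N-1}(\mu^{\theta}(\lambda))P_{N-1}(\mu^{-\theta}(\lambda))}=\frac{\lvert P_N(\mu^{\theta}(\lambda))\rvert^{2}}{\lvert P_{N-1}(\mu^{\theta}(\lambda))\rvert^{2}}=e^{\gamma}.
\]
Since both sides are nonnegative reals, taking the positive square root gives precisely \eqref{eq:realconstraint}; running the implications in reverse (squaring the modulus condition and re-expanding via the conjugation identities) recovers \eqref{eq:evalconstraint}. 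By \cref{thm:spectrarepresentation1} this establishes the equivalence that $\lambda\in\R$ is an eigenvalue of $\cg$ if and only if \eqref{eq:realconstraint} holds.

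The only point requiring care — and the closest thing to an obstacle — is the possibility that $P_{N-1}(\mu^{\theta}(\lambda))=0$, which would make both sides of \eqref{eq:realconstraint} ill-defined. I would dispose of this via the standard consecutive-non-vanishing property of the three-term recurrence $P_{n+1}(x)=2xP_n(x)-P_{n-1}(x)$ inherited from that of $U_n$: if $P_{N-1}$ and $P_N$ vanished simultaneously at a point, the recurrence would force $P_{N-2}=0$ and, by downward induction, $P_0=0$, contradicting $P_0=1$. Hence at a zero of $P_{N-1}(\mu^{\theta}(\lambda))$ the numerator $\lvert P_N(\mu^{\theta}(\lambda))\rvert^{2}$ is nonzero while the denominator vanishes, so \eqref{eq:evalconstraint} cannot hold; such a $\lambda$ is not an eigenvalue and \eqref{eq:realconstraint} correctly fails there as well. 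This confirms the equivalence with no exceptional cases.
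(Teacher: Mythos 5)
Your proof is correct and follows essentially the intended route: the survey itself presents this proposition as a ``reformulation of the characterisation \eqref{eq:evalconstraint} for real eigenvalues,'' and your argument --- observing that $\beta\eta>0$ makes $\sqrt{\beta\eta}$ real so that $\mu^{-\theta}(\lambda)=\overline{\mu^{\theta}(\lambda)}$ for $\lambda\in\R$, then using the real coefficients of $P_n$ to collapse the product in \eqref{eq:evalconstraint} into a squared modulus --- is exactly that reduction. Your additional treatment of the degenerate case $P_{N-1}(\mu^{\theta}(\lambda))=0$ via the three-term recurrence and $P_0=1$ is a sound extra precaution that the paper's statement leaves implicit.
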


The transformation $\mu^\theta(\lambda)= e^{-\i\theta}\lambda\frac{1}{\gamma}\sinh\frac{\gamma}{2}-\cosh\frac{\gamma}{2}$ maps the real line $\R$ onto a line $\mu^\theta(\R)$ in $\C$ rotated by $-\theta$ around the point $-\cosh \frac{\gamma}{2}$. This provides a very geometric view of the zeros of equation (\ref{eq:realconstraint}). In fact, the real spectrum of $\cg$ corresponds to intersections of the line $\mu^\theta(\R)$ and a level set of $\mu \mapsto \abs{\pnr{\mu}}$:
\[
    \sigma(\cg) \cap \R = \mu^\theta(\R) \cap \left\{\mu\in \C :\, \abs{\pnr{\mu}} = e^\frac{\gamma}{2}\right\}.
\]
\begin{figure}[h]
    \centering
    \begin{subfigure}[t]{0.48\textwidth}
        \centering
        \includegraphics[height=0.7\textwidth]{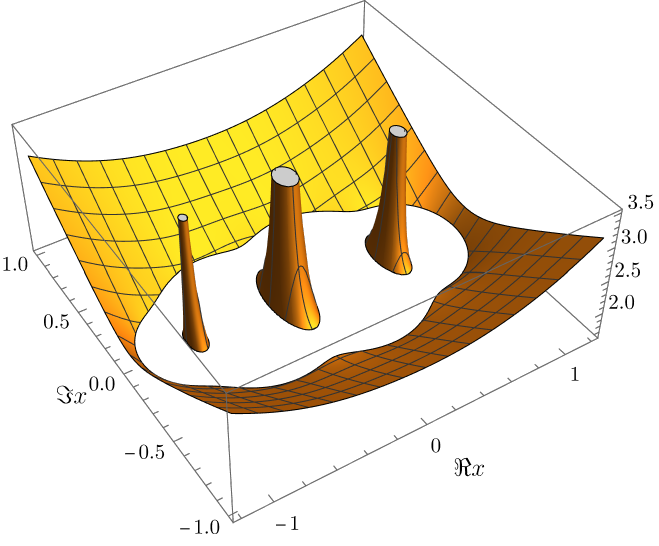}
        \caption{Graph of $\C\ni\mu\mapsto\abs{\pnr{\mu}}$ cut off by the plane $\C\times\{e^\frac{\gamma}{2}\}$.}
        \label{fig: surfaceplot intersect plane}
    \end{subfigure}
    \hfill
    \begin{subfigure}[t]{0.48\textwidth}
        \centering
        \includegraphics[height=0.7\textwidth]{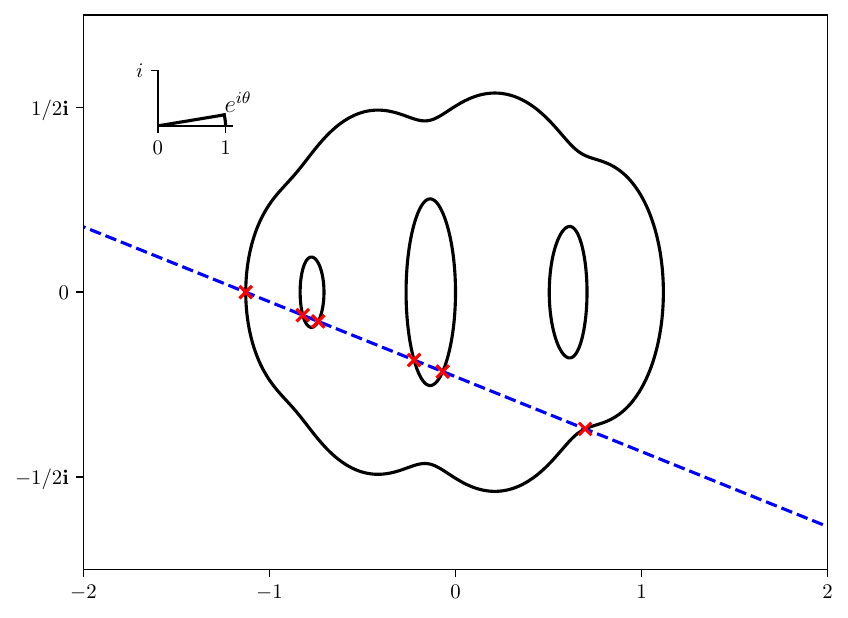}
        \caption{Intersection of $\mu^\theta(\R)$ (in blue dashed) with the level set $\{\mu\in \C :\, \abs{\pnr{\mu}} = e^\frac{\gamma}{2}\}$ (in black). The latter can be seen in \cref{sub@fig: surfaceplot intersect plane}. Intersection points are shown in red. The preimage of these are the real eigenvalues of $\cg$.}
        \label{fig: intersection of levelset with mu}
    \end{subfigure}
    \caption{Geometrical interpretation of the eigenvalues of $\cg$ as given by \cref{prop: real eig means on level set of ratio of poly}. In this view, we can also clearly see the exceptional points, where two real eigenvalues meet and become complex. Namely, this happens exactly when $\mu^\theta(\R)$ goes from passing through one of the inner regions in (B) to moving past them and two red crosses meet.}
    \label{fig: real eigenvalues are points of level set}
\end{figure}
Thus, in order to understand the real eigenvalues of $\cg$, it is crucial to understand the level sets of $\abs{\pnr{\mu}}$. We begin by recalling a well-known equivalent definition of the Chebyshev polynomials:
\[
    U_n(\mu) = \frac{a(\mu)^{n+1}-a(\mu)^{-(n+1)}}{2\sqrt{\mu+1}\sqrt{\mu-1}},
\] where $a(\mu)=\mu + \sqrt{\mu+1}\sqrt{\mu-1}$.

\subsubsection{Eigenvalue locations}\label{ssec:evalocations}
In this subsection, we aim to understand the position of the eigenvalues in the complex plane. This will prove crucial in understanding the behaviour of the eigenvectors in \cref{sec:eves}. As we will observe, for a fixed $\theta$ and increasing $N$, they move arbitrarily close to the two line segments $(\mu^\theta)^{-1}([-1,1])\cup (\mu^{-\theta})^{-1}([-1,1])$.

The following result holds.
\begin{proposition}\label{prop:evalocation}
    Let $0< \theta < \pi/2$ and let $\gamma>0$ be fixed. For any $\varepsilon>0$ small enough, there exists an $N_0\in \N$ such that for any $N\geq N_0$, all but exactly two eigenvalues of $\cg$ lie in an $\varepsilon$-neighbourhood of $K\coloneqq(\mu^\theta)^{-1}([-1,1])\cup (\mu^{-\theta})^{-1}([-1,1])$. Indeed, because of the conjugation symmetry of eigenvalues, we have
    \begin{equation}
        \left|\sigma(\cg)\cap B_\varepsilon((\mu^{\upsigma\cdot\theta})^{-1}([-1,1])) \right| = N-1,
    \end{equation}
    for $\upsigma =\pm 1$.
\end{proposition}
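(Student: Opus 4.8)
The plan is to read the eigenvalues off the characterisation \eqref{eq:evalconstraint} and to follow them through the limit $N\to\infty$ using the asymptotics of the ratio $\pnr{\mu}$. Since $\cg=V^\theta\cm$ is tridiagonal with entries $\alpha,\beta,\eta$ that are fixed (independent of $N$) and $V^\theta$ has unit-modulus diagonal, a Gershgorin estimate shows $\sigma(\cg)$ is contained in a fixed compact set $\Omega\subset\C$ for all $N$; hence it suffices to locate eigenvalues inside $\Omega$. The affine maps $\mu^{\theta},\mu^{-\theta}$ are bi-Lipschitz, so (up to a change of constant) $\lambda\notin B_\varepsilon(K)$ is equivalent to both $\mu^{\theta}(\lambda)$ and $\mu^{-\theta}(\lambda)$ staying a definite distance away from $[-1,1]$, which is precisely the regime in which the Chebyshev asymptotics apply.

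\textbf{Asymptotics of the symbol ratio.} Using the closed form $U_n(\mu)=\bigl(a(\mu)^{n+1}-a(\mu)^{-(n+1)}\bigr)/\bigl(2\sqrt{\mu+1}\sqrt{\mu-1}\bigr)$ recalled above, with $|a(\mu)|>1$ for $\mu\in\C\setminus[-1,1]$, I would first show
\[
    \pnr{\mu}=\frac{U_N(\mu)+e^{-\gamma/2}U_{N-1}(\mu)}{U_{N-1}(\mu)+e^{-\gamma/2}U_{N-2}(\mu)}\longrightarrow a(\mu)
\]
uniformly and exponentially fast on compact subsets of $\C\setminus[-1,1]$ (the $a(\mu)^{-(n+1)}$ terms being negligible). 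Consequently, on $\Omega\setminus B_\varepsilon(K)$ the left-hand side of \eqref{eq:evalconstraint} converges uniformly to $F_\infty(\lambda)\coloneqq a\bigl(\mu^{\theta}(\lambda)\bigr)\,a\bigl(\mu^{-\theta}(\lambda)\bigr)$, so an eigenvalue outside $B_\varepsilon(K)$ is, for large $N$, an approximate solution of $F_\infty(\lambda)=e^\gamma$. The poles of $F_N$ (zeros of $P_{N-1}(\mu^{\pm\theta}(\cdot))$) cluster on $K$ because the zeros of $P_{N-1}$ accumulate on $[-1,1]$, so $F_N-e^\gamma$ and $F_\infty-e^\gamma$ are holomorphic on a neighbourhood of $\Omega\setminus\overline{B_\varepsilon(K)}$ and converge uniformly; Hurwitz's theorem then gives, for $N$ large and generic $\varepsilon$, that the number of eigenvalues of $\cg$ in $\Omega\setminus\overline{B_\varepsilon(K)}$ equals the number of zeros of $F_\infty-e^\gamma$ there.

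\textbf{Counting and symmetry.} The key external input is \cref{thm:asymptotic density}: for fixed $0<\theta<\pi/2$ and $N$ large, $\cg$ has exactly two real eigenvalues. The two limiting segments $(\mu^{\pm\theta})^{-1}([-1,1])=e^{\pm\i\theta}[\alpha-2\sqrt{\beta\eta},\alpha+2\sqrt{\beta\eta}]$ lie on the rays $\arg=\pm\theta$ at distance at least $(\alpha-2\sqrt{\beta\eta})\sin\theta>0$ from $\R$, so for $\varepsilon$ small these two real eigenvalues are \emph{not} in $B_\varepsilon(K)$; this already produces at least two outliers. To obtain \emph{at most} two, I would count the zeros of $F_\infty-e^\gamma$ away from $K$: passing to $g\coloneqq\log a=\operatorname{arccosh}$, which maps $\C\setminus[-1,1]$ conformally onto $\{\Re>0\}$, the equation becomes $g(\mu^{\theta}(\lambda))+g(\mu^{-\theta}(\lambda))=\gamma$, and since $\mu^{-\theta}(\lambda)=e^{2\i\theta}\bigl(\mu^{\theta}(\lambda)+\cosh\tfrac{\gamma}{2}\bigr)-\cosh\tfrac{\gamma}{2}$ this is a single holomorphic equation in $\lambda$; for real $\lambda$ it reduces to $2\Re g(\mu^{\theta}(\lambda))=\gamma$, the limiting form of the level-set condition \eqref{eq:realconstraint} (cf.\ \cref{fig: real eigenvalues are points of level set}). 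Granting that the only far solutions are these two real ones, there are exactly two outliers, all remaining $2N-2$ eigenvalues lie in $B_\varepsilon(K)=B_\varepsilon\bigl((\mu^{\theta})^{-1}([-1,1])\bigr)\sqcup B_\varepsilon\bigl((\mu^{-\theta})^{-1}([-1,1])\bigr)$ (disjoint for small $\varepsilon$), and the symmetry $P\cg P=\overline{\cg}$ of \cref{prop:csymm} makes $\overline{\cg}$ similar to $\cg$ and interchanges the two conjugate segments, forcing the two counts to be equal. Hence each equals $N-1$, which is the claim.

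\textbf{Main obstacle.} The delicate point is the counting step: showing that $F_\infty(\lambda)=e^\gamma$ has no non-real solutions outside $B_\varepsilon(K)$, so that the only far outliers are the two real eigenvalues furnished by \cref{thm:asymptotic density}. Everything else (a priori boundedness, the uniform ratio asymptotics, the Hurwitz comparison, and the even split by conjugation symmetry) is routine once this is in place. I expect the cleanest route to the count is the $\operatorname{arccosh}$ reformulation above, exploiting that $\Re g>0$ on the whole domain to control $\Re\bigl(g(\mu^{\theta}(\lambda))+g(\mu^{-\theta}(\lambda))\bigr)$ and confine the solutions to a bounded region, combined with an argument-principle computation along $\partial\bigl(\Omega\setminus B_\varepsilon(K)\bigr)$ guided by the winding/level-set geometry of \cref{fig: real eigenvalues are points of level set}.
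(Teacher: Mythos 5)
Your strategy is sound, and the steps you label as routine are indeed correct: the Gershgorin confinement, the exponentially fast uniform convergence $\pnr{\mu}\to a(\mu)$ on compact subsets of $\C\setminus[-1,1]$ (the zeros of $P_{N-1}$ do cluster on $[-1,1]$, so the poles of the eigenvalue function cluster on $K$ and the Hurwitz comparison is legitimate on $\Omega\setminus\overline{B_\varepsilon(K)}$), the observation that $K$ stays at distance at least $(\alpha-2\sqrt{\beta\eta})\sin\theta>0$ from $\R$ so that the two real eigenvalues supplied by \cref{thm:asymptotic density} are genuine outliers, and the final split into $N-1$ eigenvalues per segment via the conjugation symmetry of \cref{prop:csymm} (equivalently: the non-real eigenvalues come in conjugate pairs and, for $\varepsilon$ small, the two $\varepsilon$-neighbourhoods lie in opposite open half-planes). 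Note that the survey does not reproduce a proof of this proposition (it defers to the cited source), but your route is built precisely on the machinery the section sets up: the characterisation \eqref{eq:evalconstraint}, the map $a(\mu)$, and the level-set geometry of \cref{prop: real eig means on level set of ratio of poly}.

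The one genuine gap is the one you flag yourself: you \emph{grant} that the limiting equation $F_\infty(\lambda)\coloneqq a\bigl(\mu^{\theta}(\lambda)\bigr)a\bigl(\mu^{-\theta}(\lambda)\bigr)=e^\gamma$ has no solutions outside $B_\varepsilon(K)$ beyond the two real ones. Without this count, Hurwitz tells you nothing quantitative, so as written the proof is incomplete. The good news is that your own $\operatorname{arccosh}$ reformulation closes this gap, and more cleanly than an argument-principle computation. Write $\mu^{\theta}(\lambda)=\cosh u$ and $\mu^{-\theta}(\lambda)=\cosh v$ with $\Re u,\Re v>0$, so that $a=e^{u}$, resp.\ $e^{v}$; then $F_\infty(\lambda)=e^\gamma$ forces $v=\gamma-u \pmod{2\pi\i}$, hence $\cosh v=\cosh(\gamma-u)$ independently of the branch. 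Substituting this into the affine relation $\mu^{-\theta}(\lambda)=e^{2\i\theta}\bigl(\mu^{\theta}(\lambda)+\cosh\tfrac{\gamma}{2}\bigr)-\cosh\tfrac{\gamma}{2}$ gives
\[
(\cosh\gamma-e^{2\i\theta})\cosh u-\sinh\gamma\,\sinh u=(e^{2\i\theta}-1)\cosh\tfrac{\gamma}{2},
\]
which in the variable $w=e^{u}$ becomes the quadratic
\[
(e^{-\gamma}-e^{2\i\theta})\,w^{2}-2(e^{2\i\theta}-1)\cosh\tfrac{\gamma}{2}\,w+(e^{\gamma}-e^{2\i\theta})=0,
\]
with nonzero leading coefficient since $\lvert e^{-\gamma}\rvert<1$. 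Each solution $\lambda$ off $K$ determines a unique $u$ in the half-strip $\{\Re u>0,\ \Im u\in(-\pi,\pi]\}$ and hence a unique root $w$, so there are at most two such $\lambda$; one of the two roots is $w=-e^{\gamma/2}$, corresponding to the trivial eigenvalue $\lambda=0$. This pins the limiting count at exactly two and completes your argument. Working with $w=e^{u}$ also repairs a small inaccuracy in your sketch: $\log a$ is \emph{not} a conformal map of $\C\setminus[-1,1]$ onto $\{\Re>0\}$ (it is multivalued there, mapping onto a half-strip), but no global logarithm is needed once the equation is made algebraic in $w$.
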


\begin{figure}[ht]
    \centering
    \includegraphics[width=0.5\textwidth]{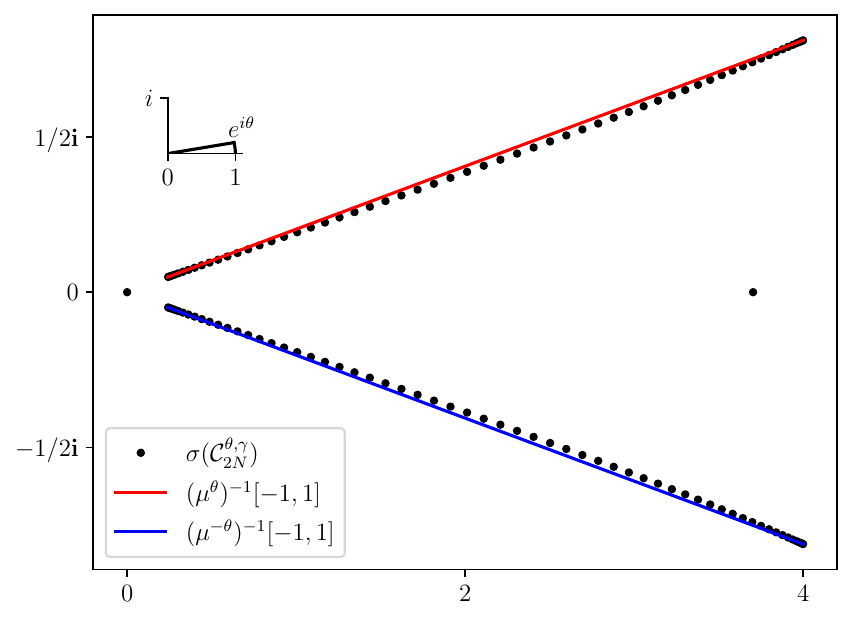}
    \caption{Eigenvalue locations close to the two line segments $(\mu^\theta)^{-1}([-1,1])\cup (\mu^{-\theta})^{-1}([-1,1])$ for $\theta = 0.2$, $\gamma = 1$ and $N=60$.}
    \label{fig:eigenvalues}
\end{figure}

\subsection{Eigenvectors of the generalised gauge capacitance matrix}\label{sec:eves}
As shown in \Cref{sec: skin effect}, systems with an imaginary gauge potential are known for the presence of skin effect, \emph{i.e.}, the condensation of the eigenvectors at one edge of the system. This condensation is exponential \cite{ammari.barandun.ea2023Perturbed}. The system studied here has a much more peculiar property. The symmetric change of sign in the gauge potential implies that the condensation happens on both edges of the system for small values of $\theta$ or $N$. Nevertheless, the non-Hermiticity introduced by $\theta$ can change this symmetry. The exponential nature of the modes has been shown to be caused by the Fredholm index of the Toeplitz operator associated to the system. The system studied here presented in \cref{fig:setting_PT} does not yield a Toeplitz matrix, nevertheless we will show that the same theory can be modified to be used in this situation as well.

\subsubsection{Exponential decay and growth}
As our matrix $\cg$ is split into two parts by an interface we define the \emph{upper and lower symbols} of $\cg$ as
\begin{align}\label{eq: symbol def}
    f_\pm^\theta: \mathbb{T}^1 & \to \C\nonumber                                                               \\
    e^{\i\phi}        & \mapsto e^{\pm\i\theta}(\beta  e^{\pm\i\phi} + \alpha +  \eta e^{\mp\i\phi}).
\end{align}
We further define the \emph{upper and lower regions of topological convergence} as
\begin{align}
    \label{eq:def_E1_E2}
    E^\theta_\pm = \{z\in\C: \pm \operatorname{wind}(f_\pm^\theta,z) < 0 \},
\end{align}
where $\operatorname{wind}(f_\pm^\theta,z)$ denotes the winding number of $f_\pm^\theta$ around $z$.

These concepts are closely linked to our formalism based on Chebyshev polynomials.
The following result holds.
\begin{lemma}\label{lem:defineofE2}
    We have
    \begin{align*}
        E^\theta_\pm = \bigg\{ (\mu^{\pm\theta})^{-1}(a^{-1}(re^{\i\phi})) \text{ for } r\in[1,e^{\gamma/2}), \phi\in[0,2\pi) \bigg\}.
    \end{align*}
\end{lemma}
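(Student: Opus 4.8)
The plan is to compute $\operatorname{wind}(f_\pm^\theta, z)$ directly with the argument principle, reduce the count to the location of the two roots of a quadratic relative to the unit circle, and then recognise those roots through the Joukowski map $a^{-1}$ and the affine map $\mu^{\pm\theta}$. At the outset I would record the two numerical identities that drive everything: inserting $\alpha=\gamma\coth(\gamma/2)$, $\eta=-\gamma/(1-e^{-\gamma})$, $\beta=\gamma/(1-e^{\gamma})$ gives $\eta/\beta=e^{\gamma}$ (hence $\sqrt{\eta/\beta}=e^{\gamma/2}$) and $\sqrt{\beta\eta}=\gamma/(2\sinh(\gamma/2))>0$, the latter being exactly the normalisation that makes the two formulas for $\mu^\theta$ in \eqref{eq:mudef} agree. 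I would also fix the branch of $a(\mu)=\mu+\sqrt{\mu^2-1}$ so that $|a(\mu)|\geq 1$ for all $\mu$, with equality precisely on $[-1,1]$, and use $a^{-1}(w)=\tfrac12(w+w^{-1})$.

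First I treat the upper case. Writing $w=e^{\i\phi}$, the symbol is the rational function $f_+^\theta(w)=e^{\i\theta}(\beta w+\alpha+\eta w^{-1})$, which has exactly one simple pole in the open unit disk, at $w=0$. By the argument principle, $\operatorname{wind}(f_+^\theta,z)$ equals the number of zeros of $f_+^\theta(w)-z$ in the disk minus one. Clearing denominators, the zeros are the roots $w_1,w_2$ of $e^{\i\theta}\beta w^2+(e^{\i\theta}\alpha-z)w+e^{\i\theta}\eta=0$, and Vieta's formulas give $w_1w_2=\eta/\beta=e^{\gamma}$ and $w_1+w_2=(e^{-\i\theta}z-\alpha)/\beta$. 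Using the definition of $\mu^\theta$ to rewrite $e^{-\i\theta}z-\alpha=2\sqrt{\beta\eta}\,\mu^\theta(z)$, the sum becomes $w_1+w_2=2\sqrt{\beta\eta}\,\mu^\theta(z)/\beta$.

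The key algebraic step is the rescaling $w_j=-e^{\gamma/2}u_j$ (the factor's modulus is $e^{\gamma/2}=\sqrt{\eta/\beta}$ and its sign is forced by $\beta<0$, though only the modulus matters below): Vieta's relations collapse to $u_1u_2=1$ and $u_1+u_2=2\mu^\theta(z)$, so $u_1,u_2$ are the roots of $u^2-2\mu^\theta(z)u+1=0$, namely $u=\mu^\theta(z)\pm\sqrt{\mu^\theta(z)^2-1}=a(\mu^\theta(z))^{\pm1}$. Consequently $|w_1|=e^{\gamma/2}|a(\mu^\theta(z))|\geq e^{\gamma/2}>1$ always, while $|w_2|=e^{\gamma/2}|a(\mu^\theta(z))|^{-1}$, so $|w_2|>1$ exactly when $|a(\mu^\theta(z))|<e^{\gamma/2}$. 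Thus the disk contains no root, i.e. $\operatorname{wind}(f_+^\theta,z)=-1<0$, precisely when $1\leq|a(\mu^\theta(z))|<e^{\gamma/2}$, and this is the only way the winding is negative. Since this condition says $\mu^\theta(z)\in a^{-1}(\{re^{\i\phi}:1\leq r<e^{\gamma/2},\ \phi\in[0,2\pi)\})$, applying $(\mu^\theta)^{-1}$ yields the claimed description of $E^\theta_+$. The lower case is identical upon replacing $\theta$ by $-\theta$: now the product of the roots is $\beta/\eta=e^{-\gamma}$, the rescaling is $w_j=-e^{-\gamma/2}u_j$, one finds $|w_1|=e^{-\gamma/2}|a(\mu^{-\theta}(z))|$ and $|w_2|=e^{-\gamma/2}|a(\mu^{-\theta}(z))|^{-1}$, and the condition $-\operatorname{wind}(f_-^\theta,z)<0$ (both roots inside the disk) reduces again to $1\leq|a(\mu^{-\theta}(z))|<e^{\gamma/2}$.

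The main obstacle I anticipate is bookkeeping rather than conceptual: one must keep the orientation and sign conventions aligned so that in both the $+$ and $-$ cases the negative-winding region lands in the \emph{same} half-open annulus $r\in[1,e^{\gamma/2})$, and one must pin down the branch of $\sqrt{\mu^2-1}$ in $a$ coherently and then verify the boundary behaviour that justifies the half-open interval. Concretely, the inner value $r=1$ corresponds to $\mu^{\pm\theta}(z)\in[-1,1]$, where both roots lie strictly outside the unit circle so the point genuinely belongs to $E^\theta_\pm$; the outer value $r=e^{\gamma/2}$ is exactly where $|w_2|=1$, i.e. where $z$ meets the symbol curve $f_\pm^\theta(\mathbb{T}^1)$ and the winding number is undefined, so that value must be excluded. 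Finally I would note that $a^{-1}$ (the Joukowski map) carries the annulus $\{1\leq|w|<e^{\gamma/2}\}$ bijectively onto $\{\mu:1\leq|a(\mu)|<e^{\gamma/2}\}$, which is why the set-theoretic rewriting in the last step is exact.
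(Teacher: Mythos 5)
Your proof is correct. The survey states this lemma without proof (deferring to the cited original work), but your route --- the argument principle applied to the rational symbol $f_\pm^\theta(w)-z$ with its simple pole at $w=0$, Vieta's formulas, the rescaling $w_j=-e^{\pm\gamma/2}u_j$ collapsing the root equation to $u^2-2\mu^{\pm\theta}(z)u+1=0$, and the identification of the roots as $a(\mu^{\pm\theta}(z))^{\pm1}$ --- is precisely the computation underlying the paper's formalism (the identities $\eta/\beta=e^\gamma$, $\sqrt{\beta\eta}=\gamma/(2\sinh(\gamma/2))$, and the Joukowski/ellipse picture of \cref{prop:evalocation,lem:ellipseprops}), and your treatment of the boundary cases $r=1$ (both roots strictly off the circle, so the winding is defined and nonzero) and $r=e^{\gamma/2}$ (the point lies on the symbol curve, winding undefined, hence excluded) is exactly what justifies the half-open interval. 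One negligible slip: the Joukowski map $a^{-1}$ is not bijective on the annulus $\{1\leq\lvert w\rvert<e^{\gamma/2}\}$ (it is $2$-to-$1$ on the inner circle, folding it onto $[-1,1]$), but only surjectivity onto $\{\mu:1\leq\lvert a(\mu)\rvert<e^{\gamma/2}\}$ is needed for the set equality, and that holds.
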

By \cref{lem:defineofE2} it makes thus sense to lighten the notation and use $E^\theta=E^\theta_+$ and $E^{-\theta}=E^\theta_-$.

\cref{lem:defineofE2} also justifies calling $E^{\pm\theta}$ ``regions of topological convergence''. Namely, for an eigenpair $(\lambda,\bm v)$ of $\cg$,  \cref{thm:spectrarepresentation1} gives the following form for the eigenvector:
\[
    \bm v = (\bm x^{(1)},\dots,\bm x^{(N)},\bm y^{(1)},\dots,\bm y^{(N)})^\top
\]
with $\bm x^{(j)} = (e^{-\frac{\gamma}{2}})^{j-1} P_{j-1}(\mu^\theta(\lambda))$ and $\bm y^{(j)} = (e^{-\frac{\gamma}{2}})^{N-j} P_{N-j}(\mu^{-\theta}(\lambda))$.

The rates of growth for the left and right parts of this eigenvector are then given by
\[
    \frac{\bm x^{(j+1)}}{\bm x^{(j)}} = e^{-\frac{\gamma}{2}}\frac{P_j(\mu^\theta(\lambda))}{P_{j-1}(\mu^\theta(\lambda))}, \quad  \frac{\bm y^{(j+1)}}{\bm y^{(j)}} = e^{\frac{\gamma}{2}}\frac{P_{j-1}(\mu^{-\theta}(\lambda))}{P_{j}(\mu^{-\theta}(\lambda))}.
\]

Focusing on the left part,  we notice that its asymptotic growth behaviour is determined by whether $\abs{\frac{P_j(\mu^\theta(\lambda))}{P_{j-1}(\mu^\theta(\lambda))}}$ is smaller or larger than $e^{\frac{\gamma}{2}}$. Furthermore, we have $\frac{P_j(\mu^\theta(\lambda))}{P_{j-1}(\mu^\theta(\lambda))} \to a(\mu^\theta(\lambda))$. $\bm x$ thus decays or grows asymptotically exactly if $\abs{a(\mu^\theta(\lambda))}$ is smaller or larger than $e^{\frac{\gamma}{2}}$, respectively. But, by \cref{lem:defineofE2}, we can see that $\lambda$ lies in $E^{\theta}$ if and only if $\abs{a(\mu^\theta(\lambda))}<e^{\frac{\gamma}{2}}$, justifying our naming.

For $\bm y$ the rate of growth is exactly the inverse of the rate of growth of $\bm x$ with $\mu^\theta(\lambda)$ replaced by $\mu^{-\theta}(\lambda)$. The above reasoning thus also holds for $\bm y$ with ``growth'' and ``decay'' as well as the sign of $\theta$ is flipped.

By \cref{prop:evalocation}, we know the approximate locations of the eigenvalues of $\cg$. We will now make use of that and the topological convergence to formally prove the above intuition.

\begin{theorem}\label{lemma:exponential decay and decoupling}
    Let $0<\theta < \pi/2$, $\gamma>0$ and let $N$ be large enough\footnote{Specifically so that $\varepsilon$ from \cref{prop:evalocation} is smaller than $\sqrt{\alpha+\beta}-\sqrt{\beta+\eta}$ and thus $B_\varepsilon((\mu^{\pm\theta})^{-1}([-1,1]))\subset E^{\pm\theta}$.}. Fix, furthermore, a $0<\sigma\ll1$. Consider an eigenpair $(\lambda,\bm v)$ of $\cg$ fulfilling \cref{thm:spectrarepresentation1}. Then, one of the following three cases realises:
    \begin{enumerate}
        \item[(i)] If $\lambda \in B_\sigma(\partial E^\theta\cup \partial E^{-\theta})\eqqcolon \Theta$, then either $\lambda=0$ and $\bm v = \bm 1$ or no conclusion is made;
        \item[(ii)] If $\lambda \in E^{\theta} \cap E^{-\theta}\setminus \Theta$, then there exist some $B_1,B_2,C_1,C_2>0$ independent of $N$ so that
              \begin{align*}
                  \vert \bm{v}^{(j)} \vert < C_1 e^{-B_1\frac{j\gamma}{2}} \quad \text{and}\quad\vert \bm{v}^{(2N+1-j)} \vert < C_2 e^{-B_2\frac{j\gamma}{2}},
              \end{align*}
              for $1\leq j\leq N$. In particular, if also $\lambda \in \R$, then $C_1=C_2$ and $B_1=B_2$.
        \item[(iii)] If $\lambda \in E^\theta \triangle E^{-\theta}\setminus \Theta$, then there exit some $B,C>0$ independent of $N$  so that
              \begin{align*}
                  \vert \bm{v}^{(j)} \vert < C e^{-B\frac{j\gamma}{2}} \quad \text{if }\lambda\in E^\theta, \\ \quad \vert \bm{v}^{(2N+1-j)} \vert < C e^{-B\frac{j\gamma}{2}} \quad \text{if }\lambda\in E^{-\theta},
              \end{align*}
              for $1\leq j\leq 2N$.
    \end{enumerate}
    In particular, for $\frac{\pi}{4}\leq \theta \leq \frac{\pi}{2}$,  case (2) never realises for $N$ large enough.
\end{theorem}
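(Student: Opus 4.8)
The plan is to reduce everything to uniform estimates on the ``Chebyshev-like'' quantities $e^{-(j-1)\gamma/2}P_{j-1}(\mu^{\pm\theta}(\lambda))$ that make up the eigenvector in \cref{thm:spectrarepresentation1}, and then to read off decay or growth purely from the position of $\lambda$ relative to $E^{\pm\theta}$. Concretely, the components are $\bm v^{(j)}=e^{-(j-1)\gamma/2}P_{j-1}(\mu^{\theta}(\lambda))$ and $\bm v^{(2N+1-j)}=C_1\,e^{-(j-1)\gamma/2}P_{j-1}(\mu^{-\theta}(\lambda))$ for $1\le j\le N$. Writing $a=a(\mu)$ and using $U_n(\mu)=\frac{a^{n+1}-a^{-(n+1)}}{a-a^{-1}}=\sum_{i=0}^{n}a^{\,n-2i}$, the first building block is the bound
\[
  \lvert U_n(\mu)\rvert\le (n+1)\bigl(\textstyle\sup\lvert a\rvert\bigr)^n ,
\]
valid on any compact set on which $\lvert a\rvert\ge 1$. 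The finite-sum form is what matters here: it avoids the spurious singularity of the closed form at $\mu=\pm1$, so the estimate stays uniform up to the endpoints of the segments $(\mu^{\pm\theta})^{-1}([-1,1])$. Since $P_n=U_n+e^{-\gamma/2}U_{n-1}$, this gives $\lvert e^{-n\gamma/2}P_n(\mu)\rvert\le C(n+1)\rho^n$ with $\rho=e^{-\gamma/2}\sup\lvert a\rvert$, and the polynomial prefactor is absorbed into a slightly larger geometric rate whenever $\rho<1$.

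Next I would invoke \cref{prop:evalocation} with the choice of $N$ fixed in the footnote: all but two eigenvalues lie in $B_\varepsilon((\mu^{\theta})^{-1}([-1,1]))\cup B_\varepsilon((\mu^{-\theta})^{-1}([-1,1]))$, the two neighbourhoods being contained in $E^{\theta}$, resp.\ $E^{-\theta}$. By \cref{lem:defineofE2}, $\lambda\in E^{\pm\theta}$ is equivalent to $\lvert a(\mu^{\pm\theta}(\lambda))\rvert<e^{\gamma/2}$; since $\lvert a\rvert=1$ on the segments and $e^{\gamma/2}>1$, shrinking $\varepsilon$ gives a uniform gap $\sup_{B_\varepsilon}\lvert a(\mu^{\pm\theta})\rvert=:M_\pm<e^{\gamma/2}$, hence a decay rate $e^{-\gamma/2}M_\pm<1$ independent of $N$. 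Through the two component identities, membership in $E^{\theta}$ (resp.\ $E^{-\theta}$) then forces exponential decay of $\bm v$ measured from the left (resp.\ right) edge.

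The case split is now geometric. For $\lambda\in\Theta$ (case (i)) one has $\lvert a\rvert\approx e^{\gamma/2}$, so the rate degenerates to $1$ and no decay is claimed; the only structurally relevant point is $\lambda=0$, for which $\lvert a(\mu^{\pm\theta}(0))\rvert=e^{\gamma/2}$, i.e.\ $0\in\partial E^{\theta}\cap\partial E^{-\theta}$, and whose eigenvector is the kernel vector $\bm 1$. For $\lambda\in E^{\theta}\cap E^{-\theta}\setminus\Theta$ (case (ii)) both halves decay toward their own edge; when $\lambda\in\R$ one has $\mu^{-\theta}(\lambda)=\overline{\mu^{\theta}(\lambda)}$, so the two rates coincide and \cref{prop:cconj} gives $\lvert C_1\rvert=1$, whence $B_1=B_2$ and $C_1=C_2$. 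For $\lambda\in E^{\theta}\triangle E^{-\theta}\setminus\Theta$ (case (iii)), say $\lambda\in E^{\theta}\setminus E^{-\theta}$, the decisive step is to substitute the closed form \eqref{eq:cdef2} of $C_1$ into the right half, obtaining
\[
  \bm v^{(N+j)}=e^{-N\gamma/2}P_N\!\bigl(\mu^{\theta}(\lambda)\bigr)\;e^{(j-1)\gamma/2}\,\frac{P_{N-j}\!\bigl(\mu^{-\theta}(\lambda)\bigr)}{P_{N-1}\!\bigl(\mu^{-\theta}(\lambda)\bigr)} .
\]
The prefactor $e^{-N\gamma/2}P_N(\mu^{\theta}(\lambda))$ is the already-decayed interface value ($\le C\rho^{N}$ since $\lambda\in E^{\theta}$), while the last factor behaves like $(e^{\gamma/2}/\lvert a(\mu^{-\theta}(\lambda))\rvert)^{\,j-1}$ and decays in $j$ because $\lambda\notin E^{-\theta}$ forces $\lvert a(\mu^{-\theta}(\lambda))\rvert>e^{\gamma/2}$, bounded away from $e^{\gamma/2}$ by the distance $\sigma$ to $\Theta$. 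Combining the two geometric rates yields $\lvert \bm v^{(k)}\rvert<Ce^{-Bk\gamma/2}$ uniformly over the full range $1\le k\le 2N$; the case $\lambda\in E^{-\theta}\setminus E^{\theta}$ follows from the conjugation symmetry $E^{\theta}=\overline{E^{-\theta}}$.

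Finally, the ``in particular'' for $\tfrac{\pi}{4}\le\theta\le\tfrac{\pi}{2}$ amounts to showing that $E^{\theta}\cap E^{-\theta}$ contains no eigenvalue off $\Theta$ in this regime; since all but two eigenvalues cluster near $(\mu^{\pm\theta})^{-1}([-1,1])$, it suffices to check $(\mu^{\theta})^{-1}([-1,1])\cap E^{-\theta}=\emptyset$, which I would read off from the rotations $\mu^{\pm\theta}$ and the Joukowski description in \cref{lem:defineofE2} (the two elliptical regions, rotated by $\pm\theta$ about $-\cosh\tfrac{\gamma}{2}$, separate once $\theta\ge\tfrac{\pi}{4}$). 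The step I expect to be most delicate is the uniform-in-$N$ control of the interface constant $C_1$: whereas \cref{prop:cconj} pins $\lvert C_1\rvert=1$ for real $\lambda$, for genuinely complex eigenvalues $\lvert C_1\rvert\sim(\lvert a(\mu^{\theta}(\lambda))\rvert/\lvert a(\mu^{-\theta}(\lambda))\rvert)^{N}$ could a priori grow exponentially, so that showing $C_1,C_2$ in case (ii) are truly $N$-independent needs the precise location of \cref{prop:evalocation} together with the pairing $\lambda\leftrightarrow\overline\lambda$ to force $\lvert a(\mu^{\theta}(\lambda))\rvert$ and $\lvert a(\mu^{-\theta}(\lambda))\rvert$ to be comparable. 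The substitution of \eqref{eq:cdef2} used in case (iii) is exactly what neutralises this difficulty there, and the same identity is the lever for the intersection case.
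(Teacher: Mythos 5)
Your proposal is correct and follows essentially the same route as the paper: the componentwise representation from \cref{thm:spectrarepresentation1}, the criterion $\lambda\in E^{\pm\theta}\Leftrightarrow \abs{a(\mu^{\pm\theta}(\lambda))}<e^{\gamma/2}$ from \cref{lem:defineofE2}, and the eigenvalue localisation of \cref{prop:evalocation} to make the rates uniform in $N$ --- which is precisely the argument the paper sketches before the theorem (deferring details to the cited reference). Your additions --- the finite-sum bound $\lvert U_n(\mu)\rvert\le (n+1)\lvert a(\mu)\rvert^n$ to stay uniform near $\mu=\pm1$, the substitution of \eqref{eq:cdef2} to neutralise the interface constant in case (iii), and the use of the pairing $\lambda\leftrightarrow\overline{\lambda}$ with \cref{prop:cconj} to control $C_1$ in case (ii) --- supply exactly the technical points the survey's sketch glosses over.
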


The implications of \cref{lemma:exponential decay and decoupling} are important as it shows that the eigenvectors of $\cg$ are not only exponentially decaying or growing but also that the non-Hermiticity introduced by $\theta$ manifests itself at a macroscopic level as a decoupling of the eigenvectors. While for $\theta=0$ the eigenvectors always present a symmetric exponential decay, the non-Hermiticity introduced by $\theta>0$ brings the eigenvalue to eventually migrate to the complex plane and out of one of the two regions $E^\theta$ or $E^{-\theta}$. As a consequence of this, the symmetry is broken. It is also interesting to notice that this process happens pairwise. Since $\cg$ is pseudo--Hermitian, the eigenvalues come in complex conjugated pairs and, as $\theta$ is varying, they meet pairwise at an exceptional point. After the exceptional points, one of the eigenvectors will be decaying while the other will be increasing. The decoupling of the eigenvectors is illustrated in \cref{fig: decoupling of eve}.

From the proof of \cref{lemma:exponential decay and decoupling}, we can read out the decay or growth rate of the eigenvectors.

\begin{table}[h]
\begin{tabular}{c|cc|cc|}
\cline{2-5}
                          & \multicolumn{2}{c|}{Upper branch}                                                   & \multicolumn{2}{c|}{Lower branch}                                                 \\ \cline{2-5} 
                          & \multicolumn{1}{c|}{$\lambda\in E^{-\theta}$} & $\lambda\not\in E^{-\theta}$ & \multicolumn{1}{c|}{$\lambda\in E^{\theta}$} & $\lambda\not\in E^{\theta}$ \\ \hline
\multicolumn{1}{|c|}{$\bm x$} & \multicolumn{1}{c|}{$e^{-\gamma/2}$}          & $e^{-\gamma/2}$              & \multicolumn{1}{c|}{$<1$}                    & $>1$                        \\ \hline
\multicolumn{1}{|c|}{$\bm y$} & \multicolumn{1}{c|}{$>1$}                     & $<1$                         & \multicolumn{1}{c|}{$e^{\gamma/2}$}          & $e^{\gamma/2}$              \\ \hline
\end{tabular}
\caption{Approximated decay and growth rate of the left and right part of an eigenvector of the capacitance matrix depending on the location of the corresponding eigenvalue. Values greater than 1 correspond to growth and lower than 1  correspond to decay. Here, upper and lower branches refer respectively to $\lambda\in B_\varepsilon((\mu^{\pm\theta})^{-1}([-1,1]))$ as in \cref{prop:evalocation}.}
\end{table}

\begin{figure}[!h]
    \centering
    \begin{subfigure}[t]{0.48\textwidth}
        \centering
        \includegraphics[height=0.55\textwidth]{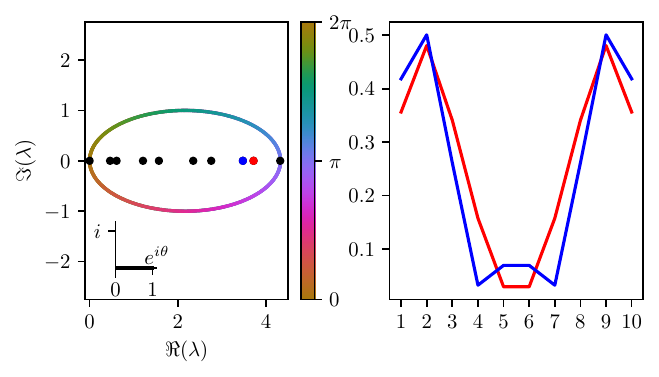}
        \caption{}
    \end{subfigure}
    \hfill
    \begin{subfigure}[t]{0.48\textwidth}
        \centering
        \includegraphics[height=0.55\textwidth]{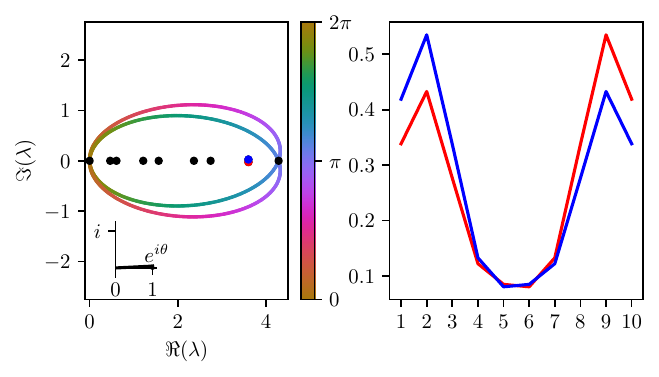}
        \caption{}
    \end{subfigure}\\
    \begin{subfigure}[t]{0.48\textwidth}
        \centering
        \includegraphics[height=0.55\textwidth]{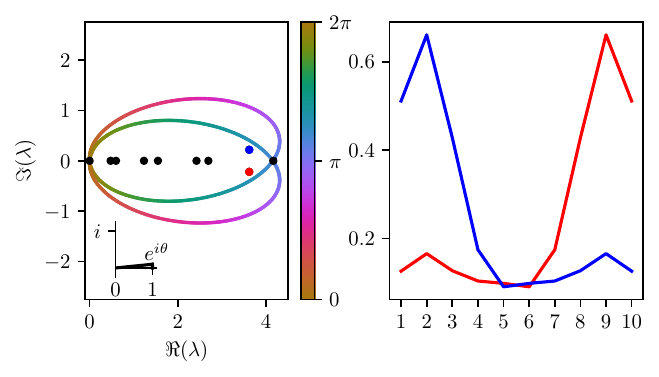}
        \caption{}
    \end{subfigure}
    \hfill
    \begin{subfigure}[t]{0.48\textwidth}
        \centering
        \includegraphics[height=0.55\textwidth]{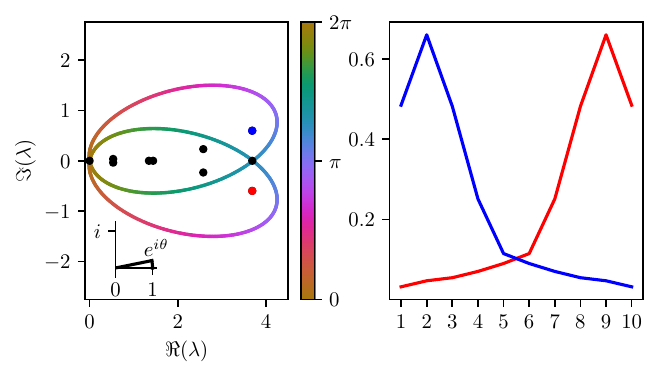}
        \caption{}
    \end{subfigure}
    \caption{Decoupling of the eigenvectors of the gauge capacitance matrix. The macroscopic behaviour of the eigenvectors (exponential decay/growth) is predicted by the location of the eigenvalues in the complex plane with respect to the region of topological convergence defined in \eqref{eq:def_E1_E2} displayed here as trace of \eqref{eq: symbol def}. Looking at the two highlighted eigenvalues (red and blue), Figure (\textsc{A-C}) correspond to item (ii) in \cref{lemma:exponential decay and decoupling} while (\textsc{D}) corresponds to item (i).}
    \label{fig: decoupling of eve}
\end{figure}

\subsubsection{Topological origin}
This section is devoted to illustrating the topological origin of the specific condensation properties of $\mathnormal{C}^{\theta,\gamma}$'s eigenvectors that have been shown in Theorem \ref{lemma:exponential decay and decoupling}. Especially, in Theorem \ref{lemma:exponential decay and decoupling} we demonstrate the condensation properties by the specific behaviours of $a(\mu)$, while here we concentrate on the Fredholm index theory of the symbol of the Toeplitz operator, which directly leads to considering $E^{\pm\theta}$ in (\ref{eq:def_E1_E2}) defined by symbols $f^{\theta}_\pm$. Instead of considering the eigenvector of a Toeplitz operator, we analyse the pseudoeigenvector of the corresponding matrix $\mathnormal{C}^{\theta,\gamma}$ to keep the boundary of the system. We first present the topological origin of the case (2) in Theorem \ref{lemma:exponential decay and decoupling}.
\begin{theorem}
    Suppose that $\lambda \in E^\theta
        \cap E^{-\theta}$. For some $0<\rho<1$ and sufficiently large $N$, there exists a pseudoeigenvector $\bm v$ of $\mathnormal{C}^{\theta,\gamma}$ satisfying
    \[
        \frac{\lVert (\mathnormal{C}^{\theta,\gamma}-\lambda)\bm v\rVert}{\lVert \bm v \rVert} \leq \max(C_1, N C_2)\rho^{N-1}
    \]
    such that
    \[
        \begin{cases} \ds \frac{|\bm v_j|}{\max_{i}|\bm v_i|} \leq C_3 N\rho^{j-1}, \quad j =1,\cdots, N, \\
        \nm
            \ds \frac{|\bm v_j|}{\max_{i}|\bm v_i|} \leq C_3 N\rho^{2N-j}, \quad j =N+1,\cdots, 2N,
        \end{cases}
    \]
    where $C_1, C_2, C_3$ are constants independent of $N$.
\end{theorem}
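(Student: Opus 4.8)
The plan is to exhibit an explicit pseudoeigenvector obtained by \emph{gluing} the two exact ``half-eigenvectors'' of \cref{thm:spectrarepresentation1} along the central interface, and then to show that for $\lambda\in E^\theta\cap E^{-\theta}$ the glued vector satisfies $(\cg-\lambda)\bm v\approx 0$ everywhere except at the two interface rows, where the defect is forced to be exponentially small. In other words, the residual is not spread over the matrix but concentrated at the interface, precisely because each half is pinned to the opposite \emph{outer} edge.

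First I would translate the hypothesis into a growth condition. By \cref{lem:defineofE2}, $\lambda\in E^\theta\cap E^{-\theta}$ is equivalent to $\abs{a(\mu^\theta(\lambda))}<e^{\gamma/2}$ and $\abs{a(\mu^{-\theta}(\lambda))}<e^{\gamma/2}$, so I may fix a single $\rho$ with $\max\{e^{-\gamma/2}\abs{a(\mu^\theta(\lambda))},\,e^{-\gamma/2}\abs{a(\mu^{-\theta}(\lambda))}\}\leq\rho<1$. Since $P_n/P_{n-1}\to a(\mu)$, this $\rho$ is exactly the asymptotic ratio governing the two pieces of \eqref{eq:evecform}: with $\bm x^{(j)}=(e^{-\gamma/2})^{j-1}P_{j-1}(\mu^\theta(\lambda))$ and $\bm y^{(j)}=(e^{-\gamma/2})^{N-j}P_{N-j}(\mu^{-\theta}(\lambda))$, the part $\bm x$ decays away from the left edge and $\bm y$ decays away from the right edge, both toward the interface, at rate $\rho$.

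Next, set $\bm v=(\bm x,\kappa\bm y)^\top$ with a coupling constant $\kappa>0$ chosen to balance the two outer maxima, so that $\|\bm v\|\asymp\max\{\abs{\bm x^{(1)}},\kappa\abs{\bm y^{(N)}}\}$. The polynomials $P_n=U_n+e^{-\gamma/2}U_{n-1}$ are precisely the objects encoding the corner entry $\alpha+\beta$ together with the interior three-term recurrence of each block, so $\bm v$ satisfies every row of $(\cg-\lambda)\bm v$ exactly except the two rows $N$ and $N+1$ straddling the interface. There the recurrence calls for ghost continuations $\bm x^{(N+1)}$ and $\bm y^{(0)}$, and the genuine defect is the mismatch between these ghosts and the true couplings $\eta\,\kappa\bm y^{(1)}$ and $\eta\,\bm x^{(N)}$; this mismatch is exactly the failure of the two expressions for $C_1$ in \eqref{eq:cdef2} to coincide, i.e. the failure of the compatibility \eqref{eq:evalconstraint}. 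Each of the four quantities involved is a deep-bulk value of a piece localised at the opposite edge, hence $\lesssim\rho^{N-1}\|\bm v\|$, which yields a residual of the claimed form $\max(C_1,NC_2)\rho^{N-1}$, the factor $N$ absorbing the polynomial corrections in the Chebyshev asymptotics. The pointwise decay follows from the same estimates: dividing $\abs{\bm x^{(j)}}=(e^{-\gamma/2})^{j-1}\abs{P_{j-1}(\mu^\theta(\lambda))}$ and the analogue for $\bm y$ by $\max_i\abs{\bm v_i}$ and using $\abs{P_n(\mu)}\leq C(n+1)\abs{a(\mu)}^n$ gives $\abs{\bm v_j}/\max_i\abs{\bm v_i}\leq C_3N\rho^{j-1}$ on the upper half and $\leq C_3N\rho^{2N-j}$ on the lower half.

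Conceptually I would frame the construction through the Toeplitz picture of \cref{prop: specra of fam of toeplitz}: up to the corner perturbations $\alpha+\beta$, the upper and lower halves of $\cg$ are finite sections of the Toeplitz operators with symbols $f_+^\theta$ and $f_-^\theta$, and the hypothesis $\lambda\in E^\theta\cap E^{-\theta}$ is exactly $\operatorname{wind}(f_+^\theta,\lambda)<0$ together with $\operatorname{wind}(f_-^\theta,\lambda)>0$, so each half carries an exponentially localised pseudomode pointing toward an opposite outer edge; this is what makes the glued vector condense on \emph{both} edges while staying an approximate eigenvector. Working with the exact Chebyshev half-solutions rather than raw Toeplitz pseudomodes is what lets me absorb the corner perturbations exactly, leaving no defect at the outer corners. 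The main obstacle I anticipate is analytic rather than structural: securing the ratio estimate $P_n/P_{n-1}\to a(\mu)$ with an explicit, $\lambda$-uniform rate, and controlling the degenerate regime where the two characteristic roots nearly coincide (equivalently $\mu^{\pm\theta}(\lambda)$ near $\pm1$), which is the genuine source of the linear factor $N$ in both the residual and the decay bounds. Making sure the normalisation by $\max_i\abs{\bm v_i}$ survives this degeneracy, and that $\rho$ can be taken uniform over the relevant range of $\lambda$, is where the careful work will lie.
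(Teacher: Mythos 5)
Your proposal is correct and takes essentially the approach that the paper's framework prescribes for this result: you glue the two exact Chebyshev half-solutions of \cref{thm:spectrarepresentation1}, translate the hypothesis $\lambda\in E^\theta\cap E^{-\theta}$ via \cref{lem:defineofE2} into the decay condition $e^{-\gamma/2}\abs{a(\mu^{\pm\theta}(\lambda))}\leq \rho<1$, and observe that the residual of $(\mathnormal{C}^{\theta,\gamma}-\lambda)\bm v$ is supported on the two interface rows, where it consists only of deep-bulk values of edge-localised sequences and is therefore of size $\mathcal{O}(N\rho^{N-1})$ relative to $\lVert \bm v\rVert\geq \abs{P_0}=1$. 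Your key estimate $\abs{P_n(\mu)}\leq C(n+1)\abs{a(\mu)}^n$, obtained from the geometric-sum representation of $U_n$, is exactly the right tool: it yields both the residual bound and the pointwise decay, correctly produces the linear factor $N$ (needed near the degenerate points $\mu^{\pm\theta}(\lambda)=\pm 1$), and no uniformity of $\rho$ in $\lambda$ is actually required since $\lambda$ is fixed in the statement.
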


Now, we present the topological origin of case (iii) in Theorem \ref{lemma:exponential decay and decoupling}.
\begin{theorem}
    Suppose that $\lambda \in E^{\theta} \triangle E^{-\theta}$. For some $0<\rho<1$ and sufficiently large $N$, there exists a pseudoeigenvector $\bm v$ of $\mathnormal{C}^{\theta,\gamma}$ satisfying
    \[
        \frac{\lVert (\mathnormal{C}^{\theta,\gamma}-\lambda)\bm v\rVert}{\lVert \bm v \rVert} \leq \max(C_1, N C_2)\rho^{N-1}
    \]
    such that
    \[
        \begin{cases} \ds \frac{|\bm v_j|}{\max_{i}|\bm v_i|} \leq C_3 N\rho^{j-1}, \quad j =1,\cdots, 2N, \text{ if $\lambda \in E^{\theta}$, } \\
        \nm
           \ds  \frac{|\bm v_j|}{\max_{i}|\bm v_i|} \leq C_3 N\rho^{2N-j}, \quad j =1,\cdots, 2N, \text{ if $\lambda \in E^{-\theta}$, }
        \end{cases}
    \]
    where $C_1, C_2, C_3$ are constants independent of $N$.
\end{theorem}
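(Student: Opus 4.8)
The plan is to follow the template of the preceding theorem treating case (ii), adapting it to the fact that the eigenvalue now lies in exactly one of the two regions of topological convergence. By the symmetry $P\cg P = \overline{\cg}$ of \cref{prop:csymm}, which interchanges the upper and lower halves together with complex conjugation and hence swaps $E^\theta$ and $E^{-\theta}$, it suffices to treat $\lambda \in E^\theta\setminus E^{-\theta}$; the right-edge localisation for $\lambda\in E^{-\theta}\setminus E^\theta$ then follows by reflecting the constructed vector. Throughout I would keep $\lambda$ bounded away from $\partial E^\theta\cup\partial E^{-\theta}$, which is exactly what forces the geometric rate $\rho$ below $1$ uniformly in $N$, and I would invoke \cref{prop:evalocation} to ensure that the genuine eigenvalues do cluster in this regime as $N$ grows.

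First I would write down the candidate pseudoeigenvector $\bm v=(\bm x,\bm y)^\top$ using the explicit form of \cref{thm:spectrarepresentation1}, namely $\bm x^{(j)}=(e^{-\gamma/2})^{j-1}P_{j-1}(\mu^\theta(\lambda))$ and $\bm y^{(j)}=C_\lambda (e^{-\gamma/2})^{N-j}P_{N-j}(\mu^{-\theta}(\lambda))$, with the coupling constant $C_\lambda$ given by the first expression in \eqref{eq:cdef2}. By construction this vector satisfies the top boundary row, both bulk recursions, the bottom boundary row, and the one interface row that determines $C_\lambda$ exactly; the only residual of $(\cg-\lambda)\bm v$ sits in the remaining interface row and equals the mismatch between the two expressions for $C_\lambda$ in \eqref{eq:cdef2}, that is, the failure of the eigenvalue condition \eqref{eq:evalconstraint}.

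The decay estimate is the heart of the argument. Using the asymptotics $P_j(\mu)/P_{j-1}(\mu)\to a(\mu)$ together with the characterisation of \cref{lem:defineofE2}, I would show that $\lambda\in E^\theta$ forces the left rate $\lvert e^{-\gamma/2}a(\mu^\theta(\lambda))\rvert<1$, while $\lambda\notin E^{-\theta}$ forces the lower rate $\lvert e^{\gamma/2}/a(\mu^{-\theta}(\lambda))\rvert<1$. Crucially, because the lower block is scaled by the exponentially small coupling constant $C_\lambda$ matching the interface value of the upper block, the two halves splice into a single vector that decays monotonically from the left edge all the way through the chain, in contrast with the two-sided decay of case (ii). Setting $\rho$ to the maximum of the two rates, the pure geometric decay acquires a polynomial-in-$j$ correction, bounded by the factor $N$ exactly as in the $j^{\,l-1}$ prefactor of \cref{thm:spectratoeplitz}, yielding $\lvert \bm v_j\rvert/\max_i\lvert \bm v_i\rvert\le C_3 N\rho^{j-1}$ for all $1\le j\le 2N$. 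Since the maximum of $\bm v$ is attained near position $1$ with $\lvert \bm v_1\rvert=1$, whereas the interface values $\bm v_N,\bm v_{N+1}$ are $O(\rho^N)$, the interface residual identified above is $O(\rho^N)$ relative to $\lVert \bm v\rVert$, giving the claimed bound $\lVert(\cg-\lambda)\bm v\rVert/\lVert \bm v\rVert\le\max(C_1,NC_2)\rho^{N-1}$, precisely in the spirit of \cref{prop: specra of fam of toeplitz}.

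I expect the main obstacle to be making the ratio estimate $P_j/P_{j-1}\approx a(\mu)$ quantitative and uniform: one must bound the approach to the limit at a rate independent of $N$ and of $\lambda$ over a compact subset of $E^\theta\setminus E^{-\theta}$, and control the subdominant root of the two-term recurrence for $P_n$ so that the polynomial corrections contribute only the factor $N$. This control degenerates as $\lambda$ approaches $\partial E^\theta\cup\partial E^{-\theta}$, where $\lvert a(\mu^{\pm\theta})\rvert\to e^{\gamma/2}$ and $\rho\to1$, which is the structural reason the statement is confined to the open symmetric difference. A secondary care point is verifying that the boundary rows of $\cg$, which carry the perturbed corner entry $\alpha+\beta$, are genuinely annihilated by the form of \cref{thm:spectrarepresentation1}, so that the entire residual really is concentrated at the single interface row.
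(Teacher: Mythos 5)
Your splicing construction is sound in its core mechanism for generic $\lambda$ --- the residual of $(\cg-\lambda)\bm v$ does sit in a single interface row and is weighted by entries that have already decayed to size $\rho^{N}$ --- but it does not prove the theorem as stated, and the reason is not a technicality. Your vector needs \emph{both} halves under control: the upper half decays because $\lambda\in E^{\theta}$ gives $\lvert a(\mu^{\theta}(\lambda))\rvert<e^{\gamma/2}$, but the lower half decays only if $\lvert a(\mu^{-\theta}(\lambda))\rvert>e^{\gamma/2}$, i.e.\ only if $\lambda$ lies outside the \emph{closure} of $E^{-\theta}$. The hypothesis $\lambda\in E^{\theta}\setminus E^{-\theta}$ is strictly weaker: the winding number in \eqref{eq:def_E1_E2} is undefined on the symbol curve, so points of $E^{\theta}$ lying \emph{on} $\partial E^{-\theta}$ belong to the symmetric difference. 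For such $\lambda$ the lower-half ratio $\lvert e^{\gamma/2}/a(\mu^{-\theta}(\lambda))\rvert$ equals $1$, your entries $\bm y^{(j)}$ sit at the flat level $\sim N\rho^{N}$ instead of decaying, and the claimed bound $\lvert\bm v_{j}\rvert/\max_{i}\lvert\bm v_{i}\rvert\leq C_{3}N\rho^{j-1}$ fails for $j$ near $2N$. Your closing remark that the statement is ``confined to the open symmetric difference'' bounded away from both boundary curves misreads the hypothesis; no such restriction appears in the theorem, so your argument covers only part of the hypothesis set.

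The fix --- and the route the paper's ``Fredholm index'' framing points to (the survey defers the proof to \cite{ammari.barandun.ea2024Tunable}) --- is a one-sided construction that never invokes the other symbol at all. If $\lambda\in E^{\theta}$, then $\operatorname{wind}(f_{+}^{\theta},\lambda)<0$, so the corner-perturbed semi-infinite Toeplitz operator obtained by extending the upper block (corner entry $\alpha+\beta$ included) has an $\ell^{2}$ eigenvector $\bm u$ with $\lvert\bm u_{j}\rvert/\max_{i}\lvert\bm u_{i}\rvert\leq C\,j\,\rho^{j-1}$; this is exactly \cref{thm: exponential_decay_capacitance_operators} with $k=1$ (the unperturbed analogue is \cref{thm:spectratoeplitz}). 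Setting $\bm v=(\bm u_{1},\dots,\bm u_{N},0,\dots,0)^{\top}$, every row of $(\cg-\lambda)\bm v$ vanishes except rows $N$ and $N+1$, whose entries are $-e^{\i\theta}\eta\,\bm u_{N+1}$ and $e^{-\i\theta}\eta\,\bm u_{N}$, both $O(N\rho^{N-1})$, while the decay bound for all $j\leq 2N$ is immediate since $\bm v$ vanishes on the lower half. This proves the full statement, boundary points included, and isolates the actual content of the theorem: the localisation is caused by the negative winding of the \emph{one} relevant symbol. The case $\lambda\in E^{-\theta}$ follows by the same construction on the index-reversed lower block, or by your symmetry reduction via \cref{prop:csymm} --- though note that the latter sends $\lambda\mapsto\overline{\lambda}$ (since $\mu^{-\theta}(\overline{\lambda})=\overline{\mu^{\theta}(\lambda)}$, conjugation swaps $E^{\theta}$ and $E^{-\theta}$), a point you should state explicitly. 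Two items in your write-up are correct and worth keeping: the corner rows are indeed annihilated by the form of \cref{thm:spectrarepresentation1}, whose polynomials $P_{n}=U_{n}+e^{-\gamma/2}U_{n-1}$ are built precisely for this, and your accounting of the interface residual as the failure of \eqref{eq:evalconstraint} weighted by the coupling in \eqref{eq:cdef2} is the right computation for interior $\lambda$.
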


\section{Generalised Brillouin zone for non-reciprocal problems}\label{sec: GBZ}

Various physical systems are modelled through Toeplitz matrices and operators and variations thereof. In the previous sections, we have seen systems of subwavelength resonators in (classical) one-dimensional wave physics \cite{ammari.barandun.ea2023Exponentially, ammari.barandun.ea2024Mathematical} but also other system as the tight-binding model with nearest neighbour approximation in condensed matter theory \cite{okuma.sato2023Nonhermitian,mironov.oleinik1994Limits,mironov.oleinik1997limits,fefferman.lee-thorp.ea2017Honeycomb,thouless1974Electrons} falls into this category. The tools developed in this section apply to all of these models.

System are typically constituted by resonators or particles all of which we will call sites in this section.
\subsection{Physical systems and their mathematical models}

We assume that the interactions between the sites repeat periodically with period $k$, so that if the interactions are all the same $k=1$ holds. We denote by $\spatialperiod$ the spatial period of recurrence.

In all of the aforementioned examples, the following modelling applies:
\begin{description}
    \item[Finite systems] are constituted by a finite number of sites. These are modelled by tridiagonal $k$-Toeplitz matrices;
    \item[Semi-infinite systems] are constituted by an infinite number of sites but only in one direction from a fixed origin. These are modelled by tridiagonal $k$-Toeplitz operators;
    \item[Infinite systems] are constituted by an infinite number of sites where no point is a privileged choice of origin. These are modelled by tridiagonal $k$-Laurent operators.
\end{description}
In the literature, these three cases are also known as open boundary conditions, semi-infinite boundary conditions, and periodic boundary conditions \cite{okuma.sato2023Nonhermitian}. The subsequent theory recently introduced in \cite{ammari.barandun.ea2024Generalised} applies to all physical systems that can be modelled through these mathematical objects.

\subsection{Floquet--Bloch theory in the Hermitian case}
Floquet--Bloch theory is the appropriate tool for analysing periodic systems in the Hermitian case, in particular because Floquet's theorem relates the spectra of the infinite operator to the spectra of the corresponding quasiperiodic operators. Here, the Hermiticity of the system is reflected in the Hermiticiy of the matrices and operators, \emph{i.e.}, $M=M^*\coloneqq\bar{M}^\top$.

One may quickly notice that when studying a tridiagonal system associated to a tridiagonal $k$-Laurent operator $L(f)$ (defined in Section \ref{section:ktoeplitztheory})  and denoting by $\alpha$ the quasiperiodicity, the operator associated to the Floquet-transformed system is simply given by the symbol
$f(e^{-\i\alpha \spatialperiod})$. Using \cite[Theorem 2.8]{ammari.barandun.ea2024Spectra}, we can find that
\begin{align}\label{eq:ClassicalFBT}
    \sigma(L(f)) = \bigcup_{\alpha \in Y^*} \sigma(f(e^{-\i\alpha \spatialperiod})),
\end{align}
where $Y^*\coloneqq [-\pi/\spatialperiod,\pi/\spatialperiod)$ is the first \emph{Brillouin zone}. This exactly mirrors the Floquet--Bloch decomposition of the spectrum for a periodic self-adjoint elliptic operator.

Combining the Bauer–Fike theorem together with \cite[Corollary 6.16]{bottcher.silbermann1999Introduction} shows that for these Hermitian systems, the spectrum of the finite system converges to the spectrum of the infinite one, meaning that
\begin{align}\label{eq:ClassicalConv}
    \sigma(T_{mk}(f))\xrightarrow{m\to\infty}\sigma (L(f))
\end{align}
in the Hausdorff sense. Recall that $T_{mk}(f)$ denotes a $k$-Toeplitz matrix of order $mk$ and $T(f)$ a $k$-Toeplitz operator as defined in \cref{section:ktoeplitztheory}. On the other side, the Hermiticity of the symbol immediately implies that
\begin{align*}
    \sigma(T(f))=\sigma(L(f)).
\end{align*}
\subsection{The non-reciprocal case}
We now shift our focus to the case of non-reciprocal systems, that is the case where the matrices and operators we are working with are no longer Hermitian. Non-reciprocal systems are peculiar for having eigenmodes which are condensed on one edge of the system as shown in \Cref{sec: skin effect} (see also \cite{yao.wang2018Edge,ammari.barandun.ea2024Mathematical}) and therefore present a privileged choice of origin, making a semi-infinite system the natural corresponding physical structure.

We consider the following prototypical example:
\begin{align}\label{eq: example symbol}
    f(z)=\begin{pmatrix}
             0 & -2 + -\frac{z}{10} \\ -\frac{9}{10} + \frac{1}{z} & 0
         \end{pmatrix},
\end{align}
and look at the spectra plotted in \cref{fig:different spectra}.
\begin{figure}[h]
    \centering
    \includegraphics[width=0.75\linewidth]{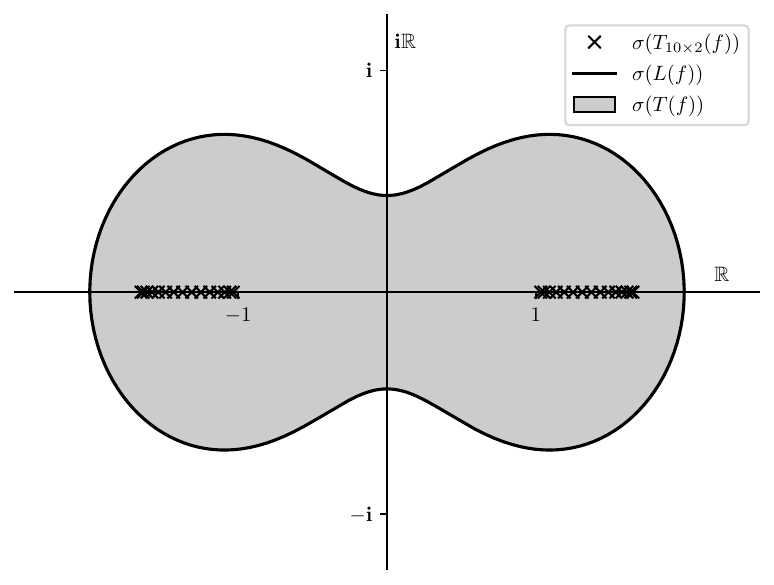}
    \caption{Spectra of the different mathematical objects related to the symbol $f$ from \eqref{eq: example symbol}.}
    \label{fig:different spectra}
\end{figure}

We observe the following:
\begin{enumerate}
    \item[(i)] The symbol $f(z)$ is no longer collapsed (see \cref{lemma: collapsed symbol}) and now has nonempty interior. This will prove to be the crucial difference between the non-reciprocal and the reciprocal cases, as in the non-reciprocal setting the spectra $\sigma (L(f))$ and $\sigma (T(f))$ do not agree anymore.
    \item[(ii)] $\sigma(L(f)) = \bigcup_{\alpha \in Y^*} \sigma(f(e^{-\i\alpha \spatialperiod}))$ still holds also in the non-reciprocal case. Nevertheless, non-reciprocal systems have a privileged choice of origin as they present an exponential decay in their eigenmodes. One would wish that the Floquet--Bloch decomposition could model this decay and would cover $\sigma(T(f))$ and not only $\sigma(L(f))$.
    \item[(iii)] The convergence $\sigma(T_{mk}(f))\xrightarrow{m\to\infty}\sigma (L(f))$ does not hold anymore and neither does $\sigma(T_{mk}(f))\xrightarrow{m\to\infty}\sigma (T(f))$. The spectrum of $T_{mk}(f)$ is purely real while the ones of $L(f)$ and of $T(f)$ have nontrivial imaginary parts.
\end{enumerate}

In the following sections, we will address the issues (ii) and (iii) above and resolve both of them.

\subsection{$k$-Toeplitz operators and the generalised Brillouin zone} \label{sec:topgbz}
As it can be seen for the non-Hermitian skin effect in \Cref{sec: skin effect}, the classical Floquet--Bloch transform with real quasiperiodicities $\alpha \in Y^*$ fails to capture non-reciprocal decay and only covers the $k$-Laurent operator $L(a)\subsetneq T(a)$ (identity (\ref{eq:ClassicalFBT}) still holds).
In order to rectify this, we extend the allowable quasiperiodicities into the complex plane. This is a natural extension. Indeed, considering the quasiperiodicity condition
$$
    u(x + \spatialperiod) = e^{\i \spatialperiod \alpha} u(x),
$$
one immediately notices that decaying functions (as are the eigenmodes of non-reciprocal systems) cannot be described through this relation for $\alpha\in\R$. This would, however, be the case if we allow $\alpha \in \C$.
\begin{definition}
    For a tridiagonal $k$-Toeplitz operator with symbol
    \begin{align} \label{eq: k-Toeplitz base symbol}
        f:z\mapsto \begin{pmatrix}
                       a_1        & b_1 & 0      & \cdots  & 0       & c_k z   \\
                       c_1        & a_2 & b_2    &         &         & 0       \\
                       0          & c_2 & \ddots & \ddots  &         & \vdots  \\
                       \vdots     &     & \ddots & \ddots  & b_{k-2} & 0       \\
                       0          &     &        & c_{k-2} & a_{k-1} & b_{k-1} \\
                       b_k z^{-1} & 0   & \cdots & 0       & c_{k-1} & a_k
                   \end{pmatrix}
    \end{align}
    with nonzero off-diagonal entries we define the \emph{non-reciprocity rate} as
    \begin{equation}\label{eq:deltadef}
        \Delta = \ln \prod_{j=1}^k \left\vert \frac{b_j}{c_j}\right\vert.
    \end{equation}
    Furthermore, we define the \emph{generalised Brillouin zone} to be
    \begin{align}
        \mathcal{B} = \bigg\{ \qp \mid \alpha\in [-\pi/\spatialperiod,\pi/\spatialperiod), \beta \in [0,\Delta/\spatialperiod] \bigg\},
        \label{eq: GBZ}
    \end{align}
    where $\spatialperiod$ denotes the physical length of the unit cell.
\end{definition}
We aim at showing that this expansion of the Brillouin zone allows us to reinstate the Floquet--Bloch theorem in a physical sense, which we encompass in the following theorem.
\begin{theorem}\label{thm: GFBT}
    Consider a tridiagonal $k$-Toeplitz operator with symbol $f$ as in \eqref{eq: k-Toeplitz base symbol} and with nonzero off-diagonal entries and let $\mathcal{B}$ be the generalised Brillouin zone from \eqref{eq: GBZ}. Then,
    \begin{align}\label{equ:floquetidentitynonhermitian}
        \sigma(T(f)) = \bigcup_{\qp \in \mathcal{B}} \sigma(f(\associateqp)),
    \end{align}
    up to at most $(k-1)$ points which may be in $\sigma(T(f))$ but not in $\bigcup_{\qp \in \mathcal{B}} \sigma(f(\associateqp))$. Furthermore, for every $\lambda \in \sigma(T(f))$, the Brillouin zone $\mathcal{ B}$ contains exactly two corresponding quasiperiodicities
    \begin{align*}
        \alpha+\i\beta & \in [-\pi/\spatialperiod,\pi/\spatialperiod)+\i[0,\Delta/(2\spatialperiod)],\\ \alpha'+\i\beta' =  (-\zeta/\spatialperiod-\alpha) + \i(\Delta/\spatialperiod-\beta) &\in [-\pi/\spatialperiod,\pi/\spatialperiod)+\i[\Delta/(2\spatialperiod),\Delta/\spatialperiod]
    \end{align*}
    such that $\lambda\in \sigma(a(e^{-\i\spatialperiod(\qp)}) = \sigma(a(e^{-\i\spatialperiod(\alpha'+\i\beta')}))$. Here, $\zeta$ denotes the shift $\zeta \coloneqq \Arg (\prod_{i=1}^k \frac{b_i}{c_i})$.
\end{theorem}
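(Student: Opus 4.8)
The plan is to reduce the identity \eqref{equ:floquetidentitynonhermitian} to the spectral characterisation of $T(f)$ already established in \cref{thm: windingspectrum}, by rewriting the right-hand union as a union of $\sigma(f(w))$ over an annulus in the $w$-plane and then matching this annular union with $\sigma_{\mathrm{det}}(f)\cup\sigma_{\mathrm{wind}}(f)$. First I would reparametrise the quasiperiodicities. Writing $w=\associateqp = e^{\spatialperiod\beta}e^{-\i\spatialperiod\alpha}$, the map $\qp\mapsto w$ carries $\mathcal{B}$ from \eqref{eq: GBZ} onto the closed annulus $\mathcal{A}\coloneqq\{w\in\C:\,1\le \abs{w}\le e^{\Delta}\}$: the modulus $e^{\spatialperiod\beta}$ sweeps $[1,e^{\Delta}]$ as $\beta$ runs through $[0,\Delta/\spatialperiod]$, while the phase $-\spatialperiod\alpha$ covers the full circle as $\alpha$ runs through $[-\pi/\spatialperiod,\pi/\spatialperiod)$. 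Here I use $\Delta\ge 0$, which is implicit in \eqref{eq: GBZ} and \eqref{eq:deltadef}; the opposite sign corresponds to localisation on the other edge and is handled by reversing the orientation of the chain. Consequently $\bigcup_{\qp\in\mathcal{B}}\sigma(f(\associateqp)) = \bigcup_{w\in\mathcal{A}}\sigma(f(w))$, and $\qp\mapsto w$ is a bijection onto the interior once boundary points are identified.

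Next I would exploit the determinant formula recalled before \eqref{equ:windspectraformula2}, which in the present notation reads $\det(f(w)-\lambda)=(-1)^{k+1}\big(\prod_i c_i\big)w+(-1)^{k+1}\big(\prod_i b_i\big)w^{-1}+g(\lambda)$. For fixed $\lambda$, the condition $\lambda\in\sigma(f(w))$ becomes, after multiplication by $w$, the quadratic $(-1)^{k+1}\big(\prod_i c_i\big)w^2+g(\lambda)\,w+(-1)^{k+1}\big(\prod_i b_i\big)=0$, whose two roots $w_1(\lambda),w_2(\lambda)$ satisfy, by Vieta, $w_1w_2=\big(\prod_i b_i\big)/\big(\prod_i c_i\big)$ and hence $\abs{w_1w_2}=e^{\Delta}$ by \eqref{eq:deltadef}. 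The key step is the dichotomy forced by this product identity: ordering $\abs{w_1}\le\abs{w_2}$ gives $\abs{w_1}\le e^{\Delta/2}\le\abs{w_2}$, and either $\abs{w_1}<1$ (whence $\abs{w_2}>e^{\Delta}$, so neither root lies in $\mathcal{A}$) or $\abs{w_1}\ge 1$ (whence both roots lie in $\mathcal{A}$). Factoring $\det(f(z)-\lambda)=(-1)^{k+1}\big(\prod_i c_i\big)z^{-1}(z-w_1)(z-w_2)$ and computing the winding number of $z\mapsto\det(f(z)-\lambda)$ around $0$ over $\mathbb{T}^1$ yields
\[
\operatorname{wind}(\det(f-\lambda),0)=\mathbf 1[\abs{w_1}<1]+\mathbf 1[\abs{w_2}<1]-1,
\]
which equals $-1$ in the second case and $0$ in the first. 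Comparing with \eqref{equ:detspectraformula1} and \eqref{equ:windspectraformula3}, this shows $\bigcup_{w\in\mathcal{A}}\sigma(f(w))=\sigma_{\mathrm{det}}(f)\cup\sigma_{\mathrm{wind}}(f)$, the boundary case $\abs{w_1}=1$ being exactly the $\sigma_{\mathrm{det}}$ stratum.

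The first assertion then follows immediately from \cref{thm: windingspectrum}, which sandwiches $\sigma(T(f))$ between $\sigma_{\mathrm{det}}(f)\cup\sigma_{\mathrm{wind}}(f)$ and its union with $\sigma(B_0)$; since $B_0$ is $(k-1)\times(k-1)$, this accounts precisely for the at-most-$(k-1)$ exceptional points in the statement. For the counting statement I would return to $w_1,w_2$: writing $\big(\prod_i b_i\big)/\big(\prod_i c_i\big)=e^{\Delta+\i\zeta}$ with $\zeta=\Arg\big(\prod_i b_i/c_i\big)$, the bijection of the first step turns $w_1,w_2$ into exactly two quasiperiodicities $\qp,\qpconj$ in $\mathcal{B}$, and the identity $w_1w_2=e^{\Delta+\i\zeta}$ becomes, upon comparing moduli and arguments, $\beta+\beta'=\Delta/\spatialperiod$ together with $\alpha+\alpha'\equiv-\zeta/\spatialperiod$, i.e. the claimed relation $\qpconj=(-\zeta/\spatialperiod-\alpha)+\i(\Delta/\spatialperiod-\beta)$. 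Ordering the roots by modulus places one quasiperiodicity in $\i[0,\Delta/(2\spatialperiod)]$ and the other in $\i[\Delta/(2\spatialperiod),\Delta/\spatialperiod]$, as stated.

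The main obstacle I anticipate is the careful bookkeeping of degenerate and boundary configurations rather than any single hard estimate. A double root of the quadratic collapses the two quasiperiodicities to one and corresponds to a point where the two spectral branches of $f$ touch; roots lying exactly on $\abs{w}=1$ put both roots on $\partial\mathcal{A}$ and feed the essential spectrum $\sigma_{\mathrm{det}}(f)$; and \cref{thm: windingspectrum} permits but does not pin down the $(k-1)$ points of $\sigma(B_0)$. Checking that all of these exceptional, low-dimensional cases are consistent with the ``exactly two quasiperiodicities'' claim, which is itself asserted only up to these exceptions, is where the argument will require the most attention.
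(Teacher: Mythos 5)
Your proposal is correct, and its backbone is the same as the paper's: both arguments reduce \eqref{equ:floquetidentitynonhermitian} to the sandwich $\sigma_{\mathrm{det}}(f)\cup\sigma_{\mathrm{wind}}(f)\subset\sigma(T(f))\subset\sigma_{\mathrm{det}}(f)\cup\sigma_{\mathrm{wind}}(f)\cup\sigma(B_0)$ of \cref{thm: windingspectrum} combined with the determinant formula of \cref{lem:detform}, so that the at-most-$(k-1)$ exceptional points are exactly the possible eigenvalues of the $(k-1)\times(k-1)$ block $B_0$. Where you diverge is in how the root structure of $\psi(w)=-g(\lambda)$ is analysed. The paper packages this geometrically: by \eqref{eq: det and wind based on E} one has $\sigma_{\mathrm{det}}(f)\cup\sigma_{\mathrm{wind}}(f)=(-g)^{-1}(E)$ for the closed ellipse $E$, and \cref{lem:ellipseprops} (whose proof is deferred to the reference) says that $\psi$ maps $\mathcal{B}$ onto $E$ with exactly the two preimages $\qp$ and $\qpconj$, and records the nesting $E_\beta=E_{\Delta-\beta}$. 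You instead work directly in the $w$-plane: $\mathcal{B}$ becomes the annulus $1\leq\abs{w}\leq e^{\Delta}$, Vieta's identity $w_1w_2=\prod_i b_i/c_i$ with $\abs{w_1w_2}=e^{\Delta}$ forces the dichotomy (both roots in the annulus, or neither), and the factorisation $\det(f(z)-\lambda)=(-1)^{k+1}\bigl(\prod_i c_i\bigr)z^{-1}(z-w_1)(z-w_2)$ yields the winding count $\operatorname{wind}(\det(f-\lambda),0)=\mathbf{1}[\abs{w_1}<1]+\mathbf{1}[\abs{w_2}<1]-1$, which is $-1$ precisely when both roots lie in the annulus (off the unit circle) and $0$ otherwise. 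This is mathematically equivalent to the paper's route — your roots $w_1,w_2$ \emph{are} the two $\psi$-preimages of $-g(\lambda)$, and your modulus/argument comparison reproduces both the relation $\qpconj=(-\zeta/\spatialperiod-\alpha)+\i(\Delta/\spatialperiod-\beta)$ and the splitting of $\mathcal{B}$ at $\beta=\Delta/(2\spatialperiod)$ — but it is self-contained and more elementary, effectively re-proving \cref{lem:ellipseprops} inline rather than citing it. Your closing caveats (double roots collapsing the two quasiperiodicities into one, roots on $\abs{w}=1$ landing in $\sigma_{\mathrm{det}}(f)$, and the undetermined status of $\sigma(B_0)$) are precisely the degeneracies that the paper's statement also tolerates, so they do not constitute a gap.
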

\begin{remark}\hfill
    \begin{itemize}
        \item From now on, we will take $\spatialperiod=1$ without any loss of generality. To reintroduce $\spatialperiod$, we simply rescale $\mathcal{B}$ by $1/\spatialperiod$ and all occurrences of $\qp$ in the formulas by $\spatialperiod$.
        \item We will also use $\qp$ and $z=e^{-\i(\qp)}$ interchangeably and refer to them as \emph{associated}.
        \item For a given quasiperiodicity $\qp\in \mathcal{B}$ with $\beta\in[0,\Delta/(2\spatialperiod)]$, we call $\alpha'+\i\beta'$ with $\alpha' = \zeta/\spatialperiod-\alpha$ and $\beta' = \Delta/\spatialperiod-\beta$ the \emph{conjugate quasiperiodicities}.
        \item In the above definition, we have assumed that $\Delta>0$. However, this is not necessarily the case. If $\Delta<0$, then the eigenmodes\footnote{While these eigenmodes $\bm u$ are eigenmodes in the symbolic sense $(T(a)-\lambda I) \bm u = 0$, they no longer lie in $\ell^2$ due to their exponential growth.} of $T(a)$ will turn out to be exponentially growing and we observe a \emph{negative} decay parameter $\beta$. We can then let $\beta \in [\Delta/\spatialperiod,0]$ and all of the arguments below work analogously.
        \item As $\alpha\mapsto e^{-\i(\alpha+\i\beta)}$ is periodic in $\alpha$ with period $2\pi$ we consider equality with respect to $\alpha$ modulo $2\pi$ and choose the convention $\alpha\in [-\pi, \pi)$.
        \item For reciprocal systems (\emph{i.e.}, for $\Delta=0$),  the generalised Brillouin zone reduces to the standard Brillouin zone, effectively making $\mathcal{B}$ an extension of $Y^*$.
    \end{itemize}
\end{remark}

To prove \cref{thm: GFBT}, we will need Theorem \ref{thm: windingspectrum} for the spectra of tridiagonal $k$-Toeplitz operators and the following lemma. 

\begin{lemma}\label{lem:detform}
    Let $T(f)$ be a tridiagonal $k$-Toeplitz operator with symbol $f(z)$ as in \eqref{eq: k-Toeplitz base symbol}. Then, we have
    \begin{equation*}
        \det (f(z) - \lambda) = \psi(z) + g(\lambda) \qquad \lambda,z \in \C,
    \end{equation*}
    where
    \begin{equation}
        \psi(z)=(-1)^{k+1}\left((\prod_{i=1}^k c_i)z+(\prod_{i=1}^k b_i)z^{-1}\right),
    \end{equation}
    and $g(\lambda)$ is a polynomial\footnote{See \cite[Appendix A]{ammari.barandun.ea2024Spectra} for further details.} of degree $k$.
\end{lemma}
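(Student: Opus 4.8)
The plan is to read off the determinant directly from the structure of the symbol, exploiting that the spectral variable $z$ enters $f(z)$ in exactly two entries: the top-right corner $(1,k)$, equal to $c_k z$, and the bottom-left corner $(k,1)$, equal to $b_k z^{-1}$. Writing $M \coloneqq f(z)-\lambda I$ and abbreviating the two corner entries by $u \coloneqq c_k z$ and $v \coloneqq b_k z^{-1}$, I would first invoke multilinearity of the determinant in these two distinct entries to obtain the decomposition
\[
\det M = P + Q\,u + R\,v + S\,uv,
\]
where $P,Q,R,S$ are independent of $z$ and depend polynomially on $\lambda$. The crucial observation that forces the whole $z$-dependence into $\psi(z)$ is that the cross term is itself $z$-free, since $uv = c_k z \cdot b_k z^{-1} = b_k c_k$; hence $S\,uv$ only contributes a $\lambda$-polynomial.

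Next I would evaluate the three nontrivial coefficients. The constant $P = \det M\big|_{u=v=0}$ is the determinant of the purely tridiagonal matrix $\det\big(\mathrm{tridiag}(c_i;\,a_i-\lambda;\,b_i)\big)$, a polynomial in $\lambda$ of degree $k$ with leading term $(-1)^k\lambda^k$. The coefficient $Q$ of $u$ (taken at $v=0$) is the cofactor $(-1)^{1+k}\det\big(\widehat M_{1,k}\big)$, where $\widehat M_{1,k}$ is the minor deleting row $1$ and column $k$ from $M\big|_{v=0}$; the tridiagonal band collapses so that this minor is upper triangular with diagonal $c_1,\dots,c_{k-1}$, giving $Q\,u = (-1)^{k+1}\big(\prod_{i=1}^{k} c_i\big)z$. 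Symmetrically, the $(k,1)$-minor is lower triangular with diagonal $b_1,\dots,b_{k-1}$, so $R\,v = (-1)^{k+1}\big(\prod_{i=1}^{k} b_i\big)z^{-1}$. Collecting terms then produces exactly $\det M = \psi(z) + g(\lambda)$ with $g(\lambda) = P + S\,b_k c_k$.

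An equivalent and perhaps more transparent route is the Leibniz permutation expansion: a permutation with nonzero weight must be tridiagonal-compatible except possibly for the use of the two corners; the unique permutation using the top-right corner alone is the full $k$-cycle $1\mapsto k\mapsto k-1\mapsto\cdots\mapsto 1$ (sign $(-1)^{k+1}$, weight $z\prod_i c_i$), the unique one using the bottom-left corner alone is the reverse cycle (weight $z^{-1}\prod_i b_i$), and any permutation using both corners or neither is $z$-independent. Both arguments confine the $z$-dependence to $\psi(z)$.

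The only real obstacle is the bookkeeping: one must verify carefully that deleting the appropriate row and column turns the tridiagonal band into a triangular matrix whose diagonal is precisely the relevant off-diagonal band, and track the cofactor sign $(-1)^{1+k}$ consistently (likewise, in the permutation argument, confirm that the two distinguished cycles are the \emph{only} band-compatible permutations touching a single corner). A final degree count, $\deg P = k$ against $\deg(S\,b_k c_k) = k-2$ (the latter being $b_k c_k$ times the principal tridiagonal minor on indices $2,\dots,k-1$), shows $g$ has degree $k$. This recovers \cite[Lemma A.1]{ammari.barandun.ea2024Spectra}, with the explicit expression for $g$ given in \cite[(A.2)]{ammari.barandun.ea2024Spectra}.
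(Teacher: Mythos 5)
Your proposal is correct, and it follows essentially the same route as the source: the survey itself gives no proof of this lemma but defers to \cite[Lemma A.1]{ammari.barandun.ea2024Spectra}, where the identity is obtained by exactly this kind of expansion of $\det(f(z)-\lambda)$, isolating the two corner entries $c_k z$ and $b_k z^{-1}$ (whose product is $z$-free) and reducing the linear-in-$z$ and linear-in-$z^{-1}$ coefficients to triangular minors with diagonals $c_1,\dots,c_{k-1}$ and $b_1,\dots,b_{k-1}$. Both your cofactor bookkeeping and your sign/degree counts (leading term $(-1)^k\lambda^k$ from the pure tridiagonal part, degree $k-2$ from the cross term) are accurate, so the argument is complete.
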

As a consequence of the above two results, we can see that if we define $E$ to be the ellipse (with interior) traced out by $\psi(\mathbb{T}^1)$, then we have
\begin{align}
    \label{eq: det and wind based on E}
    \sigma_{\det}(f) = (-g)^{-1}(\partial E) \quad \text{and} \quad \sigma_{\mathrm{wind}}(f) = (-g)^{-1}(\interior E),
\end{align}
where $\sigma_{\mathrm{det}}(f), \sigma_{\mathrm{wind}}(f)$ are given by (\ref{equ:detspectraformula1}) and (\ref{equ:windspectraformula3}), respectively.

The following results hold.
\begin{lemma}\label{lem:ellipseprops}
    The map
    \begin{align*}
        \psi:\mathcal{B} & \to E                      \\
        \qp              & \mapsto \psi(e^{-\i(\qp)})
    \end{align*}
    is well-defined, surjective and for every $\xi \in E$, there exist unique $$\alpha \in [-\pi,\pi], \beta \in [0,\Delta/2]$$ such that
    \[
        \psi^{-1}(\xi) = \big\{\qp, \qpconj \big\}.
    \]

    Furthermore, if we denote by $E_\beta$ the (with interior) ellipse traced out by $\psi([-\pi,\pi]+\i\beta)$, then we have $E_\beta = E_{\Delta-\beta}$ and $\operatorname{int} E_{\beta_1} \supset E_{\beta_2}$ for $0\leq \beta_1<\beta_2 \leq \Delta/2$.
\end{lemma}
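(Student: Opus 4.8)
First I would reduce $\psi$ to the classical Joukowski map, for which every claimed property is standard, and then transport these properties back through an explicit linear change of variables. Write $B \coloneqq \prod_{i=1}^k b_i$ and $C \coloneqq \prod_{i=1}^k c_i$, so that by \cref{lem:detform} one has $\psi(z) = (-1)^{k+1}(Cz + Bz^{-1})$, with $\Delta = \ln|B/C|$ and $\zeta = \Arg(B/C)$. Fixing a branch of the square root, I would substitute $z = \sqrt{B/C}\,w$; then $Cz = \sqrt{BC}\,w$ and $Bz^{-1} = \sqrt{BC}\,w^{-1}$, so that
\[
    \psi(\sqrt{B/C}\,w) = (-1)^{k+1}\sqrt{BC}\,(w + w^{-1}) = \Phi(J(w)),
\]
where $J(w) = w + w^{-1}$ is the Joukowski map and $\Phi(v) = (-1)^{k+1}\sqrt{BC}\,v$ is a fixed linear bijection of $\C$. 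Since $|\sqrt{B/C}| = e^{\Delta/2}$, a point $z = e^{\beta}e^{-\i\alpha}$ of modulus $e^{\beta}$ corresponds to $w$ of modulus $\rho_\beta \coloneqq e^{\beta - \Delta/2}$; thus $\beta\in[0,\Delta]$ sweeps $|w|\in[e^{-\Delta/2},e^{\Delta/2}]$, and the ellipse $E_\beta = \psi([-\pi,\pi]+\i\beta)$ is the image under $\Phi$ of the Joukowski ellipse $\mathcal{E}_{\rho_\beta} \coloneqq J(\{|w|=\rho_\beta\})$.

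Next I would invoke the standard facts about $J$: for $\rho\neq1$ it maps $\{|w|=\rho\}$ bijectively onto an ellipse with foci $\pm2$ and semi-axes $\rho+\rho^{-1}$, $|\rho-\rho^{-1}|$; these ellipses are confocal and strictly nested in $|\ln\rho|$; one has $J(w)=J(w^{-1})$, so $\mathcal{E}_\rho=\mathcal{E}_{1/\rho}$; and $J$ restricts to a bijection of $\{|w|>1\}$ (resp. $\{|w|<1\}$) onto $\C\setminus[-2,2]$, while $J(\{|w|=1\})=[-2,2]$. Applying $\Phi$, which preserves nesting and carries filled ellipses to filled ellipses, and using $\rho_{\Delta-\beta}=\rho_\beta^{-1}$, gives $E_\beta=E_{\Delta-\beta}$ immediately. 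For $0\le\beta_1<\beta_2\le\Delta/2$ one has $\rho_{\beta_1}<\rho_{\beta_2}\le1$, hence $\rho_{\beta_1}^{-1}>\rho_{\beta_2}^{-1}\ge1$, so the corresponding ellipses are strictly nested and $\operatorname{int}E_{\beta_1}\supset E_{\beta_2}$. Finally, since $J$ carries the full annulus $\{e^{-\Delta/2}\le|w|\le e^{\Delta/2}\}$ onto the closed elliptical disc bounded by $\mathcal{E}_{e^{\Delta/2}}$, which equals $\Phi^{-1}(E)$, I would conclude $\psi(\mathcal{B})=E$, establishing both well-definedness and surjectivity.

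It would then remain to describe the fibres. Two preimages $z_1,z_2$ of $\xi\in E$ satisfy $Cz_1+Bz_1^{-1}=Cz_2+Bz_2^{-1}$ exactly when $z_1z_2=B/C$, equivalently when the associated $w_1,w_2$ (the two roots of $J(w)=\Phi^{-1}(\xi)$) satisfy $w_1w_2=1$. Writing $z_1=e^{\beta}e^{-\i\alpha}$, the relation $z_1z_2=B/C=e^{\Delta}e^{\i\zeta}$ forces $z_2=e^{\Delta-\beta}e^{\i(\alpha+\zeta)}$, i.e.\ $z_2$ has modulus $e^{\Delta-\beta}$ and argument $-\alpha'$ with $\alpha'=-\zeta-\alpha$; hence the two quasiperiodicities are precisely $\qp$ and $\qpconj=(-\zeta-\alpha)+\i(\Delta-\beta)$, matching \cref{thm: GFBT}. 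Because $J$ is injective on $\{|w|<1\}$ (equivalently $\beta<\Delta/2$) and only two-to-one on $\{|w|=1\}$, the representative with $\beta\in[0,\Delta/2]$ is unique. The hard part will be exactly this degenerate core $\beta=\Delta/2$, where the two fibre points collapse and $J$ loses injectivity on the focal segment $\Phi([-2,2])$: there I would have to treat that segment separately and verify that the chosen branch of $\sqrt{B/C}$ is consistent across the annulus, whereas away from it the whole statement is a clean transcription of the Joukowski dictionary.
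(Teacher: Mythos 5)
Your proposal is correct and follows what is essentially the intended argument: the survey itself does not reprint the proof of \cref{lem:ellipseprops} (it defers to \cite{ammari.barandun.ea2024Generalised}), but the route you take --- using \cref{lem:detform} to write $\psi(z)=(-1)^{k+1}(Cz+Bz^{-1})$, substituting $z=\sqrt{B/C}\,w$ to reduce to the Joukowski map, and reading off the family of nested confocal ellipses --- is exactly the picture the paper builds around the lemma (compare the figure showing $\psi(\mathbb{T}^1+\i\beta)$ shrinking as $\beta$ grows from $0$ to $\Delta/2$), and your fibre relation $z_1z_2=B/C$ reproduces precisely the conjugate quasiperiodicity $(-\zeta-\alpha)+\i(\Delta-\beta)$ of \cref{thm: GFBT}.

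Two clarifications on the difficulties you flag at the end. First, the worry about ``consistency of the chosen branch of $\sqrt{B/C}$ across the annulus'' is vacuous: $B/C$ is a single complex number, so $\sqrt{B/C}$ is one constant fixed once and for all, and $z\mapsto z/\sqrt{B/C}$ is a fixed linear bijection --- there is no monodromy to verify. Second, the wrinkle at the core $\beta=\Delta/2$ is real but it is an imprecision of the lemma's phrasing rather than a gap in your argument: for $\xi$ on the focal segment $\Phi([-2,2])$ away from the foci, the fibre consists of two distinct points that \emph{both} have imaginary part $\Delta/2$, namely $\alpha+\i\Delta/2$ and $(-\zeta-\alpha)+\i\Delta/2$; hence $\beta$ is uniquely determined, while $\alpha$ is determined only up to the swap $\alpha\leftrightarrow-\zeta-\alpha$, either choice giving the same set $\psi^{-1}(\xi)=\{\qp,\qpconj\}$. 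At the two foci the fibre degenerates to a single point with $2\alpha\equiv-\zeta \pmod{2\pi}$, where $\qp=\qpconj$. With this reading nothing further needs to be ``treated separately'', and your proof is complete.
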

The proof of \cref{lem:ellipseprops} can be found in \cite{ammari.barandun.ea2024Generalised}. An immediate consequence of this lemma is that for a given quasiperiodicity $\qp$ we have $\sigma(f(\qp)) = \sigma(f(\qpconj))$. We show graphically the statement of \cref{lem:ellipseprops} in \cref{fig:betacollapse}. We can prove \cref{thm: GFBT} by using Lemma  \ref{lem:ellipseprops}. We omit the details which can be found in \cite{ammari.barandun.ea2024Generalised}.

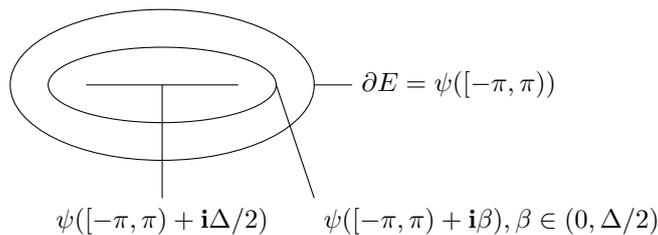
\begin{figure}[!h]\label{fig:betacollapse}
    \centering
    \begin{tikzpicture}
        \draw (0,0) ellipse (2 and 1);
        \draw (0,0) ellipse (1.5 and 0.5);

        \draw[-] (-1,0) -- (1,0);

        \draw[-] (0,-1.5) node[anchor=north]{$\psi([-\pi,\pi)+\i\Delta/2)$} -- (0,0);
        \draw[-] (2.5,0) node[anchor=west]{$\partial E = \psi([-\pi,\pi))$} -- (2,0);
        \draw[-] (2,-1.5) node[anchor=north west]{$\psi([-\pi,\pi)+\i\beta), \beta \in (0,\Delta/2)$} -- (1.5,0);
    \end{tikzpicture}
    \caption{Parametrisation of the ellipse $E$ by $\psi$ for $\alpha\in Y^*\simeq \mathbb{T}^1$ and $\beta\in [0,\Delta/2]$. As $\beta$ increases from $0$ to $\Delta/2$ the corresponding ellipse drawn out by $\psi(\mathbb{T}^1+\i\beta)$ shrinks uniformly.}
\end{figure}

At the beginning of this section, we have motivated the introduction of an imaginary part with the heuristic of taking into account the possible decay of the eigenvectors. This heuristic is made formal with the following proposition.
\begin{theorem}\label{prop:eigenvectors}
    Let $\lambda\in \sigma_{\det}(f) \cup \sigma_{\mathrm{wind}}(f)$, with the uniquely determined corresponding quasiperiodicities $\qp, \qpconj$.
    Let $\bm u_1, \bm u_2$ be eigenvectors of $f(\qp)$, $f(\qpconj)$ associated with that eigenvalue $\lambda$. Then, we can obtain a symbolic eigenvector $T(f) \bm u = \lambda \bm u$ of the Toeplitz operator as a linear combination of the $(\qp)$-quasiperiodic extension $\widetilde{\bm u}_1$ and the $(\qpconj)$-quasiperiodic extension $\widetilde{\bm u}_2$ of $\bm u_1, \bm u_2$, respectively. Furthermore, all eigenvectors $\bm u$ of $T(f)$ are of this form.
    Here, the $(\qp)$-quasiperiodic extension of a vector $\bm v$ is defined as
    \[
        \widetilde{\bm v} \coloneqq (\bm v^\top, z^{-1}\bm v^\top, z^{-2}\bm v^\top, \dots)^\top
    \]
    for $z = e^{-\i(\qp)}$. Consequently, for every $j\in\N$,
    \begin{align}
        \frac{\vert \bm u^{(j+k)} \vert}{\vert \bm u^{(j)} \vert} = e^{-\beta}.
    \end{align}
\end{theorem}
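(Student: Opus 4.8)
The plan is to work with the block-tridiagonal form \eqref{equ:tridiagonalktoeplitzop2} of $T(f)$, grouping the entries of a candidate eigenvector $\bm u$ into consecutive blocks $\bm U_n\in\C^k$, $n\geq 1$. The equation $T(f)\bm u=\lambda\bm u$ then splits into the \emph{interior} equations $A_1\bm U_{n-1}+(A_0-\lambda)\bm U_n+A_{-1}\bm U_{n+1}=0$ for $n\geq 2$ and the single \emph{boundary} equation $(A_0-\lambda)\bm U_1+A_{-1}\bm U_2=0$. First I would substitute the quasiperiodic ansatz $\bm U_n=z^{-(n-1)}\bm v$, which for $n\geq 2$ collapses the interior equations to $(A_{-1}z^{-1}+A_0+A_1 z)\bm v=f(z)\bm v=\lambda\bm v$. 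Hence any quasiperiodic extension $\widetilde{\bm v}$ built from an eigenvector $\bm v$ of $f(z)$ at eigenvalue $\lambda$ automatically satisfies every interior equation; this is precisely the Floquet computation underlying \eqref{equ:floquetidentitynonhermitian}.

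Next I would match the boundary. Writing $\bm u=c_1\widetilde{\bm u}_1+c_2\widetilde{\bm u}_2$ with $z_1=e^{-\i(\qp)}$ and $z_2=e^{-\i(\qpconj)}$ the two quasiperiodicities supplied by \cref{thm: GFBT}, and using the per-mode identity $A_0\bm u_\ell+A_{-1}z_\ell^{-1}\bm u_\ell=\lambda\bm u_\ell-A_1 z_\ell\bm u_\ell$, the boundary equation reduces to $A_1(c_1 z_1\bm u_1+c_2 z_2\bm u_2)=0$. Because the corner block $A_1=\eta_k\,\bm e_1\bm e_k^\top$ carries a single nonzero entry, this is the \emph{one} scalar condition $c_1 z_1(\bm u_1)_k+c_2 z_2(\bm u_2)_k=0$, which always admits a nonzero solution $(c_1,c_2)$. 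This yields the desired symbolic eigenvector and proves the existence half of the statement; when $\lambda\in\sigma_{\mathrm{wind}}(f)$ one has $\vert z_\ell\vert>1$, so the extension lies in $\ell^2$, whereas for $\lambda\in\sigma_{\det}(f)$ it is only symbolic.

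For the completeness claim --- that \emph{every} eigenvector arises this way --- I would pass to the scalar picture, viewing $T(f)\bm u=\lambda\bm u$ as the three-term recurrence $\eta_{i-1}u_{i-1}+(\alpha_i-\lambda)u_i+\beta_i u_{i+1}=0$ with $k$-periodic coefficients and all $\beta_i\neq 0$. The interior equations then have a two-dimensional solution space, freely parametrised by $(u_1,u_2)$, and by \cref{lem:detform} the characteristic relation $\det(f(z)-\lambda)=\psi(z)+g(\lambda)=0$ becomes, after clearing $z^{-1}$, a quadratic with exactly the roots $z_1,z_2$; the associated Bloch modes $\widetilde{\bm u}_1,\widetilde{\bm u}_2$ are linearly independent and hence a basis of this space. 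Imposing the single boundary equation $(\alpha_1-\lambda)u_1+\beta_1 u_2=0$ cuts it down to the one-dimensional eigenspace found above, so all eigenvectors are of the asserted form. The growth statement is then immediate from the extension formula: within a pure mode, entries exactly $k$ apart differ by the factor $z^{-1}$, and since $\vert z_1^{-1}\vert=e^{-\beta}\geq e^{-\beta'}=\vert z_2^{-1}\vert$ (because $\beta\leq\Delta/2\leq\beta'$ by \cref{lem:ellipseprops}), the slower mode dominates and fixes the per-cell decay rate $e^{-\beta}$.

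The main obstacle I anticipate is the treatment of the degenerate configurations, which is exactly where the clean two-mode description breaks down. When the quadratic has a double root $z_1=z_2$ (the innermost ellipse $E_{\Delta/2}$, $\beta=\beta'=\Delta/2$), the two Bloch modes coalesce and the second basis vector must be replaced by a generalised, polynomially modulated extension --- this is precisely the origin of the $\lceil j/k\rceil$ prefactor appearing in \cref{thm: exponential_decay_k_operators}. Likewise one must rule out (or treat separately) the exceptional $\lambda$ for which $f(z_\ell)$ has $\lambda$ as an eigenvalue of geometric multiplicity greater than one, or for which some $(\bm u_\ell)_k=0$ trivialises the boundary condition and enlarges the eigenspace. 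A secondary subtlety is that, for a genuine two-mode combination, the ratio $\vert\bm u^{(j+k)}\vert/\vert\bm u^{(j)}\vert$ equals $e^{-\beta}$ only asymptotically, the subdominant mode contributing oscillatory corrections of order $\BO((\vert z_1\vert/\vert z_2\vert)^{\lceil j/k\rceil})$; I would make this precise by factoring out the dominant $z_1$-mode, so that the stated equality is read as the eigenvector's exponential decay rate per unit cell.
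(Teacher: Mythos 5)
Your proposal is correct and takes essentially the approach that underlies the paper's treatment: the survey states this theorem without reproducing its proof (deferring to the cited source), but the machinery it sets up is exactly what your argument assembles — the block Floquet ansatz collapsing the interior equations to the symbol eigenproblem $f(z)\bm v=\lambda\bm v$, \cref{lem:detform} turning $\det(f(z)-\lambda)=0$ into a quadratic in $z$ whose two roots are the associated quasiperiodicities, the rank-one corner block $A_1$ reducing the boundary row to a single scalar matching condition on $(c_1,c_2)$, and \cref{lem:ellipseprops} supplying $\beta\leq\Delta/2\leq\beta'$ for the decay rate. Your two flagged caveats are also the right ones: the coalescent case $z_1=z_2$ (vanishing discriminant, hence finitely many exceptional $\lambda$) genuinely requires polynomially modulated Jordan-type modes in place of the second Bloch mode, and the displayed equality $\vert\bm u^{(j+k)}\vert/\vert\bm u^{(j)}\vert=e^{-\beta}$ holds exactly only for pure modes, so for a genuine two-mode combination it must be read as the asymptotic per-cell decay rate — an imprecision in the statement itself rather than a gap in your argument.
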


\begin{remark}
    \cref{prop:eigenvectors} hence justifies why $\Delta$ is called the \emph{non-reciprocity rate}, as it directly translates into the decay rate of the eigenvector, a peculiarity of non-reciprocal systems. However, in the generic case an eigenvector $\bm u$ of $T(a)$ will be a linear combination of $\widetilde{\bm u}_1$ and $\widetilde{\bm u}_2$ with decay rates $\beta$ and $\Delta-\beta$ and thus has decay rate $\max\{\beta,\Delta-\beta\}$ which is maximised if $\beta=\Delta/2$.
    Hence, the actual \emph{maximal rate of decay} is $\Delta/2$.

    We can also see that the construction in \cref{prop:eigenvectors} is independent of the actual entries of $T(a)$ in its first row. Hence, it continues to work even if the top left edge of $T(a)$ is perturbed.
\end{remark}

\subsection{Three spectral limits} \label{sec:3limits}
\begin{figure}
    \centering
    \includegraphics[width=\textwidth]{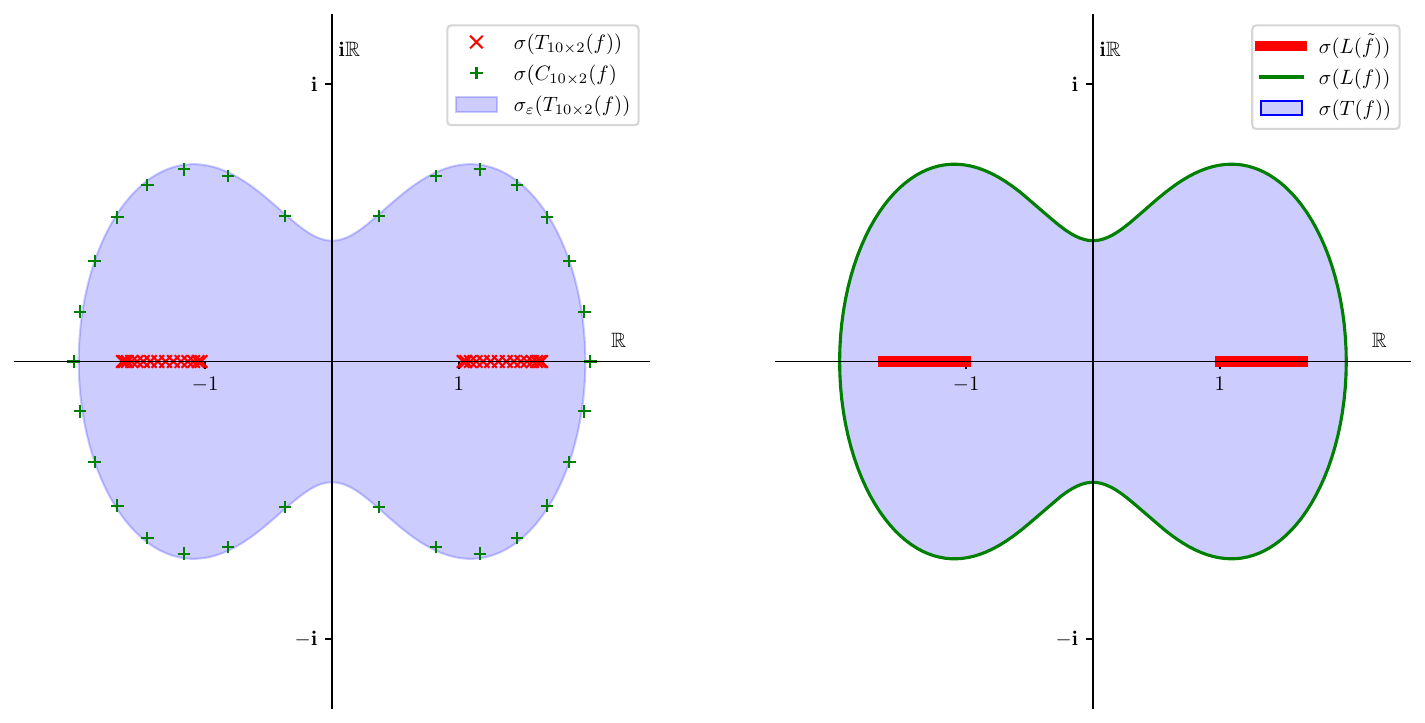}
    \caption{Illustration of the three spectral limits. We see that the eigenvalues of the circulant matrix $C_{2\times 10}(f)$ (green) arrange around the symbol curve and hence converge to the Laurent operator limit $L(f)$. The eigenvalues of the Toeplitz matrix $T_{2\times 10}(f)$ (red) arrange around the collapsed symbol $\Tilde{f}$ (as defined in the proof of \cref{thm:obcconv}) and hence converge to the collapsed Toeplitz (or Laurent) operator $T(\Tilde{f})$. The $\varepsilon$-pseudospectrum of $T_{mk}(f)$ corresponds exactly to the interior of the symbol curve and thus converges to the actual Toeplitz limit $T(f)$. }
    \label{fig:3lim}
\end{figure}

Having restored the Floquet--Bloch decomposition for the Toeplitz operator limit $T(f)$, we now aim to understand how and if finite Toeplitz matrices $T_{mk}(f)$ converge to this limit as $m\to \infty$. Crucially, while in the Hermitian setting the symbol $f(z)$ is collapsed, in general its eigenvalues trace out a path with nonempty interior.
Any point in this interior is  exponentially close to a pseudoeigenvalue of $T_{mk}(f)$ in the limit $m\to \infty$.
This in turn causes the finite system and its limiting spectrum to be highly sensitive to boundary conditions. \cref{fig:3lim} shows the finite spectra under different boundary conditions as well as the pseudospectrum. The collapsed symbol causes their respective limits to coincide in the Hermitian setting but we can see that they diverge in the non-reciprocal setting.

The appropriate generalisation of the Brillouin zone depends on the limit of interest and does not necessarily match the generalised Brillouin zone for the Toeplitz operator limit $T(f)$ as defined in \cref{eq: GBZ}.
For open and periodic boundary conditions, we will now characterise the limiting spectrum and give the appropriate generalised Brillouin zone. Finally, we will investigate the limiting pseudospectrum and see how this connects the two boundary conditions.

\subsubsection{Open limit}
First, we aim to characterise the limiting spectrum $\sigma(T_{mk}(f))$ as $m\to \infty$, which corresponds to a growing finite system.
The main idea is that for non-reciprocal tridiagonal systems, we can perform a change of basis to obtain a similar symmetric system. The symbol of this system is collapsed and we can recover the traditional spectral convergence and decomposition. Then we may transform back into the original basis in order to achieve a Floquet--Bloch relation, but with a Brillouin zone shifted into the complex plane to account for the exponential decay of the eigenvectors.

\begin{theorem}\label{thm:obcconv}
    Let $T(f)$ be a tridiagonal Toeplitz operator with symbol $f(z)$ and $b_i, c_i\neq 0$ for all $1\leq i \leq k$. We then have
    \[
        \lim_{m\to \infty} \sigma(T_{mk}(f)) = \bigcup_{\alpha\in Y^*}\sigma(f(e^{-\i\spatialperiod(\alpha+\i\Delta/(2\spatialperiod))})).
    \]
\end{theorem}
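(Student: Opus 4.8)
The plan is to eliminate the non-reciprocity by a diagonal change of basis that turns $T_{mk}(f)$ into a (complex) symmetric matrix whose symbol is \emph{collapsed}, so that the classical Hermitian-type spectral convergence from \eqref{eq:ClassicalConv}--\eqref{eq:ClassicalFBT} applies, and then to transport the resulting Floquet--Bloch description back through the change of basis. First I would construct, for each $m$, the diagonal matrix $D_m = \diag(d_1,\dots,d_{mk})$ with $d_1=1$ and $d_{i+1}/d_i = \sqrt{b_{\iota(i)}/c_{\iota(i)}}$, where $\iota(i)\in\{1,\dots,k\}$ records the position of the index $i$ within a period. Since $T_{mk}(f)=P_{mk}T(f)P_{mk}$ is a genuine finite tridiagonal band matrix whose $i$th super- and sub-diagonal entries $b_{\iota(i)},c_{\iota(i)}$ repeat with period $k$ and carry no corner wrap-around, conjugation by $D_m$ rescales the $(i,i+1)$ and $(i+1,i)$ entries by $d_i/d_{i+1}$ and $d_{i+1}/d_i$ while preserving their product, so both become $\sqrt{b_{\iota(i)}c_{\iota(i)}}$ for a fixed branch of the square root. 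Hence $D_m^{-1}T_{mk}(f)D_m = T_{mk}(\tilde f)$, the truncation of the symmetric tridiagonal $k$-Toeplitz operator $T(\tilde f)$ with unchanged diagonal and off-diagonals $\sqrt{b_jc_j}$, and similarity gives the exact identity
\[
\sigma(T_{mk}(f)) = \sigma(T_{mk}(\tilde f)) \qquad \text{for every } m.
\]
Crucially, although the condition number of $D_m$ grows like $e^{\Delta m/2}$, this is irrelevant for the spectrum (it matters only for the pseudospectra treated later in \Cref{sec:3limits}); the eigenvalues coincide exactly.

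Next, the symbol $\tilde f$ is symmetric, hence collapsed by \Cref{lemma: collapsed symbol}, so $\sigma_{\mathrm{wind}}(\tilde f)=\emptyset$ and $T(\tilde f)$ is spectrally as well-behaved as a Hermitian operator. I would then invoke the collapsed-symbol (Hermitian-type) convergence recalled in \eqref{eq:ClassicalConv}--\eqref{eq:ClassicalFBT} to obtain the Hausdorff limit
\[
\lim_{m\to\infty}\sigma(T_{mk}(\tilde f)) = \sigma(L(\tilde f)) = \bigcup_{\alpha\in Y^*}\sigma\!\big(\tilde f(e^{-\i\spatialperiod\alpha})\big).
\]

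Finally I would identify the shifted contour. Writing the conjugation at the operator level in block form as a cross-cell scaling by $\tau \coloneqq \prod_{j=1}^k \sqrt{b_j/c_j}$ composed with a fixed within-cell diagonal $S=\diag(s_1,\dots,s_k)$, a direct computation with $\tilde A_j = \tau^j S^{-1}A_j S$ gives $\tilde f(z) = S^{-1} f(\tau z) S$, so $\sigma(\tilde f(z)) = \sigma(f(\tau z))$ for all $z$. By the definition \eqref{eq:deltadef} of $\Delta$ we have $|\tau| = \prod_j \sqrt{|b_j/c_j|} = e^{\Delta/2}$; writing $\tau = e^{\Delta/2}e^{\i\phi_0}$ with $\phi_0 = \Arg\tau$ and reparametrising $\alpha\mapsto \alpha - \phi_0/\spatialperiod$ (a harmless shift of the torus $Y^*$) yields
\[
\bigcup_{\alpha\in Y^*}\sigma\!\big(\tilde f(e^{-\i\spatialperiod\alpha})\big) = \bigcup_{\alpha\in Y^*}\sigma\!\big(f(e^{-\i\spatialperiod\alpha}e^{\Delta/2})\big) = \bigcup_{\alpha\in Y^*}\sigma\!\big(f(e^{-\i\spatialperiod(\alpha+\i\Delta/(2\spatialperiod))})\big).
\]
Chaining the three displays with the exact identity $\sigma(T_{mk}(f))=\sigma(T_{mk}(\tilde f))$ proves the claim.

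The main obstacle is the middle step: justifying that for the collapsed symbol $\tilde f$ the finite-section spectra genuinely converge to $\sigma(L(\tilde f))$ (and not merely the pseudospectra), together with the mild complication that the symmetrised entries $\sqrt{b_jc_j}$ and $a_j$ need not be real, so $\tilde f$ is complex symmetric rather than Hermitian. One must verify that collapsedness, via the cited Bauer--Fike and \cite[Cor.~6.16]{bottcher.silbermann1999Introduction} machinery, still delivers Hausdorff convergence of eigenvalues in this complex-symmetric setting; the remaining work (the branch choice for the square roots and the bookkeeping of the phase $\phi_0$) is routine.
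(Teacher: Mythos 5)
Your proposal is correct and is essentially the paper's own argument: the paper likewise symmetrises $T_{mk}(f)$ by a diagonal similarity, invokes \cref{lemma: collapsed symbol} and the classical convergence \eqref{eq:ClassicalConv} for the resulting collapsed symbol $\tilde f$, and transforms back via $\tilde f(z)=S^{-1}f(\tau z)S$ to obtain the Brillouin zone shifted by $\i\Delta/(2\spatialperiod)$. The only nitpick is notational: with the conjugation written as $D_m^{-1}T_{mk}(f)D_m$, the ratio should be $d_{i+1}/d_i=\sqrt{c_{\iota(i)}/b_{\iota(i)}}$ (your stated ratio corresponds to conjugating in the opposite order), which does not affect the argument.
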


This result is illustrated in \cref{fig:3lim} where the spectrum of the Toeplitz matrix on the left-hand side (corresponding to the open boundary condition) converges to the spectrum of the Toeplitz operator with collapsed symbol $T(\Tilde{f})$ on the right-hand side. As we just proved the appropriate \emph{generalised Brillouin zone} to decompose the spectrum of this operator is the classical Brillouin zone, shifted into the complex plane:
\[
    \mathcal{B}_\text{OBC} = \bigg\{ \alpha + \i\Delta/(2\spatialperiod) \mid \alpha\in Y^* \bigg\}.
\]

Consequently, for tridiagonal Toeplitz systems with open boundary conditions, shifting the Brillouin zone by $\Delta/(2\spatialperiod)$ restores spectral convergence as well as the Floquet--Bloch decomposition. This corresponds to the fact that in the tridiagonal case, all the eigenmodes have the same rate of decay. Namely, this decay is the maximal decay that is given explicitly by $\Delta/(2\spatialperiod)$.

\subsubsection{Periodic boundary conditions and the Laurent operator limit}
We can impose periodic boundary conditions on $T_{mk}(f)$ and  get the tridiagonal $k$-circulant matrix
\begin{equation}
    (C_{mk}(f))_{ij} \coloneqq \begin{cases}
        c_k              & \text{for } i=0, j=mk,        \\
        b_k              & \text{for } i=mk, j=0,        \\
        (T_{mk}(f))_{ij} & \text{otherwise}. \\
    \end{cases}
\end{equation}

The following result holds.
\begin{theorem}\label{thm:pbcconv}
    Let $C_{mk}(f)$ be a tridiagonal $k$-circulant matrix as above. Then, we have the following spectral decomposition:
    \[
        \sigma(C_{mk}(f)) = \bigcup_{j=0}^{m-1}\sigma(f(e^{2\pi\i j/m})).
    \]
    Furthermore, if we let $m \to \infty$,
    \begin{equation}\label{equ:periodiclimiting}
        \sigma(C_{mk}(f)) = \bigcup_{j=0}^{m-1}f(e^{2\pi\i j/m}) \to \bigcup_{\alpha\in Y^*}f(e^{-\i\spatialperiod \alpha}) = \sigma(L(f)).
    \end{equation}
\end{theorem}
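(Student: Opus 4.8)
The plan is to prove the exact spectral decomposition by diagonalising the block-circulant structure of $C_{mk}(f)$, and then to obtain the Hausdorff limit from the density of the roots of unity together with the continuity of the eigenvalues of the symbol. Concretely, I would first reinterpret $C_{mk}(f)$ as an $m\times m$ block matrix with $k\times k$ blocks. By the block decomposition \eqref{equ:tridiagonalktoeplitzop2}, the finite section $T_{mk}(f)$ is block tridiagonal with diagonal blocks $A_0$, super-diagonal blocks $A_{-1}$ and sub-diagonal blocks $A_1$. The corner entries $c_k,b_k$ inserted in the definition of $C_{mk}(f)$ are exactly the nonzero entries of $A_1$ and $A_{-1}$, so that placing $A_1$ in the top-right corner block and $A_{-1}$ in the bottom-left corner block completes $T_{mk}(f)$ to the block-circulant matrix whose first block-row is $(A_0,A_{-1},0,\dots,0,A_1)$. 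Writing $\omega=e^{2\pi\i/m}$ and letting $F_m$ be the unitary discrete Fourier matrix, the standard block-circulant identity gives that $(F_m\otimes I_k)\,C_{mk}(f)\,(F_m\otimes I_k)^*$ is block-diagonal with blocks
\[
  D_j = A_0 + A_{-1}\omega^{j} + A_1\omega^{-j}, \qquad j=0,\dots,m-1 .
\]
Since the symbol is $f(z)=A_{-1}z^{-1}+A_0+A_1 z$, each block satisfies $D_j=f(\omega^{-j})=f(e^{-2\pi\i j/m})$, and as $j$ runs over $0,\dots,m-1$ the arguments range over all $m$-th roots of unity. As the conjugation is a similarity, this yields exactly
\[
  \sigma(C_{mk}(f)) = \bigcup_{j=0}^{m-1}\sigma\big(f(e^{2\pi\i j/m})\big),
\]
the sign in the exponent being immaterial because the roots of unity are symmetric under complex conjugation.

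For the limit, I would recall from \eqref{eq:ClassicalFBT} (with $\spatialperiod=1$) that $\sigma(L(f))=\bigcup_{\alpha\in Y^*}\sigma(f(e^{-\i\alpha}))=\bigcup_{z\in\mathbb{T}^1}\sigma(f(z))$. Because every sample point $e^{2\pi\i j/m}$ lies on $\mathbb{T}^1$, the finite union $\sigma(C_{mk}(f))$ is contained in $\sigma(L(f))$ for every $m$; hence one half of the Hausdorff distance already vanishes and only the reverse approximation needs work.

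For that reverse direction I would use that the entries of $f(z)$ are Laurent-polynomial, hence continuous in $z$, so the roots of $\det(f(z)-\lambda)$ depend continuously on $z$. Consequently the set-valued map $z\mapsto\sigma(f(z))$ is continuous for the Hausdorff metric on compact subsets of $\C$, and uniformly so on the compact circle $\mathbb{T}^1$. Given $\varepsilon>0$, uniform continuity provides $\delta>0$ with $d_H(\sigma(f(z)),\sigma(f(z')))<\varepsilon$ whenever $\lvert z-z'\rvert<\delta$; for $m$ large the $m$-th roots of unity are $\delta$-dense in $\mathbb{T}^1$, so every $\lambda\in\sigma(L(f))$ lies within $\varepsilon$ of some $\sigma(f(e^{2\pi\i j/m}))$. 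Together with the containment above this gives $d_H(\sigma(C_{mk}(f)),\sigma(L(f)))\to 0$, which is precisely \eqref{equ:periodiclimiting}.

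The main obstacle is the bookkeeping in the first step: one must verify that the tridiagonal wrap-around entries $c_k,b_k$ genuinely complete $T_{mk}(f)$ to a block-circulant matrix with the correct first block-row, and then match the Fourier-diagonalised blocks $D_j$ with $f$ evaluated at the roots of unity (tracking the harmless convention $e^{\pm 2\pi\i j/m}$ and the fact that $A_1,A_{-1}$ occupy offsets $m-1$ and $1$ in the block-circulant). Once this identification is carried out correctly, the passage to the limit is a routine consequence of the density of roots of unity and the continuity of eigenvalues, requiring no further structural input.
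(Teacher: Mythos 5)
Your proposal is correct: the block-circulant completion (placing $A_1$ in the top-right and $A_{-1}$ in the bottom-left block corners, which indeed reproduces the scalar corner entries $c_k$ and $b_k$), the Fourier block-diagonalisation giving $D_j = f(e^{-2\pi\i j/m})$, and the Hausdorff limit via containment in $\sigma(L(f))$ plus uniform continuity of $z\mapsto\sigma(f(z))$ and density of the roots of unity are all sound. The survey itself states this theorem without proof, deferring to the cited reference, and your argument is essentially the standard route taken there, so there is nothing to flag.
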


We have thus shown the analogue of \cref{thm:obcconv} for periodic boundary conditions. As we can see in \cref{fig:3lim}, imposing periodic boundary conditions on the finite system causes the spectrum of $C_{mk}(f)$ to diverge drastically from the spectrum of $T_{mk}(f)$. This corresponds to the fact that while the eigenmodes of $T_{mk}(f)$ have a decay of $\Delta/2L$, imposing periodic boundary conditions forces the eigenmodes to be decay-free, causing a large perturbation. The non-Hermitian non-reciprocity thus causes the system to be highly sensitive to boundary conditions.
\cref{fig:3lim} further illustrates how the spectrum of the circulant matrices $C_{mk}(f)$ arranges around the symbol curve and thus converges to the Laurent operator limit as $m\to \infty$. The above theorem therefore shows that the appropriate Brillouin zone for this setting is the classical Brillouin zone with no decay: $$\mathcal{B}_\text{PBC}=Y^*.$$

\subsubsection{Pseudospectrum and the Toeplitz operator limit}
Finally, we investigate the pseudospectrum of the finite system $T_{mk}(a)$. Crucially, while the spectrum of Toeplitz matrices is highly sensitive to boundary conditions, the pseudospectrum is not and using \cite[Corollary 6.16]{bottcher.silbermann1999Introduction}, we find that it converges to the Toeplitz operator limit.
\begin{theorem}
    Consider a continuous symbol $f\in L^\infty(\mathbb{T}^1,\C^{k\times k})$ so that $T(f)$ is tridiagonal. Then, for every $\epsilon > 0$,
    \begin{align}
        \lim_{m\to\infty} \sigma_\epsilon(T_{mk}(f)) = \sigma_\epsilon (T(f)).
    \end{align}
\end{theorem}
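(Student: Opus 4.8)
The plan is to reduce this set-convergence statement to a pointwise (indeed locally uniform) convergence of resolvent norms, and then to upgrade that to convergence of the associated sublevel sets in the Hausdorff metric. Throughout, write $R_m(\lambda) \coloneqq \Vert (T_{mk}(f) - \lambda)^{-1}\Vert$ and $R_\infty(\lambda) \coloneqq \Vert (T(f) - \lambda)^{-1}\Vert$, with the convention that a resolvent norm equals $+\infty$ whenever its argument fails to be invertible. With this convention, $\sigma_\epsilon(T_{mk}(f)) = \{\lambda \in \C : R_m(\lambda) > \epsilon^{-1}\}$ and $\sigma_\epsilon(T(f)) = \{\lambda \in \C : R_\infty(\lambda) > \epsilon^{-1}\}$ are precisely the open super-level sets of these two functions. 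Since the matrices $T_{mk}(f) = P_{mk}T(f)P_{mk}$ are the finite sections of the banded block Toeplitz operator $T(f)$ attached to the continuous $k\times k$ symbol $f$, the whole machinery of the finite section method for block Toeplitz operators is available.

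First I would record the key analytic input, which is exactly the content of \cite[Corollary 6.16]{bottcher.silbermann1999Introduction}: for a continuous symbol the finite sections form a stable sequence off the spectrum, the quantities $R_m$ are locally uniformly bounded on compact subsets of the resolvent set, and one has
\[
    \lim_{m\to\infty} R_m(\lambda) = \max\big\{\, R_\infty(\lambda),\ \Vert (T(\tilde f) - \lambda)^{-1}\Vert \,\big\},
\]
where $\tilde f(z) \coloneqq f(1/z)$ is the reflected symbol produced by conjugating a finite section with the flip $W_{mk}\colon (x_1,\dots,x_{mk})\mapsto(x_{mk},\dots,x_1)$. The decisive observation is that this reflection is a unitary and that $W_{mk}(T_{mk}(f)-\lambda)W_{mk}$ differs from $(T(\tilde f)-\lambda)$ only through a transpose; since the spectral norm is invariant under both unitary conjugation and transposition, one gets $\Vert (T(\tilde f) - \lambda)^{-1}\Vert = R_\infty(\lambda)$, so the maximum collapses and yields the clean limit $R_m(\lambda)\to R_\infty(\lambda)$ for every $\lambda\in\C\setminus\sigma_{\mathrm{ess}}(T(f))$. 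On $\sigma_{\mathrm{ess}}(T(f))=\sigma_{\mathrm{det}}(f)$ (by \cref{thm: essential spectrum}) both sides equal $+\infty$, so $R_m\to R_\infty$ holds in $[0,+\infty]$ throughout $\C$.

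Finally I would convert this resolvent-norm convergence into Hausdorff convergence of the open super-level sets. One inclusion is soft: if $\lambda\in\sigma_\epsilon(T(f))$, i.e.\ $R_\infty(\lambda)>\epsilon^{-1}$, then $R_m(\lambda)>\epsilon^{-1}$ for all large $m$, so $\sigma_\epsilon(T(f))$ is asymptotically contained in $\sigma_\epsilon(T_{mk}(f))$; together with the local uniform boundedness this also prevents the finite pseudospectra from escaping to infinity. The reverse inclusion — that no spurious pseudoeigenvalues survive in the limit — is where the genuine obstacle lies, and it requires excluding a plateau of $R_\infty$ at the critical level $\epsilon^{-1}$: were the level set $\{\lambda : R_\infty(\lambda)=\epsilon^{-1}\}$ to have nonempty interior, the set convergence could fail for that particular $\epsilon$. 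The resolution is the fact that the resolvent norm of a Toeplitz operator is subharmonic and constant on no open subset of the resolvent set, so its level sets have empty interior; this is precisely the hypothesis under which the standard machinery relating resolvent-norm convergence to pseudospectral set convergence applies \cite{trefethen.embree2005Spectra}. Combining plateau-freeness with the controlled convergence of $R_m$ away from the spectrum then forces both the Hausdorff distance from $\sigma_\epsilon(T_{mk}(f))$ to $\sigma_\epsilon(T(f))$ and its reverse to vanish, which establishes the stated limit for every $\epsilon>0$. I expect the boundary (plateau) analysis, rather than the resolvent-norm limit itself, to be the main point demanding care.
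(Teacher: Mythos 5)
Your overall route is the same as the paper's: the paper proves this statement simply by invoking \cite[Corollary 6.16]{bottcher.silbermann1999Introduction}, and your proposal is essentially an unpacking of that citation (finite section method, convergence of resolvent norms, upgrade to Hausdorff convergence of the super-level sets). Your final step is also fine in spirit: the impossibility of a resolvent-norm plateau for Hilbert-space operators is a genuine theorem and is part of what the quoted corollary packages.

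The gap is in your second step, and it sits exactly where the block case $k\geq 2$ differs from the scalar case. As you correctly quote, for matrix-valued symbols the corollary gives $\lim_m\Vert (T_{mk}(f)-\lambda)^{-1}\Vert = \max\bigl\{\Vert (T(f)-\lambda)^{-1}\Vert ,\ \Vert (T(\tilde f)-\lambda)^{-1}\Vert \bigr\}$, i.e.\ set-wise $\lim_m \sigma_\epsilon(T_{mk}(f)) = \sigma_\epsilon(T(f))\cup\sigma_\epsilon(T(\tilde f))$; but your argument that the maximum collapses is the scalar argument, and it does not carry over. Transposition (the norm-preserving operation) yields $T(f)^\top = T(\tilde f^{\,\top})$, \emph{not} $T(\tilde f)$: indeed $[T(f)^\top]_{ij} = (f_{j-i})^\top = (\tilde f^{\,\top})_{i-j}$. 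Likewise, conjugating the finite section by the flip gives $W_{mk}T_{mk}(f)W_{mk} = T_{mk}(w\tilde f w)$, with $w$ the $k\times k$ flip --- a purely finite-dimensional identity; there is no flip of the half-line, hence no unitary or transpose relating the two semi-infinite operators $T(f)$ and $T(\tilde f)$. The passage $T(f)\mapsto T(\tilde f)$ is the transpose in the block index only (a partial transpose), and partial transposition is not norm-preserving. The collapse $\Vert (T(\tilde f)-\lambda)^{-1}\Vert = \Vert (T(f)-\lambda)^{-1}\Vert$ does hold when $k=1$, and also whenever all Fourier coefficients $f_j$ are symmetric matrices (since then $\tilde f = \tilde f^{\,\top}$), but the latter fails precisely for the non-reciprocal tridiagonal symbols \eqref{eq: symbol tridiagonal operator} with $\eta_i\neq\beta_i$ that this theorem is meant to cover: there $T(\tilde f)$ is unitarily equivalent to the tridiagonal $k$-Toeplitz operator whose periodic pattern is the spatial reflection of that of $T(f)$, a genuinely different semi-infinite operator (one has $\ell^2$ skin eigenmodes in the winding region, the other does not). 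So your step 2 needs a separate argument specific to these symbols; without it, what your proof actually establishes is convergence to the union $\sigma_\epsilon(T(f))\cup\sigma_\epsilon(T(\tilde f))$, which is all the cited corollary provides. (The paper sidesteps this point by bare citation, so your instinct to verify the collapse was sound --- it is the execution, valid only for $k=1$, that fails.)
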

In particular, the previous theorem implies that
$$
    \lim_{\epsilon \to 0}\lim_{m\to\infty}\sigma_\epsilon (T_{mk}(f)) = \lim_{\epsilon \to 0}\sigma_\epsilon (T(f))   = \sigma (T(f)).
$$

Hence, by \cref{thm: GFBT}, the appropriate generalised Brillouin zone to recover the pseudospectral limit is the same as for the Toeplitz operator:
\[
    \mathcal{B} = \bigg\{\qp \mid \alpha\in [-\pi/\spatialperiod,\pi/\spatialperiod), \beta \in [0,\Delta/\spatialperiod] \bigg\}.
\]
Notably, this Brillouin zone is of higher dimension than the previous two Brillouin zones (\emph{i.e.}, a two dimensional region in $\C$ different from the previous $\mathbb{T}^1\cong Y^*$) as it contains a range of possible decay rates. Furthermore, it contains the shifted Brillouin zone of the open boundary condition and the classical Brillouin zone of the periodic boundary condition as special cases ($\beta = \Delta/(2\spatialperiod)$ and $\beta=0$). This is in line with the fact that the Toeplitz operator spectrum contains both the Laurent operator spectrum, as well as the collapsed Toeplitz spectrum, as seen in \cref{fig:3lim}. The open and periodic boundary condition settings thus correspond to the maximal decay and zero decay extremes of a range of possible pseudospectral decays, captured by the Toeplitz operator spectrum.

\appendix

\section{Capacitance matrix}
The capacitance matrix introduced in \cref{sec:1d} plays a crucial role throughout the presented work. In this appendix, we present the details of its derivation referring to \cite{feppon.cheng.ea2023Subwavelength} in the Hermitian case and to \cite{ammari.barandun.ea2024Mathematical} for the non-Hermitian proofs.

We consider the non-reciprocal case first and think of the reciprocal one as the special case where $\gamma=0$.

The use of the Dirichlet-to-Neumann map has proven itself to be an effective tool to find solutions to \eqref{eq:coupled ods2} that lend themselves to asymptotic analysis in the high-contrast regime. Hereafter, we will adapt the methods of \cite{feppon.cheng.ea2023Subwavelength} to \eqref{eq:coupled ods2}.

The first step is to solve the outer problem
\begin{align}
    \begin{dcases}
        w\prii(x) + \frac{\omega^2}{v^2}w(x)=0,       & x\in\R\setminus\bigcup_{i=1}^N(x_i^{\iL},x_i^{\iR}), \\
        w(x_i^{\iLR})=f_i^{\iLR},                     & \text{for all }1\leq i\leq N,                        \\
        \frac{\dd w}{\dd\ \abs{x}}(x) -\i k w(x) = 0, & x\in(-\infty,x_1^{\iL})\cup (x_N^{\iR},\infty),
    \end{dcases}
    \label{eq:coupled ods outside}
\end{align}
for some boundary data $f_i^{\iLR}\in \C^{2N}$. Its solution is simply given by
\begin{align}
    w(x)=\begin{dcases}
             f_1^{\iL} e^{-\i k (x-x_1^{\iL})}, & x\leq x_1^{\iL},           \\
             a_i e^{\i k x}+b_i e^{-\i k x},    & (x_i^{\iR},x_{i+1}^{\iL}), \\
             f_N^{\iR} e^{-\i k (x-x_N^{\iR})}, & x_N^{\iR}\leq x,           \\
         \end{dcases}
    \label{eq: outer solution}
\end{align}
where $a_i$ and $b_i$ are given by
\begin{align*}
    \begin{pmatrix}
        a_i \\b_i
    \end{pmatrix} = - \frac{1}{2\i \sin(k s_i)}
    \begin{pmatrix}
        e^{-\i k x_{i+1}^{\iL}} & -e^{-\i k x_i^{\iR}} \\
        -e^{\i k x_{i+1}^{\iL}} & e^{\i k x_i^{\iR}}
    \end{pmatrix}
    \begin{pmatrix}
        f_i^{\iR} \\ f_{i+1}^{\iL}
    \end{pmatrix},
\end{align*}
as shown in \cite[Lemma 2.1]{feppon.cheng.ea2023Subwavelength}. The second and harder step is to combine the expression for the outer solution \eqref{eq: outer solution} with the boundary conditions in order to find the full solution. We will handle this information through the Dirichlet-to-Neumann map.

\begin{definition}[Dirichlet-to-Neumann map]
    For any $k\in\C$ which is not of the form
    $n\pi/s_{i}$ for some $n\in\Z\backslash\{0\}$ and $1\<i\<N-1$,
    the \emph{Dirichlet-to-Neumann map} with wave number $k$ is the linear operator
    $\mathcal{T}^{k}\,:\,\C^{2N}\to \C^{2N}$ defined by
    \begin{align*}
        \mathcal{T}^{k}[(f_i^{\iLR})_{1 \leq i \leq N}]=
        \left(\pm\frac{\dd w}{\dd x}(x_i^{\iLR})\right)_{1 \leq i \leq N},
    \end{align*}
    where $w$ is  the unique solution to \eqref{eq:coupled ods outside}.
\end{definition}

We refer to \cite[Section 2]{feppon.cheng.ea2023Subwavelength} for a more extensive discussion of this operator, but recall that $\mathcal{T}^{k}$ has a block-diagonal matrix representation
\begin{align}
    \label{eq: Matrix form DTN}
    T^{k}\begin{pmatrix}
             f_1^{\iL} \\
             f_1^{\iR} \\
             \vdots    \\
             f_N^{\iL} \\
             f_N^{\iR}
         \end{pmatrix} = \begin{pmatrix}
                             \i k &              &              &        &                  &      \\
                                  & A^{k}(s_{1}) &              &        &                  &      \\
                                  &              & A^{k}(s_{2}) &        &                  &      \\
                                  &              &              & \ddots &                  &      \\
                                  &              &              &        & A^{k}(s_{(N-1)}) &      \\
                                  &              &              &        &                  & \i k \\
                         \end{pmatrix}\begin{pmatrix}
                                          f_1^{\iL} \\
                                          f_1^{\iR} \\
                                          \vdots    \\
                                          f_N^{\iL} \\
                                          f_N^{\iR}
                                      \end{pmatrix},
\end{align} where, for any real $\ell\in\R$, $A^{k}(\ell)$ denotes the $2\times 2$ symmetric matrix given by
\begin{equation}
    \label{eqn:1lzi8}
    A^{k}(\ell):=\begin{pmatrix}
        -\dfrac{k \cos(k\ell)}{\sin(k\ell)} & \dfrac{k}{\sin(k\ell)}             \\
        \dfrac{k}{\sin(k\ell)}              & -\dfrac{k\cos(k\ell)}{\sin(k\ell)}
    \end{pmatrix}.
\end{equation}
Consequently, $T^{k}$ is holomorphic in $k$ in a neighbourhood of the origin and admits a power series representation $T^{k} = \sum_{n\geq 0}k^n T_n$, where
\begin{align}
    T_0 = \begin{pmatrix}
              0 &              &              &        &                  &   \\
                & A^{0}(s_{1}) &              &        &                  &   \\
                &              & A^{0}(s_{2}) &        &                  &   \\
                &              &              & \ddots &                  &   \\
                &              &              &        & A^{0}(s_{(N-1)}) &   \\
                &              &              &        &                  & 0 \\
          \end{pmatrix}, \label{eq: DTN T0}
\end{align}
and $A^0(s):=\lim_{k\to0} A^k(s)$.

The above properties of the Dirichlet-to-Neumann map will be crucial to find subwavelength eigenfrequencies. We will use $\mathcal{T}^{k}$ and $T^k$ interchangeably.

\subsubsection{Characterisation of the subwavelength eigenfrequencies}
The Dirichlet-to-Neumann map allows us to reformulate \eqref{eq:coupled ods2} as follows:
\begin{align}
    \begin{dcases}
        u\prii(x) + \gamma u\pri(x)+\frac{\omega^2}{v_b^2}u=0,                                               & x\in\bigcup_{i=1}^N(x_i^{\iL},x_i^{\iR}), \\
        \left.\frac{\dd u}{\dd x}\right\vert_{\iR}(x^{\iL}_{{i}})=-\delta\dtn^{\frac{\omega}{v}}[u]^{\iL}_i, & \forall\ 1\leq i\leq N,                   \\
        \left.\frac{\dd u}{\dd x}\right\vert_{\iL}(x^{\iR}_{{i}})=\delta\dtn^{\frac{\omega}{v}}[u]^{\iR}_i,  & \forall\ 1\leq i\leq N.
    \end{dcases}
    \label{eq:coupled ods with dtn}
\end{align}
We can then further rewrite \eqref{eq:coupled ods with dtn} in weak form by multiplying it by a test function $w\in H^1(D)$ and integrating on the intervals. Explicitly, we obtain that $u$ is a solution to \eqref{eq:coupled ods with dtn} if and only if
\begin{align}
    a(u,w)=0 \label{eq:formulation in weak form}
\end{align}
for any $w\in H^1(D)$, where
\begin{align}
    a(u,w) & = \sum_{i=1}^N\int_{x_i^{\iL}}^{x_i^{\iR}} u\pri\bar{w}\pri -\gamma u\pri \bar{w}-\frac{\omega^2}{v_b^2}u\bar{w}\dd x \nonumber   \\
           & -\delta \sum_{i=1}^N \bar{w}(x_i^{\iR})\dtn^{\frac{\omega}{v}}[u]_i^{\iR} + \bar{w}(x_i^{\iL})\dtn^{\frac{\omega}{v}}[u]_i^{\iL}.
\end{align}
We also introduce a slightly modified bilinear form
\begin{align}
    a_{\omega,\delta}(u,w) & = \sum_{i=1}^N\left(\int_{x_i^{\iL}}^{x_i^{\iR}} u\pri\bar{w}\pri -\gamma u\pri \bar{w}\dd x + \int_{x_i^{\iL}}^{x_i^{\iR}} u\dd x\int_{x_i^{\iL}}^{x_i^{\iR}} \bar{w}\dd x \right)\nonumber                                \\
                           & -\sum_{i=1}^N\left(\int_{x_i^{\iL}}^{x_i^{\iR}}\frac{\omega^2}{v_b^2}u\bar{w}\dd x + \delta\left(\bar{w}(x_i^{\iR})\dtn^{\frac{\omega}{v}}[u]_i^{\iR} + \bar{w}(x_i^{\iL})\dtn^{\frac{\omega}{v}}[u]_i^{\iL}\right)\right).
    \label{eq: def a weak form}
\end{align}
This new form parametrised by $\omega$ is an analytic perturbation of the $a_{0,\delta}$ form, which is continuous and its associated variational problem is well-posed on $H^1(D)$ and so $a_{\omega,\delta}$ inherits this property. Specifically, for every summand of $\Re(a_{0,\delta})$ the following holds for any $\epsilon>0$
\begin{align*}
      & \sum_{i=1}^N\left[\int_{x_i^{\iL}}^{x_i^{\iR}} u\pri\bar{u}\pri\dd x -\gamma \int_{x_i^{\iL}}^{x_i^{\iR}}u\pri \bar{u} +\int_{x_i^{\iL}}^{x_i^{\iR}} u\dd x\int_{x_i^{\iL}}^{x_i^{\iR}} \bar{u}\dd x\right]                                     \\ \geq &\sum_{i=1}^N  \bigg[\Vert u\pri\Vert_{L^2({x_i^{\iL}},{x_i^{\iR}})}^2 - \frac{\epsilon \gamma}{2}\Vert u\pri\Vert_{L^2({x_i^{\iL}},{x_i^{\iR}})}^2 - \frac{\gamma}{2\epsilon} \Vert u\Vert_{L^2({x_i^{\iL}},{x_i^{\iR}})}^2  + \bigg| \int_{x_i^{\iL}}^{x_i^{\iR}}  u \dd x \bigg|^2 \bigg]\\
    = & \sum_{i=1}^N  \bigg[ (1-\frac{\gamma \epsilon}{2}) \Vert u \pri\Vert_{L^2({x_i^{\iL}},{x_i^{\iR}})}^2 -\frac{\gamma}{2 \epsilon} \Vert u \Vert_{L^2({x_i^{\iL}},{x_i^{\iR}})}^2 + \bigg| \int_{x_i^{\iL}}^{x_i^{\iR}}  u \dd x \bigg|^2 \bigg].
\end{align*}
From the compactness of the injection of $H^1$ into $L^2$ and for $\epsilon$ small enough, it follows that  $\Re(a_{0,\delta})$ satisfies a G\aa rding inequality.
On the other hand, it follows from the explicit form \eqref{eq: Matrix form DTN} of the Dirichlet-to-Neumann operator that
$$
    \Im(a_{0,\delta}(u,u)) = 0,
$$
which implies that $u$ is constant on each $[{x_i^{\iL}},{x_i^{\iR}}]$.  But, $\Re (a_{0,\delta}(u,u)) = 0$ for $u$ constant on each $[{x_i^{\iL}},{x_i^{\iR}}]$ implies that the constants are all zero. Thus one obtains the well-posedness of the associated variational problem.

We exploit this, by defining the $h_j(\omega,\delta)$ functions as the Lax--Milgram solutions to the variational problem
\begin{align}
    a_{\omega,\delta}(h_j(\omega,\delta),w) = \int_{x_j^{\iL}}^{x_j^{\iR}} \bar{w}\dd x \label{eq: def h_j}
\end{align}
for every $w\in H^1(D)$ and $1\leq j\leq N$. In the following lemma, we show that the functions $h_j$ allow us to reduce \eqref{eq:formulation in weak form} to a finite dimensional $N\times N$ linear system by acting as basis functions.

\begin{lemma}\label{lemma: I-capmat=0}
    Let $\omega \in \C$ and $\delta \in \R$ belong to a neighbourhood of zero such that $a_{\omega,\delta}$ is coercive. The variational problem \eqref{eq:formulation in weak form} admits a nontrivial solution $u\equiv u(\omega,\delta)$ if and only if the $N\times N$ nonlinear eigenvalue problem
    \begin{align*}
        (I-\exactcapmat(\omega,\delta))\bm x = 0
    \end{align*}
    has a  solution $\omega$ and $\bm x:=(x_i(\omega,\delta))_{1\leq i\leq N}$, where $\exactcapmat(\omega,\delta)$ is the matrix  given by
    \begin{align}
        \label{eq: def exact cap mat}
        \exactcapmat(\omega,\delta)\equiv (\exactcapmat(\omega,\delta)_{ij})_{1\leq i,j\leq N}:=
        \left( \int_{x_i^{\iL}}^{x_i^{\iR}} h_j(\omega,\delta)\dd x
        \right)_{1\leq i,j\leq N}.
    \end{align}
\end{lemma}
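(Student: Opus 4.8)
The plan is to exploit the observation that the auxiliary form $a_{\omega,\delta}$ defined in \eqref{eq: def a weak form} differs from the physical form $a$ appearing in \eqref{eq:formulation in weak form} by exactly one rank-$N$ term, combined with the fact that $a_{\omega,\delta}$ is invertible by the coercivity hypothesis (the G\aa rding inequality established just before the lemma). First I would compare the two forms summand by summand. Every contribution coincides except for the product term $\int u \,\dd x \int \bar w \,\dd x$, which is present only in $a_{\omega,\delta}$. This yields the algebraic identity
$$a_{\omega,\delta}(u,w) = a(u,w) + \sum_{i=1}^N \left(\int_{x_i^{\iL}}^{x_i^{\iR}} u \,\dd x\right)\left(\int_{x_i^{\iL}}^{x_i^{\iR}} \bar w \,\dd x\right),$$
valid for all $u,w\in H^1(D)$.

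Next I would introduce the coefficients $x_i \coloneqq \int_{x_i^{\iL}}^{x_i^{\iR}} u \,\dd x$ for $1\le i\le N$. Using the identity above, the equation $a(u,w)=0$ for all $w$ is then equivalent to
$$a_{\omega,\delta}(u,w) = \sum_{i=1}^N x_i \int_{x_i^{\iL}}^{x_i^{\iR}} \bar w \,\dd x \qquad \text{for all } w\in H^1(D).$$
By the defining property \eqref{eq: def h_j} of the Lax--Milgram solutions $h_j(\omega,\delta)$, the right-hand side equals $a_{\omega,\delta}\!\left(\sum_{j} x_j h_j,\, w\right)$. Since $a_{\omega,\delta}$ is coercive, its associated operator is an isomorphism on $H^1(D)$, so testing against all $w$ forces the representation $u = \sum_{j=1}^N x_j\, h_j(\omega,\delta)$.

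The final step is the consistency condition closing the loop. Integrating this representation over the $i$-th resonator and invoking the definition \eqref{eq: def exact cap mat} of $\exactcapmat$ gives
$$x_i = \int_{x_i^{\iL}}^{x_i^{\iR}} u \,\dd x = \sum_{j=1}^N x_j \int_{x_i^{\iL}}^{x_i^{\iR}} h_j(\omega,\delta) \,\dd x = \bigl(\exactcapmat(\omega,\delta)\,\bm x\bigr)_i,$$
that is $(I-\exactcapmat(\omega,\delta))\bm x = 0$. Conversely, given a solution $\bm x$ of this linear system I would set $u\coloneqq \sum_j x_j h_j$ and run the computation backwards: the same consistency identity shows $\int_{x_i^{\iL}}^{x_i^{\iR}} u \,\dd x = x_i$, and then the rank-$N$ identity shows $a(u,w)=0$ for all $w$. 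Nontriviality transfers both ways, since $x_i = \int_{x_i^{\iL}}^{x_i^{\iR}} u \,\dd x$ forces $u\equiv 0$ if and only if $\bm x = 0$.

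I do not expect a genuine obstacle: the argument is a bookkeeping reduction once coercivity of $a_{\omega,\delta}$ is granted. The only point requiring care is ensuring that $(\omega,\delta)$ stays in a small enough neighbourhood of the origin for $a_{\omega,\delta}$ to remain coercive, so that the $h_j$ are well-defined and the equivalence $u=\sum_j x_j h_j$ is legitimate; this is precisely the standing hypothesis of the lemma.
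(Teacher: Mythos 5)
Your proposal is correct and takes essentially the same route as the paper, whose "proof" simply defers to the arguments of \cite[Lemma 3.4]{feppon.cheng.ea2023Subwavelength}: namely the rank-$N$ identity $a_{\omega,\delta}(u,w)=a(u,w)+\sum_{i=1}^N\int_{x_i^{\iL}}^{x_i^{\iR}}u\,\dd x\int_{x_i^{\iL}}^{x_i^{\iR}}\bar w\,\dd x$, the Lax--Milgram representation $u=\sum_j x_j h_j(\omega,\delta)$ forced by coercivity, and the consistency condition yielding $(I-\exactcapmat(\omega,\delta))\bm x=0$. Your handling of the converse direction and of the transfer of nontriviality between $u$ and $\bm x$ is also correct, so there is no gap.
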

\begin{proof}
    This can be shown using the arguments presented in \cite[Lemma 3.4]{feppon.cheng.ea2023Subwavelength}.
\end{proof}
Subwavelength eigenfrequencies are thus the values $\omega(\delta)$ satisfying \eqref{eq: asymptotic behaviour subwavelength resonances} for which $I-\exactcapmat(\omega,\delta)$ is not invertible.

\section*{Acknowledgements}
This work was partially supported by Swiss National Science Foundation grant number 200021-200307.

\section*{Code Availability}
The data that support the findings of this work are openly available at \url{https://doi.org/10.5281/zenodo.10438679}, \url{https://zenodo.org/doi/10.5281/zenodo.10361315},\url{https://doi.org/10.5281/zenodo.8210678}, and \url{https://doi.org/10.5281/zenodo.8081076}.

\section*{Conflict of interest} 
The authors have no competing interests to declare that are relevant to the content of this article.

\printbibliography

\end{document}